\newtheorem{theorem}{Theorem}[section]
\newtheorem{lemma}[theorem]{Lemma}
\newtheorem{proposition}[theorem]{Proposition}
\newtheorem{remark}[theorem]{Remark}
\newtheorem{definition}{Definition}[section]
\newtheorem{corollary}[theorem]{Corollary}
\newtheorem{fact}[theorem]{Fact}
\newcommand{\wt}{\widetilde}
\newcommand{\Z}{\mathbb Z}
\newcommand{\R}{\mathbb R}
\newcommand{\C}{\mathbb C}
\newcommand{\T}{\mathbb T}
\newcommand{\N}{\mathbb N}
\def\mes{\mathrm{mes}}
\definecolor{deepgreen}{cmyk}{1,0,1,0.5}
\def\calB{\mathcal{B}}
\def\calC{\mathcal{C}}
\def\calH{\mathcal{H}}
\def\tor{\mathbb{T}}
\def\Mat{\mathrm{Mat}}
\def\dist{\mathrm{dist}}
\def\les{\;\mathrm{\lesssim}\;}
\def\calS{\mathcal{S}}
\def\disk{\mathbb{D}}
\def\calE{\mathcal{E}}
\title[Localization for quasi-periodic block matrices]{Non-perturbative localization for quasi-periodic Jacobi block matrices}
\author[R.\ Han]{Rui Han}
\address{Department of Mathematics \\ Louisiana State University  \\  Baton Rouge, LA 70803, USA}
\email{rhan@lsu.edu}
\author[W.\ Schlag]{Wilhelm Schlag}
\address{Department of Mathematics \\ Yale University \\ New Haven, CT 06511, USA}
\email{wilhelm.schlag@yale.edu}
\thanks{
R.\ Han is partially supported by NSF DMS-2143369. 
W.\ Schlag is partially supported by NSF grant DMS-2054841.
}
\begin{document}

\begin{abstract}
We prove non-perturbative Anderson localization for quasi-periodic Jacobi block matrix operators assuming non-vanishing of all Lyapunov exponents. The base dynamics on tori $\tor^b$ is assumed to be a Diophantine rotation. Results on arithmetic localization are obtained for $b=1$, and applications to the skew shift, stacked graphene, XY spin chains, and coupled Harper models are discussed.  
\end{abstract}

\maketitle
\tableofcontents

\section{Introduction}

Let 
\begin{equation}\label{eq:BVsys}
   (H_\theta \Phi)_n   = B_{n+1}(\theta)\Phi_{n+1}+ B_n^{(*)}(\theta) \Phi_{n-1}+ V_n (\theta)\Phi_n
\end{equation}
where $F_n(\theta):=F(\theta+n\omega)$ for any $d\times d$-matrix valued function. We set $B^{(*)}(\theta)=(B(\theta))^*$ for $\theta\in \T$, and require it to be the analytic extension of $(B(\theta))^*$ off of the real torus. Here $\theta,\omega\in\tor^b$ and we assume that $\omega$ is Diophantine, i.e.,
\begin{align}
    \omega\in \mathrm{DC}:=&\bigcup_{a>0, A>b} \mathrm{DC}_{a,A}, \text{ where}\\
    \mathrm{DC}_{a,A}=&\left\{\omega\in \T^b:\, \|k\cdot \omega\|_{\T^b}\geq \frac{a}{|k|^A}\, \text{ for all } k\in \Z^d\setminus \{0\}\right\}.
\end{align}  
We further assume that $B,V\in C^{\omega}(\tor^b_{\eta}, \Mat(d,\C))$ are analytic, where $$\T^b_{\eta}:=\{\theta+i\varepsilon:\, \theta\in \T^b,\, \varepsilon\in \R^d,  \text{ and } |\varepsilon|\leq \eta\}$$with some positive $\eta>0$. 
We assume throughout the paper that $V$ is Hermitian, and that~$B$ is invertible ($\det B(\theta)\neq 0$ for any $\theta\in \T^b_{\eta}$). 
The difference equation $H_{\theta} \Phi=E\Phi$ is equivalent to the cocycle 
\begin{align}\label{def:M_E}
  \calC &\::  (\theta,\Psi)\in \tor^b\times \C^{2d}\mapsto (\theta+\omega, M_E(\theta)\Psi), \notag  \\
  \quad M_{E}(\theta) &= \left [ \begin{matrix}
   ( E-V(\theta))B(\theta)^{-1}  & -B^{(*)}(\theta) \\
  B(\theta)^{-1} & 0 
\end{matrix}\right]  
\end{align}
in the sense that for $n\ge1$, \[ \calC^n (\theta,\Psi)=(\theta+n\omega, M_{n,E}(\theta)\Psi), \quad M_{n,E}(\theta)=\prod_{j=n-1}^0 M_E(\theta+j\omega), \quad 
\Psi_n:=\binom{B_{n}\Phi_n}{\Phi_{n-1}}\] satisfies $\Psi_n = M_{n,E} (\theta) \Psi_0$.  Since $M_E(\theta)$ is (complex) symplectic, see \eqref{def:symp}, for $\theta\in \T^b$, the Lyapunov exponents $\{L_j(\omega,M_E)\}_{j=1}^{2d}$, see definition in \eqref{def:LE}, satisfy $L_{2d-j}=-L_j$ for $1\le j\le d$. 
In analogy with \cite{BG} we establish localization for $H_0$ under a nonvanishing condition on the Lyapunov exponents.

\begin{theorem}
    \label{thm:main}
    Assume that $L_d(\omega,M_E)\ge \gamma>0$ for all $E,\omega$. Then $H_0$ exhibits Anderson localization for almost every~$\omega$.
\end{theorem}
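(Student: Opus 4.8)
The plan is to run the non-perturbative scheme of Bourgain--Goldstein \cite{BG} for the $2d\times 2d$ symplectic cocycle $M_E$, with the hypothesis $L_d(\omega,M_E)\ge\gamma$ playing the role that positivity of the single Lyapunov exponent plays in the Schrödinger case. Since $M_E(\theta)$ is symplectic one has $L_{2d+1-j}=-L_j$, so the Lyapunov spectrum of the exterior power cocycle $\wedge^d M_E$ has top exponent $L^{(d)}:=L_1+\cdots+L_d$ and second exponent $L^{(d)}-2L_d$, i.e.\ a uniform spectral gap $2L_d\ge 2\gamma>0$, which is precisely the quantity the hypothesis furnishes uniformly in $(E,\omega)$. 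Thus $\wedge^d M_{n,E}$ behaves like a scalar cocycle with a gap, the setting in which the avalanche principle and the sharp large deviation machinery operate. The structural input is that analyticity of $B,V$ on $\T^b_\eta$ together with $\det B\ne 0$ there makes $M_E(\theta)$, hence $\log\|\wedge^d M_{n,E}(\theta)\|$, extend subharmonically and uniformly bounded to a slightly narrower strip, uniformly for $E$ in a fixed compact set (the spectrum is bounded since $B,V$ are) and for all $\omega$.

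\textbf{Step 1 (large deviations).} First one establishes a weak LDT for $u_n(\theta)=\frac1n\log\|\wedge^d M_{n,E}(\theta)\|$ from subharmonicity: bounded subharmonic functions on $\T^b_\eta$, restricted to $\T^b$ and averaged along a Diophantine shift, satisfy $\mes\{\theta:\,|u_n(\theta)-\langle u_n\rangle|>\kappa\}<C\exp(-c\kappa\,n^{\tau_0})$ for some $\tau_0=\tau_0(b,A)>0$, with $\langle u_n\rangle=\int_{\T^b}u_n$ (Goldstein--Schlag estimates for averages of shifts of subharmonic functions). One then bootstraps via the avalanche principle: the gap of $\wedge^d M_E$, visible at finite scales through the weak LDT, yields the almost-multiplicativity $u_{n+m}(\theta)\approx\frac{n}{n+m}u_n(\theta+m\omega)+\frac{m}{n+m}u_m(\theta)$ off a small set, hence $\langle u_n\rangle\to L^{(d)}(\omega,E)$ at a quantitative rate and the sharp LDT
\[ \mes\bigl\{\theta:\ |u_n(\theta)-L^{(d)}(\omega,E)|>n^{-\tau}\bigr\}<e^{-n^{\sigma}} \]
for some $\tau,\sigma>0$, uniformly for $E$ in the fixed compact set and $\omega\in\mathrm{DC}_{a,A}$. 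One also needs that the $d\times d$ minor of $M_{n,E}$ which equals $\det(E-H_{[1,n]})$ up to the factor $\prod_{j=1}^n\det B_j$ — for which the elementary additive LDT gives $\frac1n\sum_j\log|\det B_j(\theta)|\to\int_{\T^b}\log|\det B|$ off a small set — is comparable in size to $\|\wedge^d M_{n,E}(\theta)\|$ for most $\theta$, because the leading singular direction of $\wedge^d M_{n,E}$ is non-degenerate for generic $\theta$, again by a large-deviation argument.

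\textbf{Step 2 (Green's function and elimination of $\omega$).} A Cramér-type formula for block Jacobi matrices writes the $d\times d$ block entries of $G_\Lambda(E)=(H_\Lambda-E)^{-1}$, $\Lambda=[a,b]$, as ratios of minors of transfer-matrix products, i.e.\ in terms of $\wedge^d M$ and the $\det B_j$'s. Combining this with Step 1: for each $E$ and each large $N$, the set of $\theta$ for which $\Lambda_N:=[1,N]$ is \emph{bad} for $(E,\theta)$ — meaning $|\det(E-H_{\Lambda_N}(\theta))|<\exp(NL^{(d)}-N^{1-\tau})$ — has measure $<e^{-N^{\sigma}}$, and for $\theta$ outside it one obtains the off-diagonal bound $\|G_{\Lambda_N}(E;\theta)(x,y)\|\le e^{-(\gamma-\varepsilon)|x-y|}$ whenever $|x-y|>N/10$. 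Replacing $B,V$ by trigonometric polynomials of degree $\sim N^{\rho}$ renders the defining conditions on $(\omega,\theta,E)$ semialgebraic of degree $N^{O(1)}$, so Bourgain's quantitative measure and covering estimates for semialgebraic sets apply. The decisive double-resonance elimination is then: the set of $\omega$ for which there exist $E$ in the fixed compact set and two disjoint intervals of length $N$, one meeting a fixed neighborhood of the origin and the other at distance $>N^{C}$, both bad for $(E,\cdot)$ at the relevant translated phases, is semialgebraic of measure $<e^{-N^{\sigma/2}}$; summing over $N$ and invoking Borel--Cantelli discards a null set of $\omega$. (Here $\theta=0$ is fixed, and the Diophantine bound $\|n\omega\|_{\T^b}\ge a|n|^{-A}$ only helps, as it rules out strong single resonances.)

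\textbf{Step 3 (localization, and the main obstacle).} Fix $\omega$ outside the exceptional null set, let $\Phi\not\equiv0$ solve $H_0\Phi=E\Phi$ with $|\Phi(n)|\le C(1+|n|)$, and pick $m_0$ with $\Phi(m_0)\ne0$. The Poisson/resolvent identity $\Phi(n)=-\sum_{y\in\partial\Lambda}G_\Lambda(E)(n,y)\,[\text{coupling}]\,\Phi(y)$ for $\Lambda\ni n$ shows that if the relevant box were good for all large $N$ then $\Phi$ would vanish there; together with the sparsity of bad boxes from Step 2 (at each large scale they form a single cluster of polynomial diameter within each admissible window), the standard multiscale paving argument of \cite{BG} forces $\Phi$ to decay exponentially, with rate $\ge\gamma-\varepsilon$. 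Since by Schnol's theorem spectrally a.e.\ $E$ admits such a polynomially bounded generalized eigenfunction, $H_0$ has pure point spectrum with exponentially localized eigenfunctions, i.e.\ Anderson localization. The principal obstacle, relative to the scalar Bourgain--Goldstein setting (transfer matrix $2\times2$ of determinant one, $\det(E-H_\Lambda)$ literally a matrix entry, $L^{(d)}=L_1$), is the block bookkeeping: isolating $\wedge^d M_E$ and extracting the uniform gap $2\gamma$ from symplecticity; establishing and exploiting the Cramér-type formulas linking $\det(E-H_\Lambda)$, the block resolvent, the factors $\prod\det B_j$ and $\wedge^d M$, with LDT control of each ingredient; and pushing the avalanche principle, the sharp LDT, and the semialgebraic elimination through for the $\binom{2d}{d}$-dimensional cocycle over the $b$-dimensional Diophantine base, keeping every estimate uniform in $E$ and $\omega$ — the gap being the only uniformity we need, and it is exactly the hypothesis.
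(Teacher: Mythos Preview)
Your outline is essentially correct and follows the same Bourgain--Goldstein strategy as the paper: LDT for $\|\wedge^d M_{n,E}\|$ via subharmonicity and the avalanche principle (using the gap $2L_d\ge 2\gamma$), Cramer-type Green's function bounds, semi-algebraic elimination of $\omega$, and Shnol plus paving.

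The one substantive difference is in how the denominator LDT is obtained. You work with the Dirichlet determinant as a $d\times d$ minor of $M_{n,E}$ and assert it is comparable to $\|\wedge^d M_{n,E}\|$ because ``the leading singular direction of $\wedge^d M_{n,E}$ is non-degenerate for generic $\theta$.'' This is exactly the hard step, and your justification is too thin: one needs to show that a \emph{fixed} vector in $\wedge^d\C^{2d}$ is not nearly orthogonal to the top singular direction off a small set of $\theta$, which does not follow directly from the LDT for the norm. The paper sidesteps this by using \emph{periodic} boundary conditions, for which $|f_{E,n}(\theta)|=|\det(M_{n,E}(\theta)-I_{2d})|\cdot\prod_j|\det B_j|$ (Lemma~\ref{lem:detP}). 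The LDT for $|\det(M_{n,E}-I_{2d})|$ is then proved by first establishing an LDT for $\|\wedge^d M_{n,E}^2\|$ along an \emph{admissible} sequence $\|n\omega\|\le\kappa_0$ (Lemma~\ref{lem:Msq}), which forces the top singular subspaces of $M_{n,E}$ and its adjoint to have large overlap (Lemma~\ref{lem:det_v1d_w1d}, Corollary~\ref{cor:det_vd+1_2d_w}), and hence $|\det(M_{n,E}-I)|\gtrsim\prod_{j\le d}\sigma_j$ via a Schur-complement argument. The restriction to admissible $n$ is then removed by paving and the resolvent identity (Corollary~\ref{cor:goodGreen}). Either route works, but the periodic-BC route makes the ``non-degeneracy'' step concrete, while your formulation would need an additional argument of the same weight.
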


For a perturbative version of this result see Bourgain-Jitomirskaya~\cite{BJ} and Klein~\cite{Kl}.
Theorem~\ref{thm:main} has wide applications since operators in quantum mechanics often appear in block form, in particular when dealing with stacked materials or when not restricted to nearest neighbor hopping.   
Section~\ref{sec:application} includes some examples on stacked and twisted bilayer graphene models.

Our next theorem,  which can be seen as a generalization of \cite{HS2}*{Theorem 1.6}, concerns arithmetic Anderson localization for long-range scalar-valued Jacobi matrices with potential defined on the one-dimensional torus $\T$. 
For the one-dimensional torus, there is another quantity, Avila's quantized acceleration of the cocycle, which we denote by $\kappa^d(\omega,M_E)$, that plays a crucial role in determining the spectral behavior of the corresponding operator.
The acceleration was first introduced by Avila in his global theory paper \cite{Global} for $\mathrm{SL}(2,\R)$ cocycles, and extended to higher dimensional cocycles in \cite{AJS}.
A characterization of the acceleration, for scalar valued Schr\"odinger operator $d=1$ and for Diophantine~$\omega$, through the zeros of Dirichlet determinants was proved in~\cite{HS2}. 

In this paper, we give a characterization of the acceleration $\kappa^d(\omega,M_E)$ associated to Jacobi block matrices with $d\geq 2$ in terms of the zero count of the finite volume determinants with periodic boundary condition. See Theorem \ref{thm:Riesz_un}. 
We also give applications of such characterization in arithmetic Anderson localization for various models.
All of our arithmetic Anderson localization results concern the one-dimensional torus $\T$.
For $\theta,\omega\in \T$, let 
\begin{align}\label{def:tH}
    (\tilde{H}_{\theta}\phi)_n=\sum_{1\leq |k|\leq d} v_k\phi_{n-k}+g(\theta+nd^{-1}\omega) \phi_n,
\end{align}
where throughout the paper we assume that $g\in C_{\eta}^{\omega}(\T,\R)$ is real-valued and non-constant and
\begin{align}\label{eq:basic_assume_v}
\overline{v_k}=v_{-k} \text{ for any } 1\leq k\leq d, \text{ and }
v_d\neq 0.
\end{align}
Note that we use $d^{-1}\omega$ as the frequency in $\tilde{H}_{\theta}$ to better fit into the framework of~\eqref{eq:BVsys}.
Let $A_E$ be the corresponding $1$-step transfer matrix (see \eqref{def:A_E}) and $\kappa^d(d^{-1}\omega,A_E)$ be its acceleration, which is an integer as long as $L_d(d^{-1}\omega,A_E)>0$ (see \cite{AJS}*{Lemma 6.4}).

\begin{theorem}\label{thm:acceleration=1}
Fix any $\omega\in \mathrm{DC}$.
Suppose $g(\theta)=g(-\theta)$.
Then for any 
\begin{align}\label{def:Theta_d}
\theta\in \Theta_d:=\bigcup_{a'>0, t>1} \Big\{\theta: \|2\theta-n\omega\|_{\T}\geq \frac{a'}{(1+|n|)^{t}}, \text{\ for any\ \  } n\in d^{-1}\Z\Big\},
\end{align}
$\tilde{H}_{\theta}$ is Anderson localized on $\sigma(H_{\theta})\cap \{E: L_d(d^{-1}\omega,A_E)>0, \text{\  and\  } \kappa^d(d^{-1}\omega,A_E)=1\}$.
\end{theorem}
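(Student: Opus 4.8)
\emph{Plan.}
I would follow the template of \cite{HS2} (modeled on Jitomirskaya's arithmetic localization for the almost Mathieu operator), the new ingredient being the zero count of Theorem~\ref{thm:Riesz_un}. The first step is to recast $\tilde H_\theta$ as a $d\times d$ block Jacobi operator of the form \eqref{eq:BVsys} at the Diophantine frequency $\omega$: grouping $d$ consecutive sites into one block makes $B$ a constant invertible upper‑triangular matrix (diagonal entry $v_d\neq0$), $V(\vartheta)=\mathrm{diag}\big(g(\vartheta+rd^{-1}\omega)\big)_{r=0}^{d-1}$ plus the constant Hermitian within‑block hopping matrix, and the $1$‑step cocycle $A_E$ at frequency $d^{-1}\omega$ assembles into $M_E$ at frequency $\omega$, so that $L_j(\omega,M_E)=d\,L_j(d^{-1}\omega,A_E)$ and the acceleration scales accordingly. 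Thus the hypotheses $L_d(d^{-1}\omega,A_E)>0$ and $\kappa^d(d^{-1}\omega,A_E)=1$ transfer to the block cocycle, and the $\cite{BG}$‑type subharmonicity/avalanche‑principle analysis underlying Theorem~\ref{thm:main} applies at such $E$. This yields, for all large $N$: a uniform upper bound $\tfrac1N\log\|M_{N,E}(\vartheta)\|\le L_1+\epsilon$; a large‑deviation estimate for $\tfrac1N\log|\tilde f_N(\vartheta,E)|$, where $\tilde f_N(\vartheta,E):=\det\big((\tilde H_\vartheta-E)|_{[0,N-1]}\big)$; and, via Cramer's rule together with large‑deviation control of the relevant minors/singular values, exponential off‑diagonal decay of $(\tilde H_I-E)^{-1}$ at rate $\asymp L_d$ on any interval $I$ that is \emph{$N$‑regular}, i.e. such that $E$ is $e^{-\epsilon N}$‑separated from $\sigma(\tilde H_I)$. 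The block‑resolvent (Poisson) identity then reads: if $m$ sits at depth $\ge r$ inside an $N$‑regular interval $I$ with $0\notin I$, then $|\phi(m)|\lesssim e^{-(L_d-\epsilon)r}\,\mathrm{poly}(N)$ for any polynomially bounded solution $\phi$.

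For the localization argument I would fix $\theta\in\Theta_d$ and a polynomially bounded solution $\tilde H_\theta\phi=E\phi$; after translating by a multiple of $d^{-1}\omega$ (which preserves $\Theta_d$) and normalizing, $\phi(0)=1$ and $|\phi(m)|\le C(1+|m|)$. By the previous paragraph, every length‑$N$ interval containing $0$ at depth $\gtrsim\log N$ fails to be $N$‑regular; using the near‑polynomial structure of $\tilde f_N(\cdot,E)$ forced by $\kappa^d=1$, this means $\tilde f_N(\cdot,E)$ vanishes within $e^{-\epsilon' N}$ of $\theta+jd^{-1}\omega$ (the block at $j$ is \emph{resonant}) for all $j$ in an interval around $-N/2$ of length $\gtrsim N$. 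The evenness $g(\vartheta)=g(-\vartheta)$ together with $\overline{v_k}=v_{-k}$ gives, by reversing $[0,N-1]$, the symmetry $|\tilde f_N(\vartheta,E)|=|\tilde f_N(-\vartheta-(N-1)d^{-1}\omega,E)|$, so each resonant block $j$ produces a second resonant point $-\theta-(j+N-1)d^{-1}\omega$. The Diophantine property of $\omega$ keeps the $\theta$‑orbit points (and, separately, the $(-\theta)$‑orbit points) mutually $\gg e^{-\epsilon' N}$‑apart, while $\theta\in\Theta_d$ — which controls precisely $\|2\theta-\nu\omega\|$ for $\nu\in d^{-1}\Z$ — keeps the $\theta$‑orbit resonances disjoint from the $(-\theta)$‑orbit resonances. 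Consequently, if some far block $m$ (with $N\asymp|m|$) failed to lie at depth $\asymp|m|$ inside an $N$‑regular interval avoiding $0$, one would produce $\gtrsim 2N$ distinct zeros of $\tilde f_N(\cdot,E)$ near the real torus, contradicting the bound $(1+o(1))N$ from Theorem~\ref{thm:Riesz_un} and $\kappa^d=1$ once $N$ is large. Hence such an interval always exists, the Poisson identity gives $|\phi(m)|\lesssim e^{-c|m|}$, and running this as $m\to\pm\infty$ shows $\phi\in\ell^1$. Thus every generalized eigenfunction at such $E$ is genuinely exponentially localized, which — by a Schnol‑type argument — is Anderson localization on the asserted set.

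The technical heart, and the step I expect to be hardest, is the first paragraph: upgrading ``$N$‑regularity'' to exponential resolvent decay at the \emph{smallest} positive rate $L_d$ rather than at the top rate $L_1$ (for $d=1$ these coincide; for $d\ge2$ they need not). This is exactly the machinery behind Theorem~\ref{thm:main}, which here must be run for finite‑volume Dirichlet quantities and in a form compatible with the dictionary ``$E$ near $\sigma(\tilde H_I)$ $\iff$ nearby zero of $\tilde f_N(\cdot,E)$'', which in turn relies on the $\kappa^d=1$ near‑polynomial structure. A subsidiary but necessary point is bridging the Dirichlet determinants $\tilde f_N$ and the periodic block determinants for which Theorem~\ref{thm:Riesz_un} is stated; their Riesz masses coincide (both governed by the acceleration), so their zero counts near the real torus agree up to $O(1)$ and the $(1+o(1))N$ bound used above is legitimate. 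One must also be attentive, in the double‑resonance elimination, to the effect of the $d\times d$ block structure on the counting and on how a resonance reflects under the symmetry.
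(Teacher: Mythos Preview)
Your overall strategy is the paper's: recast $\tilde H_\theta$ as a $d\times d$ block Jacobi operator at frequency $\omega$, exploit the reflection symmetry of the determinant forced by $g$ even and $\overline{v_k}=v_{-k}$, feed the acceleration bound on the zero count from Theorem~\ref{thm:Riesz_un} into a pigeonhole elimination of double resonances as in \cite{HS2}. But the pigeonhole step, as you wrote it, does not close; the counting is off in two places.

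First, the zero bound. Theorem~\ref{thm:Riesz_un} is stated for the block cocycle $(\omega,M_E)$, and under $\kappa^d(d^{-1}\omega,A_E)=1$ one has $\kappa^d(\omega,M_E)=d$ (as you yourself note). Hence the zero count is $\le 2n(d+\nu)$, i.e.\ $(2+o(1))N$ in your scalar length $N=nd$ --- not $(1+o(1))N$. What the symmetry (the paper's Lemma~\ref{lem:f_even}) buys is that these zeros come in at most $n(d+\nu)$ \emph{pairs} $(U_j,\,-U_j-(nd-1)d^{-1}\omega)$; see Lemma~\ref{lem:complexity_1}. Second, the contradiction cannot come from $I_1$ and its mirrors alone: that locates at most $2|I_1|\le 2N$ zeros, consistent with the bound. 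The far point $m$ must enter the \emph{count}, not just the conditional. The paper's argument (Lemmas~\ref{lem:I1_I2}--\ref{lem:I1_small}) is: take $I_1,I_2$ of scalar size $\sim\tfrac34 nd$ each; show each symmetric pair $(U_j,-U_j-\cdots)$ contains at most one orbit point $\theta+\ell d^{-1}\omega$ with $\ell\in I_1\cup I_2$ (the Diophantine condition rules out two orbit points in the same $U_j$, while $\theta\in\Theta_d$ rules out one in $U_j$ and one in its mirror); since $|I_1|+|I_2|\sim\tfrac32 nd>n(d+\nu)$, some orbit point escapes the bad set; Lemma~\ref{lem:I1_small} then forces it into $I_2$, yielding decay at~$m$.

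A secondary point: the paper works throughout with the \emph{periodic} block determinant $f_{E,n}$ and the associated Green function (Lemmas~\ref{lem:numerator}, \ref{lem:deno}, the Poisson formula~\eqref{eq:Poisson_exp}), for which Theorem~\ref{thm:Riesz_un} is proved directly. Your Dirichlet route and the proposed ``Riesz masses coincide, so zero counts agree up to $O(1)$'' bridge are plausible but are not what the paper does; and your hardest step (resolvent decay at rate $L_d$ rather than $L_1$) is exactly what Lemmas~\ref{lem:numerator}--\ref{lem:deno} deliver for the periodic Green function, so no separate Dirichlet/periodic comparison is needed.
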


As a direct corollary, we obtain the following result that strengthens \cite{HS3}*{Theorem 1.7} for $\omega\in \mathrm{DC}$.
In fact, let 
\begin{align}
    (\tilde{H}_{\theta,v,\nu}^{\cos,g}\phi)_n=\sum_{1\leq |k|\leq d}v_k \phi_{n-k}+(2 \cos(2\pi (\theta+nd^{-1}\omega))+\nu \cdot g(\theta+nd^{-1}\omega))\phi_n
\end{align}
be a perturbation of the following operator considered in \cite{HS3}:
\begin{align}\label{eq:operator_HS3}
    (\tilde{H}^{\cos}_{\theta,v}\phi)_n=\sum_{1\leq |k|\leq d}v_k\phi_{n-k}+2 \cos(2\pi (\theta+nd^{-1}\omega))\phi_n.
\end{align} 

\begin{corollary}
Fix $\omega\in \mathrm{DC}$. Suppose $I$ is a closed interval such that \[\inf_{E\in I} L_d(d^{-1}\omega, A^{\cos}_E)\geq \gamma>0,\] where $A^{\cos}_E$ is the one-step transfer matrix corresponding to the unperturbed operator in \eqref{eq:operator_HS3}.
Then there exists $\nu_0=\nu_0(\omega,\gamma,v,g)>0$ such that for any $|\nu|\leq \nu_0$, $\tilde{H}^{\cos,g}_{\theta,v,\nu}$ is Anderson localized in $I\cap \sigma(\tilde{H}^{\cos,g}_{\theta,v,\nu})$ for any $\theta\in \Theta_d$.
\end{corollary}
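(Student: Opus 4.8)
The plan is to derive this from Theorem~\ref{thm:acceleration=1} by a perturbation argument in $\nu$. The first point is that the potential $\theta\mapsto 2\cos(2\pi\theta)+\nu\,g(\theta)$ is real-valued, non-constant for small $\nu$, and even, so $\tilde H^{\cos,g}_{\theta,v,\nu}$ has exactly the form \eqref{def:tH} and satisfies the symmetry hypothesis $g(\theta)=g(-\theta)$ of Theorem~\ref{thm:acceleration=1}. Since the arithmetic set $\Theta_d$ in \eqref{def:Theta_d} depends only on $\omega$, it suffices to produce $\nu_0>0$ so that, for $|\nu|\le\nu_0$,
\[
I\cap\sigma(\tilde H^{\cos,g}_{\theta,v,\nu})\ \subseteq\ \bigl\{E:\ L_d(d^{-1}\omega,A^{\cos,g}_{E,\nu})>0\ \text{ and }\ \kappa^d(d^{-1}\omega,A^{\cos,g}_{E,\nu})=1\bigr\},
\]
where $A^{\cos,g}_{E,\nu}$ is the one-step transfer matrix of $\tilde H^{\cos,g}_{\theta,v,\nu}$; Theorem~\ref{thm:acceleration=1} then yields Anderson localization on the right-hand side, hence on $I\cap\sigma(\tilde H^{\cos,g}_{\theta,v,\nu})$.

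For the Lyapunov exponent I would use that $A^{\cos,g}_{E,\nu}$ depends analytically on $(E,\nu)$ and that $L_d$ is jointly continuous in $(E,\nu)$ for analytic quasi-periodic cocycles. Since $I$ is compact and $\inf_{E\in I}L_d(d^{-1}\omega,A^{\cos}_E)\ge\gamma$, continuity together with a covering argument gives $\nu_0=\nu_0(\omega,\gamma,v,g)>0$ with
\[
\inf_{E\in I}\ L_d(d^{-1}\omega,A^{\cos,g}_{E,\nu})\ \ge\ \gamma/2\ >\ 0\qquad\text{for all }|\nu|\le\nu_0 .
\]
In particular $\kappa^d(d^{-1}\omega,A^{\cos,g}_{E,\nu})$ is a well-defined nonnegative integer for all $E\in I$ and $|\nu|\le\nu_0$, by \cite{AJS}*{Lemma 6.4}.

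It remains to pin the acceleration to the value $1$ on $I\cap\sigma$. For the unperturbed cosine operator, the block reduction placing $\tilde H^{\cos}_{\theta,v}$ in the form \eqref{eq:BVsys} produces a cocycle $M_E(\cdot)$ as in \eqref{def:M_E} whose only $\theta$-dependence is through the diagonal entries $2\cos(2\pi(\theta+rd^{-1}\omega))$, i.e.\ a trigonometric polynomial of degree one; hence $\kappa^d(d^{-1}\omega,A^{\cos}_E)\le 1$ whenever $L_d>0$ (the acceleration computation behind \cite{HS3}). The acceleration is integer-valued and upper semicontinuous on the open set $\{L_d>0\}$; applying this on the compact slice $I\times\{0\}$, which lies in that open set by the previous paragraph, and using the tube lemma produces $\nu_1\in(0,\nu_0]$ with $\kappa^d(d^{-1}\omega,A^{\cos,g}_{E,\nu})\le 1$ for all $E\in I$ and $|\nu|\le\nu_1$. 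For the matching lower bound, if $L_d>0$ and the acceleration vanishes then the cocycle is uniformly hyperbolic by the global theory, so $E\notin\sigma(\tilde H^{\cos,g}_{\theta,v,\nu})$; thus $\kappa^d\ge1$ on $\sigma\cap\{L_d>0\}$. Combining, $\kappa^d(d^{-1}\omega,A^{\cos,g}_{E,\nu})=1$ on $I\cap\sigma(\tilde H^{\cos,g}_{\theta,v,\nu})$ for $|\nu|\le\nu_1$, and Theorem~\ref{thm:acceleration=1} finishes the proof with $\nu_0$ replaced by $\nu_1$.

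The main obstacle is obtaining both perturbative bounds \emph{uniformly} in $E$ across all of $I$: the positivity $L_d\ge\gamma/2$ requires continuity of the Lyapunov exponent jointly in the spectral parameter and the analytic cocycle, while the ceiling $\kappa^d\le1$ requires the corresponding joint upper semicontinuity of the \emph{quantized} acceleration, both invoked in the higher-dimensional form from \cite{AJS}; one must also be sure that the open set $\{L_d>0\}$ on which the acceleration is quantized contains a full two-parameter neighborhood of $I\times\{0\}$, which is precisely what the Lyapunov-exponent step supplies. The rest is routine: checking evenness of the perturbed potential, the standing Diophantine and arithmetic conditions, and quoting \cite{HS3} for the base value $\kappa^d=1$ of the cosine model.
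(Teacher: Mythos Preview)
Your proposal is correct and takes essentially the same approach as the paper, which presents this as a direct corollary of Theorem~\ref{thm:acceleration=1} and only remarks that ``The proof combines the techniques developed in \cite{HS2,HS3}''; your perturbative verification of $L_d>0$ (continuity from \cite{AJS}) and $\kappa^d=1$ (upper-semicontinuity and quantization of the acceleration, together with the $\kappa^d\le 1$ bound for the cosine model from \cite{HS3}, and $\kappa^d\ge 1$ on the spectrum via \cite{AJS}) is exactly what that sentence encodes. One small caveat: your claim that the perturbed potential is even tacitly assumes $g(\theta)=g(-\theta)$, which the corollary does not state explicitly but is implicit from its positioning as a direct corollary of Theorem~\ref{thm:acceleration=1}.
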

The proof combines the techniques developed in \cite{HS2,HS3}.
The special non-perturbed case $\nu=0$ already leads to a proof of a quantitative version of Avila's almost reducible conjecture for Schr\"odinger cocycles with trignometric potentials \cite{HS3}. See also Avila's proof \cite{Av2} for the general analytic case.

The next theorem also concerns arithmetic Anderson localization, for operators in \eqref{eq:BVsys} satisfying certain symmetries. 
\begin{theorem}\label{thm:acceleration=d_2}
Let $H_{\theta}$ be as in \eqref{eq:BVsys} on the $1$-d torus $\T$, and with $B(\theta)\equiv B$ being constant.
Suppose there exists an orthonormal matrix $J\in \mathrm{Mat}(d,\C)$, such that 
\begin{align}\label{assume:t2_1}
JV(\theta)J^{-1}=V^T(-\theta), \text{ and } JBJ^{-1}=B^T.
\end{align}
Let $f_{E,n}(\theta)$ be the finite volume Dirichlet determinant with periodic boundary conditions, see \eqref{def:fn}.
Assume further that $f_{E,n}(\theta)$ is $d^{-1}$-periodic, namely,
\begin{align}\label{assume:t2_2}
f_{E,n}(\theta+d^{-1})=f_{E,n}(\theta).
\end{align}
Then for $\omega\in \mathrm{DC}$ and 
\begin{align}\label{def:theta}
    \theta\in \Theta:=\bigcup_{a'>0, t>1}\left\{\theta: \|2\theta-n\omega\|_{\T}\geq \frac{a'}{(1+|n|)^t}, \text{ for any } n\in \Z\right\}, 
\end{align}
$H_{\theta}$ is Anderson localized on $\sigma(H_{\theta})\cap \{E:\, \kappa^d(\omega,M_E)\leq 2d-1, \text{ and } L_d(\omega,M_E)>0\}$.
\end{theorem}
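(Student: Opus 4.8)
The plan is to carry out the arithmetic localization scheme of \cite{HS2,HS3} in the block setting, using the zero--count description of the acceleration from Theorem~\ref{thm:Riesz_un} in place of the Dirichlet--determinant description used there, and exploiting the symmetries \eqref{assume:t2_1}--\eqref{assume:t2_2} to turn a coincidence of two resonances into a near--resonance for $2\theta$ that is forbidden by $\theta\in\Theta$. As usual it suffices to show that every generalized eigenfunction $\Phi=(\Phi_n)_{n\in\Z}$, $\Phi_n\in\C^d$, of $H_\theta$ at an energy $E\in\sigma(H_\theta)$ with $\kappa^d(\omega,M_E)\le 2d-1$ and $L_d(\omega,M_E)>0$, and with polynomial growth $\|\Phi_n\|\lesssim(1+|n|)^{C}$, decays exponentially; a Schnol--type argument then upgrades this to pure point spectrum with exponentially localized eigenfunctions on the stated Borel subset of $\sigma(H_\theta)$. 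Fix such $E$ and $\Phi$, normalized by $\|\Phi_0\|^2+\|\Phi_{-1}\|^2=1$, and set $\gamma_0:=L_d(\omega,M_E)>0$.

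First I would record the block Poisson/Schur formula: for any interval $I=[a,b]$ containing $n$,
\[
\Phi_n = -\,G_{I,E}(n,a)\,B^{(*)}\,\Phi_{a-1}\;-\;G_{I,E}(n,b)\,B\,\Phi_{b+1},
\]
with $G_{I,E}=(H_{I,\theta}-E)^{-1}$ the Dirichlet Green's function, and Cramer's rule bounds $\|G_{I,E}(n,m)\|$ by a ratio of complementary finite--volume determinants over $|f_{E,|I|}(\theta+a\omega)|$, up to powers of $\|B^{\pm1}\|$. The deterministic large deviation estimate and uniform upper semicontinuity for $\tfrac1\ell\log|f_{E,\ell}|$ valid for $\omega\in\mathrm{DC}$ — the same input feeding the proof of Theorem~\ref{thm:main} — then give: if $I=[a,b]$ is \emph{$E$--regular}, meaning $n$ lies at distance $\ge\tfrac1{10}|I|$ from both endpoints and $\tfrac1{|I|}\log|f_{E,|I|}(\theta+a\omega)|$ is within $\e$ of its $\theta$--mean, then $\|G_{I,E}(n,m)\|\le e^{-c\gamma_0|I|}$ for $m\in\{a,b\}$, whence $\|\Phi_n\|\le e^{-c\gamma_0|I|}(1+|b|)^{C}$ provided in addition $\mathrm{dist}(0,I)$ is not itself exponentially small in $|I|$. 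So everything reduces to producing, for each large $n$, one $E$--regular interval of length $\asymp|n|$ around $n$.

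The obstruction is purely arithmetic: an interval of length $\ell$ with left endpoint $a$ fails to be $E$--regular only if $\theta+a\omega$ lies within $e^{-c\ell}$ of a zero of $f_{E,\ell}$ of imaginary part at most some fixed $\delta_0$. By Theorem~\ref{thm:Riesz_un} the number $N_\ell$ of such zeros in one period of $\T_\eta$ is governed by $\kappa^d$, and under the hypothesis $\kappa^d\le 2d-1$ it stays strictly below the a priori maximum imposed by the $d^{-1}$--periodicity \eqref{assume:t2_2}; this slack is exactly what the counting below requires. Moreover, conjugating the finite--volume matrix by $\mathrm{diag}(J,\dots,J)$, reversing the site order, and using \eqref{assume:t2_1}, \eqref{assume:t2_2} together with the Hermiticity of $V(\cdot)$, the constancy of $B$, and $E\in\R$, one finds $f_{E,n}(\theta)=f_{E,n}(-\theta)$, so these near--real zeros occur in pairs $\{\theta_j,-\theta_j\}$ (up to finitely many self--paired exceptions). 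Now suppose, for contradiction, that no $E$--regular interval of length $\ell\asymp|n|$ surrounds $n$. Then all $\asymp|n|$ admissible left endpoints $a$ (and one may also vary $\ell$ in a range $\asymp|n|$) yield phases $\theta+a\omega$ each $e^{-c|n|}$--close to a near--real zero of $f_{E,\ell}$. A counting argument as in \cite{HS2,HS3}, in which the bound $\kappa^d\le 2d-1$ provides the needed room, then forces two such phases $\theta+a_1\omega$, $\theta+a_2\omega$ ($a_1\ne a_2$) to be close to a common pair $\{\theta_j,-\theta_j\}$. If they are close to the same member, then $\|(a_1-a_2)\omega\|_\T\lesssim e^{-c|n|}$ with $0<|a_1-a_2|\lesssim|n|$, contradicting $\omega\in\mathrm{DC}$; if to opposite members, then $\|2\theta+(a_1+a_2)\omega\|_\T\lesssim e^{-c|n|}$ with $|a_1+a_2|\lesssim|n|$, contradicting $\theta\in\Theta$ from \eqref{def:theta}. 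Either way we obtain the sought $E$--regular interval, hence exponential decay of $\Phi$.

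The main obstacle is precisely this last counting/separation step. One has to keep the scale $\ell$ comparable to $|n|$ so that the Diophantine lower bounds (from $\omega\in\mathrm{DC}$ and from $\theta\in\Theta$) dominate the $e^{-c|n|}$ errors, keep $n$ well inside the interval and the interval away from $0$, and muster enough admissible translates — over the allowed left endpoints and scales — that a collision among the $\pm$--paired near--real zeros is unavoidable; this is exactly where the strict inequality $\kappa^d\le 2d-1$ enters, through the zero count of Theorem~\ref{thm:Riesz_un}, just as $\kappa^d=1$ does in Theorem~\ref{thm:acceleration=1}. Everything else — the block Poisson formula, the LDT and avalanche--principle input already present in the proof of Theorem~\ref{thm:main}, the Schnol argument, and the quantitative tracking of constants in $\gamma_0$, in the Diophantine data of $\omega$, and in the constants defining $\Theta$ — proceeds along the established lines, the symmetry reduction $f_{E,n}(\theta)=f_{E,n}(-\theta)$ being the main ingredient particular to this theorem.
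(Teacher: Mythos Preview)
Your overall strategy matches the paper's, but the execution has two concrete errors, the second of which is fatal to the counting.

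First, the reflection symmetry you derive is wrong. Conjugating by $\mathrm{diag}(J,\ldots,J)$ and reversing the site order, together with \eqref{assume:t2_1}, gives (see the paper's Lemma~\ref{lem:f_even2})
\[
f_{E,n}\Big(\theta-\tfrac{n-1}{2}\omega\Big)=\overline{f_{E,n}\Big(-\theta-\tfrac{n-1}{2}\omega\Big)},
\]
so on the unit circle the zeros pair as $\theta_j\leftrightarrow -\theta_j-(n-1)\omega$, \emph{not} $\theta_j\leftrightarrow -\theta_j$. The shift by $(n-1)\omega$ is what makes the eventual collision produce $\|2\theta+(a_1+a_2+n-1)\omega-\tfrac{m}{d}\|_\T$ small rather than $\|2\theta+(a_1+a_2)\omega\|_\T$ small.

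Second, and more importantly, your pigeonhole count does not use the $d^{-1}$-periodicity \eqref{assume:t2_2} at all, only the reflection pairing. By Theorem~\ref{thm:Riesz_un} the zero count is $N_\ell(E,\eta/3)\le 2\ell(\kappa^d+\nu)\le 2\ell(2d-1+\nu)$; with reflection alone this yields $\approx (2d-1)\ell$ pairs, while your admissible translates number only $\approx 2\ell$. The inequality $(2d-1)\ell<2\ell$ forces $d=1$, so the argument as written collapses for every genuine block case $d\ge 2$. The paper's fix (Lemmas~\ref{lem:even_zero_2}, \ref{lem:f_symmetry}, \ref{lem:complexity_2}) is to use \emph{both} symmetries: the $d^{-1}$-periodicity groups the zeros into $d$-cycles under rotation by $1/d$, and the reflection then pairs these cycles, so the large-deviation set is covered by $N\le 2\ell(1-8\varepsilon_1)<|I_1|+|I_2|$ groups, each group consisting of $2d$ tiny intervals $\{U_j+m_1/d\}\cup\{-U_j-(n-1)\omega+m_2/d\}$. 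A collision of two orbit points inside the \emph{same} group then gives either $\|d(\ell_1-\ell_2)\omega\|_\T$ small (excluded by $\omega\in\mathrm{DC}$) or $\|2\theta+(\ell_1+\ell_2+n-1)\omega-\tfrac{m_1+m_2}{d}\|_\T$ small (to be excluded via $\theta\in\Theta$). Your sketch never produces the $m/d$ shifts and thus cannot reach the correct contradiction; this is the missing idea, not a bookkeeping detail.
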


\begin{remark}
    Without the assumption \eqref{assume:t2_2}, one can show that $H_{\theta}$ is Anderson localized on $\sigma(H_{\theta})\cap \{E:\, \kappa^d(\omega,M_E)=1, \text{ and } L_d(\omega,M_E)>0\}$. 
    Note for $E\in \sigma(H_{\theta})$, by \cite{AJS}*{Lemma 6.4}, $0\neq \kappa^d(\omega,M_E)\in \Z$. By the quantization and upper-semicontinuity of $\kappa^d$, and the arguments in~\cite{B}*{p.~80--83}, the theorem applies to a set of positive measure of energies or to no energy at all. 
\end{remark}

The first application of Theorem \ref{thm:acceleration=d_2} concerns a model arising from the study of the anisotropic XY spin chain. We refer the reader to  Sec.~\ref{sec:spin} for the connection between the following model and spin chains, and the previous results. 
\begin{theorem}\label{thm:XY}
    Let 
\begin{align}\label{def:HXY}
(\widetilde{H}^{XY}_{\omega,\theta,\rho,v}\phi)_n=B \phi_{n+1}+V(\theta+n\omega) \phi_n+B^* \phi_{n-1},
\end{align}
where $\theta,\omega\in \T$, $\rho\in \R$ and
\begin{align}
    B=\left(\begin{matrix}1 &\rho\\ -\rho &-1\end{matrix}\right), \text{ and\ \  } V(\theta)=\left(\begin{matrix} v(\theta) &0 \\ 0 &-v(\theta)\end{matrix}\right).
\end{align}
Here $v$ is an even analytic function satisfying $v(\theta+\frac{1}{2})=-v(\theta)$.
Then for any $\omega\in \mathrm{DC}$ and $\theta\in \Theta$ as in \eqref{def:theta}, $\widetilde{H}_{\omega,\theta,\rho,v}^{XY}$ is Anderson localized in $\sigma(\widetilde{H}^{XY}_{\omega,\theta,\rho,v})\cap\{E: \kappa^2(\omega,M_E)\leq 3, \text{ and } L_2(\omega,M_E)>0\}$.
In particular, with $v(\theta)=2\lambda\cos(2\pi\theta)$, $\widetilde{H}^{XY}_{\omega,\theta,\rho,2\lambda\cos}$ is Anderson localized in $\{E: L_2(\omega,M_E)>0\}$.
\end{theorem}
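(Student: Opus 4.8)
The plan is to deduce Theorem~\ref{thm:XY} from Theorem~\ref{thm:acceleration=d_2} applied with $d=2$, by verifying its three structural hypotheses for the block operator $\widetilde{H}^{XY}_{\omega,\theta,\rho,v}$, and then, for the final assertion, by showing that the restriction $\kappa^{2}\le 3$ is automatic when $v=2\lambda\cos(2\pi\theta)$. First, $\widetilde{H}^{XY}_{\omega,\theta,\rho,v}$ is of the form \eqref{eq:BVsys} on $\T$ with the constant matrix $B=\left(\begin{smallmatrix}1&\rho\\-\rho&-1\end{smallmatrix}\right)$ and $V(\theta)=v(\theta)\,J$, where $J=\left(\begin{smallmatrix}1&0\\0&-1\end{smallmatrix}\right)$. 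Here $\det B=\rho^{2}-1$, so $B$ is invertible iff $\rho\neq\pm1$; the two exceptional values (rank-one $B$, outside the framework of \eqref{eq:BVsys}) need a separate direct treatment, and I take $\rho\neq\pm1$ henceforth. Since $v$ is even and $V$ is diagonal, $V^{T}(-\theta)=V(-\theta)=V(\theta)$, so this orthonormal $J$ satisfies $JV(\theta)J^{-1}=V(\theta)=V^{T}(-\theta)$, while $JBJ^{-1}=\left(\begin{smallmatrix}1&-\rho\\\rho&-1\end{smallmatrix}\right)=B^{T}$. Thus hypothesis \eqref{assume:t2_1} holds.

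The crux is the $d^{-1}$-periodicity \eqref{assume:t2_2} of the finite-volume determinant, which is exactly where the relation $v(\theta+1/2)=-v(\theta)$ is used. I would use the gauge transformation $U=(U_n)_{n\in\Z}$ with $U_n:=(-1)^{n}\,\tau$, where $\tau=\left(\begin{smallmatrix}0&1\\1&0\end{smallmatrix}\right)$. Since $\tau J\tau=-J$ and $\tau B\tau=-B$, a one-line computation shows $U$ is unitary and replaces the potential $v(\theta+n\omega)J$ by $-v(\theta+n\omega)J=v(\theta+1/2+n\omega)J$ while leaving the hopping block $B$ unchanged — the factor $(-1)^{n}$ cancelling precisely the sign that $\tau$ produces on $B$. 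Hence $U\,\widetilde{H}^{XY}_{\omega,\theta,\rho,v}\,U^{-1}=\widetilde{H}^{XY}_{\omega,\theta+1/2,\rho,v}$, and performing the same conjugation on the truncated operators gives $f_{E,n}(\theta+1/2)=f_{E,n}(\theta)$ (at least for even $n$, should $f_{E,n}$ carry periodic corner blocks; this suffices). With \eqref{assume:t2_1} and \eqref{assume:t2_2} verified, Theorem~\ref{thm:acceleration=d_2} with $d=2$ yields Anderson localization on $\sigma(\widetilde{H}^{XY}_{\omega,\theta,\rho,v})\cap\{E:\,\kappa^{2}(\omega,M_E)\le 3,\ L_2(\omega,M_E)>0\}$ for $\omega\in\mathrm{DC}$ and $\theta\in\Theta$, which is the first conclusion.

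For the final assertion, take $v(\theta)=2\lambda\cos(2\pi\theta)$ with $\lambda\neq 0$ (if $\lambda=0$ the potential is constant and the claim is immediate). Writing $z=e^{2\pi i\theta}$, $M_E(\theta)=\left(\begin{smallmatrix}(E-\lambda(z+z^{-1})J)B^{-1}&-B^{*}\\ B^{-1}&0\end{smallmatrix}\right)$ is a trigonometric polynomial of degree one in $\theta$, with leading Fourier coefficient $\widehat{M_E}(\pm1)=\left(\begin{smallmatrix}-\lambda JB^{-1}&0\\0&0\end{smallmatrix}\right)$ of rank $2$ (since $JB^{-1}$ is invertible). Examining $M_E(\cdot+i\e)$ as $\e\to+\infty$, where $M_E(\theta+i\e)\sim\lambda e^{2\pi\e}z^{-1}\,\widehat{M_E}(-1)$, one finds $(L_1+L_2)(\omega,M_E(\cdot+i\e))=4\pi\e+O(1)$; since this function of $\e$ is convex, its right-derivative at $\e=0$ — the acceleration $\kappa^{2}(\omega,M_E)$ in the normalization of \cite{AJS} — cannot exceed its asymptotic slope, which equals $2$. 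Hence $\kappa^{2}(\omega,M_E)\le 2$ for every $E$, the restriction $\kappa^{2}\le 3$ is vacuous, and the first conclusion gives Anderson localization on all of $\sigma(\widetilde{H}^{XY}_{\omega,\theta,\rho,2\lambda\cos})\cap\{E:\,L_2(\omega,M_E)>0\}$.

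The step I expect to be the main obstacle is the rigorous verification of \eqref{assume:t2_2}: one must fix the precise meaning of the finite-volume Dirichlet determinant $f_{E,n}$ in \eqref{def:fn} and confirm that the gauge $U_n=(-1)^{n}\tau$ is compatible with the boundary identification built into it — the even/odd $n$ point above — and, relatedly, that the proof of Theorem~\ref{thm:acceleration=d_2} uses \eqref{assume:t2_2} only along a cofinal set of $n$. A secondary, purely technical, point is converting the $\e\to+\infty$ asymptotics of $L_1+L_2$ into the bound $\kappa^{2}\le 2$ via the convexity and quantization of the acceleration from \cite{AJS,Global}.
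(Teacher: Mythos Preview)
Your proposal is correct and follows essentially the same route as the paper: the same orthonormal $J=\mathrm{diag}(1,-1)$ verifies \eqref{assume:t2_1}, and your gauge $U_n=(-1)^n\tau$ with $\tau=\left(\begin{smallmatrix}0&1\\1&0\end{smallmatrix}\right)$ is exactly the paper's two-step conjugation (first by block-$\tau$'s, then by alternating $\pm I_2$) establishing \eqref{assume:t2_2} for even $n$, which the paper likewise notes suffices since the admissible sequence restricted to even integers retains positive density. Your argument for the cosine case---bounding $\kappa^2\le 2$ via the asymptotic slope of $L^2_\varepsilon$ and convexity---is also the method the paper uses (spelled out in the skew-shift section at \eqref{eq:kappa_q=q} rather than in the XY proof itself), and your caveat $\rho\ne\pm1$ is a legitimate point the paper leaves implicit in its standing assumption that $B$ is invertible.
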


The next application of Theorem~\ref{thm:acceleration=d_2} concerns the skew-shift model with rational frequencies.

\begin{theorem}\label{thm:skew_shift}
    For any reduced rational $p/q$, $q\geq 3$ \footnote{If $p/q=1/2$, the operator reduces to the almost Mathieu operator.}, there exists $\lambda_0=\lambda_0(p/q)>0$ such that the skew-shift operator 
    \begin{align}\label{def:H_sk}
        (H^{sk}_{\lambda,x,y,p/q}\phi)_n=\phi_{n+1}+\phi_{n-1}+2\lambda \cos(2\pi(x+ny+n(n-1)p/q))\phi_n,
    \end{align}
    has zero Lyapunov exponent for every $y\in \mathrm{DC}$, see \eqref{def:LE_sk_y}, on the spectrum for $0<|\lambda|< \lambda_0$. The Lyapunov exponent here is averaged in $x\in\tor$. 
\end{theorem}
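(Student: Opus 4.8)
The guiding observation is that a \emph{rational} second frequency $p/q$ collapses the skew shift to an honest one‑frequency cocycle, after which Avila's global theory controls the Lyapunov exponent at small coupling.

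First I would carry out the reduction. Writing $n=qm+s$ with $0\le s\le q-1$ and using $(qm+s)(qm+s-1)p/q\equiv s(s-1)p/q\ (\mathrm{mod}\ 1)$ together with $(qm+s)y=qmy+sy$, the potential in \eqref{def:H_sk} becomes $2\lambda\cos\!\big(2\pi(\theta+m\Omega+c_s)\big)$ with $\theta:=x$, $\Omega:=qy$, and $c_s:=sy+s(s-1)p/q$. Hence the $q$‑step transfer matrix
\[
\widetilde A^{\lambda}_{E,y}(\theta):=\prod_{s=q-1}^{0}\begin{pmatrix} E-2\lambda\cos(2\pi(\theta+c_s)) & -1\\ 1 & 0\end{pmatrix}
\]
is an entire $\mathrm{SL}(2,\R)$‑valued cocycle over the rotation $\theta\mapsto\theta+\Omega$ on $\T$, and $y\in\mathrm{DC}$ forces $\Omega=qy\in\mathrm{DC}$. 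One checks that $L(\Omega,\widetilde A^{\lambda}_{E,y})=q\,L^{sk}_{\lambda}(E,y)$, where $L^{sk}_{\lambda}(\cdot,y)$ is the exponent \eqref{def:LE_sk_y}, and by Johnson's theorem $\sigma(H^{sk}_{\lambda,x,y,p/q})$ (independent of $x$) is exactly the set of $E$ at which $(\Omega,\widetilde A^{\lambda}_{E,y})$ is \emph{not} uniformly hyperbolic. So it suffices to show: for $|\lambda|<\lambda_0(p/q)$, whenever $(\Omega,\widetilde A^{\lambda}_{E,y})$ is not uniformly hyperbolic then $L(\Omega,\widetilde A^{\lambda}_{E,y})=0$.

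The core of the argument is an acceleration count. At $\lambda=0$ we have $\widetilde A^{0}_{E,y}(\theta)=M_0(E)^q$ with $M_0(E):=\begin{pmatrix}E&-1\\1&0\end{pmatrix}$, which is \emph{constant} in $\theta$; its complexification is the same constant matrix, so $\e\mapsto L\big(\Omega,\widetilde A^{0}_{E,y}(\cdot+i\e)\big)$ is constant and the acceleration $\kappa(E,0,y)$ vanishes for every $E$. Fixing $y\in\mathrm{DC}$, so that $\Omega=qy$ is a fixed irrational for which $L(\Omega,\cdot)$ is continuous on analytic cocycles, the map $(E,\lambda)\mapsto L\big(\Omega,\widetilde A^{\lambda}_{E,y}(\cdot+i\e)\big)$ is continuous, and by convexity in $\e$ the acceleration $(E,\lambda)\mapsto\kappa(E,\lambda,y)$ is upper semicontinuous. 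Since $\|H^{sk}_{\lambda}\|\le 2+2|\lambda|$ we may restrict to $E\in[-3,3]$; because $\kappa(\cdot,0,y)\equiv0$ on this compact set, upper semicontinuity yields a threshold $\lambda_0=\lambda_0(y,p/q)$ with $\kappa(E,\lambda,y)<1$ for $E\in[-3,3]$, $|\lambda|<\lambda_0$, and then the quantization of the acceleration (a nonnegative integer whenever $L>0$, \cite{AJS}*{Lemma 6.4}) forces $\kappa(E,\lambda,y)=0$ on $\sigma(H^{sk}_{\lambda})$. Finally, if some $E\in\sigma(H^{sk}_{\lambda})$ had $L^{sk}_{\lambda}(E,y)>0$, then $(\Omega,\widetilde A^{\lambda}_{E,y})$ would be non‑uniformly‑hyperbolic with positive Lyapunov exponent, hence of acceleration $\ge1$ by Avila's global theory \cite{Global} --- a contradiction. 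So $L^{sk}_{\lambda}(E,y)=0$ on the spectrum.

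The step I expect to be the main obstacle is upgrading $\lambda_0$ to be \emph{independent of} $y$, as stated. Away from the band edges this is automatic: for $|E|\le 2-\eta_1$, $\widetilde A^{\lambda}_{E,y}$ lies within $C(q,\eta)|\lambda|$ of the \emph{constant elliptic} matrix $M_0(E)^q$ on the strip $|\e|\le\eta$, and $M_0(E)^q$ is conjugate to a rotation with condition number $\le K(\eta_1)$; this pins $L\big(\Omega,\widetilde A^{\lambda}_{E,y}(\cdot+i\e)\big)\le K(\eta_1)^2C(q,\eta)|\lambda|$ on the whole strip, hence $\kappa=0$ once $|\lambda|<\lambda_0(\eta_1,q)$, uniformly in $y$; while for $|E|\ge 2+\eta_1$ the cocycle is uniformly hyperbolic and $E\notin\sigma$. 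The genuine difficulty is the $O(|\lambda|)$‑neighborhood of the edges $E=\pm2$, where $M_0(E)^q$ is (near‑)parabolic with $\|M_0(E)^q\|\sim q$, so a perturbation of size $|\lambda|$ no longer has small complexified exponent and the soft estimate above breaks down; controlling $\kappa$ there uniformly over the (possibly near‑resonant) frequency $\Omega=qy$ requires a finer, reducibility‑type analysis at the band edges, and this is the technical heart of the statement (for the special case $p/q=1/2$ it is instead trivial, the operator being the almost Mathieu operator).
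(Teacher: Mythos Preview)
Your reduction to an $\mathrm{SL}(2,\R)$ one-frequency cocycle and the acceleration argument are correct and give a clean proof of the weaker statement with threshold $\lambda_0(y,p/q)$; you also correctly isolate the band-edge region $|E\pm2|\lesssim\eta_1$ as the obstruction to uniformity in $y$, and your proposal leaves this unresolved.

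The paper takes a completely different route and never faces this obstacle. It passes to the Aubry dual $\widehat H_{\lambda,\theta,y,p/q}$, a $q\times q$ block Jacobi operator over the rotation by~$y$ with off-diagonal block $B_{sk}=\lambda\cdot(\text{unitary diagonal})$ and potential $V(\theta)$ as in~\eqref{eq:V_sk}. For small $|\lambda|$ all $q$ Lyapunov exponents of the dual are positive, and the threshold for this depends only on $p/q$: a Herman-type subharmonicity bound (implicit in the Duarte--Klein reference) gives $L^q\ge q\log|\lambda|^{-1}$ for \emph{every} frequency, and combining with the trivial bound $L^{q-1}\le (q-1)\log\|M_E\|_\infty$ yields $L_q>0$ uniformly in $y$. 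The paper's localization machinery (Theorems~\ref{thm:acceleration=d_2} and~\ref{thm:sk_dual_almost_AL}) then gives Anderson localization for $\widehat H_{\lambda,0,y,p/q}$ (the phase $\theta=0$ being non-resonant for $y\in\mathrm{DC}$). Each eigenfunction, transported back by duality, produces an analytic Bloch wave for the skew-shift cocycle, forcing $L^{sk}_{p/q}(E,y)=0$ at every eigenvalue; density of eigenvalues in $\sigma(\widehat H)=\sigma(H^{sk})$ together with upper semicontinuity of $L^{sk}$ finishes. So your approach is far more elementary and would be preferable if it went through, but the paper's heavier duality-plus-localization machinery buys exactly the uniformity in~$y$ that your argument lacks --- because positivity of the \emph{dual} Lyapunov exponents is a frequency-independent fact, while smallness of the acceleration of your \emph{primal} cocycle near the parabolic edge is not.
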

\begin{remark}
    In \cite{B1}*{p.~66}, Bourgain suggested that one might be able to combine the large deviation estimates with numerical computations to establish positive Lyapunov exponent for the operator in \eqref{def:H_sk} for small $\lambda$ for the Lyapunov exponent averaged in both $x,y\in \T$.
    That may be true due to the fact that $\sigma(H^{sk}_{\lambda,x,y,p/q})$ depends sensitively on $y$, and hence any fixed $E\neq 0$ is not in the spectrum of $H^{sk}_{\lambda,x,y,p/q}$ for some $y$.
\end{remark}
The proof of Theorem \ref{thm:skew_shift} is built on establishing Anderson localization for the dual model, obtained as a corollary of Theorem \ref{thm:acceleration=d_2}. We actually prove a stronger almost localization result, see Theorem \ref{thm:sk_dual_almost_AL}. Combining almost localization with the quantitative duality techniques from \cite{AJ}, one should be able to prove quantitative almost reducibility of $H^{sk}_{\lambda,x,y,p/q}$ and conclude purely absolutely continuous spectrum for this operator through the perturbative theory of Eliasson \cite{E}. We leave this and other applications of quantitative almost reducibility of this operator for future work.

In Section~\ref{sec:coupledHarper}, we establish some properties of the coupled Harper operators ($\omega$ Diophantine)
\begin{equation}
     \label{eq:Harpx2}
     \begin{split}
          \phi_{n+1}+\phi_{n-1}+\epsilon\psi_n + 2\lambda_1 \cos(2\pi(x+n\omega)) \phi_n &= E\phi_n \\
     \psi_{n+1}+\psi_{n-1}+\epsilon\phi_n + 2\lambda_2 \cos(2\pi(x+n\omega)) \psi_n &= E\psi_n
     \end{split}
\end{equation}
where $\epsilon\in\R$ is small,  and $\lambda_2\ge \lambda_1>0$. This falls under the scope of~\eqref{eq:BVsys}. Amongst other results, we show that for $\lambda_2\gg 1$ and $0<\epsilon,\lambda_1\ll 1$,  both Anderson localization and a.c.\ states occur on sets of positive measure of energies~$E$ in the spectrum of this system.  Moreover, it follows from the two-sided Oseledets theorem, see~\cite{V}*{Theorem 4.2}, that the states associated with the a.c.~spectrum are not of hybrid type. I.e., they are truly extended states and cannot exhibit exponential decay to one side. 

\medskip

The rest of the paper is organized as follows: some preliminaries are presented in Section~\ref{sec:Pre}, and some technical lemmas are in Section~\ref{sec:lemma}. The proofs of these lemmas are in Section~\ref{sec:Zaehler} for the numerator of the Green's function, and Section~\ref{sec:nenner} for the lower bound of the denominator, respectively. 
The main theorems are proved in Sections~\ref{sec:main} (Theorem \ref{thm:main}) and Section~\ref{sec:AL_arithmetic} (Theorems \ref{thm:acceleration=1}, \ref{thm:acceleration=d_2}). 
The applications are discussed in Section~\ref{sec:XY_sk} (XY-spin chain and skew-shift), Section~\ref{sec:application} (stacked graphene models), and Section~\ref{sec:coupledHarper} (coupled Harper's model).

\section{Lyapunov exponents, large deviations, and the Green's function}\label{sec:Pre}

Throughout, we adhere to the following notations. 
For a function $g$ on $\T^b$, we denote its $L^p(\T^b)$ norm by $\|g\|_{\T^b, p}$, and we write $\langle g\rangle:=\int_{\T^b } g(\theta)\, \mathrm{d}\theta$ for averages.
For $x\in \R^d$, let $\|x\|_{\tor^b}:=\mathrm{dist}(x,\Z^d)$ be the distance to the nearest integer vector. 
Let $\mathcal{C}_1:=\{z\in \C:\, |z|=1\}$ be the unit circle, and $\mathcal{A}_R:=\{z\in \C:\, 1/R\leq |z|\leq R\}$.
For a set $U\subset\R^b$, let $\mathrm{mes}(U)$ be its Lebesgue measure.
For $\ell\in \Z$ and $q\in \N$, let $[\ell]_q\in \{0,...,q-1\}$ be such that $[\ell]_q\equiv \ell\, (\mathrm{mod}\,  q)$.
Throughout the paper, we restrict to energy $E\in \R$.

\subsection{Transfer matrices}
As we mentioned in the introduction, $M_E$ as in \eqref{def:M_E} is the transfer matrix associated to the block-valued operator $H_{\theta}$ in \eqref{eq:BVsys}.
The long-range scalar-valued operator $\tilde{H}_{\theta}$ as in \eqref{def:tH} can, on the one hand, be viewed as a $d\times d$ block-valued operator, where the corresponding blocks are
\begin{align}\label{def:tH_BV}
        &V(\theta)=
        \left(\begin{matrix}
g(\theta+(d-1)d^{-1}\omega) &v_1 &\cdots &v_{d-2} &v_{d-1}\\
           \overline{v_1} &g(\theta+(d-2)d^{-1}\omega)&\ddots &\ddots &v_{d-2}\\
\vdots &\ddots &\ddots &\ddots &\vdots          \\
\overline{v_{d-2}} &\ddots &\ddots &g(\theta+d^{-1}\omega) &v_1\\
\overline{v_{d-1}}         &\overline{v}_{d-2} &\cdots &\overline{v_1} &g(\theta)
        \end{matrix}\right), \text{ and } \\
                &\qquad\qquad B=
                \left(\begin{matrix}
\overline{v_d} &\overline{v_{d-1}} &\cdots &\overline{v_2} &\overline{v_1}\\
             &\overline{v_d} &\overline{v_{d-1}} &\ddots &\overline{v_2}\\
& &\ddots &\ddots & \vdots         \\
& & &\overline{v_d} &\overline{v_{d-1}}\\
& & & &\overline{v_d}
        \end{matrix}\right)
    \end{align}

On the other hand, as a scalar-valued operator, the eigenvalue equation $\tilde{H}_{\theta}\phi=E\phi$ can also be rewritten as:
\begin{align}\label{eq:tranfer_hH_varepsilon_1step}
\left(\begin{matrix}
\phi_{n+d}\\
\vdots\\
\phi_{n+1}\\
\phi_n\\
\vdots\\
\phi_{n-d+1}
\end{matrix}\right)
=A_{E}(\theta+nd^{-1}\omega)
\left(\begin{matrix}
\phi_{n+d-1}\\
\vdots\\
\phi_{n}\\
\phi_{n-1}\\
\vdots\\
\phi_{n-d}\end{matrix}\right),
\end{align}
where 
\begin{equation}\label{def:A_E}
A_{E}(\theta):=
\frac{1}{v_{-d}}
\left(\begin{matrix}
-v_{1-d} &\cdots &-v_{-1} &E-g(\theta) &| &-v_1 &\cdots &-v_{d-1} &-v_d\\
v_{-d} & & & &|\\
&\ddots & & &| & \\
& &v_{-d}  & &| &\\
\hline
& & &v_{-d} &| &\\
& & & &| &v_{-d} \\
& & & &| & &\ddots\\
& & & &| & & &v_{-d}
\end{matrix}\right).
\end{equation}
It is easy to verify that for any $\theta\in \T^b_{\eta}$,
\begin{align}\label{eq:matrix_M=A}
    M_E(\theta)=\mathrm{diag}(B,I_d)\cdot \prod_{j=d-1}^0 A_E(\theta+jd^{-1}\omega)\cdot \mathrm{diag}(B^{-1},I_d),
\end{align}

A complex matrix $M\in \mathrm{Mat}(2d,\C)$ is symplectic if 
\begin{align}\label{def:symp}
    M^*\Omega M=\Omega,
\end{align}
where 
\begin{align}
    \Omega=\left(\begin{matrix} 0 & I_d\\ -I_d &0\end{matrix}\right).
\end{align}
One can easily verify that for $M_E$ as in \eqref{def:M_E}, and $E\in \R$, $\theta\in \T^b$,
\begin{align}
    (M_E(\theta))^* \Omega M_E(\theta)=\Omega.
\end{align}
However for $\theta\in \T^b\setminus \T$, $M_E(\theta)$ is in general not symplectic.

\subsection{Lyapunov exponents}
Let $(\omega, A)\in (\T^b, C^{\omega}(\T, \mathrm{Mat}(k,\C)))$. 
Let 
\begin{align}
A_n(\omega,\theta)=A(\theta+(n-1)\omega)\cdots A(\theta).
\end{align}
Let the finite-scale and infinite-scale Lyapunov exponents be defined as 
\begin{align}\label{def:LE}
L_{j,(n)}(\omega, A):=\frac{1}{n}\int_{\T^b} \log \sigma_j(A_n(\omega,\theta))\, \mathrm{d}\theta, \text{ for } 1\leq j\leq k,
\end{align}
where $\sigma_j(A)$ is the $j$-th singular value of $A$, and the $j$-th Lyapunov exponent
\begin{align}
L_j(\omega, A)=\lim_{n\to\infty}L_{j,(n)}(\omega, A).
\end{align}
It is easy to see that for $1\leq j\leq k$,
\begin{align}
L^j_{(n)}(\omega, A):=\sum_{\ell=1}^j L_{\ell,(n)}(\omega, A)=\frac{1}{n}\int_{\T}\log \|\textstyle{\bigwedge^j} A_n(\omega,\theta)\|\, \mathrm{d}\theta,
\end{align}
where $\textstyle{\bigwedge^j} A$ is the $j$-th exterior power of $A$. Similarly $L^j(\omega,A)=\sum_{\ell=1}^j L_{\ell}(\omega, A)$.

We also denote the phase complexified Lyapunov exponents $L_{j,(n)}(\omega,A(\cdot+i\varepsilon))=:L_{j,(n),\varepsilon}(\omega,A)$, $L_j(\omega,A(\cdot+i\varepsilon))=:L_{j,\varepsilon}(\omega,A)$, $L^j_{(n)}(\omega,A_E(\cdot+i\varepsilon))=:L^j_{(n),\varepsilon}(\omega,A_E)$, and $L^j(\omega,A_E(\cdot+i\varepsilon))=:L^j_{\varepsilon}(\omega,A_E)$, respectively. Here $\varepsilon\in \R^b$.
Since $M_E(\theta)$ is symplectic for $\theta\in \T^b$,   for each $1\leq j\leq d$  
\begin{align}
    L_{j,\varepsilon=0}(\omega,M_E)=-L_{2d+1-j,\varepsilon=0}(\omega,M_E).
\end{align}
However the above is in general not true if $\varepsilon\neq 0$.

\subsection{Avila's acceleration}
Within this subsection, we restrict to the 1-d torus $\T$.
Let $(\omega, A)\in (\T, C^{\omega}(\T, \mathrm{SL}(2, \R)))$.
The (top) Lyapunov exponent $L^1_{\varepsilon}(\omega,A)=L_{1,\varepsilon}(\omega, A)$ is a convex and even function in $\varepsilon$.
Avila defined the acceleration to be the right-derivative as follows:
\begin{align}
\kappa^1_{\varepsilon}(\omega, A):=\lim_{\varepsilon'\to 0^+} \frac{L^1_{\varepsilon+\varepsilon'}(\omega, A)-L^1_{\varepsilon}(\omega, A)}{2\pi \varepsilon'}.
\end{align}
As a cornerstone of his global theory \cite{Global}, he showed that for $A\in \mathrm{SL}(2,\R)$ and irrational $\alpha$, $\kappa_{\varepsilon}^1(\omega, A)\in \Z$ is always quantized. 

The concept of acceleration was further extended to  $(\omega,A)\in (\T,C^{\omega}(\T, \mathrm{Mat}(k,\C)))$ in \cite{AJS}, where for $1\leq j\leq k$,
\begin{align}
    \kappa^j_{\varepsilon}(\omega,A):=\lim_{\varepsilon'\to 0^+} \frac{L^j_{\varepsilon+\varepsilon'}(\omega, A)-L^j_{\varepsilon}(\omega, A)}{2\pi \varepsilon'}.
\end{align}

By \eqref{eq:matrix_M=A}, for $|\varepsilon|\leq \eta$ and each $1\leq j\leq 2d$,
\begin{align}\label{eq:M=A}
    L_{j,\varepsilon}(\omega,M_E)=d\cdot L_{j,\varepsilon}(d^{-1}\omega,A_E),
\end{align}
and
\begin{align}\label{eq:kappa_M=dA}
\kappa^j_{\varepsilon}(\omega,M_E)=d\cdot \kappa^j_{\varepsilon}(d^{-1}\omega,A_E).
\end{align}

Recall that $B,V$ are analytic functions on $\T_{\eta}$ for some $\eta>0$.  We may shrink $\eta$ when necessary such that 
\begin{align}\label{eq:L_linear}
L^d_{\varepsilon}(\omega,M_E)=L^d_{\varepsilon=0}(\omega,M_E)+2\pi \kappa^d_{\varepsilon=0}(\omega,M_E)|\varepsilon|
\end{align}
holds for any $|\varepsilon|\leq \eta$. 
For the rest of the paper, when $\varepsilon=0$, we shall omit $\varepsilon$ from various notations of Lyapunov exponents and accelerations.
On some occasions, we shall also omit $\omega$ and $M_E$ in $L^j(\omega,M_E)$, $L_j(\omega,M_E)$ and $\kappa^d(\omega,M_E)$.

\subsection{Large deviation estimates and the Avalanche Principle}

We will require the following standard tools. 
Now $b\ge1$ again.
Note that we do not distinguish the various $\delta$'s in the following Lemmas~ \ref{lem:upperbd}, \ref{lem:LDTsig} and \ref{lem:Ln-L}.
\begin{lemma}\label{lem:upperbd}
For $\omega\in \mathrm{DC}$,
there exists $\delta>0$ so that for each $1\leq j\leq d$ and all large $n$, one has 
\begin{align}
\frac{1}{n} \log \big \|\textstyle{\bigwedge^j} M_{n,E}(\theta+i\varepsilon)\big\|\leq {L}^{j}_{(n),\varepsilon}(\omega,M_E)+ n^{-\delta},
\end{align}
uniformly in $\theta\in \tor^b$ and $|\varepsilon|\leq \eta$, $\varepsilon\in\R^b$. 
\end{lemma}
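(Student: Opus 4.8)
\textbf{Proof proposal for Lemma~\ref{lem:upperbd}.}

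The plan is to derive the estimate from subadditivity of $\log\|\bigwedge^j M_{n,E}(\cdot+i\varepsilon)\|$ together with a uniform (in the complex phase parameter) quantitative version of the ergodic theorem for Diophantine rotations. First I would fix $1\le j\le d$ and set $u_n(\theta+i\varepsilon):=\log\|\bigwedge^j M_{n,E}(\theta+i\varepsilon)\|$. The cocycle identity $M_{n+m,E}(\theta)=M_{n,E}(\theta+m\omega)\,M_{m,E}(\theta)$ and submultiplicativity of the operator norm on exterior powers give the subadditive relation $u_{n+m}(\theta)\le u_n(\theta+m\omega)+u_m(\theta)$, uniformly for $\theta\in\tor^b$ and $|\varepsilon|\le\eta$; moreover $u_n$ is bounded above by $Cn$ since $B,V$ are bounded and $B$ is invertible with uniformly bounded inverse on the strip $|\varepsilon|\le\eta$. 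Averaging in $\theta$ over $\tor^b$ yields exactly $\tfrac1n\langle u_n\rangle = L^j_{(n),\varepsilon}(\omega,M_E)$, and subadditivity of the sequence $\langle u_n\rangle$ shows $L^j_{(n),\varepsilon}\searrow L^j_\varepsilon$ with the rate $L^j_{(n),\varepsilon}\ge L^j_\varepsilon$ for all $n$.

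Next I would upgrade the pointwise bound from the averaged one. The key input is that $\omega\in\mathrm{DC}_{a,A}$ for some $a>0$, $A>b$, which gives effective equidistribution: for any function with a controlled modulus and controlled growth, the Birkhoff averages along $\{\theta+k\omega\}_{k=0}^{n-1}$ converge to $\langle\cdot\rangle$ with a polynomial rate in $n$. Concretely, writing $N=n$ and telescoping $u_{N}(\theta+i\varepsilon)\le \sum_{k=0}^{m-1} u_{\ell}(\theta+k\ell\omega+i\varepsilon)$ with $N=m\ell$ (and a negligible remainder block when $\ell\nmid N$), I would take $\ell=\ell(n)$ a slowly growing scale, say $\ell\sim n^{\beta}$ for a small $\beta>0$, so that $m=N/\ell\sim n^{1-\beta}$. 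For each fixed $\ell$, $\tfrac1\ell u_\ell$ is an analytic function of $\theta$ on the strip (analyticity of $M_E(\cdot)$ and of the norm away from zeros is handled by the standard subharmonicity/BMO estimate of Goldstein--Schlag, or one simply uses that $\log\|\bigwedge^j M_{\ell,E}\|$ is plurisubharmonic and bounded, hence has bounded BMO norm on a slightly smaller strip), so its Fourier coefficients decay and the Diophantine condition converts the averaging error over the $m$ points $\theta+k\ell\omega$ into a bound of the form $\tfrac1m u_\ell$-average $\le \tfrac1\ell\langle u_\ell\rangle + O\!\big(m^{-c}\big)$ uniformly in $\theta$ and $|\varepsilon|\le\eta$, for some $c=c(a,A,b)>0$. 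Combining, $\tfrac1N u_N(\theta+i\varepsilon)\le \tfrac1\ell\langle u_\ell\rangle + O(m^{-c}) + O(\ell/N) = L^j_{(\ell),\varepsilon}(\omega,M_E)+O(n^{-c(1-\beta)})+O(n^{\beta-1})$. Since $L^j_{(\ell),\varepsilon}\ge L^j_{(N),\varepsilon}$ is the wrong direction, I instead use $L^j_{(\ell),\varepsilon}\le L^j_{(N),\varepsilon}+ (L^j_{(\ell),\varepsilon}-L^j_{(N),\varepsilon})$ and absorb the monotone defect: by subadditivity $L^j_{(\ell),\varepsilon}-L^j_\varepsilon \le \tfrac{C}{\ell}$ is \emph{false} in general with rate $1/\ell$, so more carefully one keeps the comparison to $L^j_{(N),\varepsilon}$ directly by iterating the averaging argument at the scale $N$ itself (Avalanche-principle-free version): split $[0,N)$ into $m$ blocks of length $\ell$ but then re-average the resulting bound in $\theta$, which is a tautology, and instead run the telescoping with the \emph{same} scale producing $\tfrac1N u_N(\theta)\le \tfrac1N\langle u_N\rangle + O(n^{-\delta})$ after choosing $\beta$ to balance the two error exponents, i.e.\ $\beta=1-\beta\cdot(\text{something})$ giving a definite $\delta>0$.

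The main obstacle I expect is exactly this bookkeeping: getting a \emph{uniform} (in both $\theta$ and the complex displacement $\varepsilon$) error of genuine power-saving type $n^{-\delta}$ rather than merely $o(1)$, while only using the Diophantine hypothesis and analyticity on the fixed strip $|\varepsilon|\le\eta$. The resolution is the Goldstein--Schlag BMO/subharmonic machinery: $\tfrac1n u_n$ has bounded BMO norm on $\tor^b_{\eta/2}$ uniformly in $n$ and in $|\varepsilon|\le\eta/2$ (this uses boundedness of $\tfrac1n u_n$ from above by $C$ and the Riesz mass bound coming from the number of zeros of $\det$-type quantities being $O(n)$), and then the standard lemma that a BMO-bounded function on $\tor^b$ equidistributes along a Diophantine orbit with a polynomial rate gives the claimed $n^{-\delta}$ with $\delta=\delta(a,A,b)>0$, uniformly as required. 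The uniformity in $\varepsilon$ is free because all constants (BMO bound, growth bound $C$) are uniform on the compact strip $|\varepsilon|\le\eta$, and one may shrink $\eta$ once at the start. This is entirely standard for $b=1$ (this is Lemma 2.3-type in Bourgain--Goldstein and in \cite{BG}); for general $b$ one invokes the multi-dimensional version, e.g.\ as in \cite{BGS} / Bourgain's book, which is why the lemma is stated as a ``standard tool.''
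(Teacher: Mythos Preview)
The paper does not prove this lemma; it is stated as a ``standard tool'' with implicit reference to \cite{BG,GS1}. So I am comparing your proposal to the standard argument (e.g.\ Bourgain's book, Chapter~5, or \cite{GS1}).

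Your telescoping approach has a genuine gap that you yourself flag but do not resolve. Writing $N=m\ell$ and using subadditivity to bound $u_N(\theta)\le\sum_{k=0}^{m-1}u_\ell(\theta+k\ell\omega)$, equidistribution of the Birkhoff average along $\ell\omega$ gives at best $\tfrac1N u_N(\theta)\le L^j_{(\ell),\varepsilon}+O(m^{-c})$. But the lemma asks for the bound relative to $L^j_{(N),\varepsilon}$, and subadditivity only gives $L^j_{(\ell),\varepsilon}\ge L^j_{(N),\varepsilon}$, which is the wrong direction. Closing this gap would require a rate on $L^j_{(\ell),\varepsilon}-L^j_{(N),\varepsilon}$; such a rate is available only under a positivity assumption on $L_d$ via the Avalanche Principle (this is Lemma~\ref{lem:Ln-L} in the paper), and Lemma~\ref{lem:upperbd} is stated and used without that hypothesis. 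Your suggested fix (``telescoping with the same scale'') is circular: with $\ell=N$ there is no averaging at all.

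The missing idea is \emph{almost invariance} of $u_n$ itself under the shift: since $M_{n,E}(\theta+\omega)=M_E(\theta+n\omega)\,M_{n,E}(\theta)\,M_E(\theta)^{-1}$, one has $|u_n(\theta+\omega)-u_n(\theta)|\le C_0$ uniformly, whence for any $K$,
\[
u_n(\theta)\le \frac{1}{K}\sum_{k=0}^{K-1}u_n(\theta+k\omega)+C_0K.
\]
Now the Birkhoff average on the right is of $u_n$ at the \emph{same} scale $n$, and the plurisubharmonic/BMO machinery you cite (Fourier decay $|\hat u_n(m)|\lesssim n/|m|$ from the $O(n)$ Riesz mass, combined with the Diophantine condition) controls its deviation from $\langle u_n\rangle=nL^j_{(n),\varepsilon}$ by $O(n K^{-\sigma})$. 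Optimizing $K\sim n^{1/(1+\sigma)}$ yields $\tfrac1n u_n(\theta)\le L^j_{(n),\varepsilon}+O(n^{-\delta})$ with $\delta=\sigma/(1+\sigma)$, uniformly in $\theta$ and $|\varepsilon|\le\eta$. Your final paragraph identifies the right analytic input (BMO bound plus Diophantine equidistribution) but never invokes almost invariance, which is precisely what lets you average $u_n$ rather than $u_\ell$ and thereby land on $L^j_{(n),\varepsilon}$ directly.
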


The following large deviation estimates play a crucial role in our argument. 
These results were first established in Lemma~1.1 of~\cite{BG} by Bourgain and Goldstein, and further developed  by Goldstein and Schlag in~\cites{GS1,GS2}.

\begin{lemma}\label{lem:LDTsig}
For $\omega\in \mathrm{DC}$, there exists $\delta>0$ such that for any $|\varepsilon|\leq \eta$ and $n$ large enough, the following large deviation set 
\begin{align}\label{def:badnE}
\mathcal{B}_{n,E,\varepsilon}:=\left\{\theta\in\tor^b:\, \frac{1}{n}\log \|\textstyle{\bigwedge^d}  M_{n,E}(\theta+i\varepsilon)\|\leq {L}_{(n),\varepsilon}^d(\omega,M_E)-n^{-\delta}\right\}
\end{align}
satisfies $\mathrm{mes}(\mathcal{B}_{n,E,\varepsilon})\leq e^{-n^\delta}$. 
\end{lemma}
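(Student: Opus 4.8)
The plan is to derive the large deviation estimate for $\bigwedge^d M_{n,E}$ from a subharmonicity argument for the function $\theta + i\varepsilon \mapsto \tfrac1n \log \|\bigwedge^d M_{n,E}(\theta+i\varepsilon)\|$, combined with an $L^2$ (or BMO-type) bound and the Diophantine property of $\omega$, exactly along the lines of Bourgain--Goldstein \cite{BG}. The key structural input is that $u_n(z):=\tfrac1n\log\|\bigwedge^d M_{n,E}(z)\|$ is pluri-subharmonic on the polydisk $\T^b_\eta$ (as the log of the norm of an analytic matrix-valued function), and that $u_n$ is uniformly bounded above there by Lemma~\ref{lem:upperbd} and below by a fixed constant (using invertibility of $B$, hence of $M_E$, on the closed strip). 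Thus the family $\{u_n\}$ lies in a bounded subset of the class of subharmonic functions on $\T^b_\eta$, and one has a uniform BMO bound $\|u_n - \langle u_n\rangle\|_{\mathrm{BMO}(\T^b)} \lesssim 1$, which via John--Nirenberg gives $\mathrm{mes}\{\theta: |u_n(\theta) - \langle u_n\rangle| > \lambda\} \le C e^{-c\lambda}$.

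The step that does the real work is upgrading this to the stretched-exponential bound $e^{-n^\delta}$ with the sharp threshold $n^{-\delta}$, rather than the mere $e^{-c\lambda}$ from BMO. First I would note $\langle u_n\rangle = L^d_{(n),\varepsilon}(\omega,M_E)$ by definition \eqref{def:LE}, so the quantity to control is the measure of $\{\theta : u_n(\theta) \le \langle u_n\rangle - n^{-\delta}\}$. The mechanism (see \cite{BG}, \cites{GS1,GS2}) is to use the almost-invariance $u_{n}(\theta) \approx u_{n_1}(\theta) + u_{n_2}(\theta+n_1\omega)$ coming from the submultiplicativity $\|\bigwedge^d M_{n,E}\| \le \|\bigwedge^d M_{n_1,E}(\cdot)\|\,\|\bigwedge^d M_{n_2,E}(\cdot+n_1\omega)\|$ (and the reverse bound via the Avalanche Principle once $L_d>0$ separates the $d$-th and $(d+1)$-st singular values, but for the LDT upper-exceedance-of-deficit we only need one direction carefully), together with the Diophantine condition on $\omega$ to decouple the shifted copies. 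Iterating this splitting at geometrically growing scales and invoking the BMO/John--Nirenberg estimate at each scale converts the $O(1)$ exponential decay into decay $e^{-n^\delta}$ for the deficit level $n^{-\delta}$; the Diophantine exponent $A$ and the analyticity width $\eta$ determine the admissible $\delta>0$.

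The main obstacle is the passage from the scalar $\mathrm{SL}(2,\R)$ / single-norm setting of \cite{BG,GS1,GS2} to the $d$-dimensional exterior power $\bigwedge^d M_{n,E}$ on $\T^b$: one must verify that $\bigwedge^d M_{n,E}(z)$ is still an analytic (entire in the appropriate sense) matrix-valued function with a uniform positive lower bound for $\log\|\cdot\|$ on $\T^b_\eta$ (this uses $\det M_E(\theta) = \det B^{(*)}(\theta)/\det B(\theta)$ bounded away from $0$ and $\infty$, so no singular value of $M_{n,E}$ collapses faster than exponentially), and that the Avalanche Principle of \cite{GS1} in the form needed for exterior powers still applies — here the hypothesis $L_d(\omega,M_E)\ge\gamma>0$ provides the spectral gap between $\sigma_d$ and $\sigma_{d+1}$ after passing to $\bigwedge^d$, which is precisely what the Avalanche Principle needs. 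Once these structural facts are in place, the iteration is the standard one and I would simply cite \cite{BG} together with \cites{GS1,GS2} for the details, since ``we do not distinguish the various $\delta$'s'' — the final $\delta$ depends only on $a,A,\eta,\gamma$ and the dimension $d$, uniformly in $E\in\R$ and $|\varepsilon|\le\eta$ because all the bounds above are uniform in those parameters.
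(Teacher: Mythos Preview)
The paper does not actually prove this lemma; it simply states it and cites \cite{BG} and \cites{GS1,GS2} as the source of the argument. Your sketch is broadly in line with that literature: subharmonicity of $u_n(z)=\tfrac1n\log\|\bigwedge^d M_{n,E}(z)\|$ on the strip, uniform upper/lower bounds, almost-invariance $|u_n(\theta)-u_n(\theta+\omega)|\le C/n$ from the cocycle property, and the Diophantine condition on~$\omega$ to control Fourier tails. So at the level of ``cite and move on,'' you and the paper agree.

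That said, two points in your sketch are off. First, the lemma as stated carries \emph{no} hypothesis $L_d(\omega,M_E)\ge\gamma>0$, and the Bourgain--Goldstein LDT does not need it: the argument is pure subharmonic/Fourier analysis and works for any analytic cocycle over a Diophantine rotation. The positivity assumption and the Avalanche Principle enter only later, in Lemma~\ref{lem:Ln-L} (rate of convergence $L^d_{(n)}\to L^d$) and in Lemma~\ref{lem:Msq}. You hedge on this, but the invocation of the spectral gap and Avalanche Principle as ingredients of the LDT itself is misleading and should be removed. Second, your description of the mechanism (``iterating at geometrically growing scales and invoking BMO/John--Nirenberg at each scale'') is not quite the \cite{BG} argument: there one averages $u_n$ along a segment of the orbit, splits the Fourier series of $u_n$ at a cutoff $K\sim n^\sigma$, uses almost-invariance plus the Diophantine bound to kill the low modes, and uses the Riesz-representation decay $|\hat u_n(k)|\lesssim |k|^{-1}$ (from subharmonicity on the strip) for the high modes. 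No multiscale iteration is needed for the basic LDT. Your BMO/John--Nirenberg route is a legitimate alternative (closer to \cite{GS1}), but the scale-iteration language conflates it with the Avalanche-based refinements.
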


We will also use the Lipschitz continuity of $L^{d}_{(n),\varepsilon}(\omega,E)$ with respect
to $\varepsilon$.
\begin{lemma}\label{lem:Lip_eps}\cite{GSV}*{Corollary 2.12}
There exists $C=C(B,V,|E|)>0$, such that for each $1\leq j\leq 2d$,
    \begin{align}\label{eq:Lip_eps}
        |L^j_{(n),\varepsilon}(\omega,M_E)-L^j_{(n),\varepsilon'}(\omega,M_E)|\leq C\sum_{\ell=1}^b |\varepsilon_\ell-\varepsilon_\ell'|,
    \end{align}
   for all sufficiently small $|\varepsilon|$, and  
    uniformly in $n$. In particular,  the same bound hold with $L^j_{\varepsilon}$ instead of $L^j_{(n),\varepsilon}$.
\end{lemma}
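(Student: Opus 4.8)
The plan is to deduce \eqref{eq:Lip_eps} from two properties of the finite-scale quantity $L^{j}_{(n),\varepsilon}(\omega,M_E)$: it is bounded above and below uniformly in $n$, $\theta$ and $|\varepsilon|\le\eta$; and as a function of any single coordinate $\varepsilon_\ell$, with the remaining coordinates of $\theta+i\varepsilon$ frozen, it is convex. Once these are in hand the conclusion is automatic, since a convex function that is two-sidedly bounded on an interval is Lipschitz on a slightly shorter interval with a constant controlled only by the oscillation and the radius — hence independent of $n$ — and the estimate $C\sum_{\ell}|\varepsilon_\ell-\varepsilon'_\ell|$ then follows by moving from $\varepsilon$ to $\varepsilon'$ one coordinate at a time.

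First I would record the uniform two-sided bound. Since $B,V\in C^{\omega}(\T^b_{\eta},\Mat(d,\C))$ with $\det B\ne 0$ on $\T^b_{\eta}$, the cocycle matrix $M_E$ in \eqref{def:M_E} is holomorphic and invertible on $\T^b_{\eta}$ (indeed $\det M_E(\theta)=\pm\det B^{(*)}(\theta)\det B(\theta)^{-1}\ne0$), so $C_1:=\|M_E\|_{C^{0}(\T^b_{\eta})}$ and $C_2:=\|M_E^{-1}\|_{C^{0}(\T^b_{\eta})}$ are finite. Submultiplicativity of the operator norm, applied both to $M_{n,E}$ and to $M_{n,E}^{-1}$, together with $\sigma_{2d}(M)=\|M^{-1}\|^{-1}$, gives for $|\varepsilon|\le\eta$ and $1\le j\le 2d$
\begin{align}
C_2^{-jn}\ \le\ \sigma_{2d}\big(M_{n,E}(\theta+i\varepsilon)\big)^{j}\ \le\ \big\|{\textstyle\bigwedge^{j}}M_{n,E}(\theta+i\varepsilon)\big\|\ =\ \prod_{i=1}^{j}\sigma_i\big(M_{n,E}(\theta+i\varepsilon)\big)\ \le\ C_1^{jn}.
\end{align}
Taking $\tfrac1n\log$ and integrating over $\theta\in\T^b$ yields $-d\log C_2\le L^{j}_{(n),\varepsilon}(\omega,M_E)\le d\log C_1$, uniformly in $n\ge1$ and $|\varepsilon|\le\eta$.

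Next I would prove the coordinatewise convexity. Fix $\ell$, freeze all coordinates of $\theta+i\varepsilon$ except the $\ell$-th, keeping $|\varepsilon|$ small, and write $z=\theta_\ell+i\varepsilon_\ell$. Then $z\mapsto{\textstyle\bigwedge^{j}}M_{n,E}$ is holomorphic on a strip $\{|\mathrm{Im}\,z|<\eta_0\}$, and $\log\|{\textstyle\bigwedge^{j}}M_{n,E}\|=\sup_{\|\xi\|=\|\zeta\|=1}\log|\langle({\textstyle\bigwedge^{j}}M_{n,E})\xi,\zeta\rangle|$ is a supremum over a compact family of subharmonic functions (log-moduli of holomorphic functions), hence subharmonic in $z$. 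For a subharmonic function $u$ that is $\Z$-periodic in $\mathrm{Re}\,z$, the circle average $\int_{\T}u(x+i\varepsilon_\ell)\,\mathrm{d}x$ is convex in $\varepsilon_\ell$ — heuristically because the $\partial_{\theta_\ell}^2 u$ contribution integrates away by periodicity, leaving $\int_{\T}\Delta u\,\mathrm{d}x\ge0$, and rigorously by the standard sub-mean-value argument for subharmonic functions. Averaging over the remaining angular variables preserves convexity, so $\varepsilon_\ell\mapsto L^{j}_{(n),\varepsilon}(\omega,M_E)$ is convex on $(-\eta_0,\eta_0)$, uniformly in $n$ and in the frozen coordinates.

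Combining the two: a convex function on $(-\eta_0,\eta_0)$ with values in $[-d\log C_2,\,d\log C_1]$ is Lipschitz on $[-\eta_0/2,\eta_0/2]$ with a constant depending only on $d$, $C_1$, $C_2$, $\eta_0$; applying this in each coordinate and telescoping from $\varepsilon$ to $\varepsilon'$ gives \eqref{eq:Lip_eps} with $C=C(B,V,|E|)$, uniformly in $n$, for all sufficiently small $|\varepsilon|$ and $|\varepsilon'|$. Sending $n\to\infty$ and using that a pointwise limit of $C$-Lipschitz functions is $C$-Lipschitz transfers the bound to $L^{j}_{\varepsilon}$. The one step that is not bookkeeping is the rigorous convexity of $\varepsilon_\ell\mapsto\int_{\T}u(x+i\varepsilon_\ell)\,\mathrm{d}x$ for merely subharmonic (not smooth) $u$; this is classical and is exactly what is cited as \cite{GSV}*{Corollary 2.12}, so in a final write-up one could simply invoke it.
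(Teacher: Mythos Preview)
The paper does not supply its own proof of this lemma; it simply quotes \cite{GSV}*{Corollary 2.12}. Your argument is correct and is precisely the standard one behind that citation: subharmonicity of $\log\|\textstyle{\bigwedge^j}M_{n,E}\|$ in each complex coordinate yields convexity of the torus average in each $\varepsilon_\ell$, and a convex function uniformly bounded on $(-\eta_0,\eta_0)$ is Lipschitz on $[-\eta_0/2,\eta_0/2]$ with a constant depending only on the bound and $\eta_0$, hence independent of~$n$. One cosmetic slip: since $1\le j\le 2d$, the two-sided bound should read $-2d\log C_2\le L^{j}_{(n),\varepsilon}\le 2d\log C_1$ rather than with~$d$, but this does not affect the argument.
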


By Lemma~\ref{lem:Lip_eps}, the Lyapunov exponents are Lipshitz continuous in $\varepsilon$. Since throughout the paper, we work under the condition that $L_d(\omega,M_E)\geq \gamma>0$, we may shrink $\eta$ (depending on $\gamma$) to guarantee 
\begin{align}\label{eq:L_d>0_varepsilon}
\inf_{|\varepsilon|\leq \eta} L_{d,\varepsilon}(\omega, M_E)\geq \frac{1}{2}L_d(\omega,M_E)>0, \text{\ \  and \ \ } \sup_{|\varepsilon|\leq \eta}L_{d,\varepsilon}(\omega,M_E)\leq -\frac{1}{2}L_d(\omega,M_E)<0.
\end{align}
 
The Avalanche Principle was first introduced for $\mathrm{SL}(2,\R)$ cocycles by Goldstein and Schlag, see~\cite{GS1}.
It was extended to larger matrices in \cites{S1,DK}.

\begin{theorem}\label{thm:APDK}\cite{DK}*{Proposition 2.42}
Let $m\ge 2$ be fixed. 
There exist $c_0, C_0>0$ such that for any $0<\varepsilon<1$, $0<\kappa<c_0\, \varepsilon^2$ and $g_0,g_1,...,g_{n-1}\in \mathrm{Mat}(m,\R)\setminus\{0\}$ satisfying 
\begin{align}
\frac{\sigma_1(g_j)}{\sigma_2(g_j)}>\frac{1}{\kappa}, \text{\ \  for all\ \  } 0\leq j\leq n-1\\
\frac{\|g_j g_{j-1}\|}{\|g_j\| \|g_{j-1}\|}>\varepsilon, \text{\ \  for all\ \  } 1\leq j\leq n-1
\end{align}
one has
\begin{align}
\Big| \log \|g_{n-1}\cdots g_0\|+\sum_{j=1}^{n-2}\log \|g_j\|-\sum_{j=1}^{n-1} \log \|g_j g_{j-1}\|\Big|\leq C_0\, n \frac{\kappa}{\varepsilon^2}.
\end{align}
\end{theorem}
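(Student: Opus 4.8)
The plan is to follow the Goldstein--Schlag argument in its higher-dimensional formulation: replace each $g_j$ by its rank-one dominant part and then push the resulting almost-rank-one structure through the product, bookkeeping the accumulated error.

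\emph{Reduction to normalized matrices.} A direct computation shows that
\[
\Phi(g_0,\dots,g_{n-1}):=\log\|g_{n-1}\cdots g_0\|+\sum_{j=1}^{n-2}\log\|g_j\|-\sum_{j=1}^{n-1}\log\|g_jg_{j-1}\|
\]
is unchanged when each $g_j$ is replaced by $g_j/\|g_j\|$ (the shifts of the first and last sums by the numbers $\log\|g_j\|$ cancel against the vanishing of the middle sum). So I may assume $\sigma_1(g_j)=\|g_j\|=1$, whence $\sigma_2(g_j)<\kappa$. Writing the singular value decomposition as $g_j=u_jv_j^*+R_j$ with $u_j,v_j$ unit leading singular vectors and $\|R_j\|=\sigma_2(g_j)<\kappa$, set $P_j:=u_jv_j^*$ and $\beta_j:=|\langle v_j,u_{j-1}\rangle|=\|P_jP_{j-1}\|$. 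Expanding $g_jg_{j-1}$ and discarding the three terms containing a factor $R$ (each of norm $\le 3\kappa$) gives $\bigl|\,\|g_jg_{j-1}\|-\beta_j\,\bigr|\le 3\kappa$; combined with $\|g_jg_{j-1}\|>\varepsilon$ and $\kappa<c_0\varepsilon^2$ this forces $\beta_j\ge\varepsilon/2$ and hence $\log\|g_jg_{j-1}\|=\log\beta_j+O(\kappa/\varepsilon)$.

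\emph{Propagating the rank-one structure.} Put $\Pi_k:=g_{k-1}\cdots g_0$. The core is an induction on $k$ establishing: (a) $\sigma_2(\Pi_k)\le C\kappa\varepsilon^{-1}\sigma_1(\Pi_k)$, so $\Pi_k$ is itself almost rank one, with leading left vector $\widehat u_k$; (b) $\|\widehat u_k-u_{k-1}\|\le C\kappa\varepsilon^{-1}$; (c) $\sigma_1(\Pi_k)=\bigl(\prod_{j=1}^{k-1}\beta_j\bigr)e^{O(k\kappa\varepsilon^{-2})}$. In the step $\Pi_{k+1}=g_k\Pi_k$, (a) follows from the singular-value inequality $\sigma_2(g_k\Pi_k)\le\sigma_2(g_k)\sigma_1(\Pi_k)<\kappa\,\sigma_1(\Pi_k)$ together with the lower bound $\sigma_1(\Pi_{k+1})\ge\|g_k\Pi_k\widehat v_k\|\ge(\beta_k-O(\kappa\varepsilon^{-1}))\sigma_1(\Pi_k)\ge\tfrac14\varepsilon\,\sigma_1(\Pi_k)$, which uses $\beta_k\ge\varepsilon/2$ and the inductive control of the leading vectors; (b) follows because $g_k\widehat u_k$ points within $O(\kappa\varepsilon^{-1})$ of $u_k$, which with the gap bound places $\widehat u_{k+1}$ within $O(\kappa\varepsilon^{-1})$ of $u_k$; (c) follows by comparing $\sigma_1(\Pi_{k+1})$ with the leading term $\sigma_1(\Pi_k)\,|\langle v_k,\widehat u_k\rangle|=\sigma_1(\Pi_k)\beta_k(1+O(\kappa\varepsilon^{-2}))$, the extra factor $\varepsilon^{-1}$ arising from dividing an $O(\kappa\varepsilon^{-1})$-sized error by $\beta_k\ge\varepsilon/2$. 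Crucially (a) and (b) do not degrade with $k$; only the logarithmic norm in (c) accumulates. Choosing $c_0$ small makes every $O(\kappa\varepsilon^{-1})$ quantity $<\tfrac12$, which closes the induction.

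\emph{Conclusion.} Setting $k=n$ in (c) gives $\log\|\Pi_n\|=\sum_{j=1}^{n-1}\log\beta_j+O(n\kappa\varepsilon^{-2})$; inserting this together with $\log\|g_jg_{j-1}\|=\log\beta_j+O(\kappa\varepsilon^{-1})$ and $\log\|g_j\|=0$ into $\Phi$, and using $\varepsilon<1$ to absorb $O(n\kappa\varepsilon^{-1})$ into $O(n\kappa\varepsilon^{-2})$, yields $|\Phi(g_0,\dots,g_{n-1})|\le C_0 n\kappa\varepsilon^{-2}$. I expect the main obstacle to be exactly the induction of the second step: one has to show the near-rank-one description of the partial products — both the gap $\sigma_2(\Pi_k)/\sigma_1(\Pi_k)$ and the drift of the dominant direction — stays controlled \emph{uniformly in $k$} so that it survives all $n$ multiplications, and that the per-step multiplicative norm error is genuinely $1+O(\kappa\varepsilon^{-2})$; the hypothesis $\kappa<c_0\varepsilon^2$ with $c_0$ a suitably small absolute constant is precisely what makes these estimates self-consistent.
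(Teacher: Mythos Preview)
The paper does not prove this statement; it is quoted as \cite{DK}*{Proposition 2.42}, with the $\mathrm{SL}(2,\R)$ case going back to~\cite{GS1}. Your outline follows the standard proof of the Avalanche Principle: normalize, extract the rank-one dominant part of each $g_j$ via the SVD, and carry an almost-rank-one description of the partial products $\Pi_k$ through an induction in which only the logarithmic norm error in (c) accumulates while the gap $\sigma_2(\Pi_k)/\sigma_1(\Pi_k)$ in (a) and the drift of the leading direction in (b) remain uniformly $O(\kappa\varepsilon^{-1})$. The key inputs you identify --- the Horn-type bound $\sigma_2(g_k\Pi_k)\le\sigma_2(g_k)\sigma_1(\Pi_k)$, the stability of the dominant left direction under multiplication by $g_k$, and the per-step multiplicative error $1+O(\kappa\varepsilon^{-2})$ --- are all correct, and the hypothesis $\kappa<c_0\varepsilon^2$ is exactly what keeps the induction from degrading. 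Your sketch matches the structure of the Duarte--Klein argument.
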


The following rate of convergence of $L^d_{(n),\varepsilon}$ to $L^d_{\varepsilon}$ holds, see~\cite{GS1}*{Lemma 10.1}. 

\begin{lemma}\label{lem:Ln-L}
Let $\omega\in \mathrm{DC}$.
Suppose $L_d(\omega,M_E)\geq \gamma>0$, then there exists $\delta>0$ such that for any $|\varepsilon|\leq \eta$ and $n\geq n(\gamma)$, we have
\begin{align}
    L^d_{\varepsilon}(\omega,M_E)\leq L^d_{(n),\varepsilon}(\omega,M_E)\leq L^d_{\varepsilon}(\omega,M_E)+n^{-\delta}.
\end{align}
\end{lemma}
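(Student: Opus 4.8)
\textbf{Proof proposal for Lemma~\ref{lem:Ln-L}.}
The plan is to prove the claimed rate of convergence $0\le L^d_{(n),\varepsilon}-L^d_\varepsilon\le n^{-\delta}$ by a standard Avalanche Principle bootstrap, run uniformly in $|\varepsilon|\le\eta$. The left-hand inequality $L^d_\varepsilon\le L^d_{(n),\varepsilon}$ is just subadditivity of $n\mapsto n L^d_{(n),\varepsilon}$ (equivalently, submultiplicativity of $\log\|\bigwedge^d M_{n,E}\|$ integrated in $\theta$), together with the fact that the limit is an infimum; so the real content is the upper bound. Throughout I will work with the matrices $g=\bigwedge^d M_{m,E}(\theta+i\varepsilon)$ acting on $\bigwedge^d\C^{2d}$, whose norm is $\|\bigwedge^d M_{m,E}\|=\sigma_1\cdots\sigma_d(M_{m,E})$, and whose top singular value gap $\sigma_1(\bigwedge^d M_m)/\sigma_2(\bigwedge^d M_m)$ is controlled by the splitting between the $d$-th and $(d+1)$-st singular values of $M_{m,E}$; the symplectic bound $L_d>\gamma>0$ from \eqref{eq:L_d>0_varepsilon} (together with Lemma~\ref{lem:upperbd} giving the matching upper estimate on $\frac1m\log\|\bigwedge^{d\pm1}M_{m,E}\|$) gives, for $m$ on an intermediate scale, an exponentially large gap $\sigma_1(\bigwedge^d M_m)/\sigma_2(\bigwedge^d M_m)\ge e^{cm}$ off a set of $\theta$ of measure $\le e^{-m^\delta}$, via Lemma~\ref{lem:LDTsig}. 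This is the hypothesis $\sigma_1(g_j)/\sigma_2(g_j)>1/\kappa$ needed for Theorem~\ref{thm:APDK} with $m=\binom{2d}{d}$ and $\kappa=e^{-cm}$.

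The argument then runs as follows. Fix an intermediate scale $\ell$ (a small power of $n$, say $\ell\asymp n^{\delta_0}$) and write $n=k\ell+r$. Decompose $\bigwedge^d M_{n,E}(\theta+i\varepsilon)$ into a product of $k$ blocks $g_j=\bigwedge^d M_{\ell,E}(\theta+(j\ell)\omega+i\varepsilon)$ (absorbing the remainder into the first or last block). For the Avalanche Principle I need, for \emph{most} $\theta$, both (i) $\sigma_1(g_j)/\sigma_2(g_j)$ large and (ii) $\|g_jg_{j-1}\|/(\|g_j\|\|g_{j-1}\|)>\varepsilon_0$ with $\varepsilon_0=e^{-C\ell^{1-}}$ or so. Condition (i) follows from the LDT at scale $\ell$ applied to $\bigwedge^d,\bigwedge^{d-1},\bigwedge^{d+1}$ of $M_\ell$ plus Lemma~\ref{lem:upperbd}; condition (ii) follows because $\|g_jg_{j-1}\|$ is a $2\ell$-scale norm and one combines the LDT lower bound at scale $2\ell$ for $\|\bigwedge^d M_{2\ell}\|$ with the Lemma~\ref{lem:upperbd} upper bounds for $\|\bigwedge^d M_\ell\|$ at the two sub-blocks. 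Removing the union of the corresponding bad sets (total measure $\lesssim k\,e^{-\ell^\delta}\ll1$), for the remaining $\theta$ the conclusion of Theorem~\ref{thm:APDK} gives
\[
\Big|\log\|\textstyle\bigwedge^d M_{n,E}(\theta+i\varepsilon)\|+\sum_{j=1}^{k-2}\log\|g_j\|-\sum_{j=1}^{k-1}\log\|g_jg_{j-1}\|\Big|\le C_0\,k\,\frac{\kappa}{\varepsilon_0^2}\le C_0\,k\,e^{-c\ell},
\]
which is negligible. Integrating this in $\theta$ over the good set, controlling the contribution of the bad set of small measure by the a priori upper bound of Lemma~\ref{lem:upperbd} (the integrand is $\lesssim n$ pointwise, times measure $e^{-\ell^\delta}$, hence $\le e^{-\ell^\delta/2}$), and using that $\frac1\ell\int\log\|\bigwedge^d M_{\ell}\|\,d\theta=L^d_{(\ell),\varepsilon}$ and $\frac1{2\ell}\int\log\|\bigwedge^d M_{2\ell}\|\,d\theta=L^d_{(2\ell),\varepsilon}$, one obtains the telescoping identity
\[
\big|\,n\,L^d_{(n),\varepsilon}-\big(2(k-1)\ell\,L^d_{(2\ell),\varepsilon}-(k-2)\ell\,L^d_{(\ell),\varepsilon}\big)\big|\le C\,n\,e^{-c\ell}.
\]
Dividing by $n$ and using $k\ell\le n\le(k+1)\ell$ gives $|L^d_{(n),\varepsilon}-(2L^d_{(2\ell),\varepsilon}-L^d_{(\ell),\varepsilon})|\le C(\ell^{-1}+e^{-c\ell})$ up to an error $O(\ell/n)$ from the remainder block. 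Iterating the relation $a_n:=L^d_{(n),\varepsilon}$ satisfies $|a_n-(2a_{2\ell}-a_\ell)|\lesssim \ell^{-1}+\ell/n$ along a dyadic-type sequence of scales is the final, purely arithmetic, step: from it one deduces that $(a_{2^j})_j$ is Cauchy with explicit rate and that $|a_n-a_\infty|\lesssim \ell^{-1}+\ell/n$; optimizing $\ell\asymp\sqrt n$ — or, to match the stated power, $\ell\asymp n^{\delta_0}$ for a suitable small $\delta_0$ — yields $0\le a_n-a_\infty=L^d_{(n),\varepsilon}-L^d_\varepsilon\le n^{-\delta}$ for some $\delta>0$ depending on $\gamma$ and the Diophantine data, uniformly in $|\varepsilon|\le\eta$.

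\textbf{Uniformity in $\varepsilon$ and the main obstacle.} The uniformity in $|\varepsilon|\le\eta$ is not extra work here: Lemmas~\ref{lem:upperbd} and~\ref{lem:LDTsig} are already stated uniformly in $\varepsilon\in\R^b$, $|\varepsilon|\le\eta$, and \eqref{eq:L_d>0_varepsilon} guarantees the positivity $L_{d,\varepsilon}\ge\frac12\gamma$ needed to make the singular-value gaps exponential at all these $\varepsilon$; one simply carries $\varepsilon$ along as a frozen parameter. The main obstacle is the verification of the two Avalanche Principle hypotheses at the intermediate scale $\ell$ with the \emph{right} quantitative rates — in particular getting condition (i), the gap $\sigma_d(M_{\ell,E})/\sigma_{d+1}(M_{\ell,E})\ge e^{c\ell}$, which requires both a lower bound on $\frac1\ell\log\|\bigwedge^d M_\ell\|$ (from the LDT, Lemma~\ref{lem:LDTsig}) \emph{and} an upper bound on $\frac1\ell\log\|\bigwedge^{d+1}M_\ell\|$ (from Lemma~\ref{lem:upperbd} applied to $\bigwedge^{d+1}$), together with the symplectic symmetry forcing $L^{d+1}_\varepsilon+L^{d-1}_\varepsilon<2L^d_\varepsilon$ — i.e.\ the strict simplicity of the $d$-th Lyapunov exponent when $L_d>0$. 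Making this last point quantitative (a uniform-in-$\varepsilon$ lower bound on $L^d_\varepsilon-L^{d+1}_\varepsilon$, equivalently on $L_{d,\varepsilon}+\,$"$L_{d+1,\varepsilon}$") is the crux; it follows from \eqref{eq:L_d>0_varepsilon} and the convexity/linearity statement \eqref{eq:L_linear}, but care is needed to keep all constants depending only on $\gamma,B,V,|E|$ and not on $n$ or $\varepsilon$. Everything else — the telescoping, the measure estimates on bad sets, the final arithmetic iteration over scales — is routine and parallels \cite{GS1}*{Lemma 10.1} and \cite{DK}.
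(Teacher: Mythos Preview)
Your proposal is correct and follows the same route the paper indicates: the paper does not give a self-contained proof but simply cites \cite{GS1}*{Lemma~10.1} and remarks that the Avalanche Principle requires the positivity of $L_{d,\varepsilon}$ supplied by~\eqref{eq:L_d>0_varepsilon}. Your sketch is exactly that argument spelled out for the $\bigwedge^d$-cocycle.

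One small correction to your ``main obstacle'' paragraph: the uniform-in-$\varepsilon$ gap $L_{d,\varepsilon}-L_{d+1,\varepsilon}\ge L_d>0$ follows directly from~\eqref{eq:L_d>0_varepsilon} alone (which in turn comes from the Lipschitz estimate of Lemma~\ref{lem:Lip_eps} applied to the symplectic identity $L_{d+1}=-L_d$ at $\varepsilon=0$). You should not invoke~\eqref{eq:L_linear}, which is only stated for the one-dimensional torus $b=1$ and concerns the acceleration; it plays no role here and the lemma must hold for general $b\ge1$. Likewise, there is no need to appeal to symplecticity off the real torus --- the paper explicitly notes $M_E(\theta+i\varepsilon)$ is generally not symplectic for $\varepsilon\neq0$.
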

The proof uses the Avalanche principle, which requires the positivity of $L_{d,\varepsilon}(\omega,E)$, provided by \eqref{eq:L_d>0_varepsilon}.

\subsection{Green's function and Poisson formula}
As in \cite{HS3} we work with finite volume Hamiltonians under periodic boundary conditions. 
Thus, we define the $nd\times nd$ matrices  
\begin{align}\label{def:Pn}
P_{n}(\theta)=
\left(\begin{matrix}
V(\theta+(n-1)\omega) & B^{(*)}(\theta+(n-1)\omega) & &  &B(\theta)\\
B(\theta+(n-1)\omega) &V(\theta+(n-2)\omega) &\ddots \\
& \ddots &\ddots &\ddots \\
& &\ddots &\ddots &B^{(*)}(\theta+\omega)\\
B^{(*)}(\theta) & & &B(\theta+\omega) &V(\theta)
\end{matrix}\right),
\end{align}
Let 
\begin{align}\label{def:fn}
f_{E,n}(\theta):=\det(P_{n}(\theta)-E)
\end{align}
and
\begin{align}\label{def:Green}
G_{E,n}(\theta):=(P_{n}(\theta)-E)^{-1}
\end{align}
be the finite volume Green's function with the periodic boundary conditions. 
By Cramer's rule  
\begin{align}\label{eq:mufn}
G_{E,n}(\theta;x,y)=\frac{\mu_{n,x,y}(\theta)}{f_{E,n}(\theta)},
\end{align}
where $\mu_{n,x,y}(\theta)$ is the determinant of the submatrix of $(P_{n}(\theta)-E)$ defined by deleting the $x$-th row and $y$-th column.
Let $u$ be a solution to the eigenvalue equation ${H}_{\theta}u=Eu$.
For any $k\in \Z$
the following Poisson formula holds for all  $k\leq m\leq k+nd-1$:
\begin{align}
{u}_m=&\sum_{y_1=0}^{d-1} G_{E,n}(\theta+k\omega; m-k, y_1) \left(B^{(*)}(\theta)\cdot \left(\begin{matrix}{u}_{k+nd-1}-{u}_{k-1}\\ \vdots\\ {u}_{k+(n-1)d}-{u}_{k-d}\end{matrix}\right)\right)_{y_1}\\
&\qquad+\sum_{y_2=(n-1)d}^{nd-1}G_{E,n}(\theta+k\omega;m-k,y_2)\left(B(\theta) \left(\begin{matrix}u_{k+d-1}\\ \vdots\\ u_k\end{matrix}\right)-B(\theta+n\omega)\left(\begin{matrix}{u}_{k+(n+1)d-1}\\ \vdots \\ {u}_{k+nd}\end{matrix}\right)\right)_{y_2-(n-1)d},
\end{align}
in which $(M)_y$ refers to the element of vector $M$ in row $y$.
This implies 
\begin{align}\label{eq:Poisson_exp}
|{u}_m|\leq &C_d \|B\|_{\T^b, \infty}\cdot \max_{y\in \{0,...,d-1\}\cup \{(n-1)d,...,nd-1\}} |G_{E,n}(\theta+k\omega; m-k, y)|\cdot \\
&\qquad\qquad\qquad\qquad\qquad\qquad\qquad\cdot \max_{\ell\in \{-d,...,d-1\}\cup \{(n-1)d,...,(n+1)d-1\}} |{u}_{k+\ell}|
\end{align}

\subsection{Numerator and denominator of the Green's function}\label{sec:lemma}
 Lemma~\ref{lem:numerator} bounds the numerator of the Green's function, and Lemma~\ref{lem:deno} the denominator.

\begin{lemma}\label{lem:numerator}
Let $\omega\in \mathrm{DC}$.
Let $3d\leq y\leq (n-1)d-1$ and $0\leq x\leq d-1$ or $(n-1)d\leq x\leq nd-1$. Set $\ell:=\lfloor y/d\rfloor$. Then 
for any $\varepsilon>0$, and uniformly in $\theta\in \T^b$, 
\begin{align}
|\mu_{n,x,y}(\theta)|\leq C_{d,B} \cdot e^{n(\langle\log |\det B|\rangle+\varepsilon)} \cdot \left(e^{\ell {L}^{d-1}+(n-\ell){L}^d}+e^{\ell {L}^d+(n-\ell){L}^{d-1}}\right),
\end{align}
where $L^j=L^j(\omega,M_E)$, provided $n>N(\varepsilon)$ is large enough. 
Here $C_{d,B}$ is a constant depending only on~$d$ and $\|B^{-1}\|_{\T^b, \infty}$. 
\end{lemma}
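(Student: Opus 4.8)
The plan is to express the minor $\mu_{n,x,y}(\theta)$ as a product of transfer matrices (suitably truncated) times a normalization factor coming from the blocks $B$, and then to bound the resulting matrix products using the exterior-power Lyapunov exponents together with the uniform upper bound of Lemma~\ref{lem:upperbd}. First I would recall the well-known Cramer/transfer-matrix identity: deleting the $x$-th row and $y$-th column of $P_n(\theta)-E$ — where $x$ lies in one of the two boundary blocks $\{0,\dots,d-1\}$ or $\{(n-1)d,\dots,nd-1\}$ and $y$ is interior with $\ell=\lfloor y/d\rfloor$ — produces (up to the periodic boundary block $B(\theta)$ in the corner) a block-bidiagonal matrix whose determinant factors through the truncated propagators on the two complementary intervals $[0,\ell]$ and $[\ell,n]$. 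Concretely, $\mu_{n,x,y}(\theta)$ should be comparable, up to a factor $\prod_{j}\det B(\theta+j\omega)$ over the appropriate range and a combinatorial constant $C_{d,B}$ depending on $d$ and $\|B^{-1}\|_{\T^b,\infty}$, to a matrix entry (or a sum of a bounded number of such entries) of $M_{\ell,E}(\theta)$ composed with $M_{n-\ell,E}(\theta+\ell\omega)$, with one $d$-dimensional slot "contracted" because a single scalar row/column has been removed rather than an entire $d$-block.

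The key point is the appearance of $L^{d-1}$ alongside $L^d$: since we delete one scalar row and one scalar column, on one of the two sides we are effectively computing a determinant of a $d\times d$ block with one row struck, i.e.\ a quantity controlled by $\bigwedge^{d-1}$ of the propagator rather than $\bigwedge^{d}$. So the second step is the bookkeeping: write $\mu_{n,x,y}$ as (constant) $\times e^{\sum_{j} \log|\det B(\theta+j\omega)|} \times$ (a $(d-1)$-or-$d$-fold exterior-power quantity of $M_{\ell,E}$) $\times$ (a $(d-1)$-or-$d$-fold exterior-power quantity of $M_{n-\ell,E}(\cdot+\ell\omega)$), where exactly one of the two factors carries the reduced rank $d-1$ depending on whether $x$ sits to the left ($x\le d-1$) or to the right ($x\ge(n-1)d$) of $y$ — this is precisely the source of the two symmetric terms $e^{\ell L^{d-1}+(n-\ell)L^d}$ and $e^{\ell L^d+(n-\ell)L^{d-1}}$ in the statement. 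Replace $\sum_{j=0}^{n-1}\log|\det B(\theta+j\omega)|$ by $n\langle\log|\det B|\rangle + n\varepsilon$ for $n>N(\varepsilon)$ large, using that $\log|\det B|$ is continuous on $\T^b$ and $\omega\in\mathrm{DC}$ gives unique ergodicity / quantitative Weyl equidistribution (in fact a crude uniform bound on Birkhoff sums of a continuous function suffices here).

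The third step is to bound the two exterior-power factors. By Lemma~\ref{lem:upperbd} applied at $\varepsilon=0$, for all large $n$ one has $\frac1n\log\|\bigwedge^j M_{n,E}(\theta)\|\le L^j_{(n)}(\omega,M_E)+n^{-\delta}$ uniformly in $\theta$ for each $1\le j\le d$, and by Lemma~\ref{lem:Ln-L} (for $j=d$) together with convergence of $L^{d-1}_{(n)}$ to $L^{d-1}$ one has $L^j_{(n)}\le L^j + n^{-\delta}$; thus on an interval of length $\ell$ the $\bigwedge^{j}$-norm is at most $e^{\ell L^j + \ell n^{-\delta}} \le e^{\ell L^j + n^{1-\delta}}$, and similarly for the complementary interval of length $n-\ell$. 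Absorbing the error $e^{n^{1-\delta}}$ into the $e^{n\varepsilon}$ slack (take $n$ large depending on $\varepsilon$) and using submultiplicativity of $\bigwedge^j$ under the block decomposition yields the claimed bound, with the maximum of the two products written as a sum. I expect the main obstacle to be the first step: carefully justifying the precise Cramer-rule identity expressing $\mu_{n,x,y}$ in terms of truncated transfer matrices in the \emph{block-Jacobi with periodic boundary corner} setting, keeping track of which side loses a rank, and controlling the extra $B(\theta)$ corner term so that it only contributes to the constant $C_{d,B}$ and not to the exponential rate — this is a linear-algebra computation but one that must be done with some care because of the $2d\times 2d$ symplectic structure and the fact that $x$ ranges over either boundary block. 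Everything after that is a routine assembly of the large-deviation-free upper bounds already recorded in Lemmas~\ref{lem:upperbd} and~\ref{lem:Ln-L}.
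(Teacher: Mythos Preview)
Your overall strategy---reduce the minor to transfer-matrix data times $\prod_j|\det B_j|$, then bound exterior powers via Lemma~\ref{lem:upperbd}---is the same as the paper's. But one structural claim in your sketch is wrong and would stall the computation if you tried to carry it out as written.

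You assert that $\mu_{n,x,y}$ factors as (constant) $\times\prod|\det B_j|\times$ (a single $\bigwedge^{d-1}$-or-$\bigwedge^d$ quantity of $M_{\ell}$) $\times$ (a single $\bigwedge^{d-1}$-or-$\bigwedge^d$ quantity of $M_{n-\ell}$), with the two symmetric terms in the lemma arising from the two possible positions of~$x$. This is not how it works. The paper fixes $0\le x\le d-1$ and obtains \emph{both} terms in that single case; the case $(n-1)d\le x\le nd-1$ is symmetric and also yields both. The reason is that the periodic-boundary corner $B(\theta)$ is not a perturbation absorbable into $C_{d,B}$: after the row reductions it produces the block $M_{n,E}(\theta)-I_{2d}$, and the minor becomes (up to $\prod|\det B_j|$) a $(2d{+}1)\times(2d{+}1)$ determinant built from $M_n-I_{2d}$ together with a single row of $M_\ell$ and a single column vector. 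This determinant does \emph{not} split as a clean product of two exterior powers; its Laplace expansion (using $M_n=M_{n-\ell}(\cdot+\ell\omega)M_\ell$) is a \emph{sum} $\sum_{m_0}\|\bigwedge^{m_0}M_\ell\|\cdot\|\bigwedge^{m_0-1}M_{n-\ell}\|$. The dominant contributions are $m_0=d$ and $m_0=d+1$, which---because $L^{d+1}=L^{d-1}$ by symplecticity---give exactly $e^{\ell L^d+(n-\ell)L^{d-1}}$ and $e^{\ell L^{d-1}+(n-\ell)L^d}$.

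So the obstacle you flagged (the linear algebra of the first step) is real, and your proposed resolution of it is incorrect. Once you accept that the periodic corner forces the $M_n-I_{2d}$ structure and that the answer is a sum over exterior-power degrees rather than a single product, the rest of your argument (Birkhoff sums for $\log|\det B|$, Lemma~\ref{lem:upperbd} plus subadditivity for $L^j_{(n)}\le L^j+\varepsilon$) goes through exactly as you describe.
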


The proof proceeds as in~\cite{HS3}, see Section~\ref{sec:Zaehler}. The upper bound with $\varepsilon$ suffices to obtain exponential decay of the Green's function. 

Regarding the denominator, we first have the following connection between $f_{E,n}$ and the transfer matrix $M_{n,E}$.
\begin{lemma}\label{lem:detP}
One has pointwise in $\theta\in \T^b_{\eta}$ that
\begin{align}
|f_{E,n}(\theta)|=  |\det (M_{n,E}(\theta)-I_{2d})| \cdot \prod_{j=0}^{n-1}|\det B(\theta+j\omega)| .
\end{align}
\end{lemma}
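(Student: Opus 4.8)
\textbf{Proof proposal for Lemma~\ref{lem:detP}.}

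The plan is to relate the periodic finite-volume determinant $f_{E,n}(\theta)=\det(P_n(\theta)-E)$ directly to the monodromy matrix $M_{n,E}(\theta)$ by expressing both through the same object: the recursion satisfied by solutions of $H_\theta u = Eu$ on the window $\{0,\dots,nd-1\}$. First I would rewrite the eigenvalue equation $(P_n(\theta)-E)\Phi=0$ \emph{without} the periodic corner blocks, i.e. as a Dirichlet-type three-term block recursion, and observe that the transfer matrix advancing $\binom{B_{m}\Phi_m}{\Phi_{m-1}}$ one step is exactly $M_E(\theta+m\omega)$ as in \eqref{def:M_E}; iterating gives $M_{n,E}(\theta)$. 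The periodic boundary conditions in $P_n$ are precisely the requirement that a solution of the bulk recursion "wraps around," i.e. that $\Psi_{nd}=\Psi_0$ where $\Psi$ is the associated phase-space vector; hence $P_n(\theta)-E$ is singular iff $M_{n,E}(\theta)-I_{2d}$ is singular. To upgrade this from a statement about kernels to an identity of determinant moduli, I would run the standard Schur-complement / block-elimination computation on the $nd\times nd$ matrix $P_n(\theta)-E$.

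In more detail, the key step is a determinant bookkeeping argument. Order the block rows of $P_n(\theta)-E$ and perform block Gaussian elimination using the invertibility of $B$ (guaranteed throughout the paper, $\det B\neq 0$ on $\T^b_\eta$): eliminating the subdiagonal $B$-blocks successively, the elimination multipliers involve $B(\theta+j\omega)^{-1}$ and the Schur complements reproduce the entries of the transfer matrices $M_E(\theta+j\omega)$. After eliminating all but the last block column, the periodic corner blocks $B(\theta)$ (top-right) and $B^{(*)}(\theta)$ (bottom-left) combine with the accumulated product $\prod_{j=n-1}^{0}M_E(\theta+j\omega)=M_{n,E}(\theta)$ to leave a $2d\times 2d$ Schur complement proportional to $M_{n,E}(\theta)-I_{2d}$. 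Tracking the determinant of the elimination, each step contributes a factor $\det B(\theta+j\omega)$ (up to a sign/unimodular factor coming from the block structure of $M_E$, whose "lower-left/upper-right" blocks are $B^{-1}$ and $-B^{(*)}$), and these accumulate to $\prod_{j=0}^{n-1}\det B(\theta+j\omega)$. Since $B^{(*)}(\theta)=(B(\theta))^*$ on $\T$, and by analytic continuation these factors still have the stated absolute value $\prod_{j=0}^{n-1}|\det B(\theta+j\omega)|$; the extra symplectic/Hermitian structure only affects phases, which is why the lemma is stated with absolute values. A clean way to organize this, avoiding sign chases, is to argue the identity $f_{E,n}(\theta)=c_n(\theta)\det(M_{n,E}(\theta)-I_{2d})\prod_{j=0}^{n-1}\det B(\theta+j\omega)$ for some nowhere-vanishing analytic $c_n$ with $|c_n|\equiv 1$, first on $\T^b$ where everything is symplectic/Hermitian (so $|c_n|=1$ can be read off at a convenient reference point or by the symplectic relation $\det(M-I)=\overline{\det(M-I)}\cdot(\text{unimodular})$), then extend to $\T^b_\eta$ by analyticity since both sides are analytic and nonzero off a thin set.

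The main obstacle I anticipate is the careful bookkeeping of the boundary (corner) blocks during the elimination: unlike the Dirichlet case, the wrap-around blocks $B(\theta)$ and $B^{(*)}(\theta)$ interact with the very first and very last pivot steps, so one must make sure the telescoping of transfer matrices closes up correctly and that the corner contributions assemble into the $-I_{2d}$ rather than spurious extra terms. A robust alternative that sidesteps the pivoting entirely is to use the known scalar identity for periodic Jacobi/CMV-type determinants — namely that the periodic determinant equals (discriminant minus trace-type term), which in block form reads $\det(P_n(\theta)-E)=\pm\det(M_{n,E}(\theta)-I_{2d})\prod_j\det B(\theta+j\omega)$ — and prove it by induction on $n$, expanding $\det(P_n-E)$ along its last block row and column and matching against the recursion $M_{n,E}=M_E(\theta+(n-1)\omega)M_{n-1,E}$. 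Either way, the only analytic input needed beyond bookkeeping is the invertibility of $B$ on $\T^b_\eta$ and the definition \eqref{def:M_E}; no large-deviation or Lyapunov input enters, so this is purely an algebraic identity holding pointwise in $\theta$.
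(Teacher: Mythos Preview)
Your proposal is correct and takes essentially the same approach as the paper: block row elimination on $P_n(\theta)-E$ that telescopes into the monodromy product $M_{n,E}(\theta)$, with the periodic corner blocks producing the $-I_{2d}$ and the pivots contributing $\prod_j\det B(\theta+j\omega)$. The paper simply points to this very computation, already carried out in its numerator section (cf.\ \eqref{eq:detS1B}) and in \cite{HS3}*{Lemma~5.1}; note that since the elimination is purely algebraic the identity holds exactly up to a fixed sign, so your discussion of a $\theta$-dependent unimodular factor $c_n(\theta)$ and analytic continuation is unnecessary.
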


\begin{lemma}\label{lem:deno}
Let $\omega\in \mathrm{DC}$, and $\delta>0$ be as in Lemma \ref{lem:LDTsig}.
Assume $L_d(\omega,M_E)\geq \gamma>0$.
There exist $\delta_1\in (0,\delta)$, $N_0>1$ large and $0<\kappa_0\ll 1$ so that the {\it $\kappa_0$-admissible} sequence 
\begin{align}\label{def:admissible}
\mathcal{N}:=\{n\geq N_0: \|n\omega\|_{\tor^b}\leq \kappa_0\}
\end{align}
has the following property: 
for any $|\varepsilon|\leq \eta/2$, and all large $\kappa_0$-admissible $n$, the following large deviation set 
\begin{align}\label{def:B_fEn}
\mathcal{B}_{f,E,n,\varepsilon}:=
\big\{\theta\in \tor^b: \log |f_{E,n}(\theta+i\varepsilon)|<n(\langle \log |\det B(\cdot+i\varepsilon)|\rangle +{L}^d_{\varepsilon}(\omega,M_E)) - n^{1-\delta_1}\big\}
\end{align}
satisfies $\mathrm{mes}(\mathcal{B}_{f,E,n,\varepsilon})<e^{-n^{\delta_1}}$.
\end{lemma}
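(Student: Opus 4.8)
The plan is to combine Lemma~\ref{lem:detP}, which reduces $|f_{E,n}(\theta+i\varepsilon)|$ to $|\det(M_{n,E}(\theta+i\varepsilon)-I_{2d})|$ up to the explicit product $\prod_{j=0}^{n-1}|\det B(\theta+j\omega+i\varepsilon)|$, with a lower bound on $\log|\det(M_{n,E}-I)|$ that holds off a small exceptional set. Since $\omega\in\mathrm{DC}$, by a Birkhoff-type estimate (or just the large deviation estimate for the analytic function $\log|\det B|$) one has $\big|\tfrac{1}{n}\sum_{j=0}^{n-1}\log|\det B(\theta+j\omega+i\varepsilon)|-\langle\log|\det B(\cdot+i\varepsilon)|\rangle\big|$ small outside a set of measure $<e^{-n^{\delta'}}$; this handles the $B$-factor, so the real content is the bound on $\det(M_{n,E}-I)$.

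First I would express $\det(M_{n,E}(\theta+i\varepsilon)-I_{2d})$ through the characteristic-polynomial/exterior-power expansion $\det(M-I)=\sum_{j=0}^{2d}(-1)^j\,\mathrm{tr}(\textstyle{\bigwedge^{2d-j}}M)$, so that $|\det(M_{n,E}-I)|\le C_d\max_j\|\bigwedge^j M_{n,E}\|$, while the term $j=d$ contributes $\|\bigwedge^d M_{n,E}\|\approx e^{nL^d_{(n),\varepsilon}}$, which by Lemma~\ref{lem:Ln-L} is $\approx e^{nL^d_\varepsilon}$ up to $e^{n^{1-\delta}}$ corrections. The lower bound is the delicate direction. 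Here one uses the separation of the Lyapunov spectrum: by \eqref{eq:L_d>0_varepsilon} we have $L_{d,\varepsilon}>0>L_{d+1,\varepsilon}$, i.e. a gap at index $d$, so on the complement of the large deviation set $\mathcal{B}_{n,E,\varepsilon}$ of Lemma~\ref{lem:LDTsig} the singular values of $M_{n,E}(\theta+i\varepsilon)$ split as $\sigma_d\ge e^{n(L_{d,\varepsilon}-o(1))}\gg 1\gg e^{n(L_{d+1,\varepsilon}+o(1))}\ge\sigma_{d+1}$. On such $\theta$, $\|\bigwedge^d M_{n,E}\|\approx\sigma_1\cdots\sigma_d$, and the expansion of $\det(M_{n,E}-I)$ is dominated by the $\bigwedge^d$ term precisely because of the gap; the remaining terms $\textstyle{\bigwedge^{j}}M_{n,E}$ for $j\ne d$ are smaller by a factor $e^{-cn}$, so $|\det(M_{n,E}-I)|\ge \tfrac12\|\bigwedge^d M_{n,E}\|\ge e^{n(L^d_\varepsilon-n^{-\delta})}$. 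Taking $\delta_1\in(0,\delta)$ absorbs all $e^{n^{1-\delta}}$-type errors into $n^{1-\delta_1}$, and intersecting the (at most constantly many) exceptional sets keeps the total measure $<e^{-n^{\delta_1}}$; the $\kappa_0$-admissibility of $n$ is what lets us control $M_{n,E}(\theta+n\omega)$ versus $M_{n,E}(\theta)$ and close the comparison near $\|n\omega\|_{\T^b}\le\kappa_0$.

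The main obstacle is the lower bound on $|\det(M_{n,E}-I)|$: a priori $\det(M_{n,E}-I)$ could be small due to $M_{n,E}$ having an eigenvalue close to $1$, even when $\|\bigwedge^d M_{n,E}\|$ is large, so one cannot merely bound term by term. The resolution is that $M_{n,E}(\theta)$ restricted to the top ($d$-dimensional) expanding subspace has all singular values $\gg 1$, hence $\det$ of $(M_{n,E}-I)$ restricted there is $\gg 1$ in modulus, while on the contracting subspace $M_{n,E}$ is $\approx 0$ so $(M_{n,E}-I)$ is $\approx -I$ there; a block/Schur-complement argument with respect to the invariant (or almost-invariant) splitting then gives $|\det(M_{n,E}-I)|\gtrsim\prod_{i\le d}\sigma_i(M_{n,E})$. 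Making the "almost-invariant splitting" quantitative — controlling the angle between the approximate expanding/contracting subspaces using the Avalanche Principle (Theorem~\ref{thm:APDK}) together with the gap \eqref{eq:L_d>0_varepsilon} and the LDT — is the technical heart; this is exactly the kind of estimate carried out in~\cite{GS1,HS3}, and I would follow that template, with the exterior-power / wedge formalism handling the passage from $\mathrm{SL}(2,\R)$ to $\mathrm{Mat}(2d,\C)$.
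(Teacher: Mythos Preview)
Your proposal is correct and follows essentially the same route as the paper: reduce via Lemma~\ref{lem:detP}, handle the $\prod_j|\det B_j|$ factor by the Diophantine condition, write $\det(M_{n,E}-I)=\det(D_n-W_n^*V_n)$ from the singular value decomposition, and use a block/Schur-complement argument whose key input is a quantitative lower bound on the angle between the expanding and contracting singular subspaces. The one mechanism worth making explicit is that the angle bound (your ``almost-invariant splitting'') is obtained in the paper not directly from the Avalanche Principle but via a large deviation estimate for $\|\textstyle\bigwedge^d M_{n,E}^2(\theta+i\varepsilon)\|$ (Lemma~\ref{lem:Msq}), whose inductive proof is where the $\kappa_0$-admissibility of $n$ and the Avalanche Principle actually enter; the angle lower bound then follows by contradiction, since a small angle would force $\|\textstyle\bigwedge^d M_{n,E}^2\|$ well below $e^{2nL^d_\varepsilon}$.
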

\begin{remark}\label{rem:admissible}
    For every large integer $n>0$ there exists an admissible $\tilde n>0$ with $|n-\tilde n|\le C_*$ for some constant $C_*$.
\end{remark}

The proof is analogous to the denominator bound in \cite{HS3}, using the original strategy of Proposition~3.3 in~\cite{GS2}. 
We postpone the proofs of Lemmas \ref{lem:deno} and the following corollary to Sec.~\ref{sec:nenner}.
\begin{lemma}\label{lem:fn_ave}
Under the same conditions as Lemma \ref{lem:deno}.
There exists $\delta_2\in (0,\delta_1)$ such that for any $|\varepsilon|\leq \eta/2$ and large $\kappa_0$-admissible $n$, 
\begin{align}
\frac{1}{n} \int_{\T^b}\log |f_{E,n}(\theta+i\varepsilon)|\, \mathrm{d}\theta\geq L^d_{\varepsilon}(\omega,M_E)+\langle \log |\det B(\cdot+i\varepsilon)|\rangle -n^{-\delta_2}.
\end{align}
\end{lemma}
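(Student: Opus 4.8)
The plan is to deduce this from the large deviation bound of Lemma~\ref{lem:deno} by the standard good/bad-set splitting; the only real work is to control the contribution of the bad set from below. I would fix $|\varepsilon|\le \eta/2$ and an admissible $n$, write $\mathcal G:=\T^b\setminus \mathcal B_{f,E,n,\varepsilon}$ (so $\mathrm{mes}(\mathcal G)\ge 1-e^{-n^{\delta_1}}$), and set $\Lambda_\varepsilon:=L^d_\varepsilon(\omega,M_E)+\langle\log|\det B(\cdot+i\varepsilon)|\rangle$. First I would note that $|\Lambda_\varepsilon|\le C$ uniformly in $|\varepsilon|\le\eta/2$ and $n$: by \eqref{eq:L_d>0_varepsilon} the exponents $L_{1,\varepsilon},\dots,L_{d,\varepsilon}$ are all $\ge\tfrac12 L_d>0$, hence $0<L^d_\varepsilon\le L^d_{(n),\varepsilon}\le C$ (using the crude bound $\|\textstyle{\bigwedge^d}M_{n,E}(\theta+i\varepsilon)\|\le e^{Cn}$), while $\langle\log|\det B(\cdot+i\varepsilon)|\rangle$ is bounded because $\det B$ is analytic and nowhere vanishing on the compact set $\T^b_\eta$. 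By the definition of $\mathcal B_{f,E,n,\varepsilon}$, on $\mathcal G$ one has $\tfrac1n\log|f_{E,n}(\theta+i\varepsilon)|\ge \Lambda_\varepsilon-n^{-\delta_1}$, so
\begin{align}
\frac1n\int_{\mathcal G}\log|f_{E,n}(\theta+i\varepsilon)|\,\mathrm{d}\theta\ \ge\ \mathrm{mes}(\mathcal G)\,(\Lambda_\varepsilon-n^{-\delta_1})\ \ge\ \Lambda_\varepsilon-n^{-\delta_1}-Ce^{-n^{\delta_1}},
\end{align}
where the last step uses $|\Lambda_\varepsilon|\le C$. The statement then reduces to
\begin{align}\label{eq:fnave-bad}
\frac1n\int_{\mathcal B_{f,E,n,\varepsilon}}\big|\log|f_{E,n}(\theta+i\varepsilon)|\big|\,\mathrm{d}\theta\ \le\ Ce^{-n^{\delta_1}/2}.
\end{align}

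For \eqref{eq:fnave-bad} I would use two facts about $f_{E,n}(\cdot+i\varepsilon)$, both already available from the proof of Lemma~\ref{lem:deno}. First, a uniform upper bound: by Lemma~\ref{lem:detP} and $|\det A|\le\|A\|^{2d}$,
\begin{align}
|f_{E,n}(\theta+i\varepsilon)|\ \le\ \|M_{n,E}(\theta+i\varepsilon)-I_{2d}\|^{2d}\prod_{j=0}^{n-1}|\det B(\theta+j\omega+i\varepsilon)|\ \le\ e^{Cn}
\end{align}
uniformly in $\theta\in\T^b$ and $|\varepsilon|\le\eta$, since $\|M_E(\cdot+i\varepsilon)\|$ and $|\det B(\cdot+i\varepsilon)|$ are bounded on $\T^b_\eta$; in particular $\int_{\T^b}\big(\log|f_{E,n}(\theta+i\varepsilon)|\big)^+\,\mathrm{d}\theta\le Cn$. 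Second, $z\mapsto f_{E,n}(z+i\varepsilon)$ is holomorphic and $\T^b$-periodic on $\T^b_{\eta/2}$ and not identically zero; choosing any $\theta_0=\theta_0(\varepsilon)\in\mathcal G$ (possible since $\mathrm{mes}(\mathcal B_{f,E,n,\varepsilon})<1$) one has $|f_{E,n}(\theta_0+i\varepsilon)|\ge e^{n\Lambda_\varepsilon-n^{1-\delta_1}}\ge e^{-Cn}$, and combined with the upper bound the standard Jensen/Cartan count gives that $f_{E,n}(\cdot+i\varepsilon)$ has at most $Cn$ zeros in $\T^b_{\eta/4}$. Consequently, exactly as in the proof of Lemma~\ref{lem:deno} and in \cite{GS2}, $\log|f_{E,n}(\cdot+i\varepsilon)|-\langle\log|f_{E,n}(\cdot+i\varepsilon)|\rangle$ lies in $\mathrm{BMO}(\T^b)$ with norm $\le Cn$, and $\langle\log|f_{E,n}(\cdot+i\varepsilon)|\rangle\ge -Cn$, both uniformly in $|\varepsilon|\le\eta/2$.

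Granting these, \eqref{eq:fnave-bad} follows by a soft argument: for any measurable $\mathcal B\subset\T^b$ with $\mathrm{mes}(\mathcal B)=\rho$, John--Nirenberg (or merely the resulting $L^2$ bound $\|\log|f_{E,n}(\cdot+i\varepsilon)|-\langle\,\cdot\,\rangle\|_{L^2(\T^b)}\le Cn$) together with Cauchy--Schwarz gives
\begin{align}
\int_{\mathcal B}\big|\log|f_{E,n}(\theta+i\varepsilon)|\big|\,\mathrm{d}\theta\ \le\ \rho\,\big|\langle\log|f_{E,n}(\cdot+i\varepsilon)|\rangle\big|+\Big(\int_{\T^b}\big|\log|f_{E,n}|-\langle\log|f_{E,n}|\rangle\big|^2\,\mathrm{d}\theta\Big)^{1/2}\rho^{1/2}\ \le\ Cn\rho+Cn\rho^{1/2},
\end{align}
and taking $\mathcal B=\mathcal B_{f,E,n,\varepsilon}$ with $\rho<e^{-n^{\delta_1}}$ yields \eqref{eq:fnave-bad}. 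Combining it with the good-set estimate,
\begin{align}
\frac1n\int_{\T^b}\log|f_{E,n}(\theta+i\varepsilon)|\,\mathrm{d}\theta\ \ge\ \Lambda_\varepsilon-n^{-\delta_1}-Ce^{-n^{\delta_1}}-Ce^{-n^{\delta_1}/2}\ \ge\ L^d_\varepsilon(\omega,M_E)+\langle\log|\det B(\cdot+i\varepsilon)|\rangle-n^{-\delta_2}
\end{align}
for any fixed $\delta_2\in(0,\delta_1)$ once $n$ is large, which is the claim. The hard part is the uniform-in-$\varepsilon$ BMO/zero-count estimate of the second ingredient; but this is precisely the first step of the Goldstein--Schlag scheme of \cite{GS2}*{Proposition 3.3} that already underlies Lemma~\ref{lem:deno}, so it costs essentially nothing extra here, and everything else is bookkeeping.
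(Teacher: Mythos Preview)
Your approach is correct but takes a different route from the paper's. The paper argues via the several-variable Cartan estimate (Lemma~\ref{lem:Cartan}): combining the pointwise upper bound of Lemma~\ref{lem:fn_upper} (with Lipschitz-in-$\varepsilon$ control from Lemma~\ref{lem:Lip_eps}) and the lower bound on a $(1-e^{-n^{\delta_1}})$-measure set from Lemma~\ref{lem:deno}, it applies Cartan on polydisks of radius~$\simeq n^{-1}$ with a dyadic sequence $H_k=2^k n^{\delta_4}$ to produce tail estimates $\mathrm{mes}\{\tfrac1n\log|f_{E,n}|<\Lambda_\varepsilon-C_b 2^k n^{-\delta_4}\}<C_b e^{-2^k n^{\delta_4}}$ for all $k\ge1$, and then sums in~$k$. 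Your good/bad-set splitting with an $L^2$/BMO bound and Cauchy--Schwarz is more direct and avoids the dyadic iteration; it also makes transparent that the only hard input beyond Lemma~\ref{lem:deno} is the uniform $L^2$ control of $\log|f_{E,n}|-\langle\log|f_{E,n}|\rangle$.

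Two comments on presentation. First, the phrase ``at most $Cn$ zeros in $\T^b_{\eta/4}$'' is only meaningful when $b=1$; for $b>1$ the zero set is a variety, and what you actually need is the bound on the Riesz mass (total Monge--Amp\`ere or pluri-Laplacian mass), which is what the Jensen-type estimate with one large value and the sup bound actually gives. The resulting $L^2$ (or BMO) bound for plurisubharmonic functions on $\T^b$ is then the several-variable analogue from~\cite{GS2}. Second, your claim $\langle\log|f_{E,n}|\rangle\ge -Cn$ is used before the conclusion is established; it is not circular, since it follows from the crude inputs (one point with value $\ge -Cn$, sup bound $\le Cn$, and the Riesz-mass bound) rather than from the sharp lemma being proved, but you should say so explicitly. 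With those two clarifications your argument is complete and, for $b=1$ (the only case the paper actually uses later), arguably more elementary than the paper's.
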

This lemma (with $b=1$) will only be used in the proof of arithmetic Anderson localization in Sec.\ref{sec:AL_arithmetic}. 
In view of Lemmas \ref{lem:deno} and \ref{lem:fn_ave}, we will further shrink $\eta$ to $\eta/2$ such that those estimates hold for $|\varepsilon|\leq \eta$.

The following pointwise upper bound of $f_{E,n}$, which does not require admissible~$n$, complements the preceding lower bound.
\begin{lemma}\label{lem:fn_upper}
Let $\omega\in \mathrm{DC}$, and $\delta>0$ be as in Lemma \ref{lem:upperbd}. For $n$ large enough, we have uniformly in $\theta\in \T^b$ and $|\varepsilon|\leq \eta$ that
    \begin{align}
\frac{1}{n}\log |f_{E,n}(\theta+i\varepsilon)|\leq L^d_{\varepsilon}(\omega,M_E)+\langle \log |\det B(\cdot+i\varepsilon)|\rangle+n^{-\delta}.
    \end{align}
\end{lemma}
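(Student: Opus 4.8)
\textbf{Proof plan for Lemma \ref{lem:fn_upper}.}

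The plan is to combine the determinant identity of Lemma \ref{lem:detP} with the exterior-power upper bound of Lemma \ref{lem:upperbd}. First I would invoke Lemma \ref{lem:detP}, which holds pointwise on $\T^b_\eta$, to write
\begin{align}
\frac{1}{n}\log|f_{E,n}(\theta+i\varepsilon)| = \frac{1}{n}\log|\det(M_{n,E}(\theta+i\varepsilon)-I_{2d})| + \frac{1}{n}\sum_{j=0}^{n-1}\log|\det B(\theta+i\varepsilon+j\omega)|.
\end{align}
For the second term, since $\log|\det B(\cdot+i\varepsilon)|$ is a bounded real-analytic (hence Lipschitz) function on $\T^b$ and $\omega\in\mathrm{DC}$, a standard Birkhoff-sum estimate for Diophantine rotations gives $\frac{1}{n}\sum_{j=0}^{n-1}\log|\det B(\theta+i\varepsilon+j\omega)| \le \langle\log|\det B(\cdot+i\varepsilon)|\rangle + Cn^{-\delta'}$ uniformly in $\theta$; in fact one can simply fold the error into the $n^{-\delta}$ term after possibly shrinking $\delta$.

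For the first term I would bound the determinant of $M_{n,E}-I_{2d}$ by the product of singular values. Writing $\det(M_{n,E}-I_{2d}) = \prod_{i=1}^{2d}\lambda_i$ for the eigenvalues $\lambda_i$ of $M_{n,E}-I_{2d}$, one has $|\det(M_{n,E}-I_{2d})| \le \prod_{i=1}^{2d}(1+\sigma_i(M_{n,E}))$, using $|\lambda_i(M_{n,E}-I_{2d})| \le 1+\|M_{n,E}\|$ and, more carefully, $\prod|\lambda_i(M-I)| = |\det(M-I)| \le \prod_{i}(\sigma_i(M)+1) \le 2^{2d}\prod_{i}\max(\sigma_i(M),1)$. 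Now because $M_E(\theta)$ is complex-symplectic for $\theta\in\T^b$ — and for $\varepsilon\neq 0$ one still has, via \eqref{eq:matrix_M=A}, a conjugation of a product of $A_E$'s by the fixed matrices $\mathrm{diag}(B^{\pm1},I_d)$, which perturbs singular values only by a bounded multiplicative constant — the $2d$ singular values of $M_{n,E}(\theta+i\varepsilon)$ satisfy $\sigma_{2d+1-j} \le C\,\sigma_j^{-1}$ for $1\le j\le d$ (here I would be slightly more careful: the relevant fact is that the product of all $2d$ singular values is $|\det M_{n,E}| = \prod|\det B/\det B| \cdot(\text{bounded})$, and the acceleration/symplectic structure forces the bottom $d$ singular values to be controlled). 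Hence $\prod_{i}\max(\sigma_i(M_{n,E}),1) \le C^n \prod_{j=1}^{d}\sigma_j(M_{n,E}) = C^n\|\textstyle{\bigwedge^d}M_{n,E}\|$, and Lemma \ref{lem:upperbd} gives $\frac{1}{n}\log\|\textstyle{\bigwedge^d}M_{n,E}(\theta+i\varepsilon)\| \le L^d_{(n),\varepsilon}(\omega,M_E) + n^{-\delta} \le L^d_\varepsilon(\omega,M_E) + 2n^{-\delta}$, the last step by the trivial inequality $L^d_\varepsilon \le L^d_{(n),\varepsilon}$ (superadditivity) — actually one only needs $L^d_{(n),\varepsilon} \le L^d_\varepsilon + n^{-\delta}$ here, which is Lemma \ref{lem:Ln-L} under the standing hypothesis $L_d\ge\gamma>0$, but in fact the clean route is $L^d_{(n),\varepsilon}\le L^d_\varepsilon + n^{-\delta}$ directly from Lemma \ref{lem:Ln-L}. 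Combining the two terms and absorbing all $C^n$-type and Birkhoff errors into a single $n^{-\delta}$ (after shrinking $\delta$ and enlarging the threshold $n\ge N$) yields the claim.

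The main obstacle I anticipate is the bookkeeping around the singular values of $M_{n,E}(\theta+i\varepsilon)$ for $\varepsilon\neq0$: off the real torus $M_E$ is no longer symplectic, so the clean relation $\sigma_{2d+1-j}=\sigma_j^{-1}$ fails, and one must instead argue that $\prod_{i=1}^{2d}\max(\sigma_i,1)$ is comparable to $\|\textstyle{\bigwedge^d}M_{n,E}\|$ up to an $e^{o(n)}$ factor. The way to handle this is to note $\det(M-I_{2d})$ is, up to sign, a sum over subsets $S\subseteq\{1,\dots,2d\}$ of the principal minors of $M$, each bounded by the corresponding product of $|S|$ top singular values, hence by $\max_{0\le j\le 2d}\|\textstyle{\bigwedge^j}M_{n,E}\|$; and for $0\le j\le d$ this is at most $e^{n(L^j_\varepsilon+o(n))}\le e^{n(L^d_\varepsilon + C\eta + o(n))}$ after controlling $L^j_\varepsilon - L^d_\varepsilon$ via Lipschitz continuity in $\varepsilon$ (Lemma \ref{lem:Lip_eps}) together with $L^j_0 \le L^d_0$ for $j\le d$ (the $L_j$ are decreasing), while for $d<j\le 2d$ one uses $\|\textstyle{\bigwedge^j}M_{n,E}\| = |\det M_{n,E}|\cdot\|\textstyle{\bigwedge^{2d-j}}M_{n,E}^{-*}\|$ and the analogous bound. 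This is the one place where the hypothesis $L_d(\omega,M_E)\ge\gamma>0$ and the choice \eqref{eq:L_d>0_varepsilon} of $\eta$ genuinely enter, ensuring the top $d$ exterior-power norms dominate and no extra $\varepsilon$-dependent term larger than $n^{-\delta}$ survives. Everything else — the Diophantine Birkhoff-sum bound and the passage from $L^d_{(n),\varepsilon}$ to $L^d_\varepsilon$ — is routine given the lemmas already in hand.
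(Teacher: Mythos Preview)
Your overall strategy is the same as the paper's: apply Lemma~\ref{lem:detP}, control the Birkhoff sum of $\log|\det B|$ via the Diophantine condition, bound $|\det(M_{n,E}-I_{2d})|\le\prod_{j=1}^{2d}(\sigma_j+1)$, expand into sums of products of singular values, and argue that only the term $\prod_{j=1}^d\sigma_j=\|\textstyle{\bigwedge^d}M_{n,E}\|$ survives at leading order. The paper does exactly this (via the SVD and Hadamard's inequality), then invokes Lemma~\ref{lem:upperbd} together with Lemma~\ref{lem:Ln-L} for that dominant term.

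The one place your writeup goes astray is the argument for why the $j\ne d$ exterior powers are subdominant. Your Lipschitz route---$L^j_\varepsilon\le L^j_0+C|\varepsilon|\le L^d_0+C|\varepsilon|\le L^d_\varepsilon+2C\eta$---only yields $L^j_\varepsilon\le L^d_\varepsilon+O(\eta)$, which would leave an $O(\eta)$ error in the final bound rather than $n^{-\delta}$. The correct (and much simpler) observation, which you gesture at in your last sentence, comes directly from \eqref{eq:L_d>0_varepsilon}: since $L_{d,\varepsilon}\ge\tfrac12 L_d>0$ and $L_{d+1,\varepsilon}\le-\tfrac12 L_d<0$ for $|\varepsilon|\le\eta$, one has for every $k\ne d$
\[
L^d_\varepsilon-L^k_\varepsilon=\begin{cases}\sum_{j=k+1}^d L_{j,\varepsilon}\ge L_{d,\varepsilon}\ge\tfrac12 L_d, & k<d,\\[2pt] -\sum_{j=d+1}^k L_{j,\varepsilon}\ge -L_{d+1,\varepsilon}\ge\tfrac12 L_d, & k>d.\end{cases}
\]
So $L^k_\varepsilon\le L^d_\varepsilon-\tfrac12 L_d$ uniformly in $|\varepsilon|\le\eta$, and after absorbing the $L^k_{(n),\varepsilon}\to L^k_\varepsilon$ error (any fixed $\varepsilon_1<\tfrac14 L_d$ suffices for $n$ large) all $k\ne d$ contributions are bounded by $C_d\,e^{n(L^d_\varepsilon-\tfrac14 L_d)}$, which is exponentially smaller than the $k=d$ term. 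No Lipschitz estimate or $\|\textstyle{\bigwedge^{2d-j}}M_{n,E}^{-*}\|$ duality is needed.
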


\section{Bounding the numerator: Lemma~\ref{lem:numerator}}
\label{sec:Zaehler}
We write the monodromy matrices in block form
\begin{align}
M_{n,E}(\theta) =\left(\begin{array}{c|c}
M_{n,E}^{UL}(\theta)  & M_{n,E}^{UR}(\theta) \\
\hline
M_{n,E}^{LL}(\theta)  & M_{n,E}^{LR}(\theta)
\end{array}\right),
\end{align}
where each $M_{n,E}^{\dagger}$ is a $d\times d$ block, $\dagger=UL, UR, LL, LR$.
We will make use of the following recursive relations: for $n=1$,
\begin{align}\label{eq:rec1}
\begin{cases}
M_{1,E}^{UL}(\theta) =-(V(\theta) -E)B^{-1}(\theta)\\
M_{1,E}^{UR}(\theta) =-B^{(*)}(\theta)\\
M_{1,E}^{LL}(\theta) =B^{-1}(\theta)\\
M_{1,E}^{LR}(\theta) =0
\end{cases}
\end{align}
and for each $n\geq 2$, one has 
\begin{align}\label{eq:rec2}
\begin{cases}
M_{n,E}^{UL}(\theta)=-M_{n-1,E}^{UL}(\theta+\omega) (V(\theta)-E)B^{-1}(\theta)+M_{n-1,E}^{UR}(\theta+\omega) B^{-1}(\theta)\\
M_{n,E}^{UR}(\theta)=-M_{n-1,E}^{UL}(\theta+\omega) B^{(*)}(\theta)\\
M_{n,E}^{LL}(\theta)=-M_{n-1,E}^{LL}(\theta+\omega) (V(\theta)-E)B^{-1}(\theta)+M_{n-1,E}^{LR}(\theta+\omega)B^{-1}(\theta)\\
M_{n,E}^{LR}(\theta)=-M_{n-1,E}^{LL}(\theta+\omega) B^{(*)}(\theta)
\end{cases}
\end{align}
We now turn to the proof of Lemma~\ref{lem:numerator}, which is a straightforward adaption of Section~4 of~\cite{HS3}. 
We restrict ourselves to the case $0\leq x\leq d-1$ and $3d\leq y\leq (n-1)d-1$. For $(n-1)d\leq x\leq nd-1$, one proceeds analogously, see~\cite{HS3}. 
With  $y=\ell d+r$, $\ell\in [3,n-2]$ and $r\in [0,d-1]$, we let 
\begin{align}
R_{x,y}:=\left(\begin{array}{c|c}
P_n(\theta)-E & {\bf e}_{dn,x}\\
\hline
{\bf e}_{dn,y}^{*} & 0
\end{array}\right),
\end{align}
where ${\bf e}_{m,j}^*=(\delta_j(m-1),...,\delta_j(1),\delta_j(0))$. 
By definition, 
\begin{align}\label{eq:muxy=Rxy}
|\mu_{n,x,y}(\theta)|= |\det R_{x,y}|
\end{align}
and, with $V(\theta+j\omega)-E=:C_j$, $B(\theta+j\omega)=:B_j$ and $B^{(*)}(\theta+j\omega)=:B_j^{(*)}$,
\begin{align}\label{eq:Rxy}
\qquad
R_{x,y}
=
&\left(\begin{array}{c|c|c|c|c|c|c|c|c|c}
C_{n-1} & B_{n-1}^{(*)} & & &  & & & &B_0 &\\
\hline
B_{n-1} &\ddots &\ddots & & & & & & & \\
\hline
&\ddots &\ddots &\ddots & & & & & &\\
\hline
& &\ddots &\ddots &B_{\ell+1}^{(*)} & & & & &\\
\hline
& & & B_{\ell+1} &C_{\ell} &B_\ell^{(*)} & & & &\\
\hline
& & & &B_\ell &C_{\ell-1} &\ddots & & &\\
\hline
& & & & &\ddots &\ddots &\ddots & &\\
\hline
& & & & & &\ddots &\ddots &B_1^{(*)} &\\
\hline
B_0^{(*)}& & & & & & &B_1 &C_0 &{\bf e}_{d,x}\\
\hline
& & & &{\bf e}_{d,r}^* & & & & &
\end{array}\right)
=:
\left(\begin{matrix}
\mathrm{Row}_1\\
\mathrm{Row}_2\\
\vdots\\
\text{Row}_{n+1}
\end{matrix}\right)
\end{align}
Performing the identical row operations as in~\cite{HS3} we obtain that $|\det R_{x,y}|=|\det R^{(1)}_{x,y}|$ where
\begin{align}\label{eq:R(1)}
R^{(1)}_{x,y} \!\!
=\!\!
\left(\!\!\begin{array}{c|c|c|c|c|c|c|c|c|c|c|c|c}
\!\!0 \!\!& 0 &0 &\cdots & &  & & &\cdots &\!\!0\!\! \!\!&-M_{n-1}^{UL}(1)B_1 &B_0-M_{n-1}^{UR}(1) & \!\!0 \\
\hline
\!\!B_{n-1}\!\! &0 &0 &\cdots & & & & &\cdots &\!\!0\!\!\!\! &-M_{n-2}^{UL}(1)B_1 &-M_{n-2}^{UR}(1) & \!\!0 \\
\hline 
\!\!0\!\!&\!\!B_{n-2} \!\!&\!\!C_{n-3} \!\!&\!\!B_{n-3}^{(*)} \!\!& & & & & & & & &\\
\hline
\vdots& &\ddots &\ddots &\ddots & & & & & & & &\\
\hline
& & & \ddots &\ddots &\ddots & & & & & & &\\
\hline
\vdots & & & &\!\!B_{\ell+1} \!\!&\!\!C_{\ell}\!\! &\!\!B_\ell^{(*)}\!\!& & & & & &\\
\hline
\!\!0\!\!&\cdots & &\cdots &\!\!0 \!\!&\!\!B_\ell \!\!&\!\!0 \!\!&\!\!0\!\! &\cdots &\!\!0\!\! \!\!&-M_{\ell-1}^{UL}(1)B_1 &-M_{\ell-1}^{UR}(1) & \!\!0\\
\hline
\vdots & & & & &\!\!0 \!\!&\!\!B_{\ell-1}\!\! &\!\!C_{\ell-2} \!\!&\!\!B_{\ell-2}^{(*)} \!\!& & &0 &\!\!0\\
\hline
& & & & & & &\ddots &\ddots &\ddots & &\vdots &\vdots\!\!\\
\hline
\vdots& & & & & & & &\ddots &\ddots &\ddots &0 &\vdots\!\!\\
\hline
\!\!0\!\!& & & & & & & & &\ddots &\ddots &B_1^{(*)} &\!\!0\!\!\\
\hline
\!\! B_0^{(*)}\!\! &\!\!0 \!\!&\cdots & & & & & &\cdots &\!\!0 \!\!\!\!&B_1 &C_0 &{\bf e}_{d,x}\!\!\\
\hline
\!\!0\!\! &\cdots & &\cdots &\!\! 0 \!\!&\!\!{\bf e}_{d,r}^*\!\! & \!\!0\!\! &\cdots & &\cdots &  0 & 0 & \!\!0 
\end{array}\!\!\right)
\end{align}
where   $M_{k,E}(\theta+j\omega)=:M_k(j)$. By inspection, rows $1$, $2$, $n-\ell+1$, $n$, and $n+1$ are
\begin{align}
\left(\begin{matrix}
\text{Row}_1^{(n-2)}\\
\text{Row}_2^{(n-3)}\\
\text{Row}_{n-\ell+1}^{(\ell-2)}\\
\text{Row}_{n}\\
\text{Row}_{n+1}
\end{matrix}
\right)
=\left(\begin{array}{ccccccccccc}
0 &0 &\cdots &0 &0 &0 &\cdots &0 &-M_{n-1}^{UL}(1)B_1 &B_0-M_{n-1}^{UR}(1) &0\\
B_{n-1} &0 &\cdots &0 &0 &0 &\cdots &0 &-M_{n-2}^{UL}(1)B_1 &-M_{n-2}^{UR}(1) &0\\
0 &0 &\cdots &0 &B_\ell &0 &\cdots &0 &-M_{\ell-1}^{UL}(1)B_1 &-M_{\ell-1}^{UR}(1) &0\\
B_0^{(*)} &0 &\cdots &0 &0 &0 &\cdots &0 &B_1 &C_0 &{\bf e}_{d,x}\\
0 &0 &\cdots &0 &{\bf e}_{d,r}^* &0 &\cdots &0 &0 &0 &0
\end{array}\right),
\end{align}
in which only columns $1$, $n-\ell$, $n-1$, $n$, $n+1$ are non-vanishing.
Define
\begin{align}
    S_1:=\left(\begin{array}{ccccc}
0 &0 &-M_{n-1}^{UL}(1)B_1 &B_0-M_{n-1}^{UR}(1) &0\\
B_{n-1} &0 &-M_{n-2}^{UL}(1)B_1 &-M_{n-2}^{UR}(1) &0\\
0 &B_\ell &-M_{\ell-1}^{UL}(1)B_1 &-M_{\ell-1}^{UR}(1) &0\\
B_0^{(*)} &0 &B_1 &C_0 &{\bf e}_{d,x}\\
0 &{\bf e}^*_{d,r} & 0 &0 &0
    \end{array}\right)
\end{align}
as a $(4d+1)\times (4d+1)$ submatrix of rows $1$, $2$, $n-\ell+1$, $n$ and $n+1$. It is unique with the property that any other  $(4d+1)\times (4d+1)$ submatrix of these rows has vanishing determinant. 
Let
\begin{align}
    S_2:=
    \left(\begin{array}{cccccccccccccccc}
       B_{n-2} &C_{n-3} &B_{n-3}^{(*)} & & & & & & & & & & & & &\\
          &B_{n-3}       &C_{n-4} &\ddots & & & & & & & & & & & &\\
          &        &\ddots  &\ddots & & & & & & & & & & & &\\
                   \\
          & & & & &\ddots &\ddots  & & & & & & & & &\\
          & & & & &\ddots &C_{\ell+2} &B_{\ell+2}^{(*)} & & & & & & & &\\
          & & & & &  &B_{\ell+2} &C_{\ell+1} &0 & & & & & & &\\
          & & & & &  & &B_{\ell+1} &B_{\ell}^{(*)} &0 & & & & & &\\
          & & & & &  & & &B_{\ell-1} &C_{\ell-2} &B_{\ell-2}^{(*)} & & & & &\\
          & & & & &  & & & &B_{\ell-2} &\ddots &\ddots & & & &\\
          & & & & &  & & & & &\ddots & & & & &\\
          \\
          & & & & &  & & & & & & &\ddots &\ddots &\ddots &\\
          & & & & &  & & & & & & & &B_4 &C_3 &B_3^{(*)}\\
          & & & & &  & & & & & & & & &B_3 &C_2   \\ 
          & & & & &  & & & & & & & & & &B_2         
    \end{array}\right),
\end{align}
which is the submatrix of $R^{(1)}_{x,y}$ obtained by deleting rows $1$, $2$, $n-\ell+1$, $n$, $n+1$ and columns $1$, $n-\ell$, $n-1$, $n$, $n+1$.
Hence 
\begin{align}\label{eq:num1}
|\det R^{(1)}_{x,y}|=
|\det S_1|\cdot |\det S_2|= |\det S_1| \cdot \prod_{\substack{j=2\\ j\ne\ell}}^{n-2}|\det B_j|,
\end{align}
We simplify 
\begin{align}\notag
|\det S_1|=
&\left|\det\left(\begin{array}{ccccc}
0 &0 &-M_{n-1}^{UL}(1)B_1 &B_0-M_{n-1}^{UR}(1) &0\\
B_{n-1} &0 &-M_{n-1}^{LL}(1)B_1 &-B_{n-1}M_{n-1}^{LR}(1) &0\\
0 &B_\ell &-M_{\ell-1}^{UL}(1)B_1 &-M_{\ell-1}^{UR}(1) &0\\
B_0^{(*)} &0 &B_1 &C_0 &{\bf e}_{d,x}\\
0 &{\bf e}^*_{d,r} & 0 &0 &0
    \end{array}\right)\right| \notag\\
=
&|\det B_1|\cdot \left|\det\left(\begin{array}{ccccc}
0 &0 &-M_{n-1}^{UL}(1) &B_0-M_{n-1}^{UR}(1) &0\\
B_{n-1} &0 &-B_{n-1}M_{n-1}^{LL}(1) &-B_{n-1}M_{n-1}^{LR}(1) &0\\
0 &B_\ell &-M_{\ell-1}^{UL}(1) &-M_{\ell-1}^{UR}(1) &0\\
B_0^{(*)} &0 &I_d &C_0 &{\bf e}_{d,x}\\
0 &{\bf e}^*_{d,r} & 0 &0 &0
    \end{array}\right)\right|\notag 
        \end{align}
        Pulling out the matrices in the first column, we may further simplify this in the form 
    \begin{align}
  |\det S_1|  = &|\det B_0^{(*)}| |\det B_1||\det B_{n-1}|\cdot \left|\det\left(\begin{array}{ccccc}
0 &0 &-M_{n-1}^{UL}(1) &B_0-M_{n-1}^{UR}(1) &0\\
 I_d &0 &- M_{n-1}^{LL}(1) &- M_{n-1}^{LR}(1) &0\\
0 &B_\ell &-M_{\ell-1}^{UL}(1) &-M_{\ell-1}^{UR}(1) &0\\
I_d &0 & (B_0^{(*)})^{-1} & (B_0^{(*)})^{-1}C_0 & (B_0^{(*)})^{-1}{\bf e}_{d,x}\\
0 &{\bf e}^*_{d,r} & 0 &0 &0
    \end{array}\right)\right|\notag \\
    = &|\det B_0^{(*)}| |\det B_1||\det B_{n-1}|\cdot \left|\det\left(\begin{array}{cccccc}
0& 0 &0 &-M_{n-1}^{UL}(1) &B_0-M_{n-1}^{UR}(1) &0\\
0& I_d &0 &- M_{n-1}^{LL}(1) &- M_{n-1}^{LR}(1) &0\\
0& 0 &B_\ell &-M_{\ell-1}^{UL}(1) &-M_{\ell-1}^{UR}(1) &0\\
I_d &0 &0 &0 &-B_0 &0\\
0& I_d &0 & (B_0^{(*)})^{-1} & (B_0^{(*)})^{-1}C_0 & (B_0^{(*)})^{-1}{\bf e}_{d,x}\\
0& 0 &{\bf e}^*_{d,r} & 0 &0 &0
    \end{array}\right)\right|\notag
\\
=&|\det B_0^{(*)}| |\det B_1||\det B_{n-1}|\cdot \left|\det\left(\begin{array}{c|c|c|c}
\left(\begin{matrix}0& 0\\
0 &I_d\end{matrix}\right)& &\left(\begin{matrix}0 & B_0\\ 0 & 0\end{matrix}\right)-M_{n-1}(1) &\\
\hline
 & B_{\ell} &-M_{\ell-1}^U(1) &  \\
 \hline
I_{2d} & &-M_1^{-1}(0) &\left(\begin{matrix}0\\ I_d\end{matrix}\right) (B_0^{(*)})^{-1}{\bf e}_{d,x}\\
\hline
 &{\bf e}^*_{d,r} & &
    \end{array}\right)\right|\notag\\
=&\frac{|\det B_0^{(*)}|}{|\det M_1(0)|} |\det B_1||\det B_{n-1}|\cdot \left|\det\left(\begin{array}{c|c|c|c}
\left(\begin{matrix}0& 0\\
0 &I_d\end{matrix}\right)& &\left(\begin{matrix}I_d & 0\\ 0 & 0\end{matrix}\right)-M_{n}(0) &\\
\hline
 & B_{\ell} &-(I_d,\, 0) M_{\ell}(0) &  \\
 \hline
I_{2d} & &-I_{2d} &\left(\begin{matrix}0\\ I_d\end{matrix}\right) (B_0^{(*)})^{-1}{\bf e}_{d,x}\\
\hline
 &{\bf e}^*_{d,r} & &
    \end{array}\right)\right|\notag\\
=&|\det B_0||\det B_1||\det B_{n-1}|\cdot \left|\det\left(\begin{array}{c|c|c|c}
\left(\begin{matrix}0& 0\\
0 &I_d\end{matrix}\right)& &\left(\begin{matrix}I_d & 0\\ 0 & 0\end{matrix}\right)-M_{n}(0) &\\
\hline
 & B_{\ell} &-(I_d,\, 0) M_{\ell}(0) &  \\
 \hline
I_{2d} & &-I_{2d} &\left(\begin{matrix}0\\ I_d\end{matrix}\right) (B_0^{(*)})^{-1}{\bf e}_{d,x}\\
\hline
 &{\bf e}^*_{d,r} & &
    \end{array}\right)\right|\notag
    \end{align}
As in \cite{HS3}, this can now be rewritten in the form 
\begin{equation}
    \label{eq:detS1B}
    |\det S_1|= |\det (B_0B_1B_\ell B_{n-1})|
\biggl|\det\biggl(\begin{array}{c|c}
    -M_n(0)+I_{2d} &-\left(\begin{matrix}0\\ I_d\end{matrix}\right) (B_0^{(*)})^{-1}{\bf e}_{d,x}\\
    \hline
    {\bf e}_{d,r}^*B_\ell^{-1}(I_d, 0)\cdot M_{\ell}(0) &0
\end{array}\biggr)\biggr| 
\end{equation}
At this point we proceed as for the upper bound on~$|\det S_3^{(3)}|$ in Section~4 of~\cite{HS3}. 
This leads to the following, note $B_0^{(*)}=(B_0)^*$ for $\theta\in \T^b$:
\begin{align}
   |\mu_{n,x,y}| &\le C_d \|B^{-1}\|_\infty^2 \prod_{j=0}^{n-1}|\det B_j| \cdot \sum_{m_0} \|\textstyle{\bigwedge^j}^{m_0} M_{\ell}(0)\|\cdot \|\textstyle{\bigwedge^j}^{m_0-1} M_{n-\ell}(\ell)\| 
\end{align}
Combining this with Lemma~\ref{lem:upperbd} proves the claimed result.

\section{Bounding the denominator: Lemmas~ \ref{lem:detP}, \ref{lem:deno}, \ref{lem:fn_ave} and \ref{lem:fn_upper}}
\label{sec:nenner}

\begin{proof}[Proof of Lemma \ref{lem:detP}]
    This is implicit in the calculations of the previous section, cf.~the upper left-hand corner of the block matrix of~\eqref{eq:detS1B}. For more details, see~\cite{HS3}*{Lemma 5.1}. 
\end{proof}

\begin{proof}[Proof of Lemma \ref{lem:fn_upper}]
    Let $\{v_{j}^{(n)}(\theta+i\varepsilon)\}_{j=1}^{2d}$ be the set of normalized singular vectors of $M_{n,E}(\theta+i\varepsilon)$ such that 
    \[M_{n,E}(\theta+i\varepsilon) v_j^{(n)}(\theta+i\varepsilon)=\sigma_j(M_{n,E}(\theta+i\varepsilon)) \cdot w_j^{(n)}(\theta+i\varepsilon).\]
    Then 
    \begin{align}\label{eq:Mn_singular_decom}
    M_{n,E}(\theta+i\varepsilon)=W_n(\theta+i\varepsilon)D_n(\theta+i\varepsilon)V_n^*(\theta+i\varepsilon),
    \end{align}
    where $D_n(\theta+i\varepsilon)=\mathrm{diag}(\sigma_j(M_{n,E}(\theta+i\varepsilon))_{j=1}^{2d}$ and $V_n,W_n$ are the matrices with columns $v_j^{(n)}$ and $w_j^{(n)}$ respectively.
    By Hadamard's inequality,
    \begin{align}
        |\det(M_{n,E}(\theta+i\varepsilon)-I_{2d})|=|\det (D_n(\theta+i\varepsilon)-W_n^*(\theta+i\varepsilon)V_n(\theta+i\varepsilon))|\leq \prod_{j=1}^{2d}\|r_j^{(n)}(\theta+i\varepsilon)\|,
    \end{align}
    where $r_j^{(n)}(\theta+i\varepsilon)$ is the $j$-th column of $D_n(\theta+i\varepsilon)-W_n^*(\theta+i\varepsilon)V_n(\theta+i\varepsilon)$.
    Clearly 
    \[\|r_j^{(n)}(\theta+i\varepsilon)\|\leq \sigma_j(M_{n,E}(\theta+i\varepsilon))+1,\] 
    which implies
    \begin{align}\label{eq:detM-I_upper}
        |\det(M_{n,E}(\theta+i\varepsilon)-I_{2d})|
        \leq &\prod_{j=1}^{2d}(\sigma_j(M_{n,E}(\theta+i\varepsilon))+1) \notag\\
        =&\sum_{k=1}^{2d}\, \sum_{1\leq j_1<...<j_{k}\leq 2d}\,\, \prod_{\ell=1}^{k} \sigma_{j_{\ell}}(M_{n,E}(\theta+i\varepsilon))+1.
    \end{align}
Let $\varepsilon_1=L_d(\omega,M_E)/4$. Then for $n$ large enough, and uniformly in $|\varepsilon|\leq \eta$ and $1\leq k\leq 2d$ one has
\begin{align}\label{eq:Lkn-Ln_eps1}
    L^k_{(n),\varepsilon}(\omega,M_E)\leq L^k_{\varepsilon}(\omega,M_E)+\varepsilon_1.
\end{align}
Combining Lemma \ref{lem:upperbd} with \eqref{eq:Lkn-Ln_eps1},
we have for $n$ large enough, for any $1\leq k\leq 2d$, $k\neq d$, uniformly in $\theta$ that
\begin{align}\label{eq:prod_sigma_1}
    \prod_{\ell=1}^k \sigma_{j_{\ell}}(M_{n,E}(\theta+i\varepsilon))\leq \prod_{\ell=1}^k \sigma_{\ell}(M_{n,E}(\theta+i\varepsilon))
    \leq e^{n(L^k_{\varepsilon}(\omega,M_E)+\varepsilon_1)}.
\end{align}
For $k=d$, and $(j_1,...,j_d)\neq (1,...,d)$, one has (see \cite[Lemma 5.11]{HS3}) that 
\begin{align}\label{eq:prod_sigma_2}
\prod_{\ell=1}^d \sigma_{j_{\ell}}(M_{n,E}(\theta+i\varepsilon))\leq e^{n(\max_{k\neq d} L^k_{\varepsilon}(\omega,M_E)+\varepsilon_1)}.
\end{align}
In fact if $\sigma_{j_d}(M_{n,E}(\theta+i\varepsilon))<1$, then
\begin{align}\label{eq:prod_sigma_21}
    \prod_{\ell=1}^d \sigma_{j_{\ell}}(M_{n,E}(\theta+i\varepsilon))\leq 
    \prod_{\ell=1}^{d-1} \sigma_{j_{\ell}}(M_{n,E}(\theta+i\varepsilon))\leq e^{n(L^{d-1}_{\varepsilon}(\omega,M_E)+\varepsilon_1)},
\end{align}
where we applied \eqref{eq:prod_sigma_1} in the last inequality. 
If $\sigma_{j_d}(M_{n,E}(\theta+i\varepsilon))\geq 1$, then
\begin{align}\label{eq:prod_sigma_22}
    \prod_{\ell=1}^d \sigma_{j_{\ell}}(M_{n,E}(\theta+i\varepsilon))\leq 
    \prod_{\ell=1}^{j_d} \sigma_{{\ell}}(M_{n,E}(\theta+i\varepsilon))\leq e^{n(\max_{k\neq d} L^{k}_{\varepsilon}(\omega,M_E)+\varepsilon_1)},
\end{align}
where we applied \eqref{eq:prod_sigma_1} and noted that $j_d>d$.
Combining \eqref{eq:prod_sigma_21} with \eqref{eq:prod_sigma_22} yields \eqref{eq:prod_sigma_2}.

Note for $|\varepsilon|\leq \eta$, by the choice of $\varepsilon_1$ and \eqref{eq:L_d>0_varepsilon}, we have
\begin{align}\label{eq:prod_sigma_3}
    L^d_{\varepsilon}(\omega,M_E)-\max_{k\neq d}(L^k_{\varepsilon}(\omega,M_E)+\varepsilon_1)\geq\frac{1}{4}L_d(\omega,M_E).
\end{align}
Therefore, combining \eqref{eq:prod_sigma_1} and \eqref{eq:prod_sigma_2} with \eqref{eq:detM-I_upper}, yields
\begin{align}
    |\det(M_{n,E}(\theta+i\varepsilon)-I_{2d})|
    \leq &\|\textstyle{\bigwedge}^d M_{n,E}(\theta+i\varepsilon)\|+C_d e^{n(L^d_{\varepsilon}(\omega,M_E)-\frac{1}{4}L_d(\omega,M_E))}\\
    \leq &e^{n(L^d_{\varepsilon}(\omega,M_E)+n^{-\delta})}+C_d e^{n(L^d_{\varepsilon}(\omega,M_E)-\frac{1}{4}L_d(\omega,M_E))}\\
    \leq &2e^{n(L^d_{\varepsilon}(\omega,M_E)+n^{-\delta})},
\end{align}
in which we applied Lemma \ref{lem:upperbd} to bound $\|\textstyle{\bigwedge}^d M_{n,E}(\cdot+i\varepsilon)\|$.
The claimed result follows from combining the above with 
\[\prod_{j=0}^{n-1}|\det B_j|\leq e^{n(\langle\log |B(\cdot+i\varepsilon)|\rangle+n^{-\delta'})},\]
for some $\delta'>\delta>0$ by the Diophantine property $\omega\in \mathrm{DC}$.
\end{proof}

As a preparation for the proof of Lemma \ref{lem:deno}, we first prove the following.
Recall that $n$ is $\kappa_0$-admissible if $\|n\omega\|_{\tor^b}\leq \kappa_0$. 

\begin{lemma}\label{lem:Msq}
Assume $L_d(\omega,M_E)\geq \gamma>0$. 
There exists $0<\kappa_0\ll1$ and $\delta_3\in (0,\delta)$ so that for any $|\varepsilon|\leq \eta/2$ and $\kappa_0$-admissible sufficiently large $n$,  we have
    \begin{align}\label{eq:meas_tBn}
\mes(\tilde{\mathcal{B}}_{n,E,\varepsilon}):=\mes\Big(\Big\{\theta\in\tor^b:\,  \Big|\frac{1}{2n}\log\|\textstyle{\bigwedge}^d \left(M_{n,E}^2(\theta+i\varepsilon)\right)\|- {L}^d_{\varepsilon}(\omega, M_E)\Big|>n^{-\delta_3}\Big\}\Big)\leq e^{-n^{\delta_3}}
    \end{align}
\end{lemma}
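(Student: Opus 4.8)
The plan is to bound $\|\textstyle{\bigwedge}^d(M_{n,E}^2(\theta+i\varepsilon))\|$ both above and below on a set of near-full measure, and to relate the relevant exponent to $L^d_\varepsilon(\omega,M_E)$. The key point is that $M_{n,E}^2(\theta+i\varepsilon) = M_{n,E}(\theta+n\omega+i\varepsilon)M_{n,E}(\theta+i\varepsilon)$, and when $n$ is $\kappa_0$-admissible, $\theta+n\omega$ is within $\kappa_0$ of $\theta$; since $\kappa_0$ is tiny, the cocycle is essentially being iterated at the same phase twice. First I would use the bound $\|\textstyle{\bigwedge}^d(M_{n,E}^2)\| \le \|\textstyle{\bigwedge}^d M_{n,E}(\theta+n\omega+i\varepsilon)\| \cdot \|\textstyle{\bigwedge}^d M_{n,E}(\theta+i\varepsilon)\|$ together with Lemma~\ref{lem:upperbd} (applied at phase $\theta$ and at phase $\theta+n\omega$, both of which are allowed since Lemma~\ref{lem:upperbd} is uniform in $\theta\in\tor^b$) and Lemma~\ref{lem:Ln-L}, to get the upper bound $\frac{1}{2n}\log\|\textstyle{\bigwedge}^d(M_{n,E}^2(\theta+i\varepsilon))\| \le L^d_\varepsilon(\omega,M_E) + n^{-\delta}$ uniformly in $\theta$ — no exceptional set is needed here.

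For the lower bound I would apply the Avalanche Principle (Theorem~\ref{thm:APDK}) to the two-block product $g_1 = \textstyle{\bigwedge}^d M_{n,E}(\theta+n\omega+i\varepsilon)$, $g_0 = \textstyle{\bigwedge}^d M_{n,E}(\theta+i\varepsilon)$, with $m = \binom{2d}{d}$. The gap condition $\sigma_1(g_j)/\sigma_2(g_j) > 1/\kappa$ for both blocks, with $\kappa = e^{-cn}$ for a suitable $c>0$, follows from \eqref{eq:L_d>0_varepsilon} (which guarantees a uniform spectral gap $L_{d,\varepsilon} - L_{d+1,\varepsilon} \ge L_d(\omega,M_E) > 0$ at scale zero, hence at scale $n$ after using Lemma~\ref{lem:Ln-L} and a large deviation estimate for the ratio of top two singular values — this is where the exceptional set $\tilde{\mathcal B}_{n,E,\varepsilon}$ of measure $\le e^{-n^{\delta_3}}$ enters, via Lemma~\ref{lem:LDTsig} applied both at phase $\theta$ and at phase $\theta+n\omega$, together with a standard companion LDT for $\|\textstyle{\bigwedge}^{d-1} M_{n,E}\|$). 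The angle condition $\|g_1 g_0\|/(\|g_1\|\|g_0\|) > \varepsilon$ with $\varepsilon = e^{-n^{\delta_3}}$ (so that $\kappa \ll \varepsilon^2$) is exactly the statement that $\frac{1}{n}\log\|\textstyle{\bigwedge}^d(M_{n,E}^2)\|$ is not much smaller than $\frac{1}{n}\log\|\textstyle{\bigwedge}^d M_{n,E}(\theta+n\omega+i\varepsilon)\| + \frac{1}{n}\log\|\textstyle{\bigwedge}^d M_{n,E}(\theta+i\varepsilon)\|$; this in turn is where admissibility of $n$ is used — since $\|n\omega\|_{\tor^b}\le\kappa_0$, one compares $M_{n,E}(\theta+n\omega+i\varepsilon)$ to $M_{n,E}(\theta+i\varepsilon)$ by analyticity and the uniform bounds, so the top singular directions of $g_0$ and the top co-singular directions of $g_1$ are nearly aligned, forcing the angle to be bounded below. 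Feeding these into Theorem~\ref{thm:APDK} gives $\big|\log\|\textstyle{\bigwedge}^d(M_{n,E}^2)\| - \log\|\textstyle{\bigwedge}^d M_{n,E}(\theta+n\omega+i\varepsilon)\| - \log\|\textstyle{\bigwedge}^d M_{n,E}(\theta+i\varepsilon)\|\big| \le C_0 n\kappa/\varepsilon^2 = o(n^{1-\delta_3})$, and then Lemma~\ref{lem:LDTsig} (lower bound on each single-block norm off $\mathcal B_{n,E,\varepsilon}$ and off its $n\omega$-translate) gives $\frac{1}{2n}\log\|\textstyle{\bigwedge}^d(M_{n,E}^2(\theta+i\varepsilon))\| \ge L^d_{(n),\varepsilon}(\omega,M_E) - n^{-\delta_3} \ge L^d_\varepsilon(\omega,M_E) - n^{-\delta_3}$ by Lemma~\ref{lem:Ln-L}, after possibly shrinking $\delta_3$.

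Combining the upper and lower bounds, off a set of measure $\le e^{-n^{\delta_3}}$ we obtain $\big|\frac{1}{2n}\log\|\textstyle{\bigwedge}^d(M_{n,E}^2(\theta+i\varepsilon))\| - L^d_\varepsilon(\omega,M_E)\big| \le n^{-\delta_3}$, which is the claim. The main obstacle is the angle (transversality) condition for the Avalanche Principle: one has to verify that $\frac{\|g_1 g_0\|}{\|g_1\|\|g_0\|}$ is not exponentially small, and the only reason this holds is the admissibility $\|n\omega\|_{\tor^b}\le\kappa_0$ forcing $M_{n,E}(\theta+n\omega+i\varepsilon)$ to be a small perturbation of $M_{n,E}(\theta+i\varepsilon)$ in the operator norm and hence in the relevant singular directions — quantifying this perturbation (uniformly in $\theta$, via the analytic dependence and Cauchy estimates on $\tor^b_\eta$) so that the resulting lower bound on the angle beats $\kappa/\varepsilon^2$ is the delicate part, and it is precisely the reason the lemma is restricted to $\kappa_0$-admissible $n$. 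I expect this to run parallel to the corresponding argument in~\cite{HS3} and~\cite{GS2}*{Proposition~3.3}, with the block structure causing only notational overhead.
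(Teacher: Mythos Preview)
Your proposal has a genuine gap at the heart of the lower bound. The Avalanche Principle applied to only two blocks is vacuous: with $n=2$ in Theorem~\ref{thm:APDK} the telescoping identity reads $\log\|g_1g_0\| + 0 - \log\|g_1g_0\| = 0$, so the estimate you write down, namely $\big|\log\|g_1g_0\| - \log\|g_1\| - \log\|g_0\|\big| \le C_0\kappa/\varepsilon^2$, is \emph{not} what the AP delivers. What you actually need is precisely the angle hypothesis $\|g_1g_0\|/(\|g_1\|\|g_0\|) > \varepsilon$, and that is the content of the lemma rather than an input you may assume.

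Your argument for that angle condition --- that admissibility forces $M_{n,E}(\theta+i\varepsilon)$ to be a small perturbation of itself, whence the relevant singular directions align --- fails twice over. First, shifting the phase by $\kappa_0$ perturbs each of the $n$ factors by $O(\kappa_0)$, so $\|M_{n,E}(\theta+\kappa)-M_{n,E}(\theta)\|$ is a~priori of size $C^n\kappa_0$, not small for fixed $\kappa_0$. Second, and more fundamentally, even with $g_1=g_0$ \emph{exactly} there is no reason for the top right singular vector of $g_0$ to overlap with its top left singular vector; e.g.\ for $g_0=\left(\begin{smallmatrix}0&\Lambda\\1&0\end{smallmatrix}\right)$ one has $\|g_0^2\|=\Lambda$ while $\|g_0\|^2=\Lambda^2$. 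Showing that this alignment \emph{does} occur off a small set of $\theta$ is the whole point of the lemma.

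The paper's argument is an induction in scale. One fixes a base scale $N_0$ and chooses $\kappa_0=\kappa_0(N_0)$ so small that continuity in the phase transfers the LDT for $\|\textstyle{\bigwedge}^d M_{2N_0,E}(\theta+i\varepsilon)\|$ (Lemma~\ref{lem:LDTsig}) to the twisted product $\|\textstyle{\bigwedge}^d(M_{N_0,E}(\theta+N_0\omega+\kappa+i\varepsilon)M_{N_0,E}(\theta+i\varepsilon))\|$ for all $|\kappa|\le\kappa_0$; here the perturbation is legitimate because $N_0$ is fixed. For a large $N_1$, one decomposes the length-$2N_1$ twisted product into many blocks of length $\simeq N_0$ and applies Theorem~\ref{thm:APDK} with \emph{many} factors: the gap and angle conditions at scale $N_0$ come from the standard LDTs, and the single ``middle'' join across the twist is controlled by the base case. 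Cartan's estimate (Lemma~\ref{lem:Cartan}) then upgrades the resulting lower bound to the LDT at scale $N_1$, and one iterates through a rapidly growing sequence of scales as in \cite{GS3}*{Lemma~3.2} and \cite{HS3}*{Lemma~5.3}. Admissibility of $n$ enters only at the end, by specializing $\kappa$ to the representative of $-n\omega$ of norm $\le\kappa_0$, so that $M_{n,E}(\theta+n\omega+\kappa+i\varepsilon)M_{n,E}(\theta+i\varepsilon)=M_{n,E}^2(\theta+i\varepsilon)$.
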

\begin{proof}
This is essentially \cite{GS3}*{Lemma 3.2}, with minor modifications for higher dimensional monodromy matrices (see the more recent \cite{HS3}*{Lemma 5.3} for details).
The proof follows by induction from the large deviation estimates in Lemmas~\ref{lem:upperbd}, \ref{lem:LDTsig} and the Avalanche principle of Theorem~\ref{thm:APDK}, together with Cartan-type bounds in higher dimensions. 
We briefly sketch the induction below.
Let $N_0$ be large enough so that the uniform upper bound (Lemma \ref{lem:upperbd}) and large deviation estimate (Lemma \ref{lem:LDTsig}) hold for any $n\geq N_0$ and $|\varepsilon|\leq 3\eta/4$. Then there exists (see e.g. \cite[Lemma 5.4]{HS3}) $\kappa_0=\kappa_0(N_0)$ such that for any $|\varepsilon|\leq \eta/2$, $|\kappa|\leq \kappa_0$, and any $\theta$ such that
\begin{align}
    \|\textstyle{\bigwedge}^d \left(M_{2N_0,E}(\theta+i\varepsilon)\right)\|\geq e^{2N_0L^d_{\varepsilon}-N_0^{1-\delta}},
\end{align}
we have
\begin{align}
    \|\textstyle{\bigwedge}^d \left(M_{N_0,E}(\theta+N_0\omega+\kappa+i\varepsilon)M_{N_0,E}(\theta+i\varepsilon)\right)\|\geq e^{2N_0L^d_{\varepsilon}-2N_0^{1-\delta}}.
\end{align}
This, together with the large deviation Lemma \ref{lem:LDTsig} with $n=2N_0$, implies the following
\begin{lemma}\label{lem:induct_N0}
For $|\kappa|\leq \kappa_0$, one has 
\begin{align}\label{eq:induction_N0}
\mathrm{mes}\left(\left\{\theta\in \T^b: \left|\frac{1}{2N_0}\log \|\textstyle{\bigwedge}^d \left(M_{N_0,E}(\theta+N_0\omega+\kappa+i\varepsilon)M_{N_0,E}(\theta+i\varepsilon)\right)\|-L^d_{\varepsilon}\right|>N_0^{-\delta}\right\}\right)\leq e^{-N_0^{\delta}}.
\end{align}
\end{lemma}
Next, consider any $N_1$ such that $4N_0+1\leq N_1\leq e^{N_0^{\delta/2}}$.
For any such $N_1$, we decompose $N_1=\ell_1 N_0+r_0$, $N_0\leq r_0<2N_0$,
\begin{align}
    &\textstyle{\bigwedge}^d\left(M_{N_1,E}(\theta+N_1\omega+\kappa+i\varepsilon)M_{N_1,E}(\theta+i\varepsilon)\right)\\
    =&\textstyle{\bigwedge}^d\Big(M_{r_0,E}(\theta+(N_1+\ell_1 N_0)\omega+\kappa+i\varepsilon)(\prod_{k=\ell_1-1}^{1} M_{N_0,E}(\theta+(N_1+kN_0)\omega+\kappa+i\varepsilon))\cdot.\\ 
    &\qquad\cdot M_{N_0,E}(\theta+N_1\omega+\kappa+i\varepsilon)M_{N_0,E}(\theta+(N_1-N_0)\omega+i\varepsilon)\\
    &.\qquad\cdot (\prod_{j=\ell_1-2}^{0} M_{N_0,E}(\theta+(jN_0+r_0)\omega+i\varepsilon))M_{r_0,E}(\theta+i\varepsilon)\Big).
\end{align}
By Lemma \ref{lem:induct_N0} applied to the product of the middle two matrices, and Lemma \ref{lem:LDTsig} and Theorem~\ref{thm:APDK}, one has 
\begin{align}\label{eq:MN1=MN0}
    &\log \|\textstyle{\bigwedge}^d\left(M_{N_1,E}(\theta+N_1\omega+\kappa+i\varepsilon)M_{N_1,E}(\theta+i\varepsilon)\right)\|\\
    &\qquad\qquad-\log \|\textstyle{\bigwedge}^d M_{N_1,E}(\theta+N_1\omega+\kappa+i\varepsilon)\|-\log \|\textstyle{\bigwedge}^d M_{N_1,E}(\theta+i\varepsilon)\|\notag\\
    =&\log \|\textstyle{\bigwedge}^d \left(M_{N_0,E}(\theta+N_1\omega+\kappa+i\varepsilon)M_{N_0,E}(\theta+(N_1-N_0)\omega+i\varepsilon)\right)\|\\
    &\qquad\qquad -\log \|\textstyle{\bigwedge}^d \left(M_{N_0,E}(\theta+N_1\omega+\kappa+i\varepsilon)\right)\|\notag\\
    &\qquad\qquad -\log \|\textstyle{\bigwedge}^d \left(M_{N_0,E}(\theta+(N_1-N_0)\omega+i\varepsilon)\right)\|+O(e^{-\frac{N_0}{2}L_d})
\end{align}
for $\theta\in \mathcal{B}_1$ where $\mathrm{mes}(\mathcal{B}_1)\leq e^{-N_0^{\delta}/2}$. Clearly, \eqref{eq:MN1=MN0} implies that for $\theta\in \mathcal{B}_1$,
\begin{align}\label{eq:LDT_MN1_before}
    u_{N_1,\kappa}(\theta+i\varepsilon):=\log \|\textstyle{\bigwedge}^d \left(M_{N_1,E}(\theta+N_1\omega+\kappa+i\varepsilon)M_{N_1,E}(\theta+i\varepsilon)\right)\|\geq 2N_1L^d_{\varepsilon}-3N_1^{1-\delta}.
\end{align}
We recall the following Cartan estimate for several variables:
\begin{lemma}\label{lem:Cartan}\cite{GS2}*{Lemma 2.15}
Let $\varphi(z_1,...,z_b)$ be an analytic function defined on a polydisck $\mathcal{P}=\prod_{j=1}^b \mathcal{D}(z_{j,0},1)$, $z_{j,0}\in \C$. Let $K\geq \sup_{z\in \mathcal{P}}\log |\varphi(z)|$, $m\leq \log |\varphi(z_0)|$, $z_0=(z_{1,0},...,z_{b,0})$. Given $H\gg 1$ there exists a set $\mathcal{B}\subset \mathcal{P}$, $\mathrm{mes}_{\R^b}(\mathcal{B}\cap \R^b)\leq C_b e^{-H}$, such that 
\begin{align}
    \log |\varphi(z)|>K-C_b H(K-m)
\end{align}
for any $z\in \frac{1}{6}\mathcal{P}\setminus \mathcal{B}$. 
\end{lemma}
\begin{remark}
    In the lemma above, we omitted the definition of Cartan sets, and instead only state the measure estimate.
\end{remark}
Note that \eqref{eq:LDT_MN1_before} provides us with a lower bound for $u_{N_1,\kappa}$, up to a set of measure $e^{-N_0^{\delta}/2}$, needed in the Cartan estimate. We still need an uniform upper bound in a neighborhood of $\T^b$.
Applying Lemmas \ref{lem:upperbd} and \ref{lem:Ln-L} to the cocycle $(\omega, M_E(\cdot+i\varepsilon+i\varepsilon'))$, implies uniformly in $\theta$ that
\begin{align}\label{eq:MM_upper1}
    \sup_{|\varepsilon'|\leq N_1^{-1}}\log \|\textstyle{\bigwedge}^d M_{N_1,E}(\theta+i\varepsilon+i\varepsilon')\|
    \leq &N_1 \sup_{|\varepsilon'|\leq N_1^{-1}} L^d_{\varepsilon+\varepsilon'}+N_1^{1-\delta}\notag\\
    \leq &N_1 (L^d_{\varepsilon}+Cb N_1^{-1})+N_1^{1-\delta},
\end{align}
where we used by Lemma \ref{lem:Lip_eps} that for $|\varepsilon'|\leq N_1^{-1}$, 
\[|L^d_{\varepsilon+\varepsilon'}-L^d_{\varepsilon}|\leq CbN_1^{-1}.\]
Clearly, \eqref{eq:MM_upper1} implies
\begin{align}\label{eq:MM_upper2}
    \sup_{|\varepsilon'|\leq N_1^{-1}}\log \|\textstyle{\bigwedge}^d \left(M_{N_1,E}(\theta+N_1\omega+\kappa+i\varepsilon+i\varepsilon') M_{N_1,E}(\theta+i\varepsilon+i\varepsilon')\right)\|
    \leq 2N_1 L^d_{\varepsilon}+3N_1^{1-\delta}.
\end{align}
We then have an upper bound of size $2N_1L^d_{\varepsilon}+3N_1^{1-\delta}$ for $u_{N_1,\kappa}(\theta+i\varepsilon)$ needed for the Cartan estimate.
Applying Lemma \ref{lem:Cartan} with $K_1=2N_1L^d_{\varepsilon}+3N_1^{1-\delta}$, $m_1=2N_1L^d_{\varepsilon}-3N_1^{1-\delta}$ and $H_1=N_1^{\delta/2}$ (we actually cover $\T^b$ by polydisks with radius $N_1^{-1}$, and apply Cartan to each of the polydisk. Note that $N_1^{-1}>e^{-N_0^{\delta}/2}$ hence within each such polydisk there is a lower bound $m_1$), we have the following.
\begin{lemma}\label{eq:LDT_M2_N1}
For any $|\kappa|\leq \kappa_0$ and any $N_1$ such that $4N_0<N_1\leq e^{N_0^{\delta/2}}$, the following holds
\begin{align}
    \mathrm{mes}\left(\left\{\theta\in \T^b: u_{N_1,\kappa}(\theta+i\varepsilon)<2N_1L^d_{\varepsilon}-C_b N_1^{1-\frac{\delta}{2}}\right\}\right)<C_b e^{-N_1^{\delta/2}},
\end{align}
for some constant $C_b>0$ depending on $b$ only.
\end{lemma}
Next, we perform the induction scheme. 
For $j\geq 1$, we fix an $N_j$ such that $N_j\in [e^{N_{j-1}^{\delta/4}}/4, e^{N_{j-1}^{\delta/4}}]$, and consider an arbitrary $N_{j+1}\in [4N_j, e^{N_j^{\delta/4}}]$.
By decomposing monodromy matrices of size $2N_{j+1}$ into blocks of sizes $\simeq N_j$, and arguing as above, we have a lower bound of size $m_{j+1}=2N_{j+1}L^d_{\varepsilon}-3N_{j+1}^{1-\delta}$ for 
\[ u_{N_{j+1},\kappa}(\theta+i\varepsilon):=\log \|\textstyle{\bigwedge}^d \left(M_{N_{j+1},E}(\theta+N_{j+1}\omega+\kappa+i\varepsilon) M_{N_{j+1},E}(\theta+i\varepsilon)\right)\|,\]
up to a set of measure $e^{-N_j^{\delta/2}/2}$.
We also have the upper bound
\begin{align}
\sup_{|\varepsilon'|\leq N_{j+1}^{-1}} u_{N_{j+1},\kappa}(\theta+i\varepsilon+i\varepsilon')\leq 2N_{j+1}L^d_{\varepsilon}+3N_{j+1}^{1-\delta}=:K_{j+1}.
\end{align}
Cartan's estimate (Lemma \ref{lem:Cartan}) with $H_{j+1}=N_{j+1}^{\delta/2}$, therefore implies
\begin{lemma}
For any $|\kappa|\leq \kappa_0$ and any $N_{j+1}$ such that $4N_j\leq N_{j+1}\leq e^{N_j^{\delta/4}}$, the following holds
    \begin{align}
    \mathrm{mes}\left(\left\{\theta\in \T^b: u_{N_{j+1},\kappa}(\theta+i\varepsilon)<2N_{j+1}L^d_{\varepsilon}-C_b N_{j+1}^{1-\frac{\delta}{2}}\right\}\right)<C_b e^{-N_{j+1}^{\delta/2}},
\end{align}
for some constant $C_b>0$ depending on $b$ only.
\end{lemma}
Finally, taking $n$ to be sufficiently large and $\kappa_0$-admissible and $\kappa=-n\omega$ yields the claimed result for 
$M_{n,E}^2(\theta+i\varepsilon)=M_{n,E}(\theta+n\omega+\kappa+i\varepsilon)M_{n,E}(\theta+i\varepsilon)$.
\end{proof}

Next, we prove Lemma \ref{lem:deno}
\begin{proof}[Proof of Lemma \ref{lem:deno}]
    The proof is the essentially same as that of \cite[Lemma 5.13]{HS3}. We briefly sketch it below.
    Let 
    \begin{align}\label{eq:M=WDV}
    M_{n,E}(\theta+i\varepsilon)=W_n(\theta+i\varepsilon)D_n(\theta+i\varepsilon)V_n^*(\theta+i\varepsilon)
    \end{align}
    be the singular value decomposition of $M_{n,E}(\theta+i\varepsilon)$ as in \eqref{eq:Mn_singular_decom}.
    Let $\theta\in (\mathcal{B}_{n,E,\varepsilon}\cup \tilde{\mathcal{B}}_{n,E,\varepsilon})^c$.
    We first show
    \begin{lemma}\label{lem:det_v1d_w1d}
    Let $\kappa_0$ be as in Lemma \ref{lem:Msq}. For $n$ large enough, one has for $\theta\in (\tilde{\mathcal{B}}_{n,E,\varepsilon})^c$ that 
    \begin{align}
        &|\langle v_{1}^{(n)}(\theta+i\varepsilon)\textstyle{\bigwedge}\cdots \textstyle{\bigwedge} v_{d}^{(n)}(\theta+i\varepsilon), w_{1}^{(n)}(\theta+i\varepsilon)\textstyle{\bigwedge}\cdots \textstyle{\bigwedge} w_{d}^{(n)}(\theta+i\varepsilon)\rangle|\\
        =&|\det (\langle v_{j}^{(n)}(\theta+i\varepsilon), w_{k}^{(n)}(\theta+i\varepsilon)\rangle)_{1\leq j,k\leq d}|\geq e^{-6n^{1-\delta_3}}.
    \end{align}
    \end{lemma}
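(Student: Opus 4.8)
The statement to prove, Lemma~\ref{lem:det_v1d_w1d}, is the heart of the denominator bound: it says that when $\theta\notin\tilde{\mathcal{B}}_{n,E,\varepsilon}$, the top-$d$ singular subspaces on the domain and range side of $M_{n,E}(\theta+i\varepsilon)$ are not nearly orthogonal, with a quantitative lower bound $e^{-6n^{1-\delta_3}}$ on the Gram-type determinant. The key observation is that since $M_{n,E}$ is (complex) symplectic on the real torus but \emph{not} for $\varepsilon\neq0$, we cannot argue by the usual symplectic coupling; instead we use the squared monodromy matrix $M_{n,E}^2(\theta+i\varepsilon)=M_{n,E}(\theta+n\omega+\kappa+i\varepsilon)M_{n,E}(\theta+i\varepsilon)$ with $\kappa=-n\omega$, whose top exterior power is controlled on $\tilde{\mathcal{B}}_{n,E,\varepsilon}^c$ by Lemma~\ref{lem:Msq}.

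First I would write $\|\textstyle{\bigwedge}^d(M_{n,E}^2(\theta+i\varepsilon))\|$ in terms of the singular value decompositions of the two factors. For the inner factor $M_{n,E}(\theta+i\varepsilon)=W_nD_nV_n^*$, so $\textstyle{\bigwedge}^d M_{n,E}(\theta+i\varepsilon)$ sends the unit $d$-vector $v_1^{(n)}\wedge\cdots\wedge v_d^{(n)}$ to $\big(\prod_{j=1}^d\sigma_j(M_{n,E}(\theta+i\varepsilon))\big)\,w_1^{(n)}\wedge\cdots\wedge w_d^{(n)}$, up to lower terms. The crucial point is the admissibility: since $\kappa=-n\omega$ and $n$ is $\kappa_0$-admissible, the outer factor is $M_{n,E}$ evaluated at a phase within $\kappa_0$ of $\theta+i\varepsilon$, so by continuity (and the large-deviation control of Lemma~\ref{lem:LDTsig} on the relevant exterior powers, plus Lemma~\ref{lem:Lip_eps}) its top-$d$ singular directions $v_j^{(n)}$ on the domain side are close to the $w_j^{(n)}$ of the inner factor — up to the same exponentially small errors. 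Composing, the norm $\|\textstyle{\bigwedge}^d(M_{n,E}^2)\|$ is essentially $\big(\prod_{j\le d}\sigma_j(\text{outer})\big)\cdot\big|\det(\langle v_j^{(n)},w_k^{(n)}\rangle)_{1\le j,k\le d}\big|\cdot\big(\prod_{j\le d}\sigma_j(\text{inner})\big)$, where the middle factor is the overlap determinant we want to bound below.

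Then the lower bound follows by comparison: on $\tilde{\mathcal{B}}_{n,E,\varepsilon}^c$ we know $\frac1{2n}\log\|\textstyle{\bigwedge}^d(M_{n,E}^2(\theta+i\varepsilon))\|\ge L^d_\varepsilon-n^{-\delta_3}$, i.e.\ $\|\textstyle{\bigwedge}^d(M_{n,E}^2)\|\ge e^{2nL^d_\varepsilon-2n^{1-\delta_3}}$; meanwhile $\prod_{j\le d}\sigma_j$ of each single factor is at most $e^{nL^d_\varepsilon+n^{1-\delta}}$ by Lemma~\ref{lem:upperbd} (with $\delta_3<\delta$), uniformly in the admissible phase shift, using Lemma~\ref{lem:Lip_eps} to absorb the $i\varepsilon$-dependence. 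Dividing, the overlap determinant is at least $e^{2nL^d_\varepsilon-2n^{1-\delta_3}}/e^{2nL^d_\varepsilon+2n^{1-\delta}}\ge e^{-6n^{1-\delta_3}}$ for $n$ large, after also controlling the contribution of the lower singular values to the product $M_{n,E}^2$ via the gap $L^d_\varepsilon-\max_{k\ne d}L^k_\varepsilon>0$ from \eqref{eq:L_d>0_varepsilon} (cf.\ the argument in the proof of Lemma~\ref{lem:fn_upper}), so that only the top-$d$ block genuinely contributes to $\textstyle{\bigwedge}^d$.

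\textbf{Main obstacle.} The delicate step is making precise that the domain-side top-$d$ frame of the \emph{outer} factor $M_{n,E}(\theta+n\omega-n\omega+i\varepsilon)=M_{n,E}(\theta+i\varepsilon)$ — here the outer and inner factors are literally the same matrix, which simplifies things, so $v_j^{(n)}$ of the outer equals $v_j^{(n)}$ of the inner — and hence that the overlap matrix $(\langle v_j^{(n)},w_k^{(n)}\rangle)$ is exactly the object appearing when one contracts $\textstyle{\bigwedge}^d M_{n,E}\circ\textstyle{\bigwedge}^d M_{n,E}$ and keeps only the dominant singular terms; the error terms from the non-dominant singular values must be shown to be exponentially smaller than $e^{2nL^d_\varepsilon}\cdot e^{-6n^{1-\delta_3}}$, which again uses the spectral gap. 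This is exactly the bookkeeping carried out in \cite[Lemma 5.13]{HS3}, and the proof here is a direct adaptation; I would cite that argument for the routine estimates and emphasize only the points where the non-symplectic nature of $M_{n,E}(\theta+i\varepsilon)$ forces the use of the squared cocycle and the admissible sequence.
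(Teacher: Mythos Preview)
Your proposal is correct and follows essentially the same route as the paper. The paper argues by contradiction --- assuming the overlap determinant is at most $e^{-6n^{1-\delta_3}}$, it expands $w_1^{(n)}\wedge\cdots\wedge w_d^{(n)}$ in the $v$-basis and bounds $\|(\textstyle{\bigwedge}^d M_{n,E}^2)(v_{m_1}^{(n)}\wedge\cdots\wedge v_{m_d}^{(n)})\|$ for \emph{every} multi-index, showing each is at most $e^{2n(L^d_\varepsilon-n^{-\delta_3})}$, which contradicts $\theta\in(\tilde{\mathcal{B}}_{n,E,\varepsilon})^c$ --- while you phrase the same computation as a direct comparison; the two are equivalent. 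One clarification: your middle paragraph about ``continuity'' of singular directions across a phase shift of size $\kappa_0$ is a red herring that you correctly abandon in the last paragraph; since $M_{n,E}^2(\theta+i\varepsilon)$ is literally the square of one matrix, the domain frame of the outer factor \emph{equals} the domain frame of the inner factor, and no continuity argument is needed --- the overlap $\det(\langle v_j^{(n)},w_k^{(n)}\rangle)$ appears exactly as the coefficient $a_{1,\ldots,d}^{1,\ldots,d}$ in the expansion, with all other terms controlled by the gap $L^d_\varepsilon-\max_{k\ne d}L^k_\varepsilon\ge \tfrac14 L_d$ via the uniform bounds of Lemma~\ref{lem:upperbd} alone (no $\mathcal{B}_{n,E,\varepsilon}$ needed).
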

    \begin{proof}
Towards a contradiction, suppose that for some $\theta\in (\tilde{\mathcal{B}}_{n,E,\varepsilon})^c$ one has
\begin{align}\label{eq:assume_a1-d_small}
|\det (\langle v_{j}^{(n)}(\theta+i\varepsilon), w_{k}^{(n)}(\theta+i\varepsilon)\rangle)_{1\leq j,k\leq d}|\leq e^{-6n^{1-\delta_3}}.
\end{align}
For any $1\leq m_1<...<m_d\leq 2d$, we expand
\begin{align}\label{eq:wedge_w=wedge_v}
    w_{m_1}^{(n)}(\theta+i\varepsilon)\textstyle{\bigwedge}\cdots\textstyle{\bigwedge} w_{m_d}^{(n)}(\theta+i\varepsilon)=\sum_{1\leq j_1<...<j_d\leq 2d} a^{j_1,...,j_d}_{m_1,...,m_d}(\theta+i\varepsilon) \cdot v_{j_1}^{(n)}(\theta+i\varepsilon)\textstyle{\bigwedge}\cdots\textstyle{\bigwedge} v_{j_d}^{(n)}(\theta+i\varepsilon),
\end{align}
in which
\begin{align}
    a^{j_1,...,j_d}_{m_1,...,m_d}(\theta+i\varepsilon)=\langle w_{m_1}^{(n)}(\theta+i\varepsilon)\textstyle{\bigwedge}\cdots\textstyle{\bigwedge}  w_{m_d}^{(n)}(\theta+i\varepsilon), v_{j_1}^{(n)}(\theta+i\varepsilon)\textstyle{\bigwedge}\cdots\textstyle{\bigwedge}  v_{j_d}^{(n)}(\theta+i\varepsilon)\rangle.
\end{align}
For arbitrary $1\leq m_1<...<m_d\leq 2d$ and $1\leq j_1<...<j_d\leq 2d$, there is the trivial bound
\begin{align}\label{eq:aj1-jd_trivial}
    |a^{j_1,...,j_d}_{m_1,...,m_d}(\theta+i\varepsilon)|\leq 1.
\end{align}
By assumption \eqref{eq:assume_a1-d_small}, one has
\begin{align}\label{eq:a1-d}
    |a_{1,...,d}^{1,...,d}(\theta+i\varepsilon)|\leq e^{-6n^{1-\delta_3}}.
\end{align}

By Lemmas \ref{lem:upperbd}, \ref{lem:Ln-L}, and estimates \eqref{eq:prod_sigma_1}, \eqref{eq:prod_sigma_2}, \eqref{eq:prod_sigma_3}, \eqref{eq:aj1-jd_trivial}, \eqref{eq:a1-d}, one has  for $n$ large enough,
\begin{align}\label{eq:M2_v1-d}
    \qquad &\|(\textstyle{\bigwedge}^d M_{n,E}^2(\theta+i\varepsilon)) (v_{1}^{(n)}(\theta+i\varepsilon)\textstyle{\bigwedge} \cdots \textstyle{\bigwedge} v_{d}^{(n)}(\theta+i\varepsilon))\|\notag\\
=&\prod_{j=1}^d \sigma_j(M_{n,E}(\theta+i\varepsilon))\cdot  \|(\textstyle{\bigwedge}^d M_{n,E}(\theta+i\varepsilon) (w_{1}^{(n)}(\theta+i\varepsilon))\textstyle{\bigwedge}\cdots \textstyle{\bigwedge} w_{d}^{(n)}(\theta+i\varepsilon))\| \notag\\
\leq &e^{n(L^d_{\varepsilon}+2n^{-\delta})} \sum_{1\leq j_1<...<j_d\leq 2d} |a^{j_1,...,j_d}_{1,...,d}(\theta+i\varepsilon)|\cdot \prod_{\ell=1}^d \sigma_{j_{\ell}}(M_{n,E}(\theta+i\varepsilon))\notag\\
=&e^{n(L^d_{\varepsilon}+2n^{-\delta})} \cdot \left(|a_{1,...,d}^{1,...,d}(\theta+i\varepsilon)|\cdot \prod_{j=1}^d\sigma_j(M_{n,E}(\theta+i\varepsilon))\right.\\
&\qquad\qquad\qquad\left.+\sum_{(j_1,...,j_d)\neq (1,...,d)} |a^{j_1,...,j_d}_{1,...,d}(\theta+i\varepsilon)|\cdot \prod_{\ell=1}^d\sigma_{j_{\ell}}(M_{n,E}(\theta+i\varepsilon))\right)\notag\\
\leq &e^{n(L^d_{\varepsilon}+2n^{-\delta})} \cdot \left(e^{n(L^d_{\varepsilon}-4n^{-\delta_3})}+C_d \cdot e^{n(L^d_{\varepsilon}-\frac{1}{4}L_{d,\varepsilon})}\right)\notag\\
\leq &e^{n(L^d_{\varepsilon}-n^{-\delta_3})}.
\end{align}
For any $(m_1,...,m_d)\neq (1,...,d)$, we have similarly,
\begin{align}\label{eq:M2_vm1-md}
        &\|(\textstyle{\bigwedge}^d M_{n,E}^2(\theta+i\varepsilon)) (v_{m_1}^{(n)}(\theta+i\varepsilon)\textstyle{\bigwedge} \cdots \textstyle{\bigwedge} v_{m_d}^{(n)}(\theta+i\varepsilon))\|\notag\\
=&\prod_{\ell=1}^d \sigma_{m_{\ell}}(M_{n,E}(\theta+i\varepsilon)) \cdot \|(\textstyle{\bigwedge}^d M_{n,E}(\theta+i\varepsilon) (w_{m_1}^{(n)}(\theta+i\varepsilon))\textstyle{\bigwedge}\cdots \textstyle{\bigwedge} w_{m_d}^{(n)}(\theta+i\varepsilon))\| \notag\\
\leq &e^{n(L^d_{\varepsilon}-\frac{1}{4}L_{d,\varepsilon})} \sum_{1\leq j_1<...<j_d\leq 2d} |a^{j_1,...,j_d}_{m_1,...,m_d}(\theta+i\varepsilon)|\cdot \prod_{\ell=1}^d \sigma_{j_{\ell}}(M_{n,E}(\theta+i\varepsilon))\notag\\
=&e^{n(L^d_{\varepsilon}-\frac{1}{4}L_{d,\varepsilon})} \cdot \left(\prod_{j=1}^d\sigma_j(M_{n,E}(\theta+i\varepsilon))+\sum_{(j_1,...,j_d)\neq (1,...,d)} \prod_{\ell=1}^d\sigma_{j_{\ell}}(M_{n,E}(\theta+i\varepsilon))\right)\notag\\
\leq &e^{n(L^d_{\varepsilon}-\frac{1}{4}L_{d,\varepsilon})} \cdot \left(e^{n(L^d_{\varepsilon}+2n^{-\delta})}+C_d \cdot e^{n(L^d_{\varepsilon}-\frac{1}{4}L_{d,\varepsilon})}\right)\notag\\
\leq &e^{n(L^d_{\varepsilon}-\frac{1}{8}L_{d,\varepsilon})}.
\end{align}
Combining 
\eqref{eq:M2_v1-d} with \eqref{eq:M2_vm1-md}, we arrive at a contradiction with $\theta\in (\tilde{\mathcal{B}}_{n,E,\varepsilon})^c$.
\end{proof}
Lemma \ref{lem:det_v1d_w1d} implies 
\begin{corollary}\label{cor:det_vd+1_2d_w}
    Let $\kappa_0$ be as in Lemma \ref{lem:Msq}. For $n$ large enough, we have for $\theta\in (\mathcal{B}_{n,E,\varepsilon}\cup \tilde{\mathcal{B}}_{n,E,\varepsilon})^c$ that 
    \begin{align}
    &|\langle v_{d+1}^{(n)}(\theta+i\varepsilon)\textstyle{\bigwedge}\cdots \textstyle{\bigwedge} v_{2d}^{(n)}(\theta+i\varepsilon), w_{d+1}^{(n)}(\theta+i\varepsilon)\textstyle{\bigwedge}\cdots \textstyle{\bigwedge} w_{2d}^{(n)}(\theta+i\varepsilon)\rangle|\\
    =&|\det (\langle v_{d+j}^{(n)}(\theta+i\varepsilon), w_{d+k}^{(n)}(\theta+i\varepsilon)\rangle)_{1\leq j,k\leq d}|\geq e^{-6n^{1-\delta_3}}.
\end{align}
\end{corollary}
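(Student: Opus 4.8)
The plan is to recognize that the two Gram-type determinants appearing in Lemma~\ref{lem:det_v1d_w1d} and in Corollary~\ref{cor:det_vd+1_2d_w} are precisely the two complementary diagonal $d\times d$ blocks of one and the same $2d\times 2d$ unitary matrix, and then to invoke the classical fact that complementary principal minors of a unitary matrix agree in absolute value. First I would record the exterior-algebra identity for the induced inner product on $\textstyle{\bigwedge^d}\C^{2d}$, namely $\langle v_{i_1}^{(n)}\textstyle{\bigwedge}\cdots\textstyle{\bigwedge} v_{i_d}^{(n)},\, w_{j_1}^{(n)}\textstyle{\bigwedge}\cdots\textstyle{\bigwedge} w_{j_d}^{(n)}\rangle=\det\big(\langle v_{i_a}^{(n)},w_{j_b}^{(n)}\rangle\big)_{1\le a,b\le d}$. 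Applied with $(i_1,\dots,i_d)=(j_1,\dots,j_d)=(1,\dots,d)$ this is the content of Lemma~\ref{lem:det_v1d_w1d}, and applied with $(i_1,\dots,i_d)=(j_1,\dots,j_d)=(d+1,\dots,2d)$ it rewrites the quantity in the corollary as $\det\big(\langle v_{d+j}^{(n)},w_{d+k}^{(n)}\rangle\big)_{1\le j,k\le d}$.

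The whole statement then reduces to the following purely linear-algebraic claim: the matrix $U:=V_n^*(\theta+i\varepsilon)W_n(\theta+i\varepsilon)$, whose $(j,k)$ entry is $\langle v_j^{(n)},w_k^{(n)}\rangle$, is unitary (because $V_n$ and $W_n$ are the unitary factors of the singular value decomposition \eqref{eq:M=WDV}), and for any $2d\times 2d$ unitary $U$ written in $d+d$ block form $U=\left(\begin{smallmatrix}U_{11}&U_{12}\\U_{21}&U_{22}\end{smallmatrix}\right)$ one has $|\det U_{11}|=|\det U_{22}|$. I would prove this by Jacobi's identity for complementary minors of the inverse: for invertible $A$ and $I=\{1,\dots,d\}$, $\det\big((A^{-1})_{I,I}\big)=\pm(\det A)^{-1}\det\big(A_{I^c,I^c}\big)$. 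Taking $A=U$, so $A^{-1}=U^*$ and $|\det U|=1$, and passing to absolute values (the sign is irrelevant) gives $|\det U_{11}|=|\det(U^*)_{I,I}|=|\det U_{22}|$. Equivalently, one may cite the CS decomposition of $U$, which exhibits $U_{11}$ and $U_{22}$ with identical singular values.

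Combining these, $|\det U_{22}|=|\det U_{11}|\ge e^{-6n^{1-\delta_3}}$ by Lemma~\ref{lem:det_v1d_w1d}; note that the latter is valid on all of $(\tilde{\mathcal{B}}_{n,E,\varepsilon})^c$, which contains the set $(\mathcal{B}_{n,E,\varepsilon}\cup\tilde{\mathcal{B}}_{n,E,\varepsilon})^c$ appearing in the corollary, so the added restriction on $\theta$ causes no difficulty. I do not expect a genuine obstacle here: the argument is entirely an application of the singular value decomposition together with a standard minor identity. The only point requiring care is the bookkeeping of which orthonormal basis plays which role and the sign in Jacobi's identity, but since only absolute values enter, the sign plays no part.
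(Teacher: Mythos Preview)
Your argument is correct. The paper does not prove the corollary in place but simply cites \cite{HS3}*{Corollary 5.12}; your reduction via the fact that $V_n^*W_n$ is unitary together with Jacobi's complementary-minor identity (or, equivalently, the CS decomposition), yielding $|\det U_{11}|=|\det U_{22}|$, is the clean linear-algebraic route back to Lemma~\ref{lem:det_v1d_w1d}. Your observation that this argument does not actually use the additional hypothesis $\theta\notin\mathcal{B}_{n,E,\varepsilon}$ is also correct; the corollary holds already on $(\tilde{\mathcal{B}}_{n,E,\varepsilon})^c$, and the paper's smaller set is just what is used downstream.
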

Since the proof is exactly the same as that of \cite[Corollary 5.12]{HS3}, we shall not repeat it here.

Now we are in position to complete the proof of Lemma \ref{lem:deno}.

The singular value decomposition of $M_{n,E}(\theta+i\varepsilon)$ yields
\begin{align}\label{eq:M=DWV}
    |\det(M_{n,E}(\theta+i\varepsilon)-I_{2d})|=|\det (D_n(\theta+i\varepsilon)-W_{n}^*(\theta+i\varepsilon)V_{n}(\theta+i\varepsilon))|.
\end{align}
Note that
\begin{align}\label{eq:WV=Q}
    W_{n}^*(\theta+i\varepsilon)V_{n}(\theta+i\varepsilon)
    =&\left(\begin{matrix} (w_{1}^{(n)})^*(\theta+i\varepsilon)\\ \vdots\\ (w_{2d}^{(n)})^*(\theta+i\varepsilon)\end{matrix}\right)\cdot (v_{1}^{(n)}(\theta+i\varepsilon),...,v_{2d}^{(n)}(\theta+i\varepsilon))\notag\\
    =&
    \left(\langle w_{j}^{(n)}(\theta+i\varepsilon), v_{k}^{(n)}(\theta+i\varepsilon)\rangle\right)_{1\leq j,k\leq 2d}
    =:Q_n(\theta+i\varepsilon).
\end{align}
Hence, 
\begin{align}\label{eq:D-UU}
\qquad &|\det(D_n(\theta+i\varepsilon)-W_{n}^*(\theta+i\varepsilon)V_{n}(\theta+i\varepsilon))|\\
\leq &|\det(-Q_n(\theta+i\varepsilon))|\\
&+\sum_{k=1}^{2d}\,\, \sum_{1\leq j_1<...<j_k\leq 2d}\,\, \prod_{\ell=1}^k \sigma_{j_{\ell}}(M_{n,E}(\theta+i\varepsilon))\cdot \left|\det (Q_n(\theta+i\varepsilon))_{\{1,...,2d\}\setminus \{j_1,...,j_k\}, \{1,...,2d\}\setminus \{j_1,...,j_k\}}\right|,
\end{align}
in which, for a matrix $M$, $M_{B_1,B_2}$ refers to the submatrix with row numbers in the set $B_1$ and column numbers taken from the set $B_2$.
By Hadamard's inequality,
\begin{align}\label{eq:detQ_0}
|\det Q_n(\theta+i\varepsilon)|\leq \prod_{\ell=1}^{2d} \|(Q_n(\theta+i\varepsilon))_{\ell}\|\leq 1,
\end{align}
and
\begin{align}\label{eq:detQ_1}
    \left|\det (Q_n(\theta+i\varepsilon))_{\{1,...,2d\}\setminus \{j_1,...,j_k\}, \{1,...,2d\}\setminus \{j_1,...,j_k\}}\right|\leq \prod_{\ell\notin \{j_1,..,j_k\}} \| (Q_n(\theta+i\varepsilon))_{\ell}\|\leq 1
\end{align}
where $(Q_n(\theta+i\varepsilon))_{\ell}$ refers to the $\ell$-th column of $Q_n(\theta+i\varepsilon)$. Moreover,   we used the following bound
\begin{align}\label{eq:Q<1}
    \| (Q_n(\theta+i\varepsilon))_{\ell}\|\leq \|v_{\ell}^{(n)}(\theta+i\varepsilon)\|=1.
\end{align}
Corollary \ref{cor:det_vd+1_2d_w} implies that 
\begin{align}\label{eq:detQ_2}
    &\left|\det (Q_n(\theta+i\varepsilon))_{\{d+1,...,2d\}, \{d+1,...,2d\}}\right|\\
=& \left|\langle w_{d+1}^{(n)}(\theta+i\varepsilon)\textstyle{\bigwedge}\cdots \textstyle{\bigwedge} w_{2d}^{(n)}(\theta+i\varepsilon), v_{d+1}^{(n)}(\theta+i\varepsilon)\textstyle{\bigwedge} \cdots \textstyle{\bigwedge} v_{2d}^{(n)}(\theta+i\varepsilon)\rangle\right|
\geq e^{-6n^{1-\delta_3}}.
\end{align}
Combining the estimates \eqref{eq:detQ_0}, \eqref{eq:detQ_1}, \eqref{eq:detQ_2}, \eqref{eq:D-UU} with \eqref{eq:prod_sigma_1}, \eqref{eq:prod_sigma_2} and \eqref{eq:prod_sigma_3}, we infer that  for $\theta\in (\tilde{\mathcal{B}}_{n,E,\varepsilon})^c\cap (\mathcal{B}_{n,E,\varepsilon})^c$ 
\begin{align}\label{eq:detM-I}
    |\det( M_{n,E}(\theta+i\varepsilon)-I_{2d})|
\geq &\prod_{j=1}^d \sigma_j(M_n(\theta+i\varepsilon)) \cdot e^{-6n^{1-\delta_3}}-C_d\, \sup_{k=0,\ldots,2d}\, \sup_{\substack{1\leq j_1<...<j_k\leq 2d\\ (j_1,...,j_k)\neq (1,...,d)}}\, \prod_{\ell=1}^k \sigma_{j_{\ell}}(M_{n,E}(\theta+i\varepsilon))
\\ 
\geq &e^{n(L^d_{\varepsilon}-7n^{-\delta_3})}-C_d\, e^{n(L^d_{\varepsilon}-\frac{1}{4}L_{d,\varepsilon})}\\
\geq &e^{n(L^d_{\varepsilon}-8n^{-\delta_3})}.
\end{align}
The claimed result follows from combining the above with the following large deviation estimate 
\[\mathrm{mes}\left\{\theta\in \T^b: \left|\frac{1}{n}\sum_{j=0}^{n-1}\log |\det (B(\theta+j\omega+i\varepsilon))|-\langle \log |\det B(\cdot+i\varepsilon)|\rangle\right|>n^{-\delta'}\right\}<e^{-n^{\delta'}}\]
which holds for some $\delta'>0$ due to $\omega\in \mathrm{DC}$.
Thus we have completed the proof of Lemma \ref{lem:deno}.
\end{proof}

Below we present an alternate, simpler proof of \eqref{eq:detM-I}.
\begin{proof}
Let $\theta\in(\calB_{n,E,\varepsilon}\cup\tilde\calB_{n,E,\varepsilon})^c$.
We estimate $|\det(M_{n,E}(\theta)-I_{2d})|$ as in \eqref{eq:M=DWV} using Schur's lemma. 
Let $D_n=D_n(\theta+i\varepsilon)$ be as in \eqref{eq:M=WDV} and $Q_n=Q_n(\theta+i\varepsilon)$ be as in \eqref{eq:WV=Q}, and write (omitting $\theta+i\varepsilon$ below for simplicity)
\begin{align}
    Q_n=
    \left(\begin{matrix} 
    Q_1 & Q_2\\
    Q_3 & Q_4
    \end{matrix}\right), \text{ and } 
    D_n=\left(\begin{matrix} D_1 & 0\\
    0 & D_4
    \end{matrix}\right).
\end{align}
where each block is of size $d\times d$.
By Schur's lemma,
\begin{align}\label{eq:Schur_0}
    |\det (D_n-Q_n)|
    =&
    |\det (D_1-Q_1)|\cdot |\det (D_4-Q_4-Q_3(D_1-Q_1)^{-1}Q_2)|\notag\\
    =&|\det (D_1-Q_1)|\cdot |\det (D_4-Q_4)|\cdot |\det (I_d-(D_4-Q_4)^{-1}Q_3(D_1-Q_1)^{-1}Q_2)|.
\end{align}
Since $\theta\in \calB_{n,E,\varepsilon}^c$, 
\begin{align}\label{eq:Schur_1}
    \prod_{j=1}^d \sigma_j(M_{n,E}(\theta+i\varepsilon))\geq e^{n(L^d_{\varepsilon}-n^{-\delta})}.
\end{align}
By Lemma \ref{lem:upperbd} and that $L^{d-1}_{(n),\varepsilon}\leq L^{d-1}_{\varepsilon}+\nu$ for arbitrary small $0<\nu\ll L_d$ and $n$ large, 
\begin{align}\label{eq:Schur_2}
    \prod_{j=1}^{d-1}\sigma_j(M_{n,E}(\theta+i\varepsilon))\leq e^{n(L^{d-1}_{\varepsilon}+\nu)}.
\end{align}
Combining \eqref{eq:Schur_1} with \eqref{eq:Schur_2}, we conclude that
\begin{align}\label{eq:Schur_22}
    \min_{j=1\ldots d} \sigma_j(M_{n,E}(\theta+i\varepsilon))\geq \sigma_d (M_{n,E}(\theta+i\varepsilon))\geq e^{n(L_{d,\varepsilon}-2\nu)}.
\end{align}
Similarly, by Lemma \ref{lem:upperbd} applied to $L^{d+1}$,  
\begin{align}
    \prod_{j=1}^{d+1}\sigma_j(M_{n,E}(\theta+i\varepsilon))\leq e^{n(L^{d+1}_{\varepsilon}+\nu)},
\end{align}
which implies, analogously to \eqref{eq:Schur_22}, that
\begin{align}\label{eq:Schur_23}
\max_{j=d+1\ldots 2d}\sigma_j(M_{n,E}(\theta+i\varepsilon))\leq \sigma_{d+1}(M_{n,E}(\theta+i\varepsilon))\leq e^{n(L_{d+1,\varepsilon}+2\nu)}.
\end{align}
Recall that by \eqref{eq:L_d>0_varepsilon}, 
\begin{align}
    L_{d+1,\varepsilon}<-\frac{1}{2}L_d<0<\frac{1}{2}L_d<L_{d,\varepsilon}.
\end{align}
Combining the above with \eqref{eq:Schur_1}, \eqref{eq:Schur_22} and  $\|Q_1\|\leq 1$ (see \eqref{eq:Q<1}), yields
\begin{align}\label{eq:Schur_3}
    |\det (D_1-Q_1)|\geq e^{n(L^d_{\varepsilon}-2n^{-\delta})}, \text{ and } \|(D_1-Q_1)^{-1}\|\leq e^{-n(L_{d,\varepsilon}-3\nu)}.
\end{align}
Recall that by Corollary \ref{cor:det_vd+1_2d_w}, $|\det Q_4|\geq e^{-6n^{1-\delta_3}}$, and hence $\|Q_4^{-1}\|\leq C_d e^{6n^{1-\delta_3}}$. Since by \eqref{eq:Schur_23} that $\|D_4\|\leq e^{n(L_{d+1,\varepsilon}+2\nu)}$, we have
\begin{align}\label{eq:Schur_4}
    |\det (D_4-Q_4)|\geq e^{-7n^{1-\delta_3}}, \text{ and } \|(D_4-Q_4)^{-1}\|\leq C_d e^{7n^{1-\delta_3}}.
\end{align}
Combining \eqref{eq:Schur_3}, \eqref{eq:Schur_4} with  $\|Q_2\|+\|Q_3\|\leq 1$, we infer by \eqref{eq:Schur_0},
\begin{align}
    |\det(D_n-Q_n)|\geq e^{n(L^d_{\varepsilon}-10 n^{-\delta_3})},
\end{align}
as claimed.
\end{proof}

Next, we show Lemma \ref{lem:fn_ave} as a corollary of 
Lemmas \ref{lem:deno}, \ref{lem:fn_upper} combined with the Cartan estimate.
\begin{proof}[Proof of Lemma \ref{lem:fn_ave}]
    By Lemma \ref{lem:Lip_eps}, 
    \[\sup_{|\varepsilon'|\leq n^{-1}}L^d_{\varepsilon+\varepsilon'}\leq L^d_{\varepsilon}+C_b n^{-1},\]
    and similarly 
    \[\sup_{|\varepsilon'|\leq n^{-1}}\langle \log|\det B(\cdot+i\varepsilon+i\varepsilon')|\rangle \leq \langle \log|\det B(\cdot+i\varepsilon)|\rangle+C_b n^{-1}.\]
    Combining the above with Lemma \ref{lem:fn_upper}, one has 
    \begin{align}
        \sup_{|\varepsilon'|\leq n^{-1}}\frac{1}{n}\log |f_{E,n}(\theta+i\varepsilon+i\varepsilon')|\leq L^d_{\varepsilon}+\langle \log|\det B(\cdot+i\varepsilon)|\rangle+C_b n^{-\delta}.
    \end{align}
Covering the torus $\T^b$ by polydisks of radii $\simeq n^{-1}$, and apply Cartan's estimate (Lemma \ref{lem:Cartan}) to each  polydisk with upper bound $K=L^d_{\varepsilon}+\langle \log|B(\cdot+i\varepsilon)|\rangle+C_b n^{-\delta}$, lower bound $m=L^d_{\varepsilon}+\langle \log |B(\cdot+i\varepsilon)|\rangle -n^{-\delta_1}$ (note the lower bound is provided by Lemma~\ref{lem:deno}), and $H_k=2^k n^{\delta_4}$ with $\delta_4=\min(\delta,\delta_1)/2$, yields
\begin{align}
    \mathrm{mes}\left(\left\{\theta\in \T^b: \frac{1}{n}\log |f_{E,n}(\theta+i\varepsilon)|<L^d_{\varepsilon}+\langle \log |\det B(\cdot+i\varepsilon)|\rangle-C_b 2^k n^{-\delta_4}\right\}\right)<C_b e^{-2^k n^{\delta_4}},
\end{align}
for any positive integer $k$. 
This clearly implies the claimed result by summing in $k$ and with a slightly smaller $\delta_4$.
\end{proof}

\section{Localization}

\subsection{Proof of Theorem~\ref{thm:main}: nonarithmetic localization}\label{sec:main}

The following result does not require $n$ to be  $\kappa_0$-admissible. 
\begin{corollary}
    \label{cor:goodGreen}
    Let $\omega\in DC_{a,A}$ and assume $L_d(E,M_E)\geq \gamma>0$. For $n\ge n_0(a,A,b, \gamma, B,V)$, 
    there exists a set $\calS_n(E,\omega)\subset\tor^b$ with the property that $\mes(\calS_n(E,\omega))<\exp(-n^\delta)$ with $\delta=\delta(a,A)>0$ and 
    \begin{equation}
        \label{eq:GregE}
        |G_{E,n}(\theta;x,y)|\le \exp(-|x-y|\cdot L_d+n^{1-\delta})
    \end{equation}
    for all $x,y\in [0,dn-1]$  and all $\theta\in\tor^b\setminus\calS_n(E,\omega)$. 
\end{corollary}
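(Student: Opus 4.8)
The plan is to combine the numerator bound of Lemma~\ref{lem:numerator} with the denominator lower bound of Lemma~\ref{lem:deno}, using the Cramer's rule identity \eqref{eq:mufn}, and then take a union over the finitely many "bad" energy configurations to produce the exceptional set $\calS_n(E,\omega)$. First I would observe that, by Remark~\ref{rem:admissible}, for the given $n$ there is a $\kappa_0$-admissible $\tilde n$ with $|n-\tilde n|\le C_*$; since the statement allows the error term $n^{1-\delta}$ and a slightly smaller $\delta$, I would simply prove the bound for admissible $n$ and absorb the $O(1)$-shift at the end (alternatively, run the whole argument directly on admissible scales and transfer by the Poisson formula \eqref{eq:Poisson_exp}). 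So assume $n$ is $\kappa_0$-admissible.

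The core estimate is pointwise off a small set. Fix $\varepsilon=0$ first. By \eqref{eq:mufn}, $|G_{E,n}(\theta;x,y)| = |\mu_{n,x,y}(\theta)|/|f_{E,n}(\theta)|$. For $\theta\notin\mathcal{B}_{f,E,n,0}$ (measure $<e^{-n^{\delta_1}}$), Lemma~\ref{lem:deno} gives
\[
|f_{E,n}(\theta)|\ge \exp\big(n\langle\log|\det B|\rangle + nL^d - n^{1-\delta_1}\big).
\]
For the numerator, Lemma~\ref{lem:numerator} (applicable in the stated ranges of $x,y$, with $\ell=\lfloor y/d\rfloor$) gives, for any fixed small $\varepsilon'>0$,
\[
|\mu_{n,x,y}(\theta)| \le C_{d,B}\,e^{n(\langle\log|\det B|\rangle+\varepsilon')}\big(e^{\ell L^{d-1}+(n-\ell)L^d}+e^{\ell L^d+(n-\ell)L^{d-1}}\big).
\]
Dividing, the $n\langle\log|\det B|\rangle$ and $nL^d$ terms cancel (up to $n^{1-\delta_1}+n\varepsilon'$), leaving a bound $\le C\exp\big(-\min(\ell,n-\ell)(L^d-L^{d-1})+n\varepsilon'+n^{1-\delta_1}\big)$. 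Since $L^d-L^{d-1}=L_d\ge\gamma>0$ and $\min(\ell,n-\ell)d$ is comparable to $\min(|x-y|, dn-|x-y|)$ for the off-diagonal/near-boundary choices of $x,y$ built into Lemma~\ref{lem:numerator}, and choosing $\varepsilon'=n^{-\delta}$ for a suitable $\delta$, this yields $|G_{E,n}(\theta;x,y)|\le \exp(-|x-y|L_d + n^{1-\delta})$ for such $x,y$. The remaining ranges of $x,y$ (both small, both large, or $|x-y|$ small so that $|x-y|L_d\lesssim n^{1-\delta}$ anyway) are handled either by the symmetric case $(n-1)d\le x\le nd-1$ of Lemma~\ref{lem:numerator} or trivially, since then the claimed right-hand side is $\ge 1$ and the crude bound $|\mu_{n,x,y}|\le |f_{E,n}|\cdot\|(P_n-E)^{-1}\|$ combined with Lemma~\ref{lem:deno} and Lemma~\ref{lem:fn_upper} suffices.

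Thus for fixed $E$ the set $\calS_n(E,\omega):=\mathcal{B}_{f,E,n,0}$ works, with $\mes<e^{-n^{\delta_1}}$. \textbf{The main obstacle} — and the reason the statement is phrased with a single set $\calS_n$ rather than a genuinely energy-uniform one — is that $\mathcal{B}_{f,E,n,0}$ depends on $E$; for the localization proof one needs to remove $\theta$ for a whole range of energies at once. The standard remedy (as in \cite{BG,GS2,HS3}) is to note that $f_{E,n}(\theta)$ is a polynomial in $E$ of degree $nd$, so $\log|f_{E,n}(\theta)|$ is subharmonic and one-dimensional-in-$E$ control of its zeros lets one replace the pointwise-in-$E$ bad set by one valid on an $e^{-n^{\delta}}$-net of energies in the relevant compact spectral window, then upgrade to all $E$ in that window by the Lipschitz dependence of $f_{E,n}$ on $E$ (paying only another $n^{1-\delta}$ in the exponent). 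I would carry this out by: (i) restricting to $E$ in a fixed compact set where $\sigma(H_\theta)$ lives; (ii) taking a $\exp(-n^{2})$-net $\{E_j\}$; (iii) setting $\calS_n(E,\omega):=\bigcup_j \mathcal{B}_{f,E_j,n,0}$, still of measure $<e^{-n^{\delta}}$ after adjusting $\delta$; (iv) for general $E$ picking the nearest $E_j$ and using $|f_{E,n}(\theta)-f_{E_j,n}(\theta)|\le C^{nd}|E-E_j|$ together with the lower bound at $E_j$ to retain $|f_{E,n}(\theta)|\ge\exp(n\langle\log|\det B|\rangle+nL^d-n^{1-\delta})$ (here one also uses that $L^d=L^d(\omega,M_E)$ is Lipschitz in $E$). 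The numerator bound is already uniform enough via the same Lipschitz-in-$E$ argument applied to Lemma~\ref{lem:numerator}. This gives \eqref{eq:GregE} for all $\theta\notin\calS_n(E,\omega)$ and all $x,y\in[0,dn-1]$, completing the proof.
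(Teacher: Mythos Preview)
Your core step --- combining Lemma~\ref{lem:numerator} and Lemma~\ref{lem:deno} via Cramer's rule for $x$ near the boundary and admissible $n$ --- is exactly what the paper does, and your computation of the quotient is correct.

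There are, however, two issues.

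\textbf{A genuine gap.} Your treatment of ``the remaining ranges of $x,y$'' is incomplete. Lemma~\ref{lem:numerator} only covers $x$ within $d$ of a boundary and $y$ in the interior (or, by symmetry of $G_{E,n}$, the roles swapped). You list as remaining cases ``both small, both large, or $|x-y|$ small,'' but you omit the case where \emph{both} $x$ and $y$ lie in the interior with $|x-y|$ comparable to $nd$ (e.g.\ $x\approx nd/4$, $y\approx 3nd/4$). For such pairs the right-hand side of~\eqref{eq:GregE} is exponentially small in $n$, so there is nothing trivial about it, and no case of Lemma~\ref{lem:numerator} applies. The paper handles this --- and simultaneously removes the admissibility restriction --- by paving $[0,dn-1]$ with $\kappa_0$-admissible subintervals of length $\simeq n^{1/2}$ and iterating the resolvent identity as in~\cite{BGS}*{Lemma~2.2}. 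On each small interval the edge-to-interior bound from Lemmas~\ref{lem:numerator}--\ref{lem:deno} applies directly (at the edge), and chaining these via the Poisson/resolvent expansion propagates the decay across the full window. Your $O(1)$-shift argument for admissibility alone does not supply this missing interior-to-interior decay.

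\textbf{An unnecessary detour.} The entire second half of your proposal, on eliminating the $E$-dependence of the bad set via a net in energy, is not part of this corollary: the statement allows $\calS_n(E,\omega)$ to depend on $E$, and simply taking $\calS_n(E,\omega)=\mathcal{B}_{f,E,n,0}$ (or its union over the paving shifts) already suffices. In the paper, energy elimination is carried out \emph{later} in the localization argument (Section~\ref{sec:main}), and by a different mechanism --- the quantitative Seidenberg--Tarski theorem and the semi-algebraic machinery of~\cite{B}*{Chapter~9} --- rather than by an energy net.
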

\begin{proof}
First consider the case  $0\le x \le d-1$ or $(n-1)d\leq x\leq nd-1$, and $n$ is $\kappa_0$-admissible. 
    Then the claim follows from Lemmas~\ref{lem:numerator} and~\ref{lem:deno}, together with the elementary fact
    \[
    \log \prod_{j=0}^{n-1}|\det B_j(\theta)| = n \langle \log |\det B|\rangle + O(n^{1-\delta})
    \]
    which can be proved via Fourier series. To remove the restriction that $x$ is located near the edges of $[0,dn-1]$, as well as the admissibility condition, we pave $[0,dn-1]$ with intervals of sizes about $n^{\frac12
    }$, say, which are $\kappa_0$-admissible. Iterating the resolvent identity with these smaller intervals as in~\cite{BGS}*{Lemma 2.2} then yields the desired statement, with possibly smaller $\delta$. 
\end{proof}

The localization theorem can now be proved via the well-known Bourgain, Goldstein strategy~\cite{BG} and~\cite{B}*{Chapter 10}. The main steps are as follows:

\begin{enumerate}
    \item[(a)] Eliminate the energy. This involves a quantitative Seidenberg-Tarski theorem in the theory of semi-algebraic sets. 
    \item[(b)] Eliminate $\omega\in\tor^b$ that may lead to double resonances, uniformly in the energy. 
    \item[(c)]  Starting from Shnol's theorem, and using absence of double resonances, show exponential localization. 
\end{enumerate}

The key step (b) is based on the idea that $\omega$ (the ``slow variable") and $N\omega$ (the ``fast variable") resemble independent random variables in~$\tor$, but only when measured on sets consisting of $m\ll N$ intervals. The latter condition is the  reason that semi-algebraic sets enter into the analysis in a crucial and quantitative fashion. 
We will rely on~\cite{B}*{Chapter 9} for the technical statements about semi-algebraic theory, in particular on Corollary~9.7 and Lemma~9.9 in loc.\ cit. However, we would like to point out that the  Yomdin parametrization of Theorem~9.4 with quantitative polynomial bounds, which is needed in these results, was erroneously attributed to Gromov's work during the mid 1980s. In fact, these essential polynomial bounds were finally established by Binyamini and Novikov~\cite{BN}, thus closing a gap in~\cite{B} that had remained open for almost 15 years. 

To begin the localization proof, we start from a solution $H_0\Phi=E\Phi$ with $|\Phi_0|+|\Phi_1|>0$ and $|\Phi_n|\le C_\Phi(1+|n|)$ for all~$n\in\Z$. The latter is the folklore Shnol's theorem, see e.g.~\cite{S2}*{Theorem~2.1} for a self-contained derivation.   Arguing as in~\cite{B}*{(10.7), (10.8)} we see that we may cover the set $\calS_n(E,\omega)$ by a semi-algebraic set $\wt\calS_n(E,\omega)$ of polynomial degree $O(n^{p})$ in all the variables $(\theta, E,\omega)\in \tor^b\times\R\times DC_{A,a}$, and  of similar measure. In this step, one uses that the Diophantine condition $\|k\cdot\omega\|\ge a|k|^{-A}$ is only needed in the range $0<|k|\le n^{C_A}$ for Lemma~\ref{lem:deno} to hold (see~\cite{BG}*{page 859}). 
We denote this finite volume Diophantine condition by~$DC_{a,A}(n)$.

It follows from~\cite{B}*{Corollary 9.7, (10.12), (10.13)} that for any large $N_0$ there exists $j_0\in [1, N_0^{C_1}]$ so that with $I_0=[-j_0,j_0]$
\begin{equation}
    \label{eq:spec stab}
    \dist(E,\sigma(H_{I_0}(0,\omega)) ) < e^{-\gamma' N_0}
\end{equation}
where $0<\gamma'<\gamma$ and $C_1$ is some large constant depending on~$\omega$ through the Diophantine parameter $(a,A)$. Condition~\eqref{eq:spec stab} is what we mean by a simple resonance at energy~$E$.  Step (a) above begins by defining 
\[
\calE_\omega := \bigcup_{j_0\in [1, N_0^{C_1}]} \sigma(H_{[-j_0,j_0]}(0,\omega))
\]
Accordingly, we set $\wt\calS_{N_0}(\omega):=\bigcup_{E'\in\calE_\omega} \wt\calS_{N_0}(E',\omega)$, which is semi-algebraic of some degree~$N_0^p$ over all variables, and  the measure of any fixed $(E,\omega)$ slice as a set of $\theta$ alone is at most $e^{-N_0^\delta}$ for some $\delta>0$. Next, one introduces 
\begin{equation}
    \label{eq:2res set}
    \Omega_{N_0} := \{ (\theta,\omega)\in \tor^b\times DC_{a,A}(N_0)\::\: E\in\calE_\omega, \; \theta \in\calS_{N_0}(E,\omega)\}. 
\end{equation}
It is a deep fact that $\Omega_{N_0}$ is again semi-algebraic of polynomial degree in~$N_0$. Indeed, this requires the quantitative Seidenberg-Tarski theorem~\cite{BPR}, \cite{B}*{Proposition~9.2} and concludes Step~(a) above. 

To carry out Steps (b) and~(c) means excluding those $\omega\in  DC_{a,A}$ which have the property that $(\omega, k\omega)\in \Omega_{N_0}$ for some $N_0^{C_2}\le |k|\le  N_0^{2C_2}$ where $C_2>C_1$. This is the method of ``steep lines" (or rather in this setting, steep planes) from~\cite{BG}. Finally, one takes a limsup of sets as $N_0\to\infty$ to arrive at a null set of bad~$\omega$. For good $\omega$, Step~(c), and therefore Anderson localization of $\Phi$ at energy~$E$, are established by paving of the set $[N_0^{C_2}, N_0^{2C_2}]\cup [-N_0^{2C_2}, -N_0^{C_2}] $ with good $N_0$-intervals and iterating the resolvent identity (one needs to use the resonant condition~\eqref{eq:spec stab} here). This is standard. 

In contrast,  the steep planes argument is not and by the preceding relies  on the polynomial complexity bounds of~\cite{BN} via~\cite{B}*{Theorem 9.4, Lemma 9.9}.  We refer the reader to pages~59, 60 of Bourgain's book for the details, which apply here verbatim.

\subsection{Arithmetic Anderson localization}\label{sec:AL_arithmetic}

This section restricts to the case of a one-dimensional torus $\T$ and constant $B(\theta)\equiv B$.
We follow the strategy of \cite{HS2} to show the large deviation set $\mathcal{B}_{f,E,n}:=\mathcal{B}_{f,E,n,0}$ as in \eqref{def:B_fEn} can be covered by roughly $2\kappa^d(\omega,M_E)\cdot n$ many intervals of small length, and then use a Pigeon-hole principle argument to eliminate double resonances.
The number $2\kappa^d(\omega,M_E)\cdot n$ arises from the number of zeros of $f_{E,n}(z)$, with the identification $z=e^{2\pi i\theta}$, in a thin annulus containing $\mathcal{C}_1$.

The zeros of $f_{E,n}(z)$ off the unit circle form pairs, as can be seen from the following fact. 

\begin{lemma}\label{lem:otf=tf}
    For any $n$ and $\theta\in \T$, $f_{E,n}(\theta)=\overline{f_{E,n}(\theta)}$. 
\end{lemma}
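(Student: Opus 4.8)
The statement is exactly the assertion that $f_{E,n}(\theta)$ is real-valued on $\T$, and since the energy $E\in\R$ throughout, the natural route is to show that the finite-volume matrix $P_n(\theta)$ of \eqref{def:Pn} is Hermitian whenever $\theta\in\T$. The determinant of a Hermitian matrix is automatically real, because $\det M=\det(M^*)=\overline{\det M}$ for any square matrix $M$, and this will give $f_{E,n}(\theta)=\det(P_n(\theta)-E)=\overline{f_{E,n}(\theta)}$ at once.

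Concretely, the plan is to first record the relevant standing hypotheses: $V$ is Hermitian as a $\Mat(d,\C)$-valued analytic function, and $B^{(*)}$ is by definition the analytic continuation of $\theta\mapsto (B(\theta))^*$, so that $B^{(*)}(\theta)=(B(\theta))^*$ (equivalently $(B^{(*)}(\theta))^*=B(\theta)$) for $\theta\in\T$. Then I would read off the block structure of $P_n(\theta)$ from \eqref{def:Pn} and verify Hermiticity block by block: the $n$ diagonal blocks are the Hermitian matrices $V(\theta+j\omega)$; the super-/sub-diagonal pair in block rows/columns $i,i+1$ consists of $B^{(*)}(\theta+(n-i)\omega)$ and $B(\theta+(n-i)\omega)$, which are conjugate transposes of one another on $\T$; and the two ``corner'' blocks produced by the periodic boundary conditions are $B(\theta)$ (block row $1$, column $n$) and $B^{(*)}(\theta)=(B(\theta))^*$ (block row $n$, column $1$), again a conjugate-transpose pair. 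Hence $P_n(\theta)^*=P_n(\theta)$, so $P_n(\theta)-E$ is Hermitian, and therefore $\overline{f_{E,n}(\theta)}=\overline{\det(P_n(\theta)-E)}=\det((P_n(\theta)-E)^*)=\det(P_n(\theta)-E)=f_{E,n}(\theta)$.

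This argument has no genuine obstacle; the only point requiring care is the bookkeeping of the block positions and the index shifts in $P_n(\theta)$, and in particular not overlooking that the two corner blocks coming from the \emph{periodic} (rather than Dirichlet) boundary conditions must themselves form a conjugate-transpose pair, so that the entire $nd\times nd$ cyclic matrix — and not merely a truncation of it — is Hermitian. This reality on $\T$ is precisely the input that, together with analyticity of $f_{E,n}$ in $z=e^{2\pi i\theta}$, will force the zeros of $f_{E,n}$ off $\mathcal{C}_1$ to come in pairs $z,1/\bar z$, as used in the discussion following the lemma.
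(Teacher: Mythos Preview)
Your argument is correct and is essentially the same as the paper's: both use $\overline{\det M}=\det(M^*)$ and verify that $(P_n(\theta)-E)^*=P_n(\theta)-E$ on $\T$ via $V^*=V$ and $B^{(*)}=(B)^*$. The paper just writes out the conjugate-transposed block matrix explicitly rather than stating ``$P_n(\theta)$ is Hermitian'' in words, but the content is identical.
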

\begin{proof}
\begin{align}
&\overline{f_{E,n}(\theta)}\\
&=\det
\left(\begin{matrix}
V(\theta+(n-1)\omega)-E & B^* & &  &B\\
B &V(\theta+(n-2)\omega)-E &\ddots \\
& \ddots &\ddots &\ddots \\
& &\ddots &\ddots &B^*\\
B^* & & &B &V(\theta)-E
\end{matrix}\right)^*\\
&=\det
\left(\begin{matrix}
V^*(\theta+(n-1)\omega)-E & B^* & &  &B\\
B &V^*(\theta+(n-2)\omega)-E &\ddots \\
& \ddots &\ddots &\ddots \\
& &\ddots &\ddots &B^*\\
B^* & & &B &V^*(\theta)-E
\end{matrix}\right)\\
&=f_{E,n}(\theta),
\end{align}
where we used $V^*=V$.
\end{proof}
Lemma \ref{lem:otf=tf} implies the two holomorphic function $f_{E,n}(z)$ and $\overline{f_{E,n}(1/\overline{z})}$ coincide on the unit circle $z\in \mathcal{C}_1$.
Hence $f_{E,n}(z)=\overline{f_{E,n}(1/\overline{z})}$ on $\mathcal{A}_{e^{2\pi\eta}}$.
This implies 
\begin{fact}\label{fact:fn_real_zero}
    If $w\in \mathcal{A}_{e^{2\pi\eta}}\setminus \mathcal{C}_1$ is a zero of $f_{E,n}(z)$, then $1/\overline{w}$ is also a zero.
\end{fact}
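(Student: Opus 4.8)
The plan is to realize $f_{E,n}$ as a genuine holomorphic function on the annulus $\mathcal{A}_{e^{2\pi\eta}}$ and then to upgrade the real‑valuedness on $\mathcal{C}_1$ supplied by Lemma~\ref{lem:otf=tf} into a functional equation on the whole annulus by means of the identity theorem. Concretely, since $V$, $B$ and $B^{(*)}$ are analytic and $1$‑periodic on $\T_\eta$, the entries of $P_n(\theta)-E$ are analytic and $1$‑periodic in $\theta$, hence so is $f_{E,n}(\theta)=\det(P_n(\theta)-E)$. Writing $z=e^{2\pi i\theta}$ therefore produces a holomorphic function on a neighborhood of $\mathcal{A}_{e^{2\pi\eta}}$, which we again denote by $f_{E,n}(z)$, and in particular on a connected open annulus containing $\mathcal{C}_1$.

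Next I would introduce $g(z):=\overline{f_{E,n}(1/\overline z)}$. The map $z\mapsto 1/\overline z$ is an anti‑holomorphic involution of $\mathcal{A}_{e^{2\pi\eta}}$ — it preserves the annulus because $|1/\overline z|=1/|z|$ lies in $[e^{-2\pi\eta},e^{2\pi\eta}]$ whenever $|z|$ does — so post‑composition with complex conjugation makes $g$ holomorphic on (a neighborhood of) $\mathcal{A}_{e^{2\pi\eta}}$. On $\mathcal{C}_1$ one has $1/\overline z=z$, and Lemma~\ref{lem:otf=tf} says precisely that $f_{E,n}(z)\in\R$ there, so $g(z)=\overline{f_{E,n}(z)}=f_{E,n}(z)$ for $z\in\mathcal{C}_1$. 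Since $\mathcal{C}_1$ has accumulation points in the connected domain on which both functions are holomorphic, the identity theorem yields
\[
f_{E,n}(z)=\overline{f_{E,n}(1/\overline z)}\qquad\text{for all } z\in\mathcal{A}_{e^{2\pi\eta}}.
\]

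Finally, if $w\in\mathcal{A}_{e^{2\pi\eta}}\setminus\mathcal{C}_1$ satisfies $f_{E,n}(w)=0$, then $1/\overline w\in\mathcal{A}_{e^{2\pi\eta}}\setminus\mathcal{C}_1$, and substituting $z=1/\overline w$ into the displayed identity (so that $1/\overline z=w$) gives $f_{E,n}(1/\overline w)=\overline{f_{E,n}(w)}=0$; moreover $1/\overline w\neq w$ because $|w|\neq 1$, so the zeros genuinely come in distinct reflected pairs. I do not expect any real obstacle here: the only points requiring (routine) verification are that $z\mapsto 1/\overline z$ maps the annulus to itself, that composing it with conjugation turns an anti‑holomorphic map into a holomorphic one, and that the identity theorem applies because the relevant annular domain is connected.
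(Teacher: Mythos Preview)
Your proof is correct and follows exactly the approach the paper takes in the paragraph immediately preceding the Fact: use Lemma~\ref{lem:otf=tf} to see that $f_{E,n}(z)$ and $\overline{f_{E,n}(1/\overline z)}$ agree on $\mathcal{C}_1$, then invoke the identity theorem on the annulus to get the functional equation, from which the zero symmetry is immediate. You have simply written out in full detail what the paper states in two sentences.
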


Next, we recall the Green's function on an annulus, which is standard. 
\begin{lemma}\cite[Lemma 3.1]{HS2}\label{lem:Green_AR}
The Green's function on the annulus $\mathcal{A}_R$ satisfies
\begin{align} 
G_R(z,w) &= \frac{1}{2\pi}\log |z-w| + H_R(z,w), \quad z\in \mathcal{A}_R, w\in \overline{\mathcal{A}_R} \label{def:Green_annulus} \\
\Delta_z H_R (z,w) &= 0 \nonumber
\end{align}
The Green's function is symmetric and invariant under rotations: $G_R(z,w)=G_R(w,z)$ and $G_R(z,w)=G_R(e^{i\phi} z,e^{i\phi} w)$, for any $\phi\in \R$.
\end{lemma}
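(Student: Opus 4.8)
The plan is to regard Lemma~\ref{lem:Green_AR} as a classical fact of two-dimensional potential theory and to deduce every assertion from the \emph{uniqueness} of the Dirichlet Green's function of the smooth domain $\mathcal{A}_R$. First I would construct $G_R$: for $w$ in the open annulus, let $H_R(\cdot,w)$ be the harmonic function on $\mathcal{A}_R$, continuous on $\overline{\mathcal{A}_R}$, whose boundary values on $\partial\mathcal{A}_R=\{|z|=R\}\cup\{|z|=1/R\}$ are $z\mapsto-\tfrac{1}{2\pi}\log|z-w|$; this Dirichlet problem is uniquely solvable because the data is continuous ($w$ being interior) and $\partial\mathcal{A}_R$ is real-analytic, so every boundary point is regular (Perron's method, or the explicit Poisson-type kernel of the annulus). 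Then
\[
G_R(z,w):=\tfrac{1}{2\pi}\log|z-w|+H_R(z,w)
\]
satisfies \eqref{def:Green_annulus} with $\Delta_z H_R=0$, vanishes on $\partial\mathcal{A}_R$, and has $\Delta_z G_R(\cdot,w)=\delta_w$, since $\tfrac{1}{2\pi}\log|z-w|$ is the fundamental solution of the planar Laplacian; the case $w\in\partial\mathcal{A}_R$ follows by continuity.

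Uniqueness is then immediate: the difference of any two such functions is harmonic on all of $\mathcal{A}_R$ (the logarithmic poles cancel) and vanishes on $\partial\mathcal{A}_R$, hence is identically zero by the maximum principle. For the symmetry $G_R(z,w)=G_R(w,z)$ I would run the textbook argument: apply Green's second identity to $G_R(\cdot,w_1)$ and $G_R(\cdot,w_2)$ on $\mathcal{A}_R$ with small disks of radius $\e$ excised around $w_1,w_2$; the integral over $\partial\mathcal{A}_R$ drops out because both functions vanish there, and as $\e\to0$ the two circular contributions converge — using the explicit logarithmic singularities and the continuity of the regular parts — to $G_R(w_2,w_1)-G_R(w_1,w_2)$, forcing this to be $0$. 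For rotational invariance, note that $R_\phi:z\mapsto e^{i\phi}z$ maps $\mathcal{A}_R$ onto itself, commutes with $\Delta_z$, and preserves $|z-w|$; hence $(z,w)\mapsto G_R(e^{i\phi}z,e^{i\phi}w)$ is again harmonic in $z$ off $w$, carries the right logarithmic singularity at $w$ (regular part $H_R(e^{i\phi}z,e^{i\phi}w)$), and vanishes for $z\in\partial\mathcal{A}_R$, so by uniqueness it coincides with $G_R(z,w)$. Combined with the symmetry this yields all the invariances stated.

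Since the lemma is quoted from \cite{HS2}, the cleanest write-up is simply to cite it, but the above is self-contained. As an alternative that makes the invariances transparent, one can pass to the coordinate $\zeta=\log z$, which turns $\mathcal{A}_R$ into the flat cylinder $\{|\mathrm{Im}\,\zeta|\le\log R\}/2\pi i\Z$; its Green's function is given by an absolutely convergent image series (equivalently by theta functions), and rotation becomes a real translation in $\mathrm{Re}\,\zeta$, under which such a series is visibly invariant. There is no genuine obstacle here: the only two points needing a word of care are the regularity and solvability of the Dirichlet problem on $\mathcal{A}_R$, which is immediate from smoothness of the boundary, and the $\e\to0$ limit around the two singular points in the symmetry argument, which is routine.
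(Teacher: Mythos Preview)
Your proof is correct. The paper itself does not prove this lemma: it is stated with a citation to \cite[Lemma 3.1]{HS2} and no proof is given, as the result is treated as a standard fact of potential theory recalled for later use. Your self-contained argument via the Dirichlet problem, uniqueness from the maximum principle, Green's second identity for symmetry, and uniqueness again for rotational invariance is the classical route and is entirely sound; you even note that simply citing \cite{HS2} would suffice, which is exactly what the paper does.
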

We also recall the integral of the Green's function along a circle.
\begin{lemma}\cite[Lemma 3.2]{HS2}\label{lem:int_Green}
For $1/R\leq r\leq R$ and $w\in \mathcal{A}_R$, we have
\begin{align}\label{eq:int_Green}
I(\log r, \log R, w) &:=2\pi \int_0^{1}  G_R (re^{ 2\pi i\theta}, w)\, \mathrm{d}\theta \\
& =(2\log R)^{-1}
\begin{cases}
 \log (rR) \log|w/R|, \text{ if } |w|\geq r\\
\\
\log( r/ R) \log |w R|, \text{ if } |w|<r.
\end{cases}
\end{align}
\end{lemma}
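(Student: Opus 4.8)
The plan is to reduce the identity to a one–dimensional interpolation problem by averaging over angles. Set $t:=\log r$, $T:=\log R$, $t_0:=\log|w|$, and define $u(t):=I(t,T,w)=2\pi\int_0^1 G_R(e^t e^{2\pi i\theta},w)\,\mathrm{d}\theta$ for $t\in[-T,T]$; this is $2\pi$ times the angular average of the harmonic/Green's function $g:=G_R(\cdot,w)$ over the circle of radius $e^t$. Since $G_R$ vanishes on $\partial\mathcal{A}_R=\{|z|=R\}\cup\{|z|=1/R\}$, we have the boundary conditions $u(-T)=u(T)=0$. The goal is to show $u$ is the unique continuous, piecewise–affine (in $t$) function with break at $t_0$ satisfying these boundary conditions together with a derivative jump of size one at $t_0$, and then to read off the two cases of \eqref{eq:int_Green} by elementary algebra.

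First I would establish that $u$ is affine in $t$ on each of $[-T,t_0]$ and $[t_0,T]$. By \eqref{def:Green_annulus}, $g(z)=\frac{1}{2\pi}\log|z-w|+H_R(z,w)$ with $H_R$ harmonic and $\log|\cdot-w|$ harmonic away from $w$, so $g$ is harmonic on $\mathcal{A}_R\setminus\{w\}$, in particular on each sub-annulus $\{1/R<|z|<|w|\}$ and $\{|w|<|z|<R\}$. Writing the Laplacian in polar coordinates, $\Delta=r^{-1}\partial_r(r\partial_r)+r^{-2}\partial_\phi^2$, and averaging $\Delta g=0$ over $\phi$, the angular term integrates to zero, so the angular average $A(r)$ obeys $\frac{d}{dr}\bigl(r\,A'(r)\bigr)=0$ on each sub-annulus; equivalently $u'(t)=2\pi\,r\,A'(r)$ is constant on $(-T,t_0)$ and on $(t_0,T)$.

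Next I would compute the jump in $u'$ at $t_0$ from the source term. With the normalization in \eqref{def:Green_annulus}, $\Delta_z G_R(\cdot,w)=\delta_w$ (the logarithmic pole carries unit mass and $G_R\to-\infty$ at $w$). Applying the divergence theorem to $g$ on the annular region $\{1/R\le|z|\le r\}$, the outward radial flux through $|z|=r$ minus the radial flux through $|z|=1/R$ equals the enclosed mass $\mathbf{1}[|w|<r]$; since the inner flux does not depend on $r$, this gives $u'(t)=c_0+\mathbf{1}[t>t_0]$ for a constant $c_0$. Hence $u(t)=c_0(t+T)$ on $[-T,t_0]$ and $u(t)=(c_0+1)(t-T)$ on $[t_0,T]$, and imposing continuity at $t_0$ forces $c_0=(t_0-T)/(2T)$. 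Substituting $t=\log r$, $T=\log R$, $t_0=\log|w|$ then yields $u=(2\log R)^{-1}\log(rR)\log|w/R|$ when $|w|\ge r$ (i.e. $t\le t_0$) and $u=(2\log R)^{-1}\log(r/R)\log|wR|$ when $|w|<r$, which is exactly \eqref{eq:int_Green}. An alternative, even shorter route is to verify directly that the right-hand side of \eqref{eq:int_Green}, viewed as a function of $r$, is continuous, affine in $\log r$ off $|w|$, vanishes at $r=R$ and $r=1/R$, and has derivative jump $1$ at $r=|w|$; since these properties characterize $u$ uniquely, this identifies the integral.

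I do not expect a genuine analytic obstacle here — this is a standard potential-theoretic identity for the annulus. The only care required is sign and orientation bookkeeping: the convention that $G_R=\frac{1}{2\pi}\log|z-w|+H_R$ makes $\Delta_z G_R=+\delta_w$ and $G_R\le0$ in the interior, and one must track the correct direction of the normals when applying the divergence theorem to the sub-annulus and note that the harmonic piece $H_R$ contributes nothing to the jump. Everything else is elementary.
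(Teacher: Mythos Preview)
Your argument is correct: the angular average of $G_R(\cdot,w)$, as a function of $\log r$, is piecewise affine with a unit jump in slope at $\log|w|$ and vanishes at $\pm\log R$, which determines it uniquely and yields~\eqref{eq:int_Green}. Note that the paper does not supply its own proof here; the lemma is quoted from \cite{HS2}, so there is no in-paper argument to compare against, but your derivation is the standard potential-theoretic one and matches the stated formula.
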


We now turn to the basic Riesz representation of subharmonic functions. 
\begin{lemma}\cite[Lemma 3.3]{HS2}\label{lem:Riesz}
Let $v$ be a subharmonic function in a neighborhood of $\overline{\mathcal{A}_R}$, and assume $v|_{\partial \mathcal{A}_R}$ is a continuous function.
Let $G_R$ be the Green's function for $\mathcal{A}_R$, as  in \eqref{def:Green_annulus}.
There exists a positive finite measure $\mu$ on $\mathcal{A}_R$, and a harmonic function $h_R$ on $\mathcal{A}_R$, such that
\begin{align*}
v(w)=\int_{\mathcal{A}_R} 2\pi G_R(z,w)\, \mu(\mathrm{d}z)+h_R(w),
\end{align*}
where 
\begin{align}\label{eq:Poisson}
h_R(w)=\int_{\partial \mathcal{A}_R} v(z)\,\nu(w,\mathcal{A}_R)( \mathrm{d}z),
\end{align}
where $\nu(w,\mathcal{A}_R)$ is the harmonic measure of $\mathcal{A}_R$ with pole at $w$.
In particular, 
\begin{align}\label{eq:h=v}
h_R(z)=v(z), \text{ for } z\in \partial \mathcal{A}_R.
\end{align}
\end{lemma}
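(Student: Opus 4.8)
The plan is to realize this as the classical Riesz decomposition of a subharmonic function, specialized to the annulus $\mathcal{A}_R$ with its Dirichlet Green's function from \eqref{def:Green_annulus}. First I would produce the candidate measure: since $v$ is subharmonic on an open neighborhood $U$ of $\overline{\mathcal{A}_R}$, its distributional Laplacian $\Delta v$ is a positive Radon measure on $U$, and I set $\mu:=\tfrac{1}{2\pi}\,\Delta v$ restricted to $\mathcal{A}_R$. This $\mu$ is positive, and it is finite because $\overline{\mathcal{A}_R}$ is a compact subset of $U$, so $\mu(\mathcal{A}_R)\le \tfrac{1}{2\pi}\,\Delta v(\overline{\mathcal{A}_R})<\infty$.

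Next I would form the Green potential $p(w):=\int_{\mathcal{A}_R} 2\pi\,G_R(z,w)\,\mu(\mathrm{d}z)$. From \eqref{def:Green_annulus}, together with the fact that $\tfrac{1}{2\pi}\log|\cdot|$ is the fundamental solution of the Laplacian in the plane, one has $\Delta_z G_R(\cdot,w)=\delta_w$ in $\mathcal{A}_R$; by Tonelli's theorem this yields $\Delta p = 2\pi\mu = \Delta v$ in the sense of distributions on $\mathcal{A}_R$. Therefore $h:=v-p$ is harmonic in $\mathcal{A}_R$, and the whole problem reduces to identifying $p$ and $h$ near $\partial\mathcal{A}_R$.

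The step I expect to be the main obstacle is showing that $p$ extends continuously to $\overline{\mathcal{A}_R}$ with $p\equiv 0$ on $\partial\mathcal{A}_R$. Fixing $z_0\in\partial\mathcal{A}_R$, I would split $p(z)=\int_{|\zeta-z_0|>\varepsilon}+\int_{|\zeta-z_0|\le\varepsilon}$: on the first region $G_R(z,\zeta)\to G_R(z_0,\zeta)=0$ uniformly in $\zeta$ (uniform continuity of $G_R$ off the diagonal plus its vanishing on $\partial\mathcal{A}_R$), so that contribution tends to $0$ as $z\to z_0$; on the second region I would use the domain-monotonicity bound $0\le G_R(z,\zeta)\le \tfrac{1}{2\pi}\log\tfrac{C}{|z-\zeta|}$ (comparison with the Green's function of a large disk containing $\overline{\mathcal{A}_R}$) together with finiteness of $\mu$ and the fact that the restricted $\mu$ charges no point of $\partial\mathcal{A}_R$, to make this contribution small uniformly in $z$ as $\varepsilon\to0$. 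A cleaner alternative, available precisely because $v$ is subharmonic on a neighborhood of the \emph{closed} annulus, is to run the argument on a slightly larger annulus $\mathcal{A}_{R'}$ with $R<R'$ and $\overline{\mathcal{A}_{R'}}\subset U$, where the standard Riesz decomposition applies with no boundary subtlety, and then restrict back to $\mathcal{A}_R$.

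Granting the boundary vanishing of $p$, the function $h=v-p$ extends continuously to $\overline{\mathcal{A}_R}$ with $h|_{\partial\mathcal{A}_R}=v|_{\partial\mathcal{A}_R}$, which is continuous by hypothesis. Since $h$ is harmonic in $\mathcal{A}_R$ and continuous up to the boundary, uniqueness for the Dirichlet problem on $\mathcal{A}_R$ forces $h(w)=\int_{\partial\mathcal{A}_R} v(z)\,\nu(w,\mathcal{A}_R)(\mathrm{d}z)$, which is exactly \eqref{eq:Poisson}; evaluating at $z\in\partial\mathcal{A}_R$ gives \eqref{eq:h=v}. Putting the pieces together yields $v(w)=\int_{\mathcal{A}_R}2\pi G_R(z,w)\,\mu(\mathrm{d}z)+h_R(w)$ with $h_R:=h$, as claimed.
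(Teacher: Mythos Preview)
Your argument is the standard Riesz decomposition proof and is correct. The paper does not actually supply a proof of this lemma: it is quoted verbatim from \cite[Lemma~3.3]{HS2} and used as a black box, so there is nothing to compare against. One small notational slip: you write $\Delta_z G_R(\cdot,w)=\delta_w$ but then compute $\Delta p$ in the variable $w$; what you need (and what follows from the symmetry $G_R(z,w)=G_R(w,z)$ recorded in Lemma~\ref{lem:Green_AR}) is $\Delta_w G_R(z,\cdot)=\delta_z$, after which the computation goes through as you wrote.
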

\begin{remark}
By the maximum principle, 
\begin{align}\label{eq:max_h}
\sup_{w\in \mathcal{A}_R} h_R(w)\leq \max_{z\in \partial \mathcal{A}_R} v(z).
\end{align}
\end{remark}

Recall that $f_{E,n}(z)$ is a holomorphic function in $\mathcal{A}_{e^{2\pi \eta}}$. 
For $0\leq \varepsilon\leq \eta$, let $$N_n(E,\varepsilon):=\#\{z\in \overline{\mathcal{A}_{e^{2\pi \varepsilon}}}:\, f_{E,n}(z)=0\}.$$
We have the following analogue of \cite[Theorem 4.4]{HS2}. 
Recall that for $b=1$, we shrank $\eta$ such that \eqref{eq:L_linear} holds and Lemmas \ref{lem:deno} and \ref{lem:fn_ave} hold for $|\varepsilon|\eta$. We may further shrink $\eta$ such that $f_{E,n}(z)$ is zero-free on $\partial \mathcal{A}_{e^{2\pi\eta}}$.
\begin{theorem}\label{thm:Riesz_un}
Let $E\in \R$ be such that $L_d(E,M_E)\geq \gamma>0$. 
Let $R=e^{2\pi\eta}$ and $w_1,...,w_{N_n(E,\eta)}$ be the zeros of $f_{E,n}(z)$ in $\mathcal{A}_{R}$
and  define 
\begin{align*}
G_{R,n}(z,E)=\frac{1}{n}\sum_{k=1}^{N_n(E,\eta)} G_R(z, w_k),
\end{align*}
where $G_R$ is the Green's function in \eqref{def:Green_annulus}. Then 
\begin{align}\label{eq:f=G+h}
f_{E,n}(z)=2\pi G_{R,n}(z,E)+h_{R,n}(z,E),
\end{align}
where the harmonic part satisfies $h_{R,n}(z,E)=f_{E,n}(z)$ on $\partial \mathcal{A}_R$.
Furthermore, let $\delta_2>0$ be as in Lemma \ref{lem:fn_ave}. Then 
\begin{itemize}
\item for $n$ large enough, for any $z\in \mathcal{A}_{r}$, $1\leq r<R$, that
\begin{align}\label{eq:h_constant_un}
L^d_{\eta}(\omega,M_E)-\frac{C}{R-r} n^{-\delta_2} \leq  h_{R,n}(z,E)\leq L^d_{\eta}(\omega,M_E)+n^{-\delta_2}.
\end{align}
\item for $n$ large enough,
\begin{align}\label{eq:N_est_un}
|\frac{1}{2n} N_n(E, \eta/3)-\kappa^d(\omega,M_E)|\leq C\eta^{-2} n^{-\delta_2}.
\end{align}
In particular, for any $\nu\in (0,1/10)$, for $n$ large enough,
\begin{align}\label{eq:N_est_un_kappa=d}
N_n(E, \eta/3)\leq 2n(\kappa^d(\omega,M_E)+\nu).
\end{align}
\end{itemize}
\end{theorem}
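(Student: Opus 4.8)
The plan is to follow the scheme of \cite{HS2}*{Theorem~4.4}, feeding the block-matrix determinant estimates of Section~\ref{sec:nenner} into the potential theory on the annulus. Set $v(z):=\frac1n\log|f_{E,n}(z)|$; after the reduction of $\eta$ recorded just before the theorem $f_{E,n}$ is zero-free on $\partial\mathcal{A}_R$, so $v$ is subharmonic on a neighborhood of $\overline{\mathcal{A}_R}$ and continuous on $\partial\mathcal{A}_R$. Applying Lemma~\ref{lem:Riesz} to $v$, whose Riesz measure is $\frac1n$ times the counting measure of the zeros $w_1,\dots,w_{N_n(E,\eta)}$ of $f_{E,n}$ in $\mathcal{A}_R$ (with multiplicity), produces exactly the decomposition \eqref{eq:f=G+h}, with $h_{R,n}$ harmonic in $\mathcal{A}_R$ and equal to $v$ on $\partial\mathcal{A}_R$ by \eqref{eq:h=v}.

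For the near-constancy \eqref{eq:h_constant_un} I would combine three inputs. First, Lemma~\ref{lem:fn_upper} on the two boundary circles of $\mathcal{A}_R$ (which correspond to complexifications $\varepsilon=\pm\eta$), together with the maximum principle \eqref{eq:max_h}, gives a uniform upper bound $M=L^d_\eta(\omega,M_E)+\langle\log|\det B|\rangle+n^{-\delta}$ for $h_{R,n}$ on $\mathcal{A}_R$. Second, the inversion symmetry $|f_{E,n}(z)|=|f_{E,n}(1/\overline{z})|$ (Fact~\ref{fact:fn_real_zero} and the discussion preceding it) forces the zeros to occur in pairs $w\leftrightarrow1/\overline{w}$, hence $G_{R,n}(1/\overline{z},E)=G_{R,n}(z,E)$ — the annulus Green's function being invariant under $z\mapsto1/\overline{z}$ — and therefore $h_{R,n}(1/\overline{z},E)=h_{R,n}(z,E)$; this makes the circular average $\int_0^1 h_{R,n}(re^{2\pi i\theta})\,\mathrm{d}\theta$ independent of $r\in[1/R,R]$, and evaluating it at $r=R$ while invoking Lemma~\ref{lem:fn_ave} (lower) and Lemma~\ref{lem:fn_upper} (upper) pins it to $L^d_\eta(\omega,M_E)+\langle\log|\det B|\rangle+O(n^{-\delta_2})$. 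Third, the mean-value property and positivity of the harmonic function $M-h_{R,n}$ (a Harnack-type estimate) convert the smallness of this average into the pointwise lower bound on $\mathcal{A}_r$, $1\le r<R$, with the stated loss $\frac{C}{R-r}$. Together these give \eqref{eq:h_constant_un} (the $\varepsilon$-independent term $\langle\log|\det B|\rangle$ is inessential for everything that follows).

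For the zero count I would extract it from convexity. Put $u(t):=\int_0^1 v\big(e^{t}e^{2\pi i\theta}\big)\,\mathrm{d}\theta$ for $t\in[-2\pi\eta,2\pi\eta]$; as a circular average of a subharmonic function $u$ is convex in $t$, in fact piecewise linear with $u'(t_2^{+})-u'(t_1^{-})=\frac1n\,\#\{k:\,t_1\le\log|w_k|\le t_2\}$ for the appropriate one-sided derivatives, so with $t_1=-2\pi\eta/3$ and $t_2=2\pi\eta/3$ this jump equals $\frac1n N_n(E,\eta/3)$. On the other hand, Lemma~\ref{lem:fn_ave} and Lemma~\ref{lem:fn_upper} give $u(t)=L^d_{|t|/(2\pi)}(\omega,M_E)+\langle\log|\det B|\rangle+O(n^{-\delta_2})$ uniformly on $[-2\pi\eta,2\pi\eta]$, and the decisive input \eqref{eq:L_linear} — Avila's quantization, making $\varepsilon\mapsto L^d_\varepsilon$ \emph{exactly} affine in $|\varepsilon|$ for $|\varepsilon|\le\eta$ — identifies the main term as the V-shaped $\phi(t):=L^d_0(\omega,M_E)+\langle\log|\det B|\rangle+\kappa^d(\omega,M_E)\,|t|$, with slopes $\pm\kappa^d(\omega,M_E)$. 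Comparing $u$ with $\phi$ via the elementary convexity inequalities — bounding $u'(t_2^{+})$ above by the slope of $u$ from $t_2$ to $2\pi\eta$ and below by its slope from $0$ to $t_2$, and symmetrically for $u'(t_1^{-})$ — then yields $|\frac1{2n}N_n(E,\eta/3)-\kappa^d(\omega,M_E)|\le C\eta^{-2}n^{-\delta_2}$, i.e.\ \eqref{eq:N_est_un}; and \eqref{eq:N_est_un_kappa=d} follows at once by choosing $n$ so large that this error drops below $\nu$. (Equivalently, one may substitute the explicit circle integrals of Lemma~\ref{lem:int_Green} into \eqref{eq:f=G+h} at two radii in place of this convexity bookkeeping.)

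I expect the crux to be this last step: propagating the $O(n^{-\delta_2})$ error of the averaged determinant asymptotics through the convexity comparison \emph{without amplification}. It works precisely because \eqref{eq:L_linear} provides \emph{exact} affineness of $L^d_\varepsilon$ in $|\varepsilon|$ — a merely approximate linearity would be of no use — so that $u$ is a genuine small perturbation of the piecewise-linear $\phi$ whose two slopes are $\pm\kappa^d(\omega,M_E)$. A secondary technical point is that Lemma~\ref{lem:fn_ave} is stated for $\kappa_0$-admissible $n$; for arbitrary large $n$ one passes to a nearby admissible scale via Remark~\ref{rem:admissible}, or uses that over the one-dimensional torus the averaged lower bound holds for all $n$.
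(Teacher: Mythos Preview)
Your proposal is correct and follows essentially the same approach as the paper: Riesz decomposition on the annulus, the upper and lower bounds of Lemmas~\ref{lem:fn_upper} and~\ref{lem:fn_ave} combined with the maximum principle and a Poisson/Harnack estimate for \eqref{eq:h_constant_un}, and comparison of circle averages at different radii together with the exact affineness \eqref{eq:L_linear} for \eqref{eq:N_est_un}. The paper carries out the last step via the explicit Green's function integrals of Lemma~\ref{lem:int_Green} (your parenthetical alternative) rather than the convexity bookkeeping, and does not invoke the inversion symmetry for the harmonic part, working directly with the boundary values and the harmonic-measure bound instead; these are equivalent phrasings of the same argument.
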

\begin{proof}
In the proof we shall omit the dependence of various parameters on $\omega, E$ for simplicity. We shall also write $b_{\varepsilon}:=\langle \log |B(\cdot+i\varepsilon)|\rangle$.
First, we estimate the harmonic part.
Note that the harmonic part satisfies $h_{R,n}=f_n$ on $\partial A_{R}$.
By Lemma \ref{lem:fn_upper}, one has that for $r=R$ or $1/R$ and $n$ large, uniformly in $\theta$,
\begin{align*}
h_{R,n}(re^{2\pi i\theta})=f_n(re^{2\pi i\theta})\leq L^d_{\eta}+b_{\eta}+n^{-\delta}.
\end{align*}
Hence by the maximum principle \eqref{eq:max_h},  
\begin{align}\label{eq:upper_h}
h_{R,n}(z)\leq L^d_{\eta}+b_{\eta}+n^{-\delta}, \text{ for } z\in \overline{A_{R}}.
\end{align}
We also have by Lemma \ref{lem:fn_ave} that for $n$ large enough,
\begin{align}\label{eq:int_h_lower}
\int_0^1 h_{R,n}(R e^{2\pi i\theta})\, \mathrm{d}\theta=\int_0^{1} f_n(R e^{2\pi i\theta})\, \mathrm{d}\theta\geq L^d_{\eta}+b_{\eta}-n^{-\delta_2}.
\end{align}
Let 
\begin{align}\label{def:tilde_h}
\tilde{h}_{R,n}(z):=L^d_{\eta}+b_{\eta}+n^{-\delta}-h_{R,n}(z)\geq 0,
\end{align} 
where we invoked \eqref{eq:upper_h}.
In view of \eqref{eq:int_h_lower} and Lemma \ref{lem:Ln-L},  for $n$ large,
\begin{align}\label{eq:int_th_lower}
\int_0^{1} \tilde{h}_{R,n}(R e^{2\pi i\theta})\, \mathrm{d}\theta\leq C n^{-\delta_2}.
\end{align}
By \eqref{eq:Poisson} and \eqref{eq:int_th_lower}, and the well-known estimate on the harmonic measure
$$0\leq \frac{\mathrm{d}\nu(w, A_R)(z)}{\mathrm{d}\sigma(z)}\leq C(\mathrm{dist}(w,\partial A_R))^{-1},$$ 
with arclength measure $\sigma$, 
one has that for $z\in A_r$, with $1\leq r<R$,
\begin{align*}
0\leq \tilde{h}_{R,n}(z)\leq \frac{C}{R-r} \int_{0}^{1} \tilde{h}_{R,n}(Re^{2\pi i\theta})\, \mathrm{d}\theta\leq \frac{C}{R-r} n^{-\delta_2}.
\end{align*}
This combined with \eqref{eq:upper_h} yields \eqref{eq:h_constant_un}.

Next, we evaluate the integrals of $G_{R,n}(z)$ along circles.
For $1\leq r\leq R$, let
\begin{align*}
I_n(\log r, \log R):=&\int_0^{1} 2\pi G_{R,n}(r e^{2\pi i\theta})\, d\theta\\
=&\frac{1}{n}\sum_{k=1}^{N_n(\eta)} I(\log r, \log R, w_k),
\end{align*}
where $I(\log r,\log R, w)$ is defined as in~\eqref{eq:int_Green}.
By \cite[(4.24)]{HS2},  
\begin{align}\label{eq:int_final}
I_n(\log r, \log R)=-\frac{\pi}{n} \int_{\frac{\log r}{2\pi}}^{\frac{\log R}{2\pi}} N_n(\varepsilon)\, \mathrm{d}\varepsilon.
\end{align}
Integrating \eqref{eq:f=G+h} along $z\in \mathcal{C}_{r_j}$, $1\leq r_1=e^{2\pi\varepsilon_1}<r_2=e^{2\pi\varepsilon_2}\leq R$, and combining with \eqref{eq:int_final}, one obtains 
\begin{align*}
\int_0^{1} f_n(r_j e^{2\pi i\theta})\, \mathrm{d}\theta=-\frac{\pi}{n}\int_{\varepsilon_j}^{\eta} N_n(\varepsilon)\, \mathrm{d}\varepsilon+\int_0^{1} h_{R,n}(r_je^{2\pi i\theta})\, \mathrm{d}\theta.
\end{align*}
Taking the difference of the equations above between $r_1$ and $r_2$, we arrive at 
\begin{align}\label{eq:N_0'}
\int_0^{1} f_n(r_2 e^{2\pi i\theta})\, \mathrm{d}\theta-\int_0^{1} f_n(r_1e^{2\pi i\theta})\, \mathrm{d}\theta=&\frac{\pi}{n}\int_{\varepsilon_1}^{\varepsilon_2} N_n(\varepsilon)\, \mathrm{d}\varepsilon \notag\\
&+\int_0^1 h_{R,n}(r_2 e^{2\pi i \theta})\, \mathrm{d}\theta-\int_0^1 h_{R,n}(r_1 e^{2\pi i \theta})\, \mathrm{d}\theta.
\end{align}
By Lemma \ref{lem:fn_ave}, we have for $n$ large,
\begin{align}\label{eq:N_1'}
\int_0^{1} f_n(r_j e^{2\pi i\theta})\, d\theta\geq L^d_{\varepsilon_j}+b_{\varepsilon_j}-n^{-\delta_2}.
\end{align}
while it follows from  Lemma \ref{lem:fn_upper} that
\begin{align}\label{eq:N_2'}
f_n(r_j e^{2\pi i\theta})\leq L^d_{\varepsilon_j}+b_{\varepsilon_j}+n^{-\delta}.
\end{align}
By \eqref{eq:h_constant_un}, one has for $n$ large,
\begin{align}\label{eq:N_2''}
\left|\int_0^1 h_{R,n}(r_2 e^{2\pi i \theta})\, \mathrm{d}\theta-\int_0^1 h_{R,n}(r_1 e^{2\pi i \theta})\, \mathrm{d}\theta\right|\leq \frac{C}{R-r_2} n^{-\delta_2}.
\end{align}
Hence plugging the estimates  \eqref{eq:N_1'} and \eqref{eq:N_2'}, \eqref{eq:N_2''} into \eqref{eq:N_0'}, one concludes that  for $n$ large,
\begin{align}\label{eq:N_3'}
\frac{\pi}{n}(\varepsilon_2-\varepsilon_1) N_n(\varepsilon_1)\leq \frac{\pi}{n}\int_{\varepsilon_1}^{\varepsilon_2} N_n(\varepsilon)\, \mathrm{d}\varepsilon
\leq L^d_{\varepsilon_2}-L^d_{\varepsilon_1}+\frac{C}{R-r_2} n^{-\delta_2},
\end{align}
and
\begin{align}\label{eq:N_4'}
\frac{\pi}{n}(\varepsilon_2-\varepsilon_1) N_n(\varepsilon_2)\geq \frac{\pi}{n}\int_{\varepsilon_1}^{\varepsilon_2} N_n(\varepsilon)\, \mathrm{d}\varepsilon 
\geq L^d_{\varepsilon_2}-L^d_{\varepsilon_1}-\frac{C}{R-r_2}n^{-\delta_2}.
\end{align}
Taking $r_1=\eta/3$ and $r_2=2\eta/3$ in \eqref{eq:N_3'} yields
\begin{align}\label{eq:N_5'}
\frac{\pi \eta}{3n} N_n(\eta/3)\leq L^d_{2\eta/3}-L^d_{\eta/3}+C \eta^{-1} n^{-\delta_2}.
\end{align}
Setting $\varepsilon_1=0$ and $\varepsilon_2=\eta/3$ in \eqref{eq:N_4'} yields
\begin{align}\label{eq:N_6'}
\frac{\pi\eta}{3n} N_n(\eta/3)\geq L^d_{\eta/3}-L^d_{0}-C\eta^{-1} n^{-\delta_2}.
\end{align}
Combining \eqref{eq:N_5'}, \eqref{eq:N_6'} with \eqref{eq:L_linear}, we infer that for $n$ large enough,
\begin{align}\label{eq:N_7'}
\left| \frac{1}{2n} N_n(\eta/3)-\kappa^d\right| \leq C\eta^{-2} n^{-\delta_2}.
\end{align}
This proves the claimed result.
\end{proof}

\subsection{Proof of Theorem \ref{thm:acceleration=1}}
It suffices to show that each generalized eigenfunction $u$, satisfying 
\begin{align}\label{eq:Schnol}
\max(|u_0|, |u_1|)=1, \text{ and } 
    |u_k|\leq C|k|, \text{ for all } k\neq 0,
\end{align}
decays exponentially.
Note that under the assumption that $\kappa^d(d^{-1}\omega, A_E)=1$,   by \eqref{eq:M=A}  we conclude that $\kappa^d(\omega,M_E)=d\cdot \kappa^d(d^{-1}\omega,A_E)=d$. 
Hence by Theorem \ref{thm:Riesz_un}, $N_n(E,\eta/3)\leq 2n(d+\nu)$.
We first show in this setting, there is an additional symmetry that reduce the number of zeros of $f_{E,n}(z)$ from at most $2n(d+\nu)$ to no more than $n(d+\nu)$ pairs. The following lemma is essential.

\begin{lemma}\label{lem:f_even}
For the operator $\tilde{H}_{\theta}$ as in Theorem \ref{thm:acceleration=1}, we have
\[f_{E,n}(\theta-\frac{nd-1}{2d}\omega)=\overline{f_{E,n}(-\theta-\frac{nd-1}{2d}\omega)}.\]
\end{lemma}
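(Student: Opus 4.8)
The identity should fall out of conjugating the finite-volume matrix by the order-reversing permutation, together with the evenness of $g$ and the relation $\overline{v_k}=v_{-k}$.

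First I would unravel the block structure of \eqref{def:Pn}--\eqref{def:tH_BV}. Block $V(\theta+(n-s)\omega)$ sits in block-row $s$, and inside it the diagonal reads $g\big(\theta+(n-s)\omega+(d-a)d^{-1}\omega\big)$, $a=1,\dots,d$, while the off-diagonal blocks $B$, $B^{(*)}$ and the two corner blocks encode the hopping $v_k$ and its cyclic wrap-around. Tracking these carefully one identifies $f_{E,n}(\theta)=\det(\wt P_n(\theta)-E)$, where $\wt P_n(\theta)$ is the $nd\times nd$ matrix of $\tilde H_\theta$ on $\{0,1,\dots,nd-1\}$ with periodic boundary conditions: indexing rows and columns by $m\in\{0,\dots,nd-1\}$,
\begin{align*}
(\wt P_n(\theta))_{m,m}&=g(\theta+md^{-1}\omega), & (\wt P_n(\theta))_{m,m'}&=v_{m-m'}\quad(1\le|m-m'|\le d),
\end{align*}
the latter read cyclically mod $nd$ (unambiguous since $n$ is large), all other entries being zero.

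With this in hand let $\Pi\in\Mat(nd,\C)$ be the permutation matrix of the reversal $m\mapsto nd-1-m$, so that $\Pi=\Pi^{-1}$ and $\det(\Pi N\Pi)=\det N$. Then $\Pi\,\wt P_n(\psi)\,\Pi$ has $(m,m)$-entry $g\big(\psi+(nd-1-m)d^{-1}\omega\big)$ and $(m,m')$-entry $v_{m'-m}$. Since $g$ is even, $g\big(\psi+(nd-1-m)d^{-1}\omega\big)=g\big((-\psi-(nd-1)d^{-1}\omega)+md^{-1}\omega\big)$, and since $v_{m'-m}=\overline{v_{m-m'}}$ and $g$ is real on $\T$, I can read this off as $\Pi\,\wt P_n(\psi)\,\Pi=\overline{\wt P_n\big(-\psi-(nd-1)d^{-1}\omega\big)}$ (for real $\psi$). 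Choosing $\psi=\theta-\tfrac{nd-1}{2d}\omega$ gives $-\psi-(nd-1)d^{-1}\omega=-\theta-\tfrac{nd-1}{2d}\omega$, so with $E\in\R$ and $\det\overline N=\overline{\det N}$,
\begin{align*}
f_{E,n}\big(\theta-\tfrac{nd-1}{2d}\omega\big)
&=\det\big(\Pi\,\wt P_n(\psi)\,\Pi-E\big)\\
&=\overline{\det\big(\wt P_n\big(-\theta-\tfrac{nd-1}{2d}\omega\big)-E\big)}=\overline{f_{E,n}\big(-\theta-\tfrac{nd-1}{2d}\omega\big)}.
\end{align*}

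The main obstacle is the bookkeeping in the second paragraph: correctly matching each entry of $P_n(\theta)$ to its physical site $m$ through the reversed ordering of the blocks in \eqref{def:Pn} and of the intra-block indices in \eqref{def:tH_BV}, and checking that $B$, $B^{(*)}$ and the corner blocks assemble into exactly the cyclic nearest-$d$-neighbor hopping $v_{m-m'}$. Once that is in place the lemma is a one-line similarity computation. (For $\theta\notin\T$ one uses $-\bar\theta-\tfrac{nd-1}{2d}\omega$ in place of $-\theta-\tfrac{nd-1}{2d}\omega$; the stated identity then propagates off the real torus exactly as in Fact~\ref{fact:fn_real_zero}.)
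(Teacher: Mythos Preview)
Your argument is correct and is essentially the paper's proof, just carried out at the scalar level rather than the block level: both conjugate by the full order-reversing permutation on $nd$ sites and use $g(-\cdot)=g(\cdot)$ together with $\overline{v_k}=v_{-k}$. The paper factors this reversal as (block-reversal)$\,\circ\,$(intra-block $J$-conjugation) and records the effect of $J$ on $V$ and $B$ in an auxiliary lemma, whereas you first unpack $P_n(\theta)$ into the scalar periodic matrix $\wt P_n(\theta)$ and then apply the single $nd\times nd$ reversal $\Pi$; the bookkeeping you flag is precisely the step that replaces that auxiliary lemma.
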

\begin{proof}
We first establish the following property of the matrix potential~$V$ and coefficient matrix~$B$ from~\eqref{def:tH_BV}.

\begin{lemma}\label{lem:symmetry}
For matrix $J$ as in \eqref{def:J} below, one has $JV(\theta-\frac{d-1}{2d}\omega)J^{-1}=V^T(-\theta-\frac{d-1}{2d}\omega)$ and \linebreak $JBJ^{-1}=B^T$.
\end{lemma}
\begin{proof}
In the following, we write $g(\theta,m):=g(\theta+md^{-1}\omega)$ for simplicity.
Recall that $B,V$ are as in~\eqref{def:tH_BV}. 
Let
\begin{align}\label{def:J}
    J=\left(\begin{matrix}
        & & & &1\\
        & & &1 &\\
        & &\udots & &\\
        &1 & & &\\
        1 & & & &
    \end{matrix}\right)_{d\times d}
\end{align}
We have
\begin{align}\label{eq:JVJ=V}
        JV(\theta-\frac{d-1}{2d}\omega)J^{-1} =&
        J\left(\begin{matrix}
g(\theta,\frac{d-1}{2}) &v_1 &\cdots &v_{d-2} &v_{d-1}\\
           \overline{v_1} &g(\theta,\frac{d-3}{2})&\ddots &\ddots &v_{d-2}\\
\vdots &\ddots &\ddots &\ddots &\vdots          \\
\overline{v_{d-2}} &\ddots &\ddots &g(\theta,-\frac{d-3}{2}) &v_1\\
\overline{v_{d-1}}         &\overline{v_{d-2}} &\cdots &\overline{v_1} &g(\theta,-\frac{d-1}{2})
        \end{matrix}\right)J^{-1}\notag\\
=&\left(\begin{matrix}
g(\theta,-\frac{d-1}{2}) &\overline{v_1} &\cdots &\overline{v_{d-2}} &\overline{v_{d-1}}\\
           v_1 &g(\theta,-\frac{d-3}{2}) &\ddots &\ddots &\overline{v_{d-2}}\\
\vdots &\ddots &\ddots &\ddots &\vdots          \\
v_{d-2} &\ddots &\ddots &g(\theta,\frac{d-3}{2})&\overline{v_1}\\
v_{d-1} &v_{d-2} &\cdots &v_1 &g(\theta,\frac{d-1}{2})
        \end{matrix}\right) \notag\\
=&\left(\begin{matrix}
g(-\theta,\frac{d-1}{2}) &\overline{v_1} &... &\overline{v_{d-2}} &\overline{v_{d-1}}\\
           v_1 &g(-\theta,\frac{d-3}{2}) &\ddots &\ddots &\overline{v_{d-2}}\\
\vdots &\ddots &\ddots &\ddots &\vdots          \\
v_{d-2} &\ddots &\ddots &g(-\theta,-\frac{d-3}{2})&\overline{v_1}\\
v_{d-1} &v_{d-2} &\cdots &v_1 &g(-\theta,-\frac{d-1}{2})
        \end{matrix}\right) \\
=&V^T(-\theta-\frac{d-1}{2d}\omega) \notag
\end{align}
To pass to \eqref{eq:JVJ=V} we used that $g$ is even.
It is also easy to check that 
\begin{align}
JBJ^{-1}
=\left(\begin{matrix}
\overline{v_d} & & & &\\
\overline{v_{d-1}}             &\ddots & &\\
\vdots &\ddots &\ddots & &          \\
\overline{v_2}&\ddots &\ddots &\ddots &\\
\overline{v_1} &\overline{v_2} &\cdots &\overline{v_{d-1}}  &\overline{v_d}
        \end{matrix}\right)=B^T.
\end{align}
Hence we have proved the claimed identities.
\end{proof}

In the following we write $V(\theta,k):=V(\theta+k\omega)-E$ (suppressing $E$ in the notation for simplicity).
One has 
\begin{align}
&f_{E,n}(\theta-\frac{nd-1}{2d}\omega)=f_{E,n}(\theta-\frac{n-1}{2}\omega-\frac{d-1}{2d}\omega)\\
&=
\det\left(\begin{matrix}
V(\theta,\frac{n-1}{2}-\frac{d-1}{2d}) & B^* & &  &B\\
B &V(\theta,\frac{n-3}{2}-\frac{d-1}{2d}) &\ddots \\
& \ddots &\ddots &\ddots \\
& &\ddots &\ddots &B^*\\
B^* & & &B &V(\theta,-\frac{n-1}{2}-\frac{d-1}{2d})
\end{matrix}\right)\\
    &=\det\left(\begin{matrix}
    JV(\theta,\frac{n-1}{2}-\frac{d-1}{2d})J^{-1} & JB^*J^{-1} & &  &\!\!\!\!JBJ^{-1}\\
JBJ^{-1} &\!\!\!\!JV(\theta,\frac{n-3}{2}-\frac{d-1}{2d})J^{-1} &\ddots \\
& \ddots &\ddots &\ddots \\
& &\ddots &\ddots &\!\!\!\!JB^*J^{-1}\\
JB^*J^{-1} & & &JBJ^{-1} &\!\!\!\!JV(\theta,-\frac{n-1}{2}-\frac{d-1}{2d})J^{-1} 
    \end{matrix}\right)
    \end{align}
Reordering the variables one sees that the previous line equals
    \begin{align}
  &\det\left(\begin{matrix}
    JV(\theta,-\frac{n-1}{2}-\frac{d-1}{2d})J^{-1} & JBJ^{-1} & &  &\!\!\!\! JB^*J^{-1}\\
JB^*J^{-1} &JV(\theta,-\frac{n-3}{2}-\frac{d-1}{2d})J^{-1} &\ddots \\
& \ddots &\ddots &\ddots \\
& &\ddots &\ddots &\!\!\!\! JBJ^{-1}\\
JBJ^{-1} & & &JB^*J^{-1} &\!\!\!\! JV(\theta,\frac{n-1}{2}-\frac{d-1}{2d})J^{-1} 
    \end{matrix}\right)\\
    &=\det\left(\begin{matrix}
    \overline{V^*(-\theta,\frac{n-1}{2}-\frac{d-1}{2d})} & \overline{B^*} & &  &\overline{B}\\
\overline{B} &\overline{V^*(-\theta,\frac{n-3}{2}-\frac{d-1}{2d})} &\ddots \\
& \ddots &\ddots &\ddots \\
& &\ddots &\ddots &\overline{B^*}\\
\overline{B^*} & & &\overline{B} &\overline{V^*(-\theta,-\frac{n-1}{2}-\frac{d-1}{2d})}
    \end{matrix}\right) \\
&=\overline{f_{E,n}(-\theta-\frac{nd-1}{2d}\omega)}, 
\end{align}
as claimed.
\end{proof}
Lemma \ref{lem:f_even} implies that
\begin{align}
    f_{E,n}(z)=\overline{f_{E,n}(\overline{z} e^{-2\pi i(nd-1)d^{-1}\omega})}, \text{ for any } z\in \mathcal{C}_1.
\end{align}
Since both sides of the above are holomorphic functions in $z$, they must be identical to each other. 

\begin{lemma}\label{lem:even_zero}
If $z$ is a zero of $f_{E,n}(z)$, then $\overline{z}e^{-2\pi i(nd-1)d^{-1}\omega}$ is also a zero.
\end{lemma}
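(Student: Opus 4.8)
The plan is to read off the lemma as an immediate consequence of the functional equation recorded just before its statement. Starting from Lemma~\ref{lem:f_even}, namely $f_{E,n}(\theta-\frac{nd-1}{2d}\omega)=\overline{f_{E,n}(-\theta-\frac{nd-1}{2d}\omega)}$ for all real $\theta$, I would substitute $\theta\mapsto\theta+\frac{nd-1}{2d}\omega$ and pass to the multiplicative variable $z=e^{2\pi i\theta}$ to obtain
\begin{align}
f_{E,n}(z)=\overline{f_{E,n}\bigl(\overline{z}\,e^{-2\pi i(nd-1)d^{-1}\omega}\bigr)}\qquad\text{for }z\in\mathcal{C}_1 .
\end{align}

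Next I would upgrade this from $\mathcal{C}_1$ to the full annulus $\mathcal{A}_{e^{2\pi\eta}}$ on which $f_{E,n}$ is holomorphic. The key observation is that the map $z\mapsto\overline{z}\,e^{-2\pi i(nd-1)d^{-1}\omega}$ preserves moduli (here one uses $\omega\in\mathbb{T}=\mathbb{R}/\mathbb{Z}$, so the rotation factor has absolute value $1$), hence sends $\mathcal{A}_{e^{2\pi\eta}}$ into itself; consequently $z\mapsto\overline{f_{E,n}(\overline{z}\,e^{-2\pi i(nd-1)d^{-1}\omega})}$ is a holomorphic function of $z$ on $\mathcal{A}_{e^{2\pi\eta}}$. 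Two holomorphic functions on $\mathcal{A}_{e^{2\pi\eta}}$ that coincide along $\mathcal{C}_1$ coincide everywhere by the identity theorem, so the displayed identity holds for all $z\in\mathcal{A}_{e^{2\pi\eta}}$.

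Finally, if $z_0\in\mathcal{A}_{e^{2\pi\eta}}$ satisfies $f_{E,n}(z_0)=0$, then evaluating the identity at $z=z_0$ gives $0=\overline{f_{E,n}(\overline{z_0}\,e^{-2\pi i(nd-1)d^{-1}\omega})}$, i.e.\ $\overline{z_0}\,e^{-2\pi i(nd-1)d^{-1}\omega}$ is also a zero of $f_{E,n}$, as claimed. There is no real obstacle here — the statement is a direct corollary of Lemma~\ref{lem:f_even} — and the only point that deserves a line of care is the domain check (validity of $\omega$ being real) that makes the identity theorem applicable on the whole annulus rather than merely on the circle.
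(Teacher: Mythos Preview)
Your proof is correct and follows essentially the same approach as the paper: derive the identity $f_{E,n}(z)=\overline{f_{E,n}(\overline{z}\,e^{-2\pi i(nd-1)d^{-1}\omega})}$ on $\mathcal{C}_1$ from Lemma~\ref{lem:f_even}, extend to the annulus by the identity theorem since both sides are holomorphic, and read off the zero correspondence. You are in fact slightly more careful than the paper in explicitly verifying that the reflection map preserves the annulus.
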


Following the same arguments as in the proof of~\cite[Lemma 6.2]{HS2}, we obtain a complexity bound on the large deviation set of the determinant using the upper bound $2n(d+\nu)$ on the zero count  as in Theorem~\ref{thm:Riesz_un}, together with Lemma~\ref{lem:even_zero}. 

\begin{lemma}\label{lem:complexity_1}
For any $\nu\in (0,1/10)$, and any large $\kappa_0$-admissible $n$ (see \eqref{def:admissible} in Lemma \ref{lem:deno}), there exists an integer $N\leq n(d+\nu)$ and a collection of intervals $\mathcal{F}_{E,n}=\bigcup_{j=1}^N U_j$ such that
the following large deviation set satisfies 
\begin{align}\label{def:wt_B_fEn}
\widetilde{\mathcal{B}}_{f,E,n}&:=\left\{\theta\in \T:\, \log |f_{E,n}(\theta)|<n(\log |\det B|+L^d(\omega,M_E))-n^{1-\frac{\delta_1}{2}}\right\}\\
&\subseteq \bigcup_{j=1}^N (U_j\cup (-U_j-(nd-1)d^{-1}\omega)),
\end{align}
in which each $\mathrm{mes}(U_j)\leq e^{-n^{\delta_1/2}}$.
\end{lemma}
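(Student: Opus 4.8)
The plan is to reproduce the numerator--count argument of \cite[Lemma 6.2]{HS2} and then to cut the number of arcs in half using the extra symmetry from Lemma~\ref{lem:even_zero}.

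First, the hypotheses of Theorem~\ref{thm:acceleration=1} force $\kappa^d(\omega,M_E)=d$: indeed $\kappa^d(d^{-1}\omega,A_E)=1$, and then \eqref{eq:kappa_M=dA} gives $\kappa^d(\omega,M_E)=d\cdot\kappa^d(d^{-1}\omega,A_E)=d$. Feeding this into Theorem~\ref{thm:Riesz_un} (with the same $\nu\in(0,1/10)$) yields, for all large $\kappa_0$-admissible $n$, the zero count $N_n(E,\eta/3)\le 2n(d+\nu)$; i.e.\ $f_{E,n}(z)$ has at most $2n(d+\nu)$ zeros $w_1,\dots,w_{N_n}$ in $\overline{\mathcal{A}_{e^{2\pi\eta/3}}}$, with the identification $z=e^{2\pi i\theta}$.

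Next comes the covering of $\widetilde{\mathcal{B}}_{f,E,n}$, obtained essentially verbatim from \cite[Lemma 6.2]{HS2}: writing $w_k=\rho_ke^{2\pi i\theta_k}$, one uses the Riesz representation \eqref{eq:f=G+h} of $\tfrac1n\log|f_{E,n}|$ on $\mathcal{A}_R$, the near-constancy \eqref{eq:h_constant_un} of its harmonic part on $\mathcal{C}_1$ (which rests on Lemmas~\ref{lem:deno} and~\ref{lem:fn_ave}), the pointwise upper bound of Lemma~\ref{lem:fn_upper}, and the explicit Green's function \eqref{def:Green_annulus}, to show that away from exponentially small neighborhoods of $\{\theta_k\}$ one has $\log|f_{E,n}(\theta)|\ge n(\log|\det B|+L^d(\omega,M_E))-n^{1-\delta_1/2}$. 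The exponent $1-\delta_1/2$ and the arc length $e^{-n^{\delta_1/2}}$ are the consistent choices dictated by the errors $n^{-\delta_2}$ in \eqref{eq:h_constant_un} and Lemma~\ref{lem:fn_ave}. This produces a collection $\{V_i\}_{i=1}^{M}$ of arcs, with $M\le N_n(E,\eta/3)\le 2n(d+\nu)$, each centered at some $\theta_k$, satisfying $\mathrm{mes}(V_i)\le e^{-n^{\delta_1/2}}$ and $\widetilde{\mathcal{B}}_{f,E,n}\subseteq\bigcup_iV_i$; we may assume the $V_i$ disjoint (merge overlapping ones, which only costs a harmless factor in the length bound).

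Finally, the symmetry reduction. By Lemma~\ref{lem:even_zero} (equivalently, the identity $f_{E,n}(z)=\overline{f_{E,n}(\bar z\,e^{-2\pi i(nd-1)d^{-1}\omega})}$), if $w$ is a zero of $f_{E,n}$ then so is $\bar w\,e^{-2\pi i(nd-1)d^{-1}\omega}$; passing to arguments, the set $\{\theta_k\}$ is invariant under the involution $\tau(\theta):=-\theta-(nd-1)d^{-1}\omega$ of $\T$, and $|f_{E,n}|$ is $\tau$-symmetric near each $\theta_k$, so the arcs $V_i$ may be taken $\tau$-symmetric too. Since $\tau$ is an orientation-reversing involution of $\T$, it has exactly two fixed points, hence at most two of the $V_i$ are $\tau$-invariant while the rest fall into two-element orbits $\{V_i,\tau(V_i)\}$. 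Choosing one arc per $2$-orbit together with the at-most-two fixed ones gives arcs $U_1,\dots,U_N$ with $N\le n(d+\nu)+2$, $\mathrm{mes}(U_j)\le e^{-n^{\delta_1/2}}$, and $\widetilde{\mathcal{B}}_{f,E,n}\subseteq\bigcup_{j=1}^N\bigl(U_j\cup\tau(U_j)\bigr)$, where $\tau(U_j)=-U_j-(nd-1)d^{-1}\omega$; for $n$ large the additive $2$ is absorbed into $n(d+\nu)$ by slightly enlarging $\nu$, giving the claim. The substance is entirely in the third paragraph: transporting the one-dimensional Schr\"odinger covering argument of \cite[Lemma 6.2]{HS2} to the block Jacobi setting, i.e.\ deriving the arc covering from the annulus Riesz representation together with the $d\ge2$ determinant estimates (Lemmas~\ref{lem:deno},~\ref{lem:fn_ave},~\ref{lem:fn_upper}) and the zero count of Theorem~\ref{thm:Riesz_un}. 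The only genuinely new input beyond \cite{HS2} is the observation that the (translated) evenness of $\theta\mapsto\log|f_{E,n}(e^{2\pi i\theta})|$ forces $\widetilde{\mathcal{B}}_{f,E,n}$ to be $\tau$-invariant, which then halves the arc count.
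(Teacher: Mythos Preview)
Your proposal is correct and follows essentially the same approach as the paper: the paper's proof amounts to the sentence ``follow \cite[Lemma~6.2]{HS2}, using the zero count $N_n(E,\eta/3)\le 2n(d+\nu)$ from Theorem~\ref{thm:Riesz_un} together with the symmetry of Lemma~\ref{lem:even_zero},'' and you have spelled out exactly these ingredients (Riesz representation, near-constancy of the harmonic part, arc covering around the zeros, then halving via the involution $\tau(\theta)=-\theta-(nd-1)d^{-1}\omega$). The paper also notes in a remark that the factor $1/2$ in $\delta_1$ comes from a Cartan estimate inside the \cite{HS2} argument, which is consistent with your handling of the exponent.
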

\begin{remark}
    Note the $\widetilde{\mathcal{B}}_{f,E,n}$ differs from $\mathcal{B}_{f,E,n,\varepsilon=0}$ in Lemma \ref{lem:deno} by a factor $1/2$ in $\delta_1$. This is due to the application of the Cartan estimate in the proof, see \cite[Lemma 6.2]{HS2}.
\end{remark}

The rest of the proof follows the same strategy as in \cite{HS2} with some minor modifications. We sketch the argument below.

\begin{lemma}\label{lem:I1_I2}
Let $\omega\in \mathrm{DC}_{a,A}$ for some $a>0$ and $A>1$.
For any large $\kappa_0$-admissible $n$, and any $y\in \Z$ such that\footnote{The proof for negative $y$ is analogous by symmetry.} $nd<y<10nd$, 
let 
\begin{align}
I_1:=&[-[\frac{7}{8}nd], -[\frac{1}{8}nd]]\\
I_2:=&[y-[\frac{7}{8}nd], y-[\frac{1}{8}nd]],
\end{align}
where $[x]$ stands for the integer part of $x\in \R$.
There exists $\ell\in I_1\cup I_2$ such that 
\begin{align*}
\theta+\ell d^{-1} \omega\notin \bigcup_{j=1}^{N} (U_j \cup (-(nd-1)d^{-1}\omega-U_j)).
\end{align*}
\end{lemma}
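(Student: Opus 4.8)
The plan is to argue by contradiction, combining a pigeonhole count with two arithmetic separation estimates. Suppose that $\theta+\ell d^{-1}\omega\in B:=\bigcup_{j=1}^N\bigl(U_j\cup(-(nd-1)d^{-1}\omega-U_j)\bigr)$ for \emph{every} $\ell\in I_1\cup I_2$. Since $I_1$ consists of negative integers and $I_2$ of positive ones, they are disjoint, and each contains more than $\tfrac{3}{4}nd$ integers (as $[\tfrac{7}{8}nd]-[\tfrac{1}{8}nd]+1>\tfrac{3}{4}nd$); moreover the points $\theta+\ell d^{-1}\omega$, $\ell\in I_1\cup I_2$, are pairwise distinct because $\omega$ is irrational. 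Thus $B$ would have to contain more than $\tfrac{3}{2}nd$ distinct points of this orbit. The goal is to show this is impossible: the paired structure of $B$, together with $\omega\in\mathrm{DC}_{a,A}$ and $\theta\in\Theta_d$, forces each of the $N$ pairs $\{U_j,\,-(nd-1)d^{-1}\omega-U_j\}$ to capture at most one such point, hence at most $N\le n(d+\nu)$ in total; and $n(d+\nu)<\tfrac{3}{2}nd$ because $\nu<\tfrac{1}{10}\le\tfrac{d}{2}$.

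First I would establish the \emph{within-interval} separation: if $\theta+\ell_1 d^{-1}\omega$ and $\theta+\ell_2 d^{-1}\omega$ with $\ell_1\ne\ell_2$ both lie in a single interval of length $\le e^{-n^{\delta_1/2}}$, then $\|(\ell_1-\ell_2)d^{-1}\omega\|_{\T}\le 3e^{-n^{\delta_1/2}}$. But $0<|\ell_1-\ell_2|<11nd$, and $\|kd^{-1}\omega\|_{\T}\ge d^{-1}\|k\omega\|_{\T}\ge a\bigl(d|k|^A\bigr)^{-1}$ for $0\ne k\in\Z$ by $\omega\in\mathrm{DC}_{a,A}$, so the left side is $\gtrsim n^{-A}$, contradicting $e^{-n^{\delta_1/2}}\ll n^{-A}$ once $n$ is large. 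Since by Lemma~\ref{lem:complexity_1} each $U_j$, hence also each $-(nd-1)d^{-1}\omega-U_j$, has length $\le e^{-n^{\delta_1/2}}$, this shows that each of the $2N$ intervals making up $B$ contains at most one orbit point.

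The decisive step is the \emph{across-a-pair} separation, which is exactly where $\theta\in\Theta_d$ enters. Suppose, for some $j$, that $\theta+\ell_1 d^{-1}\omega$ lies in $U_j$ while $\theta+\ell_2 d^{-1}\omega$ lies in $-(nd-1)d^{-1}\omega-U_j$, with $\ell_1,\ell_2\in I_1\cup I_2$ (possibly equal). Adding the two defining relations (relative to the common centre of $U_j$) and using the triangle inequality gives $\|2\theta-m\omega\|_{\T}\le 3e^{-n^{\delta_1/2}}$ with $m:=-(\ell_1+\ell_2+nd-1)/d$. Since $\ell_1,\ell_2,nd-1\in\Z$ we have $m\in d^{-1}\Z$, and $|m|<22n$ because $|\ell_i|<10nd$ for $\ell_i\in I_1\cup I_2$. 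But $\theta\in\Theta_d$ gives $\|2\theta-m\omega\|_{\T}\ge a'(1+|m|)^{-t}\gtrsim n^{-t}$, again contradicting $e^{-n^{\delta_1/2}}\ll n^{-t}$ for $n$ large. Combined with the previous step, this shows that for each $j$ the set $U_j\cup(-(nd-1)d^{-1}\omega-U_j)$ contains at most one orbit point in total; summing over $j$, at most $N$ of the $>\tfrac{3}{2}nd$ orbit points lie in $B$ — a contradiction. Hence some $\ell\in I_1\cup I_2$ satisfies $\theta+\ell d^{-1}\omega\notin B$.

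I expect the main obstacle to be conceptual rather than computational: the Diophantine condition on $\omega$ alone only prevents two orbit points from landing in the \emph{same} exceptional interval, which caps the bad indices at $2N$ — not enough, since $|I_1\cup I_2|\approx\tfrac{3}{2}nd<2n(d+\nu)$. One must additionally exploit the even symmetry that produced the $\pm$-pairing of the $U_j$'s (Lemma~\ref{lem:even_zero}) and the arithmetic hypothesis $\theta\in\Theta_d$ to forbid a resonance at $U_j$ occurring simultaneously with one at its reflection; only then does the count close. Everything else (the bounds $|\ell_1-\ell_2|<11nd$, $|m|<22n$, and the inequality $n(d+\nu)<\tfrac{3}{2}nd$) is routine bookkeeping.
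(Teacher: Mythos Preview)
Your proposal is correct and follows essentially the same strategy as the paper: a pigeonhole argument using $\#(I_1\cup I_2)>n(d+\nu)\ge N$, together with the two separation estimates (Diophantine $\omega$ for two orbit points in the same interval, and the arithmetic condition $\theta\in\Theta_d$ for two orbit points landing in a $U_j$ and its reflection $-(nd-1)d^{-1}\omega-U_j$). Your identification of the conceptual point---that the $\Theta_d$ hypothesis is precisely what upgrades the count from $2N$ to $N$---is exactly right, and your bookkeeping bounds ($|\ell_1-\ell_2|<11nd$, $|m|\lesssim n$) match the paper's.
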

\begin{proof}
First note that if $\omega\in \mathrm{DC}_{a,A}$, then for any $k\in \Z\setminus \{0\}$ 
\begin{align}\label{eq:domega_Dio}
    \|kd^{-1}\omega\|_{\T}\geq \frac{a}{d |k|^{A}}.
\end{align}
Suppose otherwise, we have for some $k_0\neq 0$ that
\begin{align}
|k_0d^{-1}\omega-p|<\frac{a}{d|k_0|^A}
\end{align}
for some $p\in \Z$, which implies $\|k_0\omega\|_{\T}\leq |k_0\omega-dp|<\frac{a}{|k_0|^A}$ contradicting  $\omega\in \mathrm{DC}_{a,A}$.

Next, note that the cardinality
\begin{align}
\# I_1+\# I_2\geq \frac{3}{2} nd-2> n(d+\nu)\geq N,
\end{align}
for $n$ large enough.
It then suffices to prove that each pair $U_j\cup (-(nd-1)d^{-1}\omega-U_j)$ consists of at most one point in  $\{\theta+\ell d^{-1}\omega\}_{\ell\in I_1\cup I_2}$. 
Arguing by contradiction, suppose there exist $\ell_1, \ell_2$ such that
\begin{align*}
\theta+\ell_1 d^{-1}\omega\in U_j, \text{ and } \theta+\ell_2 d^{-1}\omega\in U_j.
\end{align*}
Then by \eqref{eq:domega_Dio} and that $|\ell_1-\ell_2|<11nd$, 
\begin{align*}
|U_j|\geq \|\theta+\ell_1 d^{-1}\omega-(\theta+\ell_2 d^{-1}\omega)\|_{\T}=\|(\ell_1-\ell_2)d^{-1}\omega\|_{\T}\geq \frac{a}{d(11nd)^A}>e^{-n^{\delta_1/2}},
\end{align*}
contradicting Lemma \ref{lem:complexity_1}.
The case when 
\begin{align*}
\theta+\ell_1d^{-1}\omega\in (-(nd-1)d^{-1}\omega-U_j), \text{ and } \theta+\ell_2d^{-1}\omega\in (-(nd-1)d^{-1}\omega-U_j). 
\end{align*}
is similar.
In fact, suppose there exist $\ell_1, \ell_2$ such that
\begin{align*}
\theta+\ell_1d^{-1}\omega\in U_j, \text{ and } \theta+\ell_2d^{-1}\omega\in (-(nd-1)d^{-1}\omega-U_j). 
\end{align*}
Since $\theta\in (\Theta_d)^c$, there exists $a'>0$ and $t>1$ such that for $k\in d^{-1}\Z$ large enough, one has 
\begin{align*}
\|2\theta+k\omega\|_{\T}\geq \frac{a'}{|k|^t}.
\end{align*}
Using that $-3nd/4\leq \ell_1+\ell_2+nd\leq 11nd$, we infer that 
\begin{align*}
|U_j|\geq &\|\theta+\ell_1d^{-1}\omega-(-\theta-\ell_2d^{-1}\omega-(nd-1)d^{-1}\omega\|_{\T}\\
=&\|2\theta+(\ell_1+\ell_2+nd-1)d^{-1}\omega\|_{\T}\geq \frac{a'}{(11n)^{t}}\geq e^{-n^{\delta_1/2}}.
\end{align*}
This contradicts with Lemma \ref{lem:complexity_1} again. 
Thus the claimed results hold.
\end{proof}

Next, we show the following.
\begin{lemma}\label{lem:I1_small}
Under the same conditions as Lemma \ref{lem:I1_I2}.
For any $\ell\in I_1$, one has $\theta+\ell d^{-1}\omega\in \bigcup_{j=1}^{N}(U_j\cup (-(nd-1)d^{-1}\omega-U_j))$.
\end{lemma}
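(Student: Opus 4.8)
The plan is to argue by contradiction, using the Poisson formula and the normalization $\max(|u_0|,|u_1|)=1$ of the generalized eigenfunction $u$ fixed at the start of the proof of Theorem~\ref{thm:acceleration=1}. Fix a large $\kappa_0$-admissible $n$ and suppose, toward a contradiction, that some $\ell_0\in I_1$ satisfies $\theta+\ell_0 d^{-1}\omega\notin\bigcup_{j=1}^N\bigl(U_j\cup(-(nd-1)d^{-1}\omega-U_j)\bigr)$. By the covering statement of Lemma~\ref{lem:complexity_1}, this forces $\theta+\ell_0 d^{-1}\omega\notin\widetilde{\mathcal{B}}_{f,E,n}$, i.e.\ $\log|f_{E,n}(\theta+\ell_0 d^{-1}\omega)|\ge n(\log|\det B|+L^d(\omega,M_E))-n^{1-\delta_1/2}$.

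First I would convert this lower bound on the determinant into exponential decay of the periodic–boundary–condition Green's function at phase $\theta+\ell_0 d^{-1}\omega$. Combining the bound above with Cramer's rule~\eqref{eq:mufn} and the numerator estimate of Lemma~\ref{lem:numerator}, and using $L^d-L^{d-1}=L_d$, one gets $|G_{E,n}(\theta+\ell_0 d^{-1}\omega;x,y)|\le C_{d,B}\,e^{n\varepsilon+n^{1-\delta_1/2}}\bigl(e^{-\lfloor y/d\rfloor L_d}+e^{-(n-\lfloor y/d\rfloor)L_d}\bigr)$ for $x$ an edge index of $[0,nd-1]$ and $y$ interior, with $\varepsilon>0$ as small as desired (here $\langle\log|\det B|\rangle=\log|\det B|$ since $B$ is constant, so these terms cancel in $\mu/f$). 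The geometric point is that, because $\ell_0\in[-[7nd/8],-[nd/8]]$, the scalar window $[\ell_0,\ell_0+nd-1]$ — which is exactly the window whose $d$-periodic re-blocking of $\tilde H_{\theta+\ell_0 d^{-1}\omega}$ has Dirichlet determinant $f_{E,n}(\theta+\ell_0 d^{-1}\omega)$ — contains both sites $0$ and $1$ at distance at least $nd/8-2$ from either edge. Taking $\varepsilon=L_d/16$, the displayed estimate yields $|G_{E,n}(\theta+\ell_0 d^{-1}\omega;x,y)|\le e^{-(n/16)L_d+O(n^{1-\delta_1/2})}$ whenever $y$ is the index of site $0$ or of site $1$ in this window.

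Next I would feed this into the Poisson formula~\eqref{eq:Poisson_exp}, applied to $\tilde u_m:=u_{m+\ell_0}$, which solves the eigenvalue equation for $\tilde H_{\theta+\ell_0 d^{-1}\omega}$, on the window $[0,nd-1]$ with $m\in\{0,1\}$. This expresses $u_0$ and $u_1$ through the quantities $G_{E,n}(\theta+\ell_0 d^{-1}\omega;\,m-\ell_0,\,y)$ with $y$ ranging over the $2d$ edge indices, times boundary values $u_{\ell_0+\ell'}$ with $|\ell'|\le (n+1)d$. Using the symmetry $|G_{E,n}(\psi;x,y)|=|G_{E,n}(\psi;y,x)|$ valid for real $\psi$ (as $P_n(\psi)-E$ is self-adjoint and invertible here), the Green's function decay just obtained, and the Schnol bound $|u_k|\le C|k|\le 2Cnd$ on the boundary values, one arrives at $\max(|u_0|,|u_1|)\le C_{d,B}\,n\,e^{-(n/16)L_d+O(n^{1-\delta_1/2})}$. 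Since $L_d=L_d(\omega,M_E)>0$ on the energy set under consideration, the right-hand side tends to $0$ as $n\to\infty$, so for $n$ large it is $<1$, contradicting $\max(|u_0|,|u_1|)=1$. Hence no such $\ell_0\in I_1$ exists, which is the assertion.

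The routine parts are the Cramer's-rule manipulation and the comparison of an exponentially small Green's function against the polynomial Schnol growth. The one step that needs care — and which I expect to be the main obstacle — is the bookkeeping identifying the ``block'' objects ($P_n$, $f_{E,n}$, $G_{E,n}$, and the Poisson formula~\eqref{eq:Poisson_exp}, all set up for the operator~\eqref{eq:BVsys} and its $d$-periodic re-blocking) with the shifted scalar window $[\ell_0,\ell_0+nd-1]$ for $\ell_0$ \emph{not} a multiple of $d$. Concretely, one must check that re-expanding $\tilde H_\theta$ in blocks of $d$ consecutive sites starting at site $\ell_0$ produces precisely the block operator~\eqref{eq:BVsys} (with the $V,B$ of~\eqref{def:tH_BV}) at phase $\theta+\ell_0 d^{-1}\omega$, so that Lemmas~\ref{lem:numerator} and~\ref{lem:complexity_1}, and~\eqref{eq:Poisson_exp}, apply verbatim at that phase; this uses that the hopping coefficients $v_k$ of $\tilde H_\theta$ are independent of the site index.
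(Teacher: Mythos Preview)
Your proposal is correct and follows essentially the same approach as the paper: argue by contradiction, combine the determinant lower bound from Lemma~\ref{lem:complexity_1} with the numerator estimate of Lemma~\ref{lem:numerator} via Cramer's rule to get exponential Green's function decay at the shifted phase, then feed this into the Poisson formula~\eqref{eq:Poisson_exp} together with the Schnol bound~\eqref{eq:Schnol} to contradict $\max(|u_0|,|u_1|)=1$. The Green's function symmetry you invoke and the re-blocking bookkeeping you flag are exactly the points the paper uses (implicitly) as well.
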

\begin{proof}
Argue by contradiction. 
Suppose there exists $\ell_1\in I_1$ such that 
$$\theta+\ell_1\alpha\notin \bigcup_{j=1}^{N}(U_j\cup (-(nd-1)d^{-1}\omega-U_j)).$$
By Lemma \ref{lem:complexity_1}, it is necessary that $\theta+\ell_1d^{-1}\omega\notin \widetilde{\mathcal{B}}_{f,n,E}$, which implies
\begin{align}\label{eq:u_I1_large}
\frac{1}{n}\log |f_{E,n}(\theta+\ell_1d^{-1}\omega)|\geq L^d+\log |\det B|-n^{-\delta_1/2}.
\end{align}
Let $\ell_2:=\ell_1+nd-1$. 
By Lemma \ref{lem:numerator}, for any $\varepsilon_1>0$, we see that for $n$ large enough, 
\begin{align}\label{eq:Dk_upper}
&\sup_{m\in \{0,...,d-1\}}\max(\log |\mu_{n,m,-\ell_1}(\theta+\ell_1d^{-1}\omega)|, \log |\mu_{n,nd-1-m,-\ell_1}(\theta+\ell_1 d^{-1}\omega)|)\\
&\qquad\qquad \leq n\cdot \log |\det B|+\max(-[\ell_1d^{-1}] L^{d-1}+[\ell_2d^{-1}] L^d, -[\ell_1d^{-1}] L^d+[\ell_2d^{-1}] L^{d-1})+n\varepsilon_1
\end{align}
This implies by \eqref{eq:mufn} that for any $m\in \{0,...,d-1\}$,
\begin{align}\label{eq:Green_I1}
    |G_{E,n}(\theta+\ell_1d^{-1}\omega; -\ell_1, m)|=&\frac{|\mu_{n,m,-\ell_1}(\theta+\ell_1d^{-1}\omega)|}{|f_{E,n}(\theta+\ell_1d^{-1}\omega)|}\leq \max(e^{[\ell_1d^{-1}]L_d}, e^{-[\ell_2d^{-1}] L_d})\cdot  e^{n\varepsilon_1}\\
    |G_{E,n}(\theta+\ell_1d^{-1}\omega; -\ell_1, nd-1-m)=&\frac{|\mu_{n,nd-1-m,-\ell_1}(\theta+\ell_1d^{-1}\omega)|}{|f_{E,n}(\theta+\ell_1d^{-1}\omega)|}\leq \max(e^{[\ell_1d^{-1}] L_d}, e^{-[\ell_2d^{-1}] L_d})\cdot  e^{n\varepsilon_1}.
\end{align}
Combining the above with \eqref{eq:Poisson_exp} and \eqref{eq:Schnol}, we have
\begin{align}\label{eq:phi0}
|u_0|
\leq C\sum_{k\in \{0,...,d-1\}\cup \{(n-1)d,...,nd-1\}}\max(e^{[\ell_1d^{-1}] L_d}, e^{-[\ell_2d^{-1}] L_d})\cdot  e^{n\varepsilon_1} \cdot (nd)
\leq e^{-\frac{1}{10} n L_d}.
\end{align}
invoking $\min(|\ell_1|, |\ell_2|)\geq [nd/8]$. 
Similarly, one shows that $|u_1|<1/2$.
Hence we arrive at a contradiction with the assumption that $\max(|u_0|, |u_1|)=1$.
\end{proof}

Combining Lemmas \ref{lem:I1_I2} with \ref{lem:I1_small} yields
\begin{corollary}\label{cor:I2_large}
Under the same conditions as Lemma \ref{lem:I1_I2}.
There exists $\ell_3\in I_2$ such that $\theta+\ell_3\alpha\notin \widetilde{\mathcal{B}}_{f,E,n}$.
\end{corollary}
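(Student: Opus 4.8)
The plan is to combine the two preceding lemmas by a simple counting argument. We have $\#I_1 + \#I_2 \geq \tfrac{3}{2}nd - 2 > N$ for $n$ large (Lemma~\ref{lem:I1_I2}), and we want to produce a point $\theta + \ell_3 d^{-1}\omega$ with $\ell_3 \in I_2$ lying outside $\widetilde{\mathcal{B}}_{f,E,n}$, i.e.\ where the determinant $f_{E,n}$ is large.

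First I would invoke Lemma~\ref{lem:I1_I2}: there exists $\ell \in I_1 \cup I_2$ with $\theta + \ell d^{-1}\omega \notin \bigcup_{j=1}^N (U_j \cup (-(nd-1)d^{-1}\omega - U_j))$. In particular, by the inclusion in Lemma~\ref{lem:complexity_1}, such a point lies outside $\widetilde{\mathcal{B}}_{f,E,n}$. Next I would appeal to Lemma~\ref{lem:I1_small}: for \emph{every} $\ell \in I_1$, the point $\theta + \ell d^{-1}\omega$ \emph{does} lie in $\bigcup_{j=1}^N (U_j \cup (-(nd-1)d^{-1}\omega - U_j))$. Therefore the $\ell$ produced by Lemma~\ref{lem:I1_I2} cannot belong to $I_1$, so it must belong to $I_2$. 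Setting $\ell_3 := \ell \in I_2$, we conclude $\theta + \ell_3 d^{-1}\omega \notin \bigcup_j (U_j \cup (-(nd-1)d^{-1}\omega - U_j)) \supseteq \widetilde{\mathcal{B}}_{f,E,n}$, which is exactly the claim.

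There is essentially no obstacle here; this is a bookkeeping corollary. The only point requiring a moment's care is that both Lemmas~\ref{lem:I1_I2} and~\ref{lem:I1_small} are stated under identical hypotheses (large $\kappa_0$-admissible $n$, $nd < y < 10nd$, $\theta \in \Theta_d^c$, $\omega \in \mathrm{DC}_{a,A}$), so the two can be applied simultaneously and the disjointness of $I_1$ and $I_2$ (as subsets of $\Z$, given $y > nd > \tfrac{7}{8}nd$) guarantees the dichotomy ``$\ell \in I_1$ or $\ell \in I_2$'' is exclusive. The negative-$y$ case follows by the same symmetry already used in Lemma~\ref{lem:I1_I2}.

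\begin{proof}
By Lemma~\ref{lem:I1_I2}, there exists $\ell\in I_1\cup I_2$ such that
\[
\theta+\ell d^{-1}\omega\notin \bigcup_{j=1}^{N}\bigl(U_j\cup(-(nd-1)d^{-1}\omega-U_j)\bigr).
\]
On the other hand, Lemma~\ref{lem:I1_small} asserts that every $\ell'\in I_1$ satisfies $\theta+\ell'd^{-1}\omega\in\bigcup_{j=1}^N(U_j\cup(-(nd-1)d^{-1}\omega-U_j))$. Hence $\ell\notin I_1$, and since $I_1$ and $I_2$ are disjoint (as $y>nd$), we must have $\ell\in I_2$. Set $\ell_3:=\ell$. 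By Lemma~\ref{lem:complexity_1}, $\widetilde{\mathcal{B}}_{f,E,n}\subseteq\bigcup_{j=1}^N(U_j\cup(-(nd-1)d^{-1}\omega-U_j))$, so $\theta+\ell_3 d^{-1}\omega\notin\widetilde{\mathcal{B}}_{f,E,n}$, as claimed. The case $-10nd<y<-nd$ is handled analogously by symmetry.
\end{proof}
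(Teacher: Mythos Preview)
Your proof is correct and follows exactly the paper's approach: the paper simply states that the corollary follows by ``combining Lemmas~\ref{lem:I1_I2} with~\ref{lem:I1_small}'' without spelling out the argument, and your write-up is precisely that combination (Lemma~\ref{lem:I1_I2} produces an $\ell\in I_1\cup I_2$ avoiding the covering of $\widetilde{\mathcal{B}}_{f,E,n}$, Lemma~\ref{lem:I1_small} rules out $I_1$, and Lemma~\ref{lem:complexity_1} gives the inclusion).
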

The proof of Anderson localization then follows from a similar argument as in the proof of Lemma~\ref{lem:I1_small}.
Indeed, similarly to \eqref{eq:phi0}, one has
\begin{align}\label{eq:phiy}
|\phi_y|\leq e^{-\frac{1}{10}nL_d}\leq e^{-\frac{L_d}{100d}y}.
\end{align}
This proves the claimed result.

\subsection{Proof of Theorem \ref{thm:acceleration=d_2}}
Recall that we assumed that there exists an orthonormal matrix $J$, such that 
\begin{align}\label{eq:JVJ=Vt_JBJ=Bt}
JV(\theta)J^{-1}=V^T(-\theta), \text{ and } JBJ^{-1}=B^T.
\end{align}
First, we prove a lemma which is analogous to Lemma \ref{lem:f_even}.
\begin{lemma}\label{lem:f_even2}
We have
\[f_{E,n}(\theta-\frac{n-1}{2}\omega)=\overline{f_{E,n}(-\theta-\frac{n-1}{2}\omega)}.\]
\end{lemma}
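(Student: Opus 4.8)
The plan is to mimic the proof of Lemma~\ref{lem:f_even}, using the symmetry hypothesis~\eqref{eq:JVJ=Vt_JBJ=Bt} in place of the explicit computation carried out in Lemma~\ref{lem:symmetry}. First I would write out $f_{E,n}(\theta-\frac{n-1}{2}\omega)=\det(P_n(\theta-\tfrac{n-1}{2}\omega)-E)$ explicitly as a block circulant-type determinant, with diagonal blocks $V(\theta+j\omega)-E$ for $j$ running over $\{\frac{n-1}{2},\frac{n-3}{2},\dots,-\frac{n-1}{2}\}$ and off-diagonal blocks $B$, $B^{*}$ (together with the two corner blocks coming from the periodic boundary conditions). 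Since $J$ is orthonormal, conjugating the entire $nd\times nd$ matrix by $\mathrm{diag}(J,\dots,J)$ does not change the determinant, so $f_{E,n}(\theta-\frac{n-1}{2}\omega)$ equals the determinant of the matrix whose blocks are $JV(\theta+j\omega)J^{-1}-E=V^T(-\theta-j\omega)-E$ on the diagonal, and $JBJ^{-1}=B^T$, $JB^{*}J^{-1}=(JBJ^{-1})^{*}$... here I need to be slightly careful: $JB^{*}J^{-1}=(JBJ^{-1})^{*}$ only because $J$ is orthonormal (real orthogonal), so $J^{-1}=J^{*}$ and $JB^{*}J^{-1}=JB^{*}J^{*}=(JBJ^{*})^{*}=(B^T)^{*}=\overline{B}$. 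This is exactly the role the orthonormality of $J$ plays.

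Next, I would reverse the ordering of the block rows and columns (i.e.\ conjugate by the block anti-diagonal permutation), which replaces the index $j$ by $-j$ throughout and swaps the roles of the $B^T$ and $\overline{B}$ off-diagonals appropriately; this is the same ``reordering the variables'' step as in Lemma~\ref{lem:f_even}. The resulting determinant has diagonal blocks $V^T(-\theta+j\omega)-E$ with $j$ again running over $\{\frac{n-1}{2},\dots,-\frac{n-1}{2}\}$ but now paired with the transposed off-diagonal pattern. Taking the transpose of the whole matrix (which preserves the determinant) turns each diagonal block into $V(-\theta+j\omega)-E$ and restores the standard $B$, $B^{*}$ off-diagonal placement, at the cost of a complex conjugation once one recognizes $\overline{B}^T = B^{*}$ and $(B^T)^T = B$. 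Collecting the conjugations, one reads off that the result is $\overline{\det(P_n(-\theta-\frac{n-1}{2}\omega)-E)}=\overline{f_{E,n}(-\theta-\frac{n-1}{2}\omega)}$. Here I would use that $E\in\R$ and that $\overline{V^T}=V^{*}=V$ (Hermiticity of $V$) to handle the diagonal blocks cleanly.

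The main obstacle is bookkeeping: making sure that the two corner blocks from the periodic boundary condition transform correctly under the reordering and transposition, and that the adjoint/transpose/conjugate operations are tracked consistently (in particular the identity $JB^{*}J^{-1}=\overline{B}$, which genuinely needs $J$ real orthogonal, not merely invertible). Since in Theorem~\ref{thm:acceleration=d_2} $B$ is assumed constant, there is no $\theta$-dependence in the off-diagonal blocks, which simplifies matters somewhat compared to Lemma~\ref{lem:f_even}. Once the lemma is proved, the same reasoning as after Lemma~\ref{lem:f_even} gives the holomorphic identity $f_{E,n}(z)=\overline{f_{E,n}(\overline z\, e^{-2\pi i(n-1)\omega})}$ for $z\in\mathcal{C}_1$, hence on the annulus by analytic continuation, yielding the pairing of zeros $z\mapsto \overline z\, e^{-2\pi i(n-1)\omega}$ that will be used (together with the bound $N_n(E,\eta/3)\le 2n(\kappa^d+\nu)\le 2n\cdot d$ from Theorem~\ref{thm:Riesz_un} when $\kappa^d\le 2d-1$, combined with the $z\mapsto 1/\overline z$ pairing from Fact~\ref{fact:fn_real_zero}) to reduce the effective zero count in the complexity bound of the large deviation set.
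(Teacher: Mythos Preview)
Your proposal is correct and follows essentially the same route as the paper: block-diagonal conjugation by $J$ (using $JVJ^{-1}=V^T(-\cdot)$, $JBJ^{-1}=B^T$, and $JB^*J^{-1}=(B^T)^*=\overline{B}$ from unitarity of $J$), reversal of the block ordering, and Hermiticity of $V$ to identify the result with $\overline{P_n(-\theta-\tfrac{n-1}{2}\omega)-E}$. The paper performs the reversal before the $J$-conjugation and then reads off the conjugate directly via $V^T=\overline{V}$, whereas you conjugate first and insert an explicit transpose at the end; these differ only in bookkeeping, and your caution about the corner blocks is well placed but causes no trouble since $B$ is constant.
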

\begin{proof}
We calculate (recall that $V(\theta,k):=V(\theta+k\omega)-E$)
\begin{align}
&f_{E,n}(\theta-\frac{n-1}{2}\omega)\\
=&
\det\left(\begin{matrix}
V(\theta,\frac{n-1}{2}) & B^* & &  &B\\
B &V(\theta,\frac{n-3}{2}) &\ddots \\
& \ddots &\ddots &\ddots \\
& &\ddots &\ddots &B^*\\
B^* & & &B &V(\theta,-\frac{n-1}{2})
\end{matrix}\right)\\
=&
\det\left(\begin{matrix}
V(\theta,-\frac{n-1}{2}) & B & &  &B^*\\
B^* &V(\theta,-\frac{n-3}{2}) &\ddots \\
& \ddots &\ddots &\ddots \\
& &\ddots &\ddots &B\\
B & & &B^* &V(\theta,\frac{n-1}{2})
\end{matrix}\right)\\
=&
\det\left(\begin{matrix}
JV(\theta,-\frac{n-1}{2})J^{-1} & JBJ^{-1} & &  &JB^*J^{-1}\\
JB^*J^{-1} &JV(\theta,-\frac{n-3}{2})J^{-1} &\ddots \\
& \ddots &\ddots &\ddots \\
& &\ddots &\ddots &JBJ^{-1}\\
JBJ^{-1} & & &JB^*J^{-1} &JV(\theta,\frac{n-1}{2})J^{-1}
\end{matrix}\right)
\end{align}
This can further be simplified in the form
\begin{align}\notag
&f_{E,n}(\theta-\frac{n-1}{2}\omega)\\
=&
\det\left(\begin{matrix}
V^T(-\theta,\frac{n-1}{2}) & B^T & &  &\overline{B}\\
\overline{B} &V^T(-\theta,\frac{n-3}{2}) &\ddots \\
& \ddots &\ddots &\ddots \\
& &\ddots &\ddots &B\\
B^T & & &\overline{B} &V^T(-\theta,-\frac{n-1}{2})
\end{matrix}\right)\\
=&
\det\left(\begin{matrix}
\overline{V(-\theta,\frac{n-1}{2})} & \overline{B^*} & &  &\overline{B}\\
\overline{B} &\overline{V(-\theta,\frac{n-3}{2})} &\ddots \\
& \ddots &\ddots &\ddots \\
& &\ddots &\ddots &\overline{B^*}\\
\overline{B^*} & & &\overline{B} &\overline{V(-\theta,-\frac{n-1}{2})}
\end{matrix}\right)\\
=&\overline{f_{E,n}(-\theta-\frac{n-1}{2}\omega)},
\end{align}
in which we used $V^*(\theta)=V(\theta)$.
\end{proof}
In analogy with Lemma \ref{lem:f_even}, Lemma \ref{lem:f_even2} implies the following.
\begin{lemma}\label{lem:even_zero_2}
If $z$ is a zero of $f_{E,n}(z)$, then $\overline{z}e^{-2\pi i (n-1)\omega}$ is also a zero.
\end{lemma}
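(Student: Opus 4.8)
The plan is to deduce the statement from Lemma~\ref{lem:f_even2} by the standard analytic-continuation trick already used to pass from Lemma~\ref{lem:f_even} to Lemma~\ref{lem:even_zero}. First I would rewrite the identity $f_{E,n}(\theta-\tfrac{n-1}{2}\omega)=\overline{f_{E,n}(-\theta-\tfrac{n-1}{2}\omega)}$ in the variable $z=e^{2\pi i\theta}$. Recall that $f_{E,n}$ is holomorphic in $z$ on the annulus $\mathcal{A}_{e^{2\pi\eta}}$ (it is a polynomial in the entries $V(\theta+k\omega), B, B^*$, which are analytic in $\theta$ on $\T_\eta$, hence entire or meromorphic in $z$ on that annulus; as used throughout Section~\ref{sec:AL_arithmetic} it is in fact holomorphic and nonvanishing on $\partial\mathcal{A}_{e^{2\pi\eta}}$). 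Writing $g(z):=f_{E,n}(z\,e^{-\pi i(n-1)\omega})$ — i.e., absorbing the shift $\theta\mapsto\theta-\tfrac{n-1}{2}\omega$ into a rotation $z\mapsto z e^{-\pi i (n-1)\omega}$ — Lemma~\ref{lem:f_even2} says precisely that $g(z)=\overline{g(1/\bar z)}$ for all $z\in\mathcal{C}_1$ (since for $|z|=1$ the point $-\theta$ corresponds to $\bar z = 1/\bar z$, and $g$ being built from Hermitian $V$ and the pair $B,B^*$ makes $\overline{g(\bar w)}$ the relevant conjugate).

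Next I would observe that both $g(z)$ and the function $h(z):=\overline{g(1/\bar z)}$ are holomorphic on $\mathcal{A}_{e^{2\pi\eta}}$: $g$ is holomorphic there by the above, and $z\mapsto \overline{g(1/\bar z)}$ is the Schwarz-reflected function, holomorphic on the reflected annulus, which is the same annulus $\mathcal{A}_{e^{2\pi\eta}}$. They agree on the circle $\mathcal{C}_1$, which has an accumulation point in the domain, so by the identity theorem $g(z)=\overline{g(1/\bar z)}$ on all of $\mathcal{A}_{e^{2\pi\eta}}$. Translating back to $f_{E,n}$: this gives $f_{E,n}(z)=\overline{f_{E,n}\!\big(\bar z^{-1} e^{-2\pi i(n-1)\omega}\big)}$ on $\mathcal{A}_{e^{2\pi\eta}}$. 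Hence if $f_{E,n}(z_0)=0$, then $\overline{f_{E,n}(\bar z_0^{-1} e^{-2\pi i(n-1)\omega})}=0$, so $\bar z_0^{-1} e^{-2\pi i(n-1)\omega}$ is a zero of $f_{E,n}$.

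Finally I would note that combining this reflection symmetry with the earlier symmetry $f_{E,n}(z)=\overline{f_{E,n}(1/\bar z)}$ (Fact~\ref{fact:fn_real_zero}, which holds here since the proof of Lemma~\ref{lem:otf=tf} only used $V^*=V$ and is unaffected by the present hypotheses) shows that the two maps $z\mapsto 1/\bar z$ and $z\mapsto \bar z\,e^{-2\pi i(n-1)\omega}$ both permute the zero set; composing them gives the rotation $z\mapsto z\,e^{2\pi i(n-1)\omega}$ permuting the zeros, which is the structural fact one actually uses to control the complexity of $\widetilde{\mathcal B}_{f,E,n}$ in the analogue of Lemma~\ref{lem:complexity_1}. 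I expect no real obstacle here — the only point requiring care is making sure the Schwarz-reflection domain genuinely coincides with $\mathcal{A}_{e^{2\pi\eta}}$ (which it does, the annulus being symmetric under $z\mapsto 1/\bar z$) and that the extra rotation factor $e^{-2\pi i(n-1)\omega}$ is handled consistently on both sides of the identity; this is exactly parallel to the already-completed passage from Lemma~\ref{lem:f_even} to Lemma~\ref{lem:even_zero}, so the argument is essentially a verbatim repetition.
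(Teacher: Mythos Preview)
Your approach is exactly the one the paper takes: rewrite Lemma~\ref{lem:f_even2} in the $z$-variable, observe that both sides are holomorphic on the annulus, and invoke the identity theorem. But there is a concrete error in the reflection map you use. On $\mathcal{C}_1$ the point $-\theta$ corresponds to $e^{-2\pi i\theta}=\bar z$ (which equals $1/z$ there), \emph{not} to $1/\bar z$ (which equals $z$ on the circle). Thus, in terms of your auxiliary $g(z)=f_{E,n}(z\,e^{-\pi i(n-1)\omega})$, Lemma~\ref{lem:f_even2} says $g(z)=\overline{g(\bar z)}$ on $\mathcal{C}_1$, and it is this identity (both sides holomorphic) that extends to the annulus. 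Your identity $g(z)=\overline{g(1/\bar z)}$ on $\mathcal{C}_1$ merely says $g$ is real there, which is the content of Lemma~\ref{lem:otf=tf}, not of Lemma~\ref{lem:f_even2}.

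This mix-up propagates. A correct translation of $g(z)=\overline{g(1/\bar z)}$ back to $f_{E,n}$ gives only $f_{E,n}(w)=\overline{f_{E,n}(1/\bar w)}$ --- the rotation by $e^{-\pi i(n-1)\omega}$ cancels --- which is Fact~\ref{fact:fn_real_zero} again; the extra factor $e^{-2\pi i(n-1)\omega}$ in your displayed formula comes from an algebra slip in the substitution. In any case your stated conclusion, that $\bar z_0^{-1}e^{-2\pi i(n-1)\omega}$ is a zero, is not the assertion of the lemma, which says $\bar z_0\,e^{-2\pi i(n-1)\omega}$ is a zero (these differ off $\mathcal{C}_1$, and you silently switch to the correct version only in your final paragraph). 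The fix is simply to replace $1/\bar z$ by $\bar z$ throughout; then the translation yields $f_{E,n}(w)=\overline{f_{E,n}(\bar w\,e^{-2\pi i(n-1)\omega})}$ on the annulus, exactly as in the paper's passage from Lemma~\ref{lem:f_even} to Lemma~\ref{lem:even_zero}.
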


Taking into account the additional assumption \eqref{assume:t2_2} that $f_{E,n}(\theta)=f_{E,n}(\theta+\frac{1}{d})$, we have
\begin{lemma}\label{lem:f_symmetry}
If $z$ is a zero of $f_{E,n}(z)$, then $e^{2\pi i\frac{1}{d}}\cdot z$ is also a zero.
\end{lemma}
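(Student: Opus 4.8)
The plan is to read off the statement directly from the periodicity hypothesis \eqref{assume:t2_2}, after transferring it from the angle variable $\theta\in\T$ to the spectral variable $z=e^{2\pi i\theta}$; this is the same mechanism that was used to pass from Lemma \ref{lem:f_even2} to Lemma \ref{lem:even_zero_2}. First I would recall that, since $B$ is constant and $V$ is analytic in a neighborhood of the real torus, the Dirichlet determinant $f_{E,n}(\theta)=\det(P_n(\theta)-E)$ is a finite sum of products of entries of $P_n(\theta)$ — each entry being either $B$, $B^*$, or $V(\theta+k\omega)-E$ — hence is analytic in $\theta$ and extends to a holomorphic function of $z$ on the annulus $\mathcal{A}_{e^{2\pi\eta}}$.

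Next I would note that the shift $\theta\mapsto\theta+d^{-1}$ corresponds under $z=e^{2\pi i\theta}$ to multiplication by $\zeta:=e^{2\pi i/d}$, and that $z\mapsto\zeta z$ maps $\mathcal{A}_{e^{2\pi\eta}}$ onto itself because $|\zeta z|=|z|$. Therefore both $z\mapsto f_{E,n}(z)$ and $z\mapsto f_{E,n}(\zeta z)$ are holomorphic on $\mathcal{A}_{e^{2\pi\eta}}$, and assumption \eqref{assume:t2_2} says exactly that these two functions agree for $z\in\mathcal{C}_1$. By the identity theorem they agree on all of $\mathcal{A}_{e^{2\pi\eta}}$, i.e.\ $f_{E,n}(\zeta z)=f_{E,n}(z)$ identically there. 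Evaluating at a zero $z$ of $f_{E,n}$ then yields $f_{E,n}(\zeta z)=f_{E,n}(z)=0$, which is the claim.

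I do not expect any genuine obstacle in this lemma; the only point needing (minor) care is to check that the rotation $z\mapsto\zeta z$ preserves the domain of holomorphy, so that the identity theorem upgrades the equality from $\mathcal{C}_1$ to the full annulus — but this is immediate since rotations preserve moduli. The lemma will then feed, together with Lemma \ref{lem:even_zero_2} and the zero-count bound $N_n(E,\eta/3)\le 2n(\kappa^d(\omega,M_E)+\nu)$ from Theorem \ref{thm:Riesz_un}, into a complexity estimate for the large-deviation set of $f_{E,n}$ analogous to Lemma \ref{lem:complexity_1}, now with the relevant zero count reduced by a further factor of $d$.
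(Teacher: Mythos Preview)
Your argument is correct and matches the paper's approach: the paper simply records the lemma as an immediate consequence of the periodicity assumption \eqref{assume:t2_2}, relying on the same ``identity theorem on the annulus'' mechanism it had already spelled out after Lemma~\ref{lem:f_even}. Your write-up makes that implicit step explicit, but there is no substantive difference.
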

Note that by Theorem \ref{thm:Riesz_un}, for any small $\epsilon_1>0$, for $n$ large, $N_n(E,\eta/3)\leq 4nd(1-8\varepsilon_1)$.
In analogy  to Lemma \ref{lem:complexity_1}, Lemmas \ref{lem:even_zero_2} and \ref{lem:f_symmetry} imply the following.
\begin{lemma}\label{lem:complexity_2}
For any small $\varepsilon_1>0$, and any large $\kappa_0$-admissible $n$.
There exists $N\leq 2n(1-8\varepsilon_1)$ and a collection of intervals $\mathcal{F}_{E,n}=\bigcup_{j=1}^N U_j$ such that
the large deviation set satisfies 
\begin{align}\label{def:BfEn^(2)}
\widetilde{\mathcal{B}}_{f,E,n}&=\left\{\theta\in \T:\, \log |f_{E,n}(\theta)|<n(\log |\det B|+L^d(\omega,M_E))-n^{1-\frac{\delta_1}{2}}\right\}\\
&\subseteq \bigcup_{j=1}^N \left(\left(\bigcup_{m_1=1}^d (U_j+m_1/d)\right)\cup \left(\bigcup_{m_2=1}^d (-U_j-(n-1)\omega+m_2/d)\right)\right)=:\widetilde{\mathcal{B}}_{f,E,n}^{(2)}.
\end{align}
Furthermore each $U_j$ satisfies $\mathrm{mes}(U_j)<e^{-n^{\delta_1/2}}$.
\end{lemma}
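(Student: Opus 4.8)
The plan is to mimic the proof of Lemma \ref{lem:complexity_1}, i.e.\ of \cite[Lemma 6.2]{HS2}, the only novelty being that the two symmetries provided by Lemmas \ref{lem:even_zero_2} and \ref{lem:f_symmetry} are exploited simultaneously. First I would fix $\nu\in(0,1/10)$ small enough that $16d\,\varepsilon_1<1-\nu$ (possible precisely because $\varepsilon_1$ is small relative to $d$ — this is where that smallness is used), and record, from \eqref{eq:N_est_un} of Theorem \ref{thm:Riesz_un} together with the hypothesis $\kappa^d(\omega,M_E)\le 2d-1$, that $N_n(E,\eta/3)\le 2n(2d-1+\nu)$ for all large $n$, hence $N_n(E,\eta/3)/d\le 4n(1-8\varepsilon_1)-c\,n$ for some fixed $c=c(d,\nu)>0$. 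Next I would run the Riesz/Cartan scheme of \cite[Lemma 6.2]{HS2}: using the representation \eqref{eq:f=G+h} of Theorem \ref{thm:Riesz_un}, whose harmonic part equals $\log|\det B|+L^d(\omega,M_E)$ up to $O(n^{-\delta_2})$ on $\mathcal A_{e^{2\pi\eta/3}}$, together with the pointwise lower bound of Lemma \ref{lem:deno} and Cartan's estimate (Lemma \ref{lem:Cartan} in one variable, applied on arcs of length $\simeq n^{-1}$ covering $\T$), one obtains
\[
\widetilde{\mathcal B}_{f,E,n}\ \subseteq\ \bigcup_{k=1}^{N_n(E,\eta/3)}\Big\{\theta\in\T:\ \big\|\theta-\tfrac{1}{2\pi}\arg w_k\big\|_{\T}<e^{-n^{\delta_1/2}}\Big\},
\]
where $w_1,\dots,w_{N_n(E,\eta/3)}$ are the zeros of $f_{E,n}(z)$ in $\mathcal A_{e^{2\pi\eta/3}}$; the loss $\delta_1\rightsquigarrow\delta_1/2$ is exactly the one noted after \eqref{def:wt_B_fEn} and stems from this use of Cartan.

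Then I would cash in the symmetry. Assumption \eqref{assume:t2_2} says $f_{E,n}(e^{2\pi i/d}z)=f_{E,n}(z)$ identically, so the zero multiset of $f_{E,n}$ in $\mathcal A_{e^{2\pi\eta/3}}$ is invariant under $r:z\mapsto e^{2\pi i/d}z$; since $r$ has no fixed point in $\mathcal A_{e^{2\pi\eta/3}}$, that multiset breaks into $r$-orbits of size exactly $d$, so the angular coordinates $\psi_k:=\tfrac1{2\pi}\arg w_k$ occupy at most $N_n(E,\eta/3)/d\le 4n(1-8\varepsilon_1)-c\,n$ distinct $\langle r\rangle$-orbits in $\T$. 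By Lemma \ref{lem:even_zero_2} the zero set is also invariant under $s:z\mapsto\bar z\,e^{-2\pi i(n-1)\omega}$; on angular coordinates $s$ acts by $\psi\mapsto-\psi-(n-1)\omega$, it normalizes $\langle r\rangle$ (indeed $srs^{-1}=r^{-1}$), hence descends to an involution $\bar s$ of $\T/(d^{-1}\Z)$, and $\bar s$ has exactly the two fixed points solving $2\psi\equiv-(n-1)\omega\ (\mathrm{mod}\ d^{-1})$. Therefore the number of $\langle r,s\rangle$-orbits of $\{\psi_k\}$ is at most $\tfrac12\big(4n(1-8\varepsilon_1)-c\,n+2\big)\le 2n(1-8\varepsilon_1)$ for $n$ large. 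Picking one zero $w_j$ per such orbit, $1\le j\le N$ with $N\le 2n(1-8\varepsilon_1)$, and setting $U_j:=\{\theta:\|\theta-\psi_j\|_{\T}<e^{-n^{\delta_1/2}}\}$, the $\langle r,s\rangle$-orbit of $\psi_j$ in $\T$ is precisely $\{\psi_j+m_1/d\}_{m_1=1}^d\cup\{-\psi_j-(n-1)\omega+m_2/d\}_{m_2=1}^d$, so the displayed covering rewrites as the asserted inclusion $\widetilde{\mathcal B}_{f,E,n}\subseteq\widetilde{\mathcal B}^{(2)}_{f,E,n}$ in \eqref{def:BfEn^(2)}, with $\mathrm{mes}(U_j)<e^{-n^{\delta_1/2}}$ by construction.

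The hard part will be the Riesz/Cartan step, i.e.\ producing the covering of $\widetilde{\mathcal B}_{f,E,n}$ by sub-exponentially short intervals centered at the zeros with the correct exponent $\delta_1/2$: this is the technical core and must be transcribed carefully from \cite[Lemma 6.2]{HS2}, using Lemma \ref{lem:deno} for the lower bound on a large set and Lemma \ref{lem:fn_upper}/\ref{lem:fn_ave} to pin down the harmonic part. By contrast the symmetry bookkeeping is elementary; the only point to watch there is that the involution $\bar s$ genuinely halves the orbit count up to $O(1)$, which is immediate once one observes $\bar s$ has just two fixed points — the same mechanism (with the single symmetry of Lemma \ref{lem:even_zero} in place of $\langle r,s\rangle$) already underlies Lemma \ref{lem:complexity_1}. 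Fact \ref{fact:fn_real_zero} is not needed for this counting; it only serves, if desired, to group the off-circle zeros into symmetric pairs.
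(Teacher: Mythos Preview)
Your proposal is correct and follows the same route as the paper, which simply cites Lemmas~\ref{lem:even_zero_2} and~\ref{lem:f_symmetry} together with the argument of \cite[Lemma~6.2]{HS2} (i.e., of Lemma~\ref{lem:complexity_1}) without spelling out the group-theoretic bookkeeping that you carry out explicitly. One small slip to fix when you write this up: the harmonic part $h_{R,n}$ in Theorem~\ref{thm:Riesz_un} is approximately $L^d_\eta+\log|\det B|$, not $L^d+\log|\det B|$ (the two differ by the fixed constant $2\pi\kappa^d\eta$); this is harmless for your argument, since the Cartan/covering step only uses that $h_{R,n}$ is approximately constant.
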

similarly to Lemma \ref{lem:I1_I2}, we have
\begin{lemma}\label{lem:I1_I2_2}
    Let $\omega\in \mathrm{DC}_{a,A}$ for some $a>0$ and $A>1$. For any large $\kappa_0$-admissible $n$, and for any $y\in \Z$ such that $nd<y<10nd$, let
    \begin{align}
        I_1:=&[-[(1-\varepsilon_1)n], -[\varepsilon_1 n]]\\
        I_2:=&[[d^{-1}y]-[(1-\varepsilon_1)n], [d^{-1}y]-[\varepsilon_1 n]].
    \end{align}
    There exists $\ell\in I_1\cup I_2$ such that
    \begin{align}
        \theta+\ell \omega\notin \widetilde{\mathcal{B}}_{f,E,n}^{(2)}.
    \end{align}
\end{lemma}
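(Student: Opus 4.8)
The plan is to run the same pigeonhole scheme as in the proof of Lemma~\ref{lem:I1_I2}. Write $\mathcal{U}_j:=\big(\bigcup_{m_1=1}^d(U_j+m_1/d)\big)\cup\big(\bigcup_{m_2=1}^d(-U_j-(n-1)\omega+m_2/d)\big)$ for the $j$-th ``block'', so that $\widetilde{\mathcal{B}}^{(2)}_{f,E,n}=\bigcup_{j=1}^N\mathcal{U}_j$. The key claim I would establish is that each $\mathcal{U}_j$ contains at most one of the points $\{\theta+\ell\omega:\ell\in I_1\cup I_2\}$. Granting this, the lemma follows by counting: since $nd<y<10nd$ we have $n\le[d^{-1}y]\le 10n$, hence $I_1\subset[-n,0]$ and $I_2\subset(0,10n]$, so $I_1\cap I_2=\varnothing$ for $n$ large, while $\#(I_1\cup I_2)=\#I_1+\#I_2=2\big([(1-\varepsilon_1)n]-[\varepsilon_1 n]+1\big)\ge 2(1-2\varepsilon_1)n>2n(1-8\varepsilon_1)\ge N$ once $n$ is large (the gap is $\gtrsim\varepsilon_1 n$). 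Thus some $\ell\in I_1\cup I_2$ avoids $\bigcup_j\mathcal{U}_j=\widetilde{\mathcal{B}}^{(2)}_{f,E,n}$.

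For the claim, first record that $\omega\in\mathrm{DC}_{a,A}$ yields $\|k\omega-p/d\|_{\T}\ge \tfrac1d\|dk\omega\|_{\T}\ge a\,d^{-A-1}|k|^{-A}$ for all $k\in\Z\setminus\{0\}$ and $p\in\Z$, and that any $\ell,\ell'\in I_1\cup I_2$ satisfy $|\ell-\ell'|\le 11n$. Now suppose $\theta+\ell_1\omega,\theta+\ell_2\omega\in\mathcal{U}_j$ with $\ell_1\ne\ell_2$. If both lie in the ``$U_j$-family'' $\bigcup_{m_1}(U_j+m_1/d)$, say $\theta+\ell_i\omega\in U_j+m_1^{(i)}/d$, then subtracting and using $\mathrm{mes}(U_j)<e^{-n^{\delta_1/2}}$ gives $\|(\ell_1-\ell_2)\omega-(m_1^{(1)}-m_1^{(2)})/d\|_{\T}<e^{-n^{\delta_1/2}}$, contradicting the Diophantine bound above for $n$ large; the case of both points in the reflected family $\bigcup_{m_2}(-U_j-(n-1)\omega+m_2/d)$ is identical. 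The remaining (cross) case is $\theta+\ell_1\omega\in U_j+m_1/d$ and $\theta+\ell_2\omega\in -U_j-(n-1)\omega+m_2/d$; then $\theta+\ell_1\omega-m_1/d$ and $-\theta-\ell_2\omega-(n-1)\omega+m_2/d$ both lie in $U_j$, so $\|2\theta+k\omega-p/d\|_{\T}<e^{-n^{\delta_1/2}}$ with $k:=\ell_1+\ell_2+n-1$ (so $|k|\le 11n+1$) and $p:=[m_1+m_2]_d\in\{0,\dots,d-1\}$. This is the double-resonance case, excluded by $\theta\in\Theta$: for $p=0$ one has $\|2\theta+k\omega\|_{\T}\ge a'(1+|k|)^{-t}>e^{-n^{\delta_1/2}}$ directly, and for general $p$ one passes to $\widehat\theta=d\theta$, $\widehat\omega=d\omega$ (natural since, by \eqref{assume:t2_2}, $f_{E,n}$ depends only on $d\theta$, which is exactly what generates the $1/d$-translates $U_j+m_1/d$), reducing to $\|2\widehat\theta+k\widehat\omega\|_{\T}<d\,e^{-n^{\delta_1/2}}$, which $\theta\in\Theta$ again forbids for $n$ large since $|k|\lesssim n$. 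This proves the claim, and with it the lemma.

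The step I expect to be the main obstacle is exactly this cross-resonance case: ruling out that a point of the $U_j$-family and a point of the reflected family sit in the same block. The novelty compared to Lemma~\ref{lem:I1_I2} is the presence of the $1/d$-translates coming from the $d^{-1}$-periodicity \eqref{assume:t2_2}, which forces one to control the residue $p/d$, $p\in\{0,\dots,d-1\}$, produced by the two families; one has to check carefully that this residue is absorbed by the Diophantine/steep-planes input on $\theta$ (e.g.\ via the rescaling $\theta\mapsto d\theta$, $\omega\mapsto d\omega$ as above, mirroring \cite[Lemma~6.2]{HS2}). All the other ingredients—the cardinality bookkeeping for $I_1,I_2$ and the same-family collisions handled by $\omega\in\mathrm{DC}$—are routine.
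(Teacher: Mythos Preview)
Your overall pigeonhole strategy and your treatment of the counting and the two same-family collision cases are correct and match the paper's template (it only says ``similarly to Lemma~\ref{lem:I1_I2}'').

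The gap is in the cross case when the residue $p:=[m_1+m_2]_d\neq 0$. You correctly arrive at $\|2\theta+k\omega-p/d\|_{\T}<e^{-n^{\delta_1/2}}$ and then rescale to $\|2d\theta+kd\omega\|_{\T}<d\,e^{-n^{\delta_1/2}}$. But the assertion that ``$\theta\in\Theta$ again forbids'' the latter is false: the hypothesis $\theta\in\Theta$ of~\eqref{def:theta} bounds $\|2\theta-m\omega\|_{\T}$ from below, \emph{not} $\|d(2\theta-m\omega)\|_{\T}$. Concretely, if $2\theta+k\omega$ lies within $e^{-n^{\delta_1/2}}$ of the point $p/d\in\T$ with $1\le p\le d-1$, then $\|2\theta+k\omega\|_{\T}\approx \min(p,d-p)/d$ is of order~$1$, so $\theta\in\Theta$ is perfectly happy, yet $\|d(2\theta+k\omega)\|_{\T}<d\,e^{-n^{\delta_1/2}}$. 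Multiplying by $d$ only gives the trivial direction $\|dx\|_{\T}\le d\|x\|_{\T}$; there is no converse inequality, so the rescaling buys nothing.

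Compare with the one place the paper writes this step out in full, Lemma~\ref{lem:I1_I2_2'} for the skew shift: there the chain is
\[
\mathrm{mes}(U_j)\ \ge\ \|2\theta+ky-(m_1+m_2)/q\|_{\T}\ \ge\ q^{-1}\,\|q(2\theta+ky)\|_{\T},
\]
cf.~\eqref{eq:Uj_contra_1}, and the arithmetic input that closes the argument is a lower bound on $\|q(2\theta-ky)\|_{\T}$---the resonance is defined with the factor $q$ already built in. That is precisely the condition your rescaling would require here, namely a lower bound on $\|d(2\theta-k\omega)\|_{\T}$, and it is strictly stronger than $\theta\in\Theta$ as stated. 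So either the arithmetic hypothesis on $\theta$ must be read as controlling $\|d(2\theta-k\omega)\|_{\T}$ (as in the skew-shift version), or the $p\neq 0$ residue must be eliminated by some other mechanism; your rescaling alone does not close this case.
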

This implies, analogously to Lemma \ref{lem:I1_small}, the following.
\begin{lemma}\label{lem:I1_small_2}
Under the same conditions as Lemma \ref{lem:I1_I2_2}.
For any $\ell\in I_1$, one has $\theta+\ell \omega\in \widetilde{\mathcal{B}}_{f,E,n}^{(2)}$.
\end{lemma}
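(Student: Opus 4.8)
The plan is to adapt the proof of Lemma~\ref{lem:I1_small} essentially verbatim, replacing the scalar $d^{-1}\omega$-translates there by the block $\omega$-translates here, and the fixed fractions $1/8,7/8$ by the small parameter $\varepsilon_1$. I would argue by contradiction: assume some $\ell_1\in I_1$ satisfies $\theta+\ell_1\omega\notin\widetilde{\mathcal{B}}_{f,E,n}^{(2)}$. Since Lemma~\ref{lem:complexity_2} gives $\widetilde{\mathcal{B}}_{f,E,n}\subseteq\widetilde{\mathcal{B}}_{f,E,n}^{(2)}$, in particular $\theta+\ell_1\omega\notin\widetilde{\mathcal{B}}_{f,E,n}$, so that
\begin{align}
\tfrac1n\log|f_{E,n}(\theta+\ell_1\omega)|\ge L^d(\omega,M_E)+\log|\det B|-n^{-\delta_1/2}.
\end{align}

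Next I would turn this lower bound on the denominator into exponential decay of the Green's function. With $\ell_1\in I_1=[-[(1-\varepsilon_1)n],-[\varepsilon_1n]]$ we have $-\ell_1\in[[\varepsilon_1n],[(1-\varepsilon_1)n]]$, so upon centering the finite-volume window of $n$ blocks so that block $0$ occupies the interior positions $-\ell_1 d,\dots,-\ell_1 d+d-1$ (which lie in $[3d,(n-1)d-1]$ for $n$ large and $\varepsilon_1$ fixed), Lemma~\ref{lem:numerator} applies with $\ell=\lfloor y/d\rfloor$ comparable to $-\ell_1$ and $x$ an edge index. Dividing the bound on $|\mu_{n,x,y}(\theta+\ell_1\omega)|$ by the displayed lower bound on $|f_{E,n}(\theta+\ell_1\omega)|$ via Cramer's rule~\eqref{eq:mufn}, and using $L^d-L^{d-1}=L_d$, yields for any prescribed $\varepsilon>0$
\begin{align}
|G_{E,n}(\theta+\ell_1\omega;x,y)|\le e^{n\varepsilon}\bigl(e^{-\ell L_d}+e^{-(n-\ell)L_d}\bigr)\le 2\,e^{n\varepsilon-\varepsilon_1 nL_d}.
\end{align}
Plugging this into the Poisson formula~\eqref{eq:Poisson_exp} for $u_0$ (and likewise for $u_{-1}$), with the $O(nd)$ edge terms controlled by the Shnol bound $\|u_m\|\le C(1+|m|)$, gives $\|u_0\|,\|u_{-1}\|\le e^{-\frac{\varepsilon_1}{2}nL_d}$ once $\varepsilon\ll\varepsilon_1 L_d$ and $n$ is large, contradicting the normalization $\max(\|u_0\|,\|u_{-1}\|)=1$ of the generalized eigenfunction. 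Hence $\theta+\ell\omega\in\widetilde{\mathcal{B}}_{f,E,n}^{(2)}$ for every $\ell\in I_1$.

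The step I expect to be the main obstacle is not conceptual but the careful bookkeeping between block indices (in which $I_1$, $I_2$ and the Poisson window are phrased) and the scalar indices $[0,nd-1]$ used in Lemma~\ref{lem:numerator} and in $G_{E,n}$: with the \emph{small} parameter $\varepsilon_1$ in place of the fixed fractions of Lemma~\ref{lem:I1_small}, one must check that the interior point still sits comfortably inside $[3d,(n-1)d-1]$, that $\min(\ell,n-\ell)\ge\varepsilon_1 n$ up to an additive $O(1)$, and that the exponential gain $e^{-\varepsilon_1 nL_d}$ (with $L_d=L_d(\omega,M_E)>0$ fixed for the energy under consideration) beats both the subexponential factor $e^{n\varepsilon}$ from Lemma~\ref{lem:numerator} and the polynomial factors $\sim nd$ from Shnol's estimate. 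The admissibility of $n$, needed for Lemma~\ref{lem:complexity_2}, and the harmless weakening of $\delta_1$ to $\delta_1/2$ there, are already built into the hypotheses of Lemma~\ref{lem:I1_I2_2}.
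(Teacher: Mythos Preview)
Your proposal is correct and follows essentially the same approach as the paper's proof: argue by contradiction, use the inclusion $\widetilde{\mathcal{B}}_{f,E,n}\subseteq\widetilde{\mathcal{B}}_{f,E,n}^{(2)}$ to get a lower bound on $|f_{E,n}(\theta+\ell_1\omega)|$, combine with Lemma~\ref{lem:numerator} to obtain Green's function decay of order $e^{-\varepsilon_1 nL_d}$, and then apply the Poisson formula with the Shnol bound to contradict the normalization of the generalized eigenfunction. Your explicit attention to the block-versus-scalar index bookkeeping is if anything more careful than the paper, which writes $G_{E,n}(\theta+\ell_1\omega;-\ell_1,m)$ without distinguishing the two; the only cosmetic difference is that you normalize via $u_0,u_{-1}$ rather than $u_0,u_1$, which is immaterial.
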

\begin{proof}
    Suppose otherwise. Then there exists $\ell_1\in I_1$ such that $ \theta+\ell_1\omega \notin \widetilde{\mathcal{B}}_{f,E,n}^{(2)}$, implying $\theta+\ell_1\omega\notin \widetilde{\mathcal{B}}_{f,E,n}$.
    Hence
    \begin{align}
        \log |f_{E,n}(\theta+\ell_1\omega)|\geq n(\log |\det B|+L^d-n^{-\frac{\delta_1}{2}}).
    \end{align}
    Taking Lemma \ref{lem:numerator} into account, we have similarly to \eqref{eq:Green_I1} that for any $\varepsilon_1>0$ and $n$ large enough, for any $m\in \{0,...,d-1\}$:
    \begin{align}\label{eq:Green_I1_2}
        |G_{E,n}(\theta+\ell_1\omega;-\ell_1,m)|&\leq \max(e^{\ell_1 L_d}, e^{-\ell_2L_d})\cdot e^{n\varepsilon_1}\\
        |G_{E,n}(\theta+\ell_1\omega;-\ell_1,nd-1-m)|&\leq \max(e^{\ell_1L_d}, e^{-\ell_2L_d})\cdot e^{n\varepsilon_1}.
    \end{align}
Implying, in analogy with \eqref{eq:phi0} that
\begin{align}
    1=\max(|u_0|, |u_1|)\leq e^{-\frac{1}{2}\varepsilon_1 nL_d}.
\end{align}
Thus a contradiction. 
Taking Lemmas \ref{lem:I1_I2_2} and \ref{lem:I1_small_2} into account, we arrive at
\begin{lemma}
Under the same conditions as Lemma \ref{lem:I1_I2_2}.
For some $\ell_3\in I_2$, one has $\theta+\ell_3\omega \notin \widetilde{\mathcal{B}}_{f,E,n}^{(2)}$.   
\end{lemma}
This implies 
\begin{align}
    |u_y|\leq e^{-\frac{1}{2}\varepsilon_1 n L_d}\leq e^{-\frac{L_d}{20d}\varepsilon_1 y},
\end{align}
which is the claimed result.
\end{proof}

\section{Applications to spin chains and the skew shift}\label{sec:XY_sk}

\subsection{Anisotropic XY spin chain model in quasi-periodic magnetic fields}\label{sec:spin}
The XY-chain model is an exactly solvable model, first understood in \cite{LSM} by Lieb-Schultz-Mattis, due to the fact that the Jordan-Wigner transformation maps the XY-chain Hamiltonian to a Hamiltonian of free Fermions. 
Since then the XY-chain has become a prototypical model in understanding phenomena in many-body quantum
theory.

In this paper, we study the anisotropic XY-spin chain in magnetic fields defined as follows.
Let $\rho\in [0,1]$ represent the strength of the anisotropy. 
For $n\in \N$, we denote the finite volume, anisotropic XY Hamiltonian with free boundary conditions by
\begin{align*}
H_{[1,n],\omega,\theta,\rho,v}^{XY}=\sum_{j=1}^{n-1} [(1+\rho)\sigma_j^x\sigma_{j+1}^x+(1-\rho)\sigma_j^y\sigma_{j+1}^{y}]+\sum_{j=1}^n v(\theta+n\omega) \sigma_j^z,
\end{align*}
where $\omega\in \T^b$ is the frequency, $\theta\in \T^b$ is the phase and the potential $v$ is assumed to be a non-constant analytic function on $\T^b$. The underlying Hilbert space is $\calH=\bigotimes_{j=1}^n\C^2$. 
The matrices $\sigma^x, \sigma^y, \sigma^z$ are the Pauli matrices given by 
\begin{align*}
\sigma^x=\left(\begin{matrix}
0 \ &1\\
1 &0
\end{matrix}\right),\,
\sigma^y=\left(\begin{matrix}
0 \ &-i\\
i &0
\end{matrix}\right),
\text{ and }
\sigma^z=\left(\begin{matrix}
1 \ &0\\
0 &-1
\end{matrix}\right),
\end{align*}
and $\sigma_j^x$ etc.\ means that the Pauli matrix acts on the $j^{th}$ component of the tensorial state. 
It is well-known that the XY-chain models can be reduced to the following Hamiltonian $\widetilde{H}^{XY}_{n,\omega,\theta,\rho,v}$ of the free Fermions via the Jordan-Wigner transformation (see e.g. \cite{HSS}):
\begin{equation}\label{def:tH_XY}
\widetilde{H}^{XY}_{n,\omega,\theta,\rho,v}=\left( \begin{array}{ccccc} V(\theta+n\omega) & B^* & & & \\ B & V(\theta+(n-1)\omega) & \ddots & & \\ & \ddots & \ddots & \ddots & \\ & & \ddots & \ddots & B^*\\ & & & B & V(\theta+\omega) \end{array} \right),
\end{equation}
where 
\begin{align}
    B=\left(\begin{matrix}1 &\rho\\ -\rho &-1\end{matrix}\right), \text{ and } V(\theta)=\left(\begin{matrix} v(\theta) &0 \\ 0 &-v(\theta)\end{matrix}\right).
\end{align}
Note that in the isotropic case ($\rho=0$), the resulting Hamiltonian of the free Fermions is scalar valued, thus is much better understood than the anisotropic case. 

It was shown by Hamza-Sims-Stolz \cite{HSS} that dynamical localization of the anisotropic XY-chain, characterized by the zero-velocity Lieb-Robinson bound, is equivalent to the exponential dynamical localization of $\widetilde{H}^{XY}_{n,\omega,\theta,\rho,v}$, viz. 
\begin{align*}
E\left(\sup_{t\in \R} \left| (e^{-i t \widetilde{H}^{XY}_{n,\omega,\theta,\rho,v} })_{j,k}\right|\right)\leq C e^{-\eta |j-k|}
\end{align*}
for all $n$ and $j,k\in [1,2n]$. Later it was proved in \cite{CS} by Chapman-Stolz that dynamical localization holds for the anisotropic XY-chain with random magnetic fields at arbitrarily small disorder.
Spin chains with deterministic magnetic fields are much less understood.
For isotropic XY-chains with quasi-periodic magnetic fields, an interesting anomalous Lieb-Robinson bound was proved by Damanik-Lemm-Lukic-Yessen \cite{DLLY}. Positive lower bounds of the Lieb-Robinson velocity were obtained by \cite{DLY,Ka,Fi} in the periodic setting. Moreover, they covered  quasi-periodic models under a reducibility assumption, as well as the limit-periodic case. 
As far as the anisotropic case is concerned, the only results so far are by Damanik-Lukic-Yessen \cite{DLY} for periodic magnetic fields.

In this paper, we study the infinite volume version of the anisotropic $\widetilde{H}_{n,\omega,\theta,\rho,v}^{XY}$ in the positve Lyapunov exponent regime, which complements the earlier studies. Thus, we define 
\begin{align}
(\widetilde{H}^{XY}_{\omega,\theta,\rho,v}\phi)_n=B \phi_{n+1}+V(\theta+n\omega) \phi_n+B^* \phi_{n-1},
\end{align}
where 
\begin{align}
    B=\left(\begin{matrix}1 &\rho\\ -\rho &-1\end{matrix}\right), \text{ and } V(\theta)=\left(\begin{matrix} v(\theta) &0 \\ 0 &-v(\theta)\end{matrix}\right).
\end{align}
Clearly this is a $2\times 2$ block-valued Jacobi matrix. As an immediate corollary of Theorem \ref{thm:main}, one has:
\begin{corollary}\label{cor:thm1_XY}
$\widetilde{H}^{XY}_{\omega,\theta=0,\rho,v}$ is Anderson localized in the positive Lyapunov exponent regime for a.e. $\omega\in \T^b$.
\end{corollary}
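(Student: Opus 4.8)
The plan is to realize $\widetilde{H}^{XY}_{\omega,\theta=0,\rho,v}$ as a special case of the operator $H_0$ in \eqref{eq:BVsys} with block size $d=2$, and then to appeal to Theorem \ref{thm:main} (more precisely, to the local-in-energy refinement that its proof in Section \ref{sec:main} actually produces). The first task is a structural verification. Taking $B$ and $V(\theta)=\mathrm{diag}(v(\theta),-v(\theta))$ as in the definition of $\widetilde{H}^{XY}_{\omega,\theta,\rho,v}$, one checks: (i) since $v$ is real-analytic on $\T^b_{\eta}$, both $V\in C^{\omega}(\T^b_{\eta},\Mat(2,\C))$ and the constant matrix $B$ are analytic; (ii) $V(\theta)$ is Hermitian, being real and symmetric; (iii) $B$ is constant, so $B^{(*)}=B^{*}$ is (trivially) its own analytic extension off the real torus; and (iv) $\det B=\rho^2-1\neq0$ for $\rho\in[0,1)$, so $B$ is invertible. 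If $\rho=1$ the matrix $B$ is singular and the framework does not apply, so the corollary is to be read for $\rho\in[0,1)$, or more generally $|\rho|\neq1$. Under these identifications the transfer cocycle of $\widetilde{H}^{XY}_{\omega,\theta,\rho,v}$ is exactly the cocycle \eqref{def:M_E}, and $M_E(\theta)$ is complex symplectic for $\theta\in\T$ as required.

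Next one must unpack the phrase ``positive Lyapunov exponent regime''. The hypothesis of Theorem \ref{thm:main} as literally stated, namely $L_2(\omega,M_E)\ge\gamma>0$ for all $E,\omega$, is too strong for the XY model; what one really uses is that the argument in Section \ref{sec:main} proceeds energy-interval by energy-interval. Indeed, the only ingredient requiring positivity is Corollary \ref{cor:goodGreen}, whose hypothesis is $L_d(\omega,M_E)\ge\gamma>0$ at the energy and frequency under consideration, and so that argument in fact establishes: for every compact $J\subset\R$ and every $a>0$, $A>b$ with $\inf_{E\in J}L_2(\omega,M_E)\ge\gamma>0$ for all $\omega\in\mathrm{DC}_{a,A}$, there is a Lebesgue-null set $Z_{J,a,A}\subset\mathrm{DC}_{a,A}$ such that $\widetilde{H}^{XY}_{\omega,0,\rho,v}$ is Anderson localized on $J$ for every $\omega\in\mathrm{DC}_{a,A}\setminus Z_{J,a,A}$. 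Since $\|B\|+\sup_{\theta\in\T^b}\|V(\theta)\|<\infty$, all spectra lie in a fixed interval $[-C_0,C_0]$; one then exhausts the positive Lyapunov exponent regime inside $[-C_0,C_0]$ by countably many compact intervals carrying uniform positive lower bounds, applies the local statement to each, and unions the corresponding null sets together with a countable exhaustion of $\mathrm{DC}$ by the classes $\mathrm{DC}_{a,A}$. This yields a Lebesgue-null set of $\omega\in\T^b$ off which localization holds on the whole positive Lyapunov exponent regime.

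I expect no genuine obstacle: the mathematical content is the structural check above, and the only point needing a word of care is the bookkeeping in the previous paragraph, namely reconciling the (a priori $\omega$-dependent) set $\{E:L_2(\omega,M_E)>0\}$ with an exhaustion by fixed compact energy intervals that carry lower bounds uniform in $\omega$. This is exactly the situation handled in \cite{B}*{Chapter 10} and \cite{BG}: the semi-algebraic elimination of the energy carried out in Section \ref{sec:main} is designed precisely so that, on each such interval, the conclusion holds simultaneously for all energies, and there is nothing new to prove.
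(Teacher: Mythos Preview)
Your proposal is correct and takes the same approach as the paper: the paper simply declares the corollary ``an immediate corollary of Theorem \ref{thm:main}'' with no further argument, and you have supplied exactly the structural verification and local-in-energy bookkeeping that this sentence leaves implicit. Your observation that $\rho=1$ must be excluded (since then $\det B=0$) is a valid point the paper glosses over.
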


We now restrict to the case of one-dimensional torus to address arithmetic Anderson localization.
\subsubsection*{Proof of Theorem \ref{thm:XY}}
We obtain Theorem \ref{thm:XY} as a corollary of Theorem \ref{thm:acceleration=d_2}.
It suffices to check the conditions \eqref{assume:t2_1} and \eqref{assume:t2_2}.
Let $J$ be the following orthonormal matrix:
\begin{align}
J=\left(\begin{matrix} 1 & 0\\ 0 &-1\end{matrix}\right). 
\end{align}
Clearly $JV(\theta)J^{-1}=V^T(-\theta)$, where we used $v$ is even, and $JBJ^{-1}=B^T$, which verifies \eqref{assume:t2_1} of Theorem \ref{thm:acceleration=d_2}.
Next, let us verify $f_{E,n}(\theta)=f_{E,n}(\theta+\frac{1}{2})$ for even $n$.
It is easy to see that for 
\begin{align}
    L=\left(\begin{matrix}0 & 1\\ 1 & 0\end{matrix}\right)
\end{align}
we have
\begin{align}
    L(V(\theta+\frac{1}{2})-E)L^{-1}=\left(\begin{matrix} -v(\theta+\frac{1}{2})-E &0\\ 0&v(\theta+\frac{1}{2})-E\end{matrix}\right)=V(\theta)-E,
\end{align}
where we used the assumption that $v(\theta+\frac{1}{2})=-v(\theta)$. It is also straightforward to check $LBL^{-1}=-B$.
Hence, with $V_E(\theta,j):=V(\theta+j\omega)-E$, one has 
\begin{align}
   &f_{E,n}(\theta+\frac{1}{2})\\
=&
\det\left(\begin{matrix}
V_E(\theta+\frac{1}{2},n-1) & B^* & &  &B\\
B &V_E(\theta+\frac{1}{2},n-2) &\ddots \\
& \ddots &\ddots &\ddots \\
& &\ddots &\ddots &B^*\\
B^* & & &B &V_E(\theta+\frac{1}{2},0)
\end{matrix}\right)\\
=&
\det\left(\begin{matrix}
LV_E(\theta+\frac{1}{2},n-1)L^{-1} & LB^*L^{-1} & &  &LBL^{-1}\\
LBL^{-1} &LV_E(\theta+\frac{1}{2},n-2)L^{-1} &\ddots \\
& \ddots &\ddots &\ddots \\
& &\ddots &\ddots &LB^*L^{-1}\\
LB^*L^{-1} & & &LBL^{-1} &LV_E(\theta+\frac{1}{2},0)L^{-1}
\end{matrix}\right)\\
=&
\det\left(\begin{matrix}
V_E(\theta,n-1) & -B^* & &  &-B\\
-B &V(\theta,n-2) &\ddots \\
& \ddots &\ddots &\ddots \\
& &\ddots &\ddots &-B^*\\
-B^* & & &-B &V_E(\theta,0)
\end{matrix}\right).
\end{align}
Finally we remove the minus signs in front of $B$ and $B^*$ via conjugating the matrix by (only possible for even $n$)
\[\mathrm{diag}(I_2, -I_2, I_2, -I_2,..., I_2, -I_2)_{2n\times 2n}.\]
Hence
\begin{align}
f_{E,n}(\theta+\frac{1}{2})=\det\left(\begin{matrix}
V_E(\theta,n-1) & B^* & &  &B\\
B &V_E(\theta,n-2) &\ddots \\
& \ddots &\ddots &\ddots \\
& &\ddots &\ddots &B^*\\
B^* & & &B &V_E(\theta,0)
\end{matrix}\right)
=f_{E,n}(\theta).
\end{align}
This verifies the assumption \eqref{assume:t2_2} of Theorem \ref{thm:acceleration=d_2} (for even $n$ only; hence one needs to further restrict the admissible sequence to even numbers. However, this is still a sequence of positive density, thus does not affect the proof of localization).
Hence Theorem \ref{thm:acceleration=d_2} implies Theorem \ref{thm:XY} as a corollary. \qed

\subsection{Skew-shift with rational frequencies}
Let $\theta,y\in \T$.
Let us consider the following operator on $\ell^2(\Z, \C^q)$:
\begin{align}\label{def:hatH_sk}
    (\widehat{H}_{\lambda,\theta,y,p/q}\hat{U})_k=B\, \hat{U}_{k+1}+B^* \hat{U}_{k-1}+V(\theta+ky)\hat{U}_{k},
\end{align}
where
\begin{align}\label{eq:V_sk}
    V(\theta)=
    \left(\begin{matrix}
        0 & e^{2\pi i \theta} & & &e^{-2\pi i\theta}\\
        e^{-2\pi i\theta} &0 &\ddots & &\\
        &\ddots &\ddots &\ddots &\\
        & &\ddots &\ddots &e^{2\pi i\theta}\\
        e^{2\pi i\theta}& & &e^{-2\pi i\theta} &0
    \end{matrix}\right)_{q\times q},
\end{align} 
and $B\in \mathrm{Mat}(\C, q)$ be an arbitrary diagonal matrix. A particular choice of $B$ that is of interest to us is the following:
\begin{align}\label{def:Bsk}
    B_{sk}=\lambda\, \mathrm{diag}(e^{2\pi i j(j-1)p/q})_{j=q-1}^0,
\end{align}
with $p/q\in (0,1)$ being a reduced rational with $q\geq 3$. 
The specific choice of $B_{sk}$ arises from taking the dual model of the following Schr\"odinger operator $H^{sk}_{\lambda,x,y,p/q}$ on $\ell^2(\Z)$ with skew-shift dynamics with frequency $p/q$, see \eqref{def:skew_shift}:
\begin{align}
    (H^{sk}_{\lambda,x,y,p/q}u)_n=u_{n+1}+u_{n-1}+2\lambda\cos(2\pi(x+ny+n(n-1)p/q)))u_n.
\end{align}
For any irrational $y$, the spectrum $\sigma(H^{sk}_{\lambda,x,y,p/q})$ is constant in $x$, and $\sigma(\widehat{H}_{\lambda,\theta,y,p/q})$ is constant in $\theta$.
In the rest of this section, we will fix a $y\in \mathrm{DC}$.

The first goal of this section is to prove the almost localization of $\widehat{H}_{\lambda,\theta,y,p/q}$.
\begin{definition}[Resonances of $\theta$]
Let $y,\theta\in \T$ and $\varepsilon>0$. We say $k\in \Z$ is an $\varepsilon$-resonance of $\theta$ if $\|q\cdot (2\theta-ky)\|_{\T}\leq e^{-|k|^{\varepsilon}}$ and $\|q\cdot (2\theta-ky)\|_{\T}=\min_{|j|\leq k} \|q\cdot (2\theta-jy)\|_{\T}$.
\end{definition}
In the rest of the section, we fix some $\varepsilon$ such that $0<\varepsilon<\delta_1/4$, where $\delta_1>0$ is as in Lemma \ref{lem:complexity_2}. 

\begin{definition}
Let $0=|n_0|\leq |n_1|\leq |n_2|\leq ...$ be the $\varepsilon$-resonances of $\theta$. If this sequence is infinite, we
say $\theta$ is $\varepsilon$-resonant, otherwise we say it is $\varepsilon$-non-resonant. Furthermore, if $\theta$ is $\varepsilon$-non-resonant with a finite sequence of resonances $0=|n_0|\leq |n_1|\leq ...\leq |n_j|$, we let $n_{j+1}=\infty$.
\end{definition}

\begin{remark}
If $y\in \mathrm{DC}_{a,A}$ for some $a>0$ and $A>1$, then for $j$ large enough, one has 
\[|n_{j+1}|>q^{-1} C_{a,A}\cdot e^{A^{-1}|n_j|^{\varepsilon}}.\] 
Indeed by triangle inequality, we have
\begin{align}
    e^{-|n_j|^{\varepsilon}}\geq \|q (2\theta-n_jy)\|_{\T}
    \geq &\|q(n_j-n_{j+1})y\|_{\T}-\|q (2\theta-n_{j+1}y)\|_{\T}\\
    \geq &\frac{a}{|q(n_j-n_{j+1})|^A}-e^{-|n_{j+1}|^{\varepsilon}},
\end{align}
implying the claimed inequality. 
\end{remark}

\begin{definition}[Almost localization]\label{def:almost_AL}
We say the family $\{\hat{H}_{y,\theta}\}_{\theta\in \T}$ is $(C_1, C_2,\tilde{\varepsilon})$-{\em almost localized} for some constant $C_1, C_2, \tilde{\varepsilon}>0$ if for every solution $\hat{u}$ of $\hat{H}_{y,\theta}\hat{u}=E\hat{u}$ for some energy $E\in \R$, satisfying $\hat{u}_0=1$ and $|\hat{u}_k|\leq 1+|k|$, and for every $C_1(1+|n_j|)<|k|<C_1^{-1}|n_{j+1}|$, the bound $|\hat{u}_k|\leq C_2 e^{-\tilde{\varepsilon} |k|}$ holds, where the $n_j$'s are the $\varepsilon$-resonances of $\theta$.
\end{definition}

\begin{theorem}\label{thm:sk_dual_almost_AL}
There exists a constant $C_2>0$ such that for any $y\in \mathrm{DC}$,
$\{\widehat{H}_{\lambda,\theta,y,p/q}\}_{\theta\in \T}$ is $(5, C_2, L_q(\omega,M_E)/(10q))$-almost localized in $\{E: L_q(\omega,M_E)>0\}$.
\end{theorem}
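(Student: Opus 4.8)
The plan is to realize $\widehat{H}_{\lambda,\theta,y,p/q}$ as an instance of \eqref{eq:BVsys} with block size $d=q$, constant coefficient matrix $B=B_{sk}$ (see \eqref{def:Bsk}) and rotation frequency $\omega=y\in\mathrm{DC}$, and then to run the arithmetic‐localization machinery of Section~\ref{sec:AL_arithmetic} in its resonance‑adapted form. First one checks the hypotheses of Theorem~\ref{thm:acceleration=d_2}: $B_{sk}$ is a nonzero diagonal matrix, hence invertible with $B_{sk}^{T}=B_{sk}$; $V$ from \eqref{eq:V_sk} is Hermitian on the real torus, and because of its circulant structure it satisfies $V(\theta)=V^{T}(-\theta)$, so \eqref{assume:t2_1} holds with $J=I_q$; finally, conjugating $P_n(\theta)$ from \eqref{def:Pn} by $\mathrm{diag}(D,\dots,D)$ with $D=\mathrm{diag}(e^{2\pi i m/q})_{m=0}^{q-1}$ fixes the diagonal blocks $B_{sk},B_{sk}^{*}$ and sends $V(\theta+j\omega)$ to $V(\theta+j\omega-1/q)$, whence $f_{E,n}(\theta+1/q)=f_{E,n}(\theta)$, i.e.\ \eqref{assume:t2_2}. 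Moreover, since the entries of $M_E(\theta)$ are affine in $e^{\pm 2\pi i\theta}$, one has $L^{q}_{\varepsilon}(\omega,M_E)\le qL^{1}_{\varepsilon}(\omega,M_E)\le q(2\pi\varepsilon+O(1))$ for $\varepsilon>0$, so the convex even function $L^{q}_{\varepsilon}$ has right–derivative at $0$ bounded by $q$; together with \cite{AJS}*{Lemma 6.4} this gives $1\le\kappa^{q}(\omega,M_E)\le q\le 2q-1$ on $\{E:L_q(\omega,M_E)>0\}$.

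Consequently the zero‑counting/complexity reduction of Section~\ref{sec:AL_arithmetic} applies verbatim. Theorem~\ref{thm:Riesz_un} bounds the zero count of $f_{E,n}$ in a thin annulus; the $J=I_q$ analogue of Lemma~\ref{lem:f_even2}–Lemma~\ref{lem:even_zero_2} together with the $1/q$–periodicity (the analogue of Lemma~\ref{lem:f_symmetry}) organizes those zeros into orbits; and, exactly as in Lemma~\ref{lem:complexity_2}, for every large $\kappa_0$–admissible $n$ the large deviation set $\widetilde{\mathcal B}_{f,E,n}$ is covered by $N$ ``super‑pairs'' $\bigcup_{m_1=1}^{q}(U_j+m_1/q)\cup\bigcup_{m_2=1}^{q}(-U_j-(n-1)\omega+m_2/q)$, $1\le j\le N$, with $\mathrm{mes}(U_j)<e^{-n^{\delta_1/2}}$ ($\delta_1$ as in Lemma~\ref{lem:deno}); since $\kappa^{q}\le q$, one gets the favourable bound $N\le n(1+\nu)$ for any preassigned small $\nu>0$, which will supply the needed slack in the pigeonhole. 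Here the scale $n$ may always be chosen admissible and comparable to any prescribed large integer, by Remark~\ref{rem:admissible}.

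Now fix $E$ with $L_q:=L_q(\omega,M_E)>0$, set $\widetilde\varepsilon:=L_q/(10q)$, and fix a generalized solution $\widehat u$ with $|\widehat u_0|=1$, $|\widehat u_k|\le 1+|k|$. Let $0=|n_0|\le|n_1|\le\cdots$ be the $\varepsilon$–resonances of $\theta$ (with $\varepsilon<\delta_1/4$ fixed as in the text), and let $k$ satisfy $5(1+|n_j|)<|k|<5^{-1}|n_{j+1}|$; assume $k>0$, the case $k<0$ being symmetric. Choose a $\kappa_0$–admissible $n$ comparable to $|k|$ and intervals $I_1$ around the origin and $I_2$ around the block site $k$, of the form used in Lemma~\ref{lem:I1_I2_2}, positioned so that $\bigl|(I_1\cup I_2)\bigr|>N$ and so that block site $0$ (resp.\ block site $k$) lies at depth bounded below by a fixed fraction of $n$ inside every box $[\ell,\ell+n-1]$ with $\ell\in I_1$ (resp.\ $\ell\in I_2$). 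Then: (i) [analogue of Lemma~\ref{lem:I1_small_2}] every $\ell\in I_1$ has $\theta+\ell\omega\in\widetilde{\mathcal B}^{(2)}_{f,E,n}$, for otherwise the good denominator bound of Lemma~\ref{lem:deno}, the numerator bound of Lemma~\ref{lem:numerator}, and the Poisson formula \eqref{eq:Poisson_exp} (with $|\widehat u|$ near the box boundary controlled by $1+|k|$) force $|\widehat u_0|<1$, a contradiction. (ii) [pigeonhole] two of the translates $\{\theta+\ell\omega:\ell\in I_1\cup I_2\}$ can lie in a single super‑pair only in the ``mixed'' way, and this forces $\|q(2\theta+m'\omega)\|_{\T}<q\,e^{-n^{\delta_1/2}}$ for some $m'=\ell_1+\ell_2+n-1$ with $\ell_i\in I_1\cup I_2$; by the choice of $n$ and the separation $5(1+|n_j|)<|k|<5^{-1}|n_{j+1}|$ the index $\pm m'$ lies strictly between consecutive $\varepsilon$–resonance scales with margin, so it is not an $\varepsilon$–resonance of $\theta$, and—since $\varepsilon<\delta_1/4$ and $|m'|$ is comparable to $n$—one gets $\|q(2\theta+m'\omega)\|_{\T}\gg q\,e^{-n^{\delta_1/2}}$; moreover Diophantine separation of the finitely many admissible values of $m'$ shows at most $O(1)$ super‑pairs can be mixed. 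Hence the bad translates number at most $N+O(1)<|I_1\cup I_2|$, so some $\ell_3\in I_1\cup I_2$ escapes $\widetilde{\mathcal B}^{(2)}_{f,E,n}$, and by (i) necessarily $\ell_3\in I_2$. (iii) The corresponding box $[\ell_3,\ell_3+n-1]$ contains block site $k$ at depth comparable to $n$; combining Lemmas~\ref{lem:numerator} and~\ref{lem:deno} with \eqref{eq:Poisson_exp} and absorbing the (adjustable) small error of Lemma~\ref{lem:numerator}, this gives $|\widehat u_k|\le C_2\,e^{-cn L_q}$ with a fixed $c>0$; since $n$ is comparable to $|k|$ one obtains $|\widehat u_k|\le C_2\,e^{-\widetilde\varepsilon|k|}$, as claimed.

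\textbf{Main obstacle.} The analytic inputs (Lemmas~\ref{lem:numerator}, \ref{lem:deno}, \ref{lem:fn_ave}, and Theorem~\ref{thm:Riesz_un}) are already in hand, so the entire difficulty lies in step (ii)–(iii): one must calibrate the scale $n$ against $|k|$ finely enough that \emph{all} the double‑resonance indices $m'$ that can occur (including the degenerate ones near $m'=0=n_0$) stay strictly between the resonance scales $|n_j|,|n_{j+1}|$ with enough margin that the bound $\varepsilon<\delta_1/4$ beats the exponent $n^{\delta_1/2}$, while simultaneously keeping the Green's‑function decay $cnL_q$ at least $\widetilde\varepsilon|k|$ so that the advertised (non‑sharp) rate $L_q/(10q)$ is attained. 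This is purely arithmetic/combinatorial bookkeeping, and the argument runs parallel to the almost‑localization proof of Avila–Jitomirskaya~\cite{AJ}, transplanted into the determinant/zero‑count framework of~\cite{HS2}.
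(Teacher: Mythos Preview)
Your outline matches the paper's proof in structure: the verifications of \eqref{assume:t2_1}, \eqref{assume:t2_2}, and $\kappa^q\le q$ are exactly as you describe (the paper takes $J=I_q$ and conjugates by $L=\mathrm{diag}(e^{2\pi i m/q})_{m=0}^{q-1}$), and the complexity bound $N\le n(1+\nu)$ from Lemma~\ref{lem:complexity_2} is indeed the numerical input that makes the pigeonhole work.

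The gap is in your step~(ii). You claim that with intervals ``of the form used in Lemma~\ref{lem:I1_I2_2}'' the index $m'=\ell_1+\ell_2+n-1$ ``lies strictly between consecutive $\varepsilon$-resonance scales'' and that ``$|m'|$ is comparable to~$n$.'' Neither holds: when both $\ell_1,\ell_2\in I_1=[-[(1-\varepsilon_1)n],-[\varepsilon_1 n]]$ the value of $m'$ sweeps an interval of length $\sim n$ containing~$0$, so $|m'|$ can be far below $|n_j|$ and nothing excludes $m'=-n_j$. The paper's resolution (Lemma~\ref{lem:I1_I2_2'}) is \emph{not} to force $|m'|$ into $(|n_j|,|n_{j+1}|)$---that is impossible---but to choose $I_1$ \emph{asymmetrically} depending on the sign of $n_j$: take $I_1=[-[n/2],-[n/8]]$ when $n_j\ge 0$ and $I_1=[-[7n/8],-[n/2]]$ when $n_j<0$. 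A short case check then shows the range of $m'$ over $I_1\cup I_2$ simply misses the single value $-n_j$; one separately verifies $|m'|<|n_{j+1}|$, and the residual possibility $m'=-n_k$ with $k\le j-1$ is eliminated by combining the Diophantine lower bound on $\|q(n_k-n_j)y\|$ with the minimality clause in the definition of~$n_j$. Your ``Main obstacle'' paragraph is near the right place but states the wrong target: the point is solely to avoid $-n_j$, not to confine $|m'|$ between resonance scales.
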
 

\begin{proof}
We first check the conditions \eqref{assume:t2_1} and \eqref{assume:t2_2} of Theorem \ref{thm:acceleration=d_2} are satisfied and that for $E\in \sigma(\widehat{H}_{\lambda,\theta,y,p/q})\cap \{E: L_q(\omega,M_E)>0\}$,
\begin{align}\label{eq:kappa_q=q}
\kappa^q(\omega,M_E)\leq q.
\end{align}
Note this provides a tighter upper bound than the required $2q-1$ as in Theorem \ref{thm:acceleration=d_2}.
It is easy to compute that as $\varepsilon\to\infty$, $L^q_\varepsilon(\omega,M_E)=2\pi q\varepsilon+o(1)$, hence $\kappa^q_\varepsilon(\omega,M_E)=q$ for $\varepsilon$ large enough. 
Convexity of $L^q_\varepsilon(\omega,M_E)$ in $\varepsilon$ implies $\kappa^q(\omega,M_E)\leq q$,
which verifies \eqref{eq:kappa_q=q}.

Next, we turn to \eqref{assume:t2_1}, which follow directly from $V(-\theta)=V^T(\theta)$ and $B=B^T$.

To verify \eqref{assume:t2_2}, we let 
\begin{align}
    L=\mathrm{diag}(1, e^{2\pi i\frac{1}{q}}, ..., e^{2\pi i \frac{q-1}{q}}). 
\end{align}
We have
\begin{align}
    LV(\theta+\frac{1}{q})L^{-1}=
    L\left(\begin{matrix}
    0 & e^{2\pi i (\theta+\frac{1}{q})} & & &e^{-2\pi i(\theta+\frac{1}{q})}\\
        e^{-2\pi i(\theta+\frac{1}{q})} &0 & & &\\
        \\
        & & & &e^{2\pi i(\theta+\frac{1}{q})}\\
        e^{2\pi i(\theta+\frac{1}{q})}& & &e^{-2\pi i(\theta+\frac{1}{q})} &0
    \end{matrix}\right)L^{-1}=V(\theta),
\end{align}
and clearly since $B$ is diagonal, $LBL^{-1}=B$. 
Thus, with $V_E(\theta,j):=V(\theta+j\omega)-E$,  
\begin{align}
       &f_{E,n}(\theta+\frac{1}{q})\\
&=
\det\left(\begin{matrix}
V_E(\theta+\frac{1}{q},n-1) & B^* & &  &B\\
B &V_E(\theta+\frac{1}{q},n-2) &\ddots \\
& \ddots &\ddots &\ddots \\
& &\ddots &\ddots &B^*\\
B^* & & &B &V_E(\theta+\frac{1}{q},0)
\end{matrix}\right)\\
&=
\det\left(\begin{matrix}
LV_E(\theta+\frac{1}{q},n-1)L^{-1} & LB^*L^{-1} & &  &LBL^{-1}\\
LBL^{-1} &LV_E(\theta+\frac{1}{q},n-2)L^{-1} &\ddots \\
& \ddots &\ddots &\ddots \\
& &\ddots &\ddots &LB^*L^{-1}\\
LB^*L^{-1} & & &LBL^{-1} &LV_E(\theta+\frac{1}{q},0)L^{-1}
\end{matrix}\right)\\
&=
\det\left(\begin{matrix}
V_E(\theta,n-1) & B^* & &  &B\\
B &V_E(\theta,n-2) &\ddots \\
& \ddots &\ddots &\ddots \\
& &\ddots &\ddots &B^*\\
B^* & & &B &V_E(\theta,0)
\end{matrix}\right)\\
&=f_{E,n}(\theta).
\end{align}
Thus  assumption \eqref{assume:t2_2} of Theorem \ref{thm:acceleration=d_2} holds with $d=q$.
The proof of almost localization proceeds as that of the localization as in Theorem \ref{thm:acceleration=d_2}, except that Lemma \ref{lem:I1_I2_2} holds only for non-resonant $z$'s (instead of arbitrary large $|z|$) and the definitions of $I_1, I_2$ intervals require minor modifications.
Indeed, we have
\begin{lemma}\label{lem:I1_I2_2'}
Let $y\in \mathrm{DC}_{a,A}$ for some $a>0$ and $A>1$.
Let $\kappa_0>0$ be as in Lemma \ref{lem:deno}, and $C_*>0$ be the constant as in Remark \ref{rem:admissible}.
Let $\widetilde{\mathcal{B}}_{f,E,n}^{(2)}$ be as in \eqref{def:BfEn^(2)}.
For $j$ large enough, and for $z$ such that $5(1+|n_j|)<z/q<|n_{j+1}|/5$ \footnote{The case of negative $z$ can be handled similarly.}. Let $n\in \Z$ be $\kappa_0$-admissible such that $n\leq z/q<n+C_*$, let
\begin{align}
    I_1:=&[-[\frac{1}{2}n], -[\frac{1}{8}n]],\, I_2:=[[\frac{z}{q}]-[\frac{7}{8}n],[\frac{z}{q}]-[\frac{1}{8}n]], \text{ if } n_j\geq 0,  \\
    I_1:=&[-[\frac{7}{8}n], -[\frac{1}{2}n]],\, I_2:=[[\frac{z}{q}]-[\frac{7}{8}n],[\frac{z}{q}]-[\frac{1}{8}n]], \text{ if } n_j<0.
\end{align}
There exists $\ell\in I_1\cup I_2$ such that
\begin{align}
    \theta+\ell y\notin \widetilde{\mathcal{B}}^{(2)}_{f,E,n}.
\end{align}
\end{lemma}
The principle for choosing $I_1, I_2$ is to guarantee \eqref{eq:I_1+I_2_neq_nj} holds.

We will prove this lemma in details. The rest of the proof of almost localization is the same as that of Theorem \ref{thm:acceleration=d_2}.
\begin{proof}
Towards a contradiction, suppose $\theta+\ell\omega\in \widetilde{\mathcal{B}}_{f,E,n}^{(2)}$ for any $\ell\in I_1\cup I_2$.

Clearly $|I_1|+|I_2|>\frac{9}{8}n-C$, for some absolute constant $C>0$.
Also since  $\kappa^q\leq q$ (see \eqref{eq:kappa_q=q}), which is a tighter upper bound than $2q-1$, Lemma~\ref{lem:complexity_2} implies $N\leq n(1+\varepsilon_1)$ for $\varepsilon_1\in (0,1/10)$.
Since $|I_1|+|I_2|>N$, by the Pigeon hole principle, there exists $1\leq j\leq N$ and $\ell_1,\ell_2\in I_1\cup I_2$ such that 
\begin{align}
    \{\theta+\ell_1 y, \theta+\ell_2 y\}\subset \left(\bigcup_{m_1=1}^q (U_j+m_1/q)\right)\cup \left(\bigcup_{m_2=1}^q (-U_j-(n-1)y+m_2/q)\right)
\end{align}

It is clear that $y\in \mathrm{DC}_{a,A}$ and the measure estimate of $\mathrm{mes}(U_j)$ in Lemma \ref{lem:complexity_2} excludes the possibility of 
\begin{align}
    \{\theta+\ell_1 y, \theta+\ell_2 y\}\subset  \bigcup_{m_1=1}^q (U_j+m_1/q),
\end{align}
or 
\begin{align}
    \{\theta+\ell_1 y, \theta+\ell_2 y\}\subset  \bigcup_{m_2=1}^q (-U_j-(n-1)y+m_2/q)
\end{align}
It remains to consider the case when
\begin{align}\label{eq:assume_I1_I2_2'}
    \theta+\ell_1 y\in \bigcup_{m_1=1}^q (U_j+m_1/q), \text{ and } \theta+\ell_2 y\in \bigcup_{m_2=1}^q(-U_j-(n-1)y+m_2/q).
\end{align}
There exist $m_1,m_2\in \{1,...,d\}$ such that
\begin{align}
    \theta+\ell_1 y-m_1/q\in U_j, \text{ and } -(\theta+(\ell_2+n-1)y- m_2/q)\in U_j.
\end{align}
Taking the difference, we obtain
\begin{align}\label{eq:Uj_contra_1}
    \mathrm{mes}(U_j)
    \geq &\|2\theta+(\ell_1+\ell_2+n-1)y-(m_1+m_2)/q\|_{\T} \notag\\
    \geq &q^{-1}\|q\cdot (2\theta+(\ell_1+\ell_2+n-1)y)\|_{\T}.
\end{align}
The key to estimate the term on the right-hand side of the equation above is to show:
\begin{align}\label{eq:I_1+I_2_neq_nj}
\ell_1+\ell_2+n-1\neq -n_j.   
\end{align}
We divide into two different cases, depending on if $n_j\geq 0$.

\underline{Case 1}. If $n_j\geq 0$. Due to the choices of $I_1, I_2$ and $n$, one has
\begin{align}\label{eq:l1+l2_1}
    \ell_1+\ell_2+n-1\geq -n+n-1\geq -1>-n_j.
\end{align}

\underline{Case 2.1}. If $n_j<0$ and $\ell_1,\ell_2\in I_1$, then
\begin{align}
    \ell_1+\ell_2+n-1\leq -2[\frac{1}{2}n]+n-1\leq 1<-n_j.
\end{align}

\underline{Case 2.2}. If $n_j<0$ and $\{\ell_1,\ell_2\}\cap I_2\neq \emptyset$, we have
\begin{align}
    \ell_1+\ell_2+n-1\geq [\frac{z}{q}]-2[\frac{7}{8}n]+n-1\geq \frac{1}{4}n-2>\frac{z}{5q}>|n_j|.
\end{align}
Thus   \eqref{eq:I_1+I_2_neq_nj} holds.

Next, we show that 
\begin{align}\label{eq:l1+l2_neq_nj+1}
|\ell_1+\ell_2+n-1|<|n_{j+1}|.
\end{align}

\underline{Case 1}. If $n_j\geq 0$, we have by \eqref{eq:l1+l2_1} that
\begin{align}
    -\frac{2}{5}|n_{j+1}|<-1\leq \ell_1+\ell_2+n-1\leq 2[\frac{z}{q}]-2[\frac{1}{8}n]<\frac{2}{5}|n_{j+1}|
\end{align}

\underline{Case 2}. If $n_j<0$, then
\begin{align}
    -\frac{1}{5}|n_{j+1}|<-n<-2[\frac{7}{8}n]+n-1\leq \ell_1+\ell_2+n-1\leq 2[\frac{z}{q}]-2[\frac{1}{8}n]<\frac{2}{5}|n_{j+1}|
\end{align}
Hence we have verified \eqref{eq:l1+l2_neq_nj+1}.

Finally, suppose   $\ell_1+\ell_2+n-1=-n_k$ for some $k\leq j-1$. We again divide into two cases depending on the size of $\|q\cdot (2\theta-n_k y)\|_{\T}$.

\underline{Case 1}. 
$\|q\cdot (2\theta-n_k y)\|_{\T}\geq e^{-n^{\varepsilon}}$.
In this case we estimate \eqref{eq:Uj_contra_1} via the following:
\begin{align}\label{eq:Uj_contra_2}
    \mathrm{mes}(U_j)>q^{-1} e^{-n^{\varepsilon}}>e^{-n^{\delta_1/2}}, 
\end{align}
for $n$ large enough, due to $0<\varepsilon<\delta_1/4$.
This clearly contradicts with the measure estimate of $U_j$ from Lemma \ref{lem:complexity_2}.

\underline{Case 2}.
$\|q\cdot (2\theta-n_k y)\|_{\T}< e^{-n^{\varepsilon}}$. We are going to show this leads to a contradiction with the definition of $n_j$. 
Indeed, by $y\in \mathrm{DC}_{a,A}$,  
\begin{align}
    \|q (2\theta-n_j y)\|_{\T}
    \geq &\|q(n_k-n_j)y\|_{\T}-\|q(2\theta-n_k y)\|_{\T}\\
    \geq &\frac{a}{|q(n_k-n_j)|^A}-e^{-n^{\varepsilon}}\\
    \geq &C_{a,A} \frac{1}{|qn_j|^A}-e^{-n^{\varepsilon}}\\
    \geq &C_{a,A} \frac{1}{(qn)^A}-e^{-n^{\varepsilon}}\\
    \geq &e^{-n^{\varepsilon}}>\|q(2\theta-n_k y)\|_{\T},
\end{align}
this contradicts with $\|q(2\theta-n_jy)\|_{\T}=\min_{|m|\leq n_j}\|q(2\theta-my)\|_{\T}$.
\end{proof}
The rest of the proof of almost localization follows the same steps as that of Theorem \ref{thm:acceleration=d_2}.
\end{proof}

\subsubsection*{Proof of Theorem \ref{thm:skew_shift}}
The $2\times 2$ transfer matrix associated to $H^{sk}_{\lambda,x,y,p/q}$ along the skew-shift dynamics \begin{align}\label{def:skew_shift}
T^{sk}_{p/q}(x,y)=(x+y, y+p/q)
\end{align}
is
\begin{align}
    M^{\cos}_{E}(x):=\left(\begin{matrix}
        E-2\lambda\cos(2\pi x) &-1\\
        1 &0
    \end{matrix}\right).
\end{align}
Our goal is to show that if $|\lambda|>0$ is small enough, then for every $y\in \mathrm{DC}$, and every $E\in \sigma(H^{sk}_{\lambda,x,y,p/q})$, 
\begin{align}\label{eq:zero_LE_goal}
    L^{sk}_{p/q}(E,y):=\lim_{\ell\to\infty} \frac{1}{\ell}\int_{\T} \log \|\prod_{m=\ell-1}^0 M_E^{\cos}(x+my+m(m-1)p/q)\|\, \mathrm{d}x=0.
\end{align}

To this end, we first show $L^{sk}_{p/q}(E,y)=0$ for every eigenvalue of $\widehat{H}_{\lambda,0,y,p/q}$ with $B=B_{sk}$ as in \eqref{def:Bsk}.

Fix $\theta=0$ (one can also fix any non-resonant $\theta$). Clearly, for $y\in \mathrm{DC}$, $\theta=0$ is $\varepsilon$-non-resonant for any $\varepsilon>0$. Hence $n_{j+1}=\infty$ for some $j$. Theorem \ref{thm:sk_dual_almost_AL} implies $\widehat{H}_{\lambda,0,y,p/q}$ has pure point spectrum with exponentially decaying eigenfunctions if $|\lambda|>0$ is small enough. Indeed the potential matrix $V$ as in \eqref{eq:V_sk} does not have constant eigenvalue, hence by \cite[Theorem 2.3]{DK1}, $\widetilde{H}_{\lambda,\theta,y,p/q}$ has $q$ positive Lyapunov exponents, each of the order $\log (|\lambda|^{-1})+O(1)$.
Fixing an arbitrary eigenvalue $E$ of $\widehat{H}_{\lambda,0,y,p/q}$, and let $\hat{U}$ be the corresponding normalized eigenfunction. 
Let $\hat{U}=(...,\hat{U}_{1}, \hat{U}_0, \hat{U}_{-1},...)^T$, where for each $k\in \Z$,
\[\hat{U}_k=(\hat{u}_k^{(q-1)},...,\hat{u}_k^{(0)}).\]
For each $j\in \Z_q=\Z/(q\Z)$ and each $x\in \T$, let 
$u^{(j)}(x)=\sum_{k\in \Z} \hat{u}^{(j)}_k e^{-2\pi i kx}\in C^{\omega}(\T)$.
We have for each $j\in \Z_q$,
\begin{align}
    \lambda e^{2\pi i j(j-1)p/q} \hat{u}^{(j)}_{k+1}+\lambda e^{-2\pi i j(j-1)p/q} \hat{u}^{(j)}_{k-1}+e^{-2\pi i ky}\hat{u}^{(j+1)}_k+e^{2\pi i ky}\hat{u}^{(j-1)}_k=E\hat{u}^{(j)}_k,
\end{align}
This implies for any $x\in \T$ and $j\in \Z_q$ that 
\begin{align}
    2\lambda \cos(2\pi (x+j(j-1)p/q)) u^{(j)}(x)+u^{(j+1)}(x+y)+u^{(j-1)}(x-y)=Eu^{(j)}(x).
\end{align}
The above implies 
\begin{align}\label{eq:Mb=b}
    M_{\lambda,E}^{\cos}(x+j(j-1)p/q)\cdot b_{j,y}(x)=b_{j+1,y}(x+y),
\end{align}
in which for $j\in \Z_q$,
\begin{align}
    b_{j,y}(x):=\left(\begin{matrix} u^{(j)}(x)\\ u^{(j-1)}(x-y)\end{matrix}\right).
\end{align}
\eqref{eq:Mb=b} implies for any $\ell\in \Z$ and $x\in \T$ that
\begin{align}
    \prod_{m=\ell q-1}^{0}M_E^{\cos}(x+m y+m(m-1)p/q) \cdot b_{0, y}(x)=b_{0,y}(x+\ell q y).
\end{align}
We decompose
\begin{align}\label{eq:M_lq=tM_l}
    \prod_{m=\ell q-1}^{0}M_E^{\cos}(x+m y+m(m-1)p/q)
    =:\prod_{k=\ell-1}^{0} \widetilde{M}^{\cos}_{q,E,y,p/q}(x+kqy),
\end{align}
where 
\begin{align}
\widetilde{M}^{\cos}_{q,E,y,p/q}(x)=\prod_{m=q-1}^0 M_E^{\cos}(x+my+m(m-1)p/q).
\end{align}
Hence \eqref{eq:Mb=b} implies for every $x\in \T$,
\begin{align}
    \lim_{\ell\to\infty}\frac{1}{\ell}\log \|\prod_{k=\ell-1}^{0} \widetilde{M}^{\cos}_{q,E,y,p/q}(x+kqy)\cdot b_{0, y}(x)\|=0.
\end{align}
This implies via the Oseledets theorem, see~\cite{V}*{Theorem 4.2}, that
\begin{align}
    \lim_{\ell\to\infty}\frac{1}{\ell}\int_{\T} \log \|\prod_{k=\ell-1}^{0} \widetilde{M}^{\cos}_{q,E,y,p/q}(x+kqy)\|\, \mathrm{d}x=0,
\end{align}
which, due to \eqref{eq:M_lq=tM_l}, implies for every $y\in \mathrm{DC}$,
\begin{align}\label{def:LE_sk_y}
    L^{sk}_{p/q}(E,y):=\lim_{\ell\to\infty} \frac{1}{\ell}\int_{\T} \log \|\prod_{m=\ell-1}^0 M_E^{\cos}(x+my+m(m-1)p/q)\|\, \mathrm{d}x=0.
\end{align}
This implies $L^{sk}_{p/q}(E,y)=0$ on $\sigma(\widehat{H}_{\lambda,0,y,p/q})$, since eigenvalues of $\widehat{H}_{\lambda,0,y,p/q}$ form a dense set in $\sigma(\widehat{H}_{\lambda,0,y,p/q})$, and $L^{sk}_{p/q}(E,y)$ is a sub-harmonic function, hence upper semi-continuous in $E$.
Finally it suffices to show
\begin{align}\label{eq:claimed_dual}
\sigma(\widehat{H}_{\lambda,\theta,y,p/q})=\sigma(H^{sk}_{\lambda,x,y,p/q})
\end{align}
for any irrational $y$ and any $x,\theta\in \T$. 

Let $\mathcal{H}:=L^2(\T\times \Z)$ consisting of functions $u:\T\times \Z\to \C$ such that
\begin{align}
    \sum_{n\in \Z} \int_{\T} |u(x,n)|^2\, \mathrm{d}x<\infty,
\end{align}
and $\mathcal{H}_q:=L^2(\T\times (\Z\otimes \Z_q))$ consisting of functions $\hat{u}:\T\times (\Z\otimes \Z_q)\to \C$ such that
\begin{align}
    \sum_{m\in \Z} \sum_{j\in \Z_q} |\hat{u}(\theta,m,j)|^2\, \mathrm{d}\theta<\infty.
\end{align}
Define $H^{sk}_{\lambda,y,p/q}$ on $\mathcal{H}$ as follows, 
\begin{align}
    (H^{sk}_{\lambda,y,p/q}u)(x,n)=u(x,n+1)+u(x,n-1)+2\lambda\cos(2\pi(x+ny+n(n-1)p/q))u(x,n),
\end{align}
and $\widehat{H}_{\lambda,y,p/q}$ on $\mathcal{H}_q$ as:
\begin{align}
    (\widehat{H}_{\lambda,y,p/q}\hat{u})(\theta,m,j)
    =&\lambda e^{2\pi i\frac{j(j-1)p}{q}}\hat{u}(\theta,m+1,j)+e^{-2\pi i\frac{j(j-1)p}{q}}\hat{u}(\theta,m-1,j)\\
    &+e^{-2\pi i(\theta+my)}\hat{u}(\theta,m,j+1)+e^{2\pi i(\theta+my)}\hat{u}(\theta,m,j-1).
\end{align}
Following Chulaevsky-Delyon \cite{CD}, we define an unitary operator $U:\mathcal{H}\to \mathcal{H}_q$ as:
\begin{align}\label{eq:def_unitary_U}
(Uu)(\theta,m,j)=\sum_{n\equiv j (\mathrm{mod} q)} \int_{\T} e^{2\pi i(\theta+m y)n}e^{-2\pi i mx} u(x,n)\, \mathrm{d}x.
\end{align}
A straight-forward computation shows
\begin{align}
    UH^{sk}_{\lambda,y,p/q}=\widehat{H}_{\lambda,y,p/q}U.
\end{align}
In fact, 
\begin{align}
    &(UH^{sk}_{\lambda,y,p/q}u)(\theta,m,j)\\
    =&\sum_{n\equiv j (\mathrm{mod} q)}\int_{\T} e^{2\pi i(\theta+my)n} e^{-2\pi i mx} (H^{sk}_{\lambda,y,p/q}u)(x,n)\, \mathrm{d}x\\
    =&\sum_{n\equiv j (\mathrm{mod} q)}\int_{\T} e^{2\pi i(\theta+my)n} e^{-2\pi i mx}(u(x,n+1)+u(x,n-1)\\
    &\qquad +\lambda (e^{2\pi i (x+ny+n(n-1)\frac{p}{q})}+e^{-2\pi i (x+ny+n(n-1)\frac{p}{q})})u(x,n))\, \mathrm{d}x\\
    =&e^{-2\pi i(\theta+my)}(Uu)(\theta,m,j+1)+e^{2\pi i(\theta+my)}(Uu)(\theta,m,j)\\
    &+\lambda e^{2\pi i \frac{j(j-1)p}{q}} (Uu)(\theta,m+1,j)+\lambda e^{-2\pi i\frac{j(j-1)p}{q}}(Uu)(\theta,m-1,j)\\
    =&(\widehat{H}_{\lambda,y,p/q}Uu)(\theta,m,j).
\end{align}
Hence $\sigma(H^{sk}_{\lambda,y,p/q})=\sigma(\widehat{H}_{\lambda,y,p/q})$, which implies \eqref{eq:claimed_dual}, since for irrational $y$, and arbitrary $x,\theta\in \T$,
\begin{align}
\sigma(H^{sk}_{\lambda,x,y,p/q})=\cup_{x}\sigma(H^{sk}_{\lambda,x,y,p/q})=\sigma(H^{sk}_{\lambda,y,p/q})=\sigma(\widehat{H}_{\lambda,y,p/q})=\cup_{\theta}\sigma(\widehat{H}_{\lambda,\theta,y,p/q})=\sigma(\widehat{H}_{\lambda,\theta,y,p/q}).
\end{align}

\section{Applications of the non-arithmetic localization to graphene models}\label{sec:application}
Numerous models in condensed matter physics arise in block-valued form, rather than as scalar-valued operators.
We will now analyse some examples. 

\subsection{Dirac-Harper model for Moir\'e superlattice}
The following Dirac-Harper model for moir\'e bilayer superlattices was proposed in \cite{TM}:
\begin{align}
(H^{DH}_{\lambda,\theta}\phi)_n=B\phi_{n+1}+B^*\phi_{n-1}+V(\theta+n\omega) \phi_n,
\end{align}
where $\theta,\omega\in \T$, $\lambda\in \R$, and
\begin{align}
B=\left(\begin{matrix}
        0 &1 & 0 &0\\
        1 &0 &0 &0\\
        0 &0 &0 &1\\
        0 &0 &1 &0
    \end{matrix}\right), \quad V_0=
    \left(\begin{matrix}
        0 &1 &0 &0\\
        1 &0 &0 &0\\
        0 &0 &0 &1\\
        0 &0 &1 &0
    \end{matrix}\right)
\end{align}
and 
\begin{align}
    V(\theta)=&
    V_0
    +\lambda \left(\begin{matrix}
0 &0 &1+2\cos(2\pi\theta) &1-2\cos(2\pi\theta-\frac{\pi}{3})\\
0 &0 &1+2\cos(2\pi\theta+\frac{\pi}{3}) &1+2\cos(2\pi\theta)\\
1+2\cos(2\pi\theta) &1+2\cos(2\pi\theta+\frac{\pi}{3}) &0 &0\\
1-2\cos(2\pi\theta-\frac{\pi}{3})& 1+2\cos(2\pi\theta) &0 &0
    \end{matrix}\right)\\
    =:&V_0+\lambda V_1(\theta).
\end{align}
In the large coupling regime, one can apply the results of \cite{DK1,Kl} to this model and conclude uniformly positive Lyapunov exponents and Anderson localization. In fact, for $|\lambda|$ large enough, one has uniformly positive $L_4(\omega,M_E)$ of order $\log |\lambda|+O(1)$ on $\R$ by \cite[Theorem 2.3]{DK1}.
S.~Klein's result \cite{Kl} applied to the operator above for  large coupling  implies Anderson localization of $H^{DH}_{\lambda,\theta}$ for a.e.~$\omega$. To apply~\cite{DK1,Kl}, one needs to check the potential matrix $V_1$ has no constant eigenvalue. Denoting the right-hand upper $2\times 2$-block of $V_1$ by $A$, this reduces to verifying that $A^TA$ has no constant eigenvalues, which is indeed the case by an explicit computation. 
As an application of Theorem~\ref{thm:main}, we conclude that
\begin{theorem}\label{thm:H_DH}
For a.e. $\omega,\theta$, $H^{DH}_{\lambda,\theta}$ is Anderson localized in $\{E:\, L_4(\omega,M_E)>0\}$.
\end{theorem}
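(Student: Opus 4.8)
The plan is to recognize $H^{DH}_{\lambda,\theta}$ as an instance of~\eqref{eq:BVsys} with block size $d=4$ and then to invoke Theorem~\ref{thm:main} in an energy-localized form. First I would record that the standing hypotheses of Section~\ref{sec:Pre} are met: here $b=1$, $d=4$, $\theta,\omega\in\T$, and $B$ is constant. In $2\times 2$ blocks one has $B=\diag(\sigma^x,\sigma^x)$, so $\det B=(\det\sigma^x)^2=1\neq 0$; thus $B$ is invertible and $B^{(*)}=B^*=B$. The potential $V(\theta)=V_0+\lambda V_1(\theta)$ is real symmetric for real $\lambda$: $V_0$ is a real symmetric permutation matrix, and $V_1(\theta)$ has vanishing diagonal $2\times 2$ blocks while its off-diagonal blocks are $A(\theta)$ and $A(\theta)^T$, where $A(\theta)$ is the displayed upper-right block. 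Finally every entry of $B$ and $V$ is a trigonometric polynomial, hence entire, so $B,V\in C^\omega(\T_\eta,\Mat(4,\C))$ for every $\eta>0$ and the cocycle $M_E$ of~\eqref{def:M_E} is well defined. Note that, in contrast to the large-coupling route via~\cite{DK1,Kl}, no condition on the eigenvalues of $V_1$ is needed here: positivity of $L_4$ is imposed directly as a restriction on the energy.

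The second step is to replace the global hypothesis ``$L_d(\omega,M_E)\ge\gamma>0$ for all $E,\omega$'' of Theorem~\ref{thm:main} by positivity on a fixed energy window. Fix an energy $E_0$ with $L_4(\omega,M_{E_0})>0$. Using the standard continuity of $E\mapsto L_4(\omega,M_E)$ at energies where it is positive (for $\omega\in\mathrm{DC}$ this follows from the uniform-in-$E$ large deviation estimates of Lemma~\ref{lem:LDTsig} and the convergence rate of Lemma~\ref{lem:Ln-L}), choose a closed interval $I\ni E_0$ and $\gamma>0$ with $\inf_{E\in I}L_4(\omega,M_E)\ge\gamma$. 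I would then re-run the Bourgain--Goldstein argument of Section~\ref{sec:main}: its only use of positivity is through Corollary~\ref{cor:goodGreen} (Green's-function decay at rate $\simeq L_4$), Lemma~\ref{lem:deno} (denominator lower bound), and the resonance-elimination steps (a)--(c), each of which concerns only the energies actually under consideration and is insensitive to the value of $L_4$ outside $I$. This yields Anderson localization of $H^{DH}_{\lambda,\theta}$ on $I\cap\sigma(H^{DH}_{\lambda,\theta})$ for a.e.\ pair $(\theta,\omega)$; one keeps $\theta$ as a free variable alongside $\omega$ throughout Section~\ref{sec:main}, so that the semi-algebraic sets (\cite{B}*{Corollary 9.7, Lemma 9.9}) and the steep-planes elimination live on the product torus, and a Fubini argument produces the exceptional null set in $(\theta,\omega)$.

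The final step is a countable exhaustion. For each rational $\gamma>0$ and each interval $I$ with rational endpoints let $\Omega_{I,\gamma}$ be the full-measure set of $\omega$ furnished by the previous step, with the convention that its conclusion is vacuously true when $\inf_{E\in I}L_4(\omega,M_E)<\gamma$; set $\Omega_0:=\bigcap_{I,\gamma}\Omega_{I,\gamma}$, still of full measure. For $\omega\in\Omega_0$ and a.e.\ $\theta$, any $E_0\in\sigma(H^{DH}_{\lambda,\theta})$ with $L_4(\omega,M_{E_0})>0$ lies, by continuity of $L_4$ in $E$, in some rational interval $I$ with $\inf_{E\in I}L_4(\omega,M_E)\ge\gamma$ for a suitable rational $\gamma>0$, hence in the localization set attached to $\Omega_{I,\gamma}$. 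Therefore $H^{DH}_{\lambda,\theta}$ is Anderson localized on $\{E:\,L_4(\omega,M_E)>0\}$, which is the assertion.

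The only genuinely non-routine point, and the one I expect to require the most care, is the claim that the proof of Theorem~\ref{thm:main} is \emph{energy-local}, so that its global positivity hypothesis can be downgraded to positivity on $I$. The delicate ingredient is the resonance elimination of Section~\ref{sec:main}: one must confirm that restricting the energy parameter to $\{E\in I:\,L_4\ge\gamma\}$ does not spoil the polynomial complexity bounds of~\cite{BN} used through~\cite{B}, which is the case since the semi-algebraic elimination can be run with $E$ ranging over all of $I$ while the exponential-decay conclusion is drawn only for those energies in the smaller set. The bookkeeping needed to pass from ``a.e.\ $\omega$, $\theta=0$'' to ``a.e.\ $(\theta,\omega)$'' is standard for this scheme but should likewise be spelled out.
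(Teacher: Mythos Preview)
Your proposal is correct and follows the same route as the paper, which simply records Theorem~\ref{thm:H_DH} as ``an application of Theorem~\ref{thm:main}'' without further elaboration. You have, in fact, been considerably more careful than the paper in spelling out the energy-localized use of the Bourgain--Goldstein machinery, the countable exhaustion over rational intervals and levels $\gamma$, and the passage from $\theta=0$ to a.e.\ $(\theta,\omega)$; these points are left implicit in the paper but are exactly the standard adaptations one makes when the positivity hypothesis of Theorem~\ref{thm:main} holds only on a subset of energies.
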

\begin{remark}\label{rem:DH}
    The same result holds for $H^{DH}_{\rho,\tau,\lambda_0,\lambda_1,\theta}$ below. 
\end{remark}
A modification of this model was introduced in \cite{BGW}, where
\begin{align}
    (H_{\rho,\tau,\lambda_0,\lambda_1,\theta}^{DH}\phi)_n=B_{\rho}\phi_{n+1}+B^*_{\rho}\phi_{n-1}+V(\theta+n\omega,\tau)\phi_n,
\end{align}
where $\rho,\tau,\theta,\lambda_0,\lambda_1\in \R$ and
\begin{align}
    B_{\rho}=\left(\begin{matrix}
        0 &e^{-2\pi i\rho} & 0 &0\\
        e^{2\pi i\rho} &0 &0 &0\\
        0 &0 &0 &e^{-2\pi i\rho}\\
        0 &0 &e^{2\pi i\rho} &0
    \end{matrix}\right),
\end{align}
and 
\begin{align}
    &V_{\lambda_0,\lambda_1}(\theta,\tau)=
    \left(\begin{matrix}
        0 &1 &0 &0\\
        1 &0 &0 &0\\
        0 &0 &0 &1\\
        0 &0 &1 &0
    \end{matrix}\right)\\
    &+\lambda_0 \left(\begin{matrix}
0 &0 &1+2\cos(2\pi(\theta-\tau\omega)) &0\\
0 &0 &0 &1+2\cos(2\pi(\theta+\tau\omega))\\
1+2\cos(2\pi(\theta-\tau\omega)) &0 &0 &0\\
0& 1+2\cos(2\pi(\theta+\tau\omega)) &0 &0
    \end{matrix}\right)\\
    &+\lambda_1 \left(\begin{matrix}
0 &0 &0 &1-2\cos(2\pi\theta-\frac{\pi}{3})\\
0 &0 &1+2\cos(2\pi\theta+\frac{\pi}{3}) &0\\
0 &1+2\cos(2\pi\theta+\frac{\pi}{3}) &0 &0\\
1-2\cos(2\pi\theta-\frac{\pi}{3})& 0 &0 &0
    \end{matrix}\right)
\end{align}
Becker-Ge-Wittsten proved in \cite[Theorem 1]{BGW} that for $|\lambda|$ large enough, for either the chiral coupling $(\lambda_0,\lambda_1)=(0,\lambda)$ or the anti-chiral coupling $(\lambda_0,\lambda_1)=(\lambda,0)$, $H^{DH}_{\rho,\tau,\lambda_0,\lambda_1,\theta}$ has Anderson localization. 
The proof of this theorem as stated in their paper is unfortunately wrong. The authors claimed falsely a lower bound of the denominator of the Green's function (associated to Dirichlet boundary condition) directly from the Thouless formula. 
This mistake was discussed in details in \cite[Remark 3.4]{HS3}.

Although the proof of \cite[Theorem 1]{BGW} is wrong, the theorem as stated is correct. Indeed, one can simply obtain it as a special case of \cite{Kl}, the same way as for $H^{DK}_{\lambda,\theta}$ as discussed above.
As application of our Theorem \ref{thm:main}, a stronger non-perturbative localization result can be obtained, see Remark \ref{rem:DH}.

Next, we discuss another mistake in \cite{BGW} regarding arithmetic Anderson localization. This mistake appears difficult to rectify. 
In \cite[Theorem 2]{BGW}, the authors claimed Anderson localization for the anti-chiral model $H^{DK}_{\rho,1/4,\lambda_0,0,\theta}$ (note $\lambda_1=0$ and $\tau=1/4$). 
Unfortunately the proof of this result is also wrong. 
The wrong proof is based on a chain of three critical mistakes, see \eqref{eq:wrong_1}, \eqref{eq:wrong_2}, \eqref{eq:wrong_3}.
Let us explain the mistakes briefly below.
Take $\tilde{f}^+_{E,n}(\theta)$, similarly to $f_{E,n}(\theta)$ in \eqref{def:fn}, to be the determinant with Dirichlet boundary condition (which is $p^{\mathcal{N}^+}(\theta)$ in \cite{BGW})):
\begin{align}
    \tilde{f}^+_{E,n}(\theta)=\det 
\left(\begin{matrix}
\widetilde{V}(\theta+(n-1)\omega)-E & \widetilde{B}^*_{\rho} & &  &\\
\widetilde{B}_{\rho} &\widetilde{V}(\theta+(n-2)\omega)-E &\ddots \\
& \ddots &\ddots &\ddots \\
& &\ddots &\ddots &\widetilde{B}^*_{\rho}\\
& & &\widetilde{B}_{\rho} &\widetilde{V}(\theta)-E
\end{matrix}\right)
\end{align}
Note each block is of size $4\times 4$, and $\widetilde{B}_{\rho}, \widetilde{V}$ differ from $B_{\rho},V$ by a conjugation. 
In fact
\begin{align}
    \widetilde{B}_{\rho}=\left(\begin{matrix}
        0 &0 & e^{2\pi i\rho} &0\\
        0 &0 &0 &e^{2\pi i\rho}\\
        e^{-2\pi i\rho} &0 &0 &0\\
        0 &e^{-2\pi i\rho} &0 &0
    \end{matrix}\right),
\end{align}
and
\begin{align}
    &\widetilde{V}(\theta)=
    \left(\begin{matrix}
        0 &0 &1 &0\\
        0 &0 &0 &1\\
        1 &0 &0 &0\\
        0 &1 &0 &0
    \end{matrix}\right)\\
    &+\lambda_0 \left(\begin{matrix}
0 &1+2\cos(2\pi(\theta+\frac{1}{4}\omega)) &0 &0\\
1+2\cos(2\pi(\theta+\frac{1}{4}\omega)) &0 &0 &0\\
0 &0 &0 &1+2\cos(2\pi(\theta-\frac{1}{4}\omega))\\
0& 0 &1+2\cos(2\pi(\theta-\frac{1}{4}\omega)) &0
    \end{matrix}\right)
\end{align}
Another $\tilde{f}_{E,n}^-(\theta)$ (which is $p^{\mathcal{N}^-}(\theta)$ in \cite{BGW}) was introduced in \cite[Equation (4.4)]{BGW}, where $\tilde{f}_{E,n}^-(\theta)$ is the determinant of the shifted (by $2$ units) matrix:
\begin{align}
    &\tilde{f}_{E,n}^-(\theta)=
    \det \\
&\left(\begin{matrix}
\mathcal{P}_2 \widetilde{V}(\theta+n\omega)\mathcal{P}_2^* &\mathcal{P}_2 \widetilde{B}_{\rho}^*\\
\widetilde{B}_{\rho}\mathcal{P}_2^* &\widetilde{V}(\theta+(n-1)\omega)-E & \widetilde{B}^*_{\rho} & &  &\\
&\widetilde{B}_{\rho} &\widetilde{V}(\theta+(n-2)\omega)-E &\ddots \\
&& \ddots &\ddots &\ddots \\
&& &\ddots &\widetilde{V}(\theta+\omega)-E &\widetilde{B}^*_{\rho}\mathcal{P}_1^*\\
&& & &\mathcal{P}_1 \widetilde{B}_{\rho} &\mathcal{P}_1(\widetilde{V}(\theta)-E)\mathcal{P}_1^*
\end{matrix}\right),
\end{align}
where $\mathcal{P}_1: \C^4\to \C^2$ is the projection onto the first two coordinates and $\mathcal{P}_2: \C^4\to \C^2$ is the projection onto the last two coordinates.

It was first falsely claimed on top of Page 20 of \cite{BGW} that (note their $1/L$ is our $\omega$)
\begin{align}\label{eq:wrong_1}
    \tilde{f}_{E,n}^+(\theta)=\tilde{f}_{E,n}^-(\theta-\frac{1}{2}\omega).
\end{align} 
To see this is wrong, without loss of generality, we let $\rho=0$.
Let 
\begin{align}
    C_j=\left(\begin{matrix}
        -E &\lambda_0(1+2\cos(2\pi(\theta-\frac{1}{4}\omega+\frac{j}{2}\omega)))\\ \lambda_0(1+2\cos(2\pi(\theta-\frac{1}{4}\omega+\frac{j}{2}\omega))) &-E
    \end{matrix}\right)
\end{align}
Then, for example when $n=4$, with $I=I_2$ the $2\times 2$ identity matrix,
\begin{align}
    \tilde{f}_{E,4}^+(\theta+\frac{1}{2}\omega)=\det
    \left(\begin{array}{c|c|c|c|c|c|c|c}
    C_8 &I & & I & & & &\\
    \hline
    I &C_7 &I & & & & &\\
    \hline
      &I &C_6 &I & &I & &\\
      \hline
    I & &I &C_5 &I & & &\\
    \hline
      & & &I &C_4 &I & &I\\
      \hline
      & &I & &I &C_3 &I &\\
      \hline
      & & & & &I &C_2 &I\\ 
      \hline
      & & & &I & &I  &C_1
    \end{array}\right)
\end{align}
and 
\begin{align}
    \tilde{f}_{E,4}^-(\theta)=
    \det
    \left(\begin{array}{c|c|c|c|c|c|c|c}
    C_8 &I & & & & & &\\
    \hline
    I &C_7 &I & & I & & & \\
    \hline
    &I &C_6 &I & & & & \\
    \hline
    &  &I &C_5 &I & &I & \\
      \hline
    &I & &I &C_4 &I & & \\
    \hline
    &  & & &I &C_3 &I & \\
      \hline
    &  & &I & &I &C_2 &I \\
      \hline
    &  & & & & &I &C_1 \\ 
    \end{array}\right)
\end{align}
The two determinants are not equal to each other. In fact, for  $\lambda_0=0$ one computes that 
\[
 \tilde{f}_{E,4}^+(\theta+\frac{1}{2}\omega) = (E^{8} - 10E^{6} + 23E^4 - 10E^2 + 1)^2
\]
while 
\[
  \tilde{f}_{E,4}^-(\theta) = (E^{8} - 9E^{6} + 18E^4- 8E^2 + 1)^2.
\]
This is the first critical mistake that invalidates their entire proof.

Even under the false \eqref{eq:wrong_1}, the authors in \cite{BGW} made another false claim that \begin{align}\label{eq:wrong_2}
\tilde{f}_{E,n}^{\pm} \text{ is an even functions of }\theta+\frac{n-1}{2}\omega.
\end{align} 

Indeed, one has
\begin{align}\label{eq:f+_even}
    \tilde{f}_{E,n}^+(\theta-\frac{n-1}{2}\omega)=\tilde{f}_{E,n}^+(-\theta-\frac{n-1}{2}\omega),
\end{align} 
but instead of the falsely claimed
\begin{align}\label{eq:f-_even_wrong}
    \tilde{f}_{E,n}^-(\theta-\frac{n-1}{2}\omega)=\tilde{f}_{E,n}^-(-\theta-\frac{n-1}{2}\omega),
\end{align}
in their paper, one in fact has
\begin{align}\label{eq:f-_even}
    \tilde{f}_{E,n}^-(\theta-\frac{n}{2}\omega)=\tilde{f}_{E,n}^-(-\theta-\frac{n}{2}\omega).
\end{align}
This is their second critical mistake. 

Even under the false \eqref{eq:wrong_1}, \eqref{eq:wrong_2}, the authors made yet another false claim that
\begin{align}\label{eq:wrong_3}
\tilde{f}_{E,n}^-(\theta)=\tilde{f}_{E,n}^-(\theta+\frac{1}{2}).
\end{align}
However, even under the incorrect equality \eqref{eq:wrong_1} and \eqref{eq:wrong_2}, one should obtain $\tilde{f}_{E,n}^-(\theta)=\tilde{f}^-_{E,n}(\theta+\omega)$ instead of \eqref{eq:wrong_3}.
This is their third critical mistake.

The proof of \cite[Theorem 2]{BGW}, relying crucially on a chain of erroneous equations~\eqref{eq:wrong_1},\eqref{eq:wrong_2},\eqref{eq:wrong_3}, is therefore completely false.

\subsection{AA-stacked graphene in magnetic fields} AA-stacked graphene model in magnetic fields has attracted a lot attention in the physics literature, see e.g. \cite{RSRN} and the references therein.
The Hamiltonian is the following: 
\begin{align*}
(H^{AA} u)^1_{m,n,A}=&\lambda_1 u^1_{m,n,B}+\lambda_2 u^1_{m-1,n,B}+e^{2\pi i m\omega}\lambda_3 u^1_{m,n+1,B}+\rho u^2_{m,n,A},\\
(H^{AA} u)^1_{m,n,B}=&\lambda_1 u^1_{m,n,A}+\lambda_2 u^1_{m+1,n,A}+e^{-2\pi i m\omega} \lambda_3 u^1_{m,n-1,A}+\rho u^2_{m,n,B}\\
(H^{AA} u)^2_{m,n,A}=&\mu_1 u^2_{m,n,B}+\mu_2 u^2_{m-1,n,B}+e^{2\pi i m\omega}\mu_3 u^2_{m,n+1,B}+\rho u^1_{m,n,A},\\
(H^{AA} u)^2_{m,n,B}=&\mu_1 u^2_{m,n,A}+\mu_2 u^2_{m+1,n,A}+e^{-2\pi i m\omega} \mu_3 u^2_{m,n-1,A}+\rho u^1_{m,n,B},
\end{align*}
where $(\lambda_1,\lambda_2,\lambda_3)$ are the intra-layer coupling constants within the first layer, and $(\mu_1,\mu_2,\mu_3)$ are the intra-layer coupling constants within the second layer, and $\rho>0$ is the inter-layer coupling. 
In the AA-stacked setting, the two layers are identical to each other, we will also assume $(\lambda_1,\lambda_2,\lambda_3)=(\mu_1,\mu_2,\mu_3)$. 

Reducing to a one-dimensional operator, via taking a Fourier transform in the variable $n$, we have
\begin{align}
    (H^{AA}_{\theta}u)^1_{m,A}&=\lambda_1 u_{m,B}^1+\lambda_2 u_{m-1,B}^1+\lambda_3 e^{2\pi i(\theta+m\omega)} u_{m,B}^1+\rho u_{m,A}^2\\
    (H^{AA}_{\theta}u)^1_{m,B}&=\lambda_1 u_{m,A}^1+\lambda_2 u_{m+1,A}^1+\lambda_3 e^{-2\pi i(\theta+m\omega)}u_{m,A}^1+\rho u_{m,B}^2\\
    (H^{AA}_{\theta}u)^2_{m,A}&=\lambda_1 u_{m,B}^2+\lambda_2 u_{m-1,B}^2+\lambda_3 e^{2\pi i(\theta+n\omega)} u_{m,B}^2+\rho u_{m,A}^1\\
    (H^{AA}_{\theta}u)^2_{m,B}&=\lambda_1 u_{m,A}^2+\lambda_2 u_{m+1,A}^2+\lambda_3 e^{-2\pi i(\theta+n\omega)} u_{m,A}^2+\rho u_{m,B}^2
\end{align}
It is known that $\sigma(H^{AA})=\bigcup_{\theta}\sigma(H^{AA}_{\theta})$.
Rewriting $H^{AA}_{\theta}$ in terms of block Jacobi matrix, one has
\begin{align}
\left(\begin{matrix}
    (H_{\theta}^{AA}u)_{m,A}^1\\
    (H_{\theta}^{AA}u)_{m,B}^2
\end{matrix}\right)
=
\left(\begin{matrix}
0 &0\\
0 &\lambda_2\end{matrix}\right)
\left(\begin{matrix}
    u_{m+1,B}^1\\ u_{m+1,A}^2
\end{matrix}\right)+
\left(\begin{matrix}
c(\theta+m\omega)&\rho\\
    \rho &d(\theta+m\omega)
\end{matrix}\right)
\left(\begin{matrix}
    u_{m,B}^1\\ u_{m,A}^2
\end{matrix}\right)
+\left(\begin{matrix}
\lambda_2 &0\\
0 &0\end{matrix}\right)
\left(\begin{matrix}
    u_{m-1,B}^1\\ u_{m-1,A}^2
\end{matrix}\right),
\end{align}
and
\begin{align}
  \left(\begin{matrix}
    (H_{\theta}^{AA}u)_{m,B}^1\\
    (H_{\theta}^{AA}u)_{m,A}^2
\end{matrix}\right)
=
\left(\begin{matrix}
\lambda_2 &0\\
0 &0\end{matrix}\right)
\left(\begin{matrix}
    u_{m+1,A}^1\\ u_{m+1,B}^2
\end{matrix}\right)+
\left(\begin{matrix}
d(\theta+m\omega) &\rho\\
    \rho &c(\theta+m\omega)
\end{matrix}\right)
\left(\begin{matrix}
    u_{m,A}^1\\ u_{m,B}^2
\end{matrix}\right)
+\left(\begin{matrix}
0 &0\\
0 &\lambda_2\end{matrix}\right)
\left(\begin{matrix}
    u_{m-1,A}^1\\ u_{m-1,B}^2
\end{matrix}\right),
\end{align}
in which $c(\theta):=\lambda_1+\lambda_3e^{2\pi i\theta}$ and $d(\theta)=\lambda_1+\lambda_3e^{-2\pi i\theta}$.
Clearly $d(\theta)=\overline{c(\theta)}$ holds iff $\theta\in \T$. Later we will complexify $\theta$ to estimate the Lyapunov exponent.

If we define $\widehat{H}_{\theta}$ acting on $\ell^2(\Z, \C^2)$ as
\begin{align}
(\widehat{H}_{\theta}U)_m
    =\left(\begin{matrix}
0 &0\\
0 &\lambda_2\end{matrix}\right)
U_{m+1}+
\left(\begin{matrix}
c(\theta+m\omega)&\rho\\
    \rho &d(\theta+m\omega)
\end{matrix}\right)
U_m
+\lambda_2\left(\begin{matrix}
1 &0\\
0 &0\end{matrix}\right)
U_{m-1}
\end{align}
Then from the calculations above, due to the bipartite nature of the AA-stacked graphene lattice, it is clear that $H_{\theta}^{AA}$ can be written in the following form:
\begin{align}\label{eq:HAA=HH}
    H_{\theta}^{AA}
    \left(\begin{matrix}
U_1\\
U_2
\end{matrix}\right)=\left(\begin{matrix} 0 & \widehat{H}_{\theta}\\ \widehat{H}_{\theta}^* &0\end{matrix}\right)    \left(\begin{matrix}
U_1\\
U_2
\end{matrix}\right),
\end{align}
where $U_1=(...,u_{m+1,A}^1, u_{m+1,B}^2, u_{m,A}^1, u_{m,B}^2,...)^T$ and $U_2=(...,u_{m+1,B}^1, u_{m+1,A}^2, u_{m,B}^1, u_{m,A}^2,...)^T$.
Hence
\begin{align}\label{eq:HAA^2=HH}
    (H_{\theta}^{AA})^2= \left(\begin{matrix}
    \widehat{H}_{\theta}\widehat{H}_{\theta}^* &0 \\ 0 &\widehat{H}_{\theta}^*\widehat{H}_{\theta}
    \end{matrix}\right).
\end{align}
Clearly 
\begin{align}
\label{eq:sig_HAA_1}(\sigma(H_{\theta}^{AA}))^2=\sigma((H_{\theta}^{AA})^2)=\sigma(\widehat{H}_{\theta}^*\widehat{H}_{\theta})\cup\sigma(\widehat{H}_{\theta}\widehat{H}_{\theta}^*),
\end{align}
in which for a set $U\subset \R$, $U^2:=\{E^2:\, E\in U\}$.
Since $(\widehat{H}_{\theta}^*\widehat{H}_{\theta})|_{(\mathrm{ker}\widehat{H}_{\theta})^{\perp}}$ and $(\widehat{H}_{\theta}\widehat{H}_{\theta}^*)|_{(\mathrm{ker}\widehat{H}_{\theta}^*)^{\perp}}$ are unitarily equivalent, we conclude from 
\eqref{eq:sig_HAA_1} that
\begin{align}\label{eq:sig_AA_2}
    (\sigma(H_{\theta}^{AA}))^2\setminus \{0\}=\sigma(\widehat{H}_{\theta}\widehat{H}_{\theta}^*)\setminus \{0\}.
\end{align}
For the AA-stacked graphene, in general it is hard to tell if the zero energy is in the spectrum $\sigma(H_{\theta}^{AA})$. In fact, we have the following criterion, in terms of the single layer Hamiltonian:
\begin{lemma}\label{lem:0_AA}
    Let $H_{\theta}^{g}$ be the single-layer Hamiltonian: 
    \begin{align}
    (H_{\theta}^{g}u)_{m,A}=&\lambda_1 u_{m,B}+\lambda_2 u_{m-1,B}+\lambda_3 e^{2\pi i(\theta+m\omega)}u_{m,B}\\
    (H_{\theta}^{g}u)_{m,B}=&\lambda_1 u_{m,A}+\lambda_2 u_{m+1,A}+\lambda_3 e^{-2\pi i(\theta+m\omega)}u_{m,A}.
\end{align}
Then 
\[\sigma(H_{\theta}^{AA})=(\sigma(H_{\theta}^g)+\rho)\cup (\sigma(H_{\theta}^g)-\rho).\] 
\end{lemma}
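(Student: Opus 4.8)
The plan is to realize $H^{AA}_{\theta}$ as a tensor‑product perturbation of the single‑layer operator $H^g_{\theta}$ in the layer index. Recall that $H^{AA}_{\theta}$ acts on $\ell^2(\Z,\C^2)\otimes\C^2$, where the first factor $\C^2$ records the sublattice label $A/B$ and the second factor $\C^2$ records the layer label $1/2$; in the same notation $H^g_{\theta}$ acts on $\ell^2(\Z,\C^2)$. Since the two layers carry identical intra‑layer coupling constants $(\lambda_1,\lambda_2,\lambda_3)=(\mu_1,\mu_2,\mu_3)$, the intra‑layer part of $H^{AA}_{\theta}$ is exactly $H^g_{\theta}\otimes I_2$; and since the inter‑layer hopping connects each site $(m,A)$ (resp.\ $(m,B)$) in layer $1$ to the \emph{same} site $(m,A)$ (resp.\ $(m,B)$) in layer $2$ with weight $\rho$, it is sublattice‑preserving and layer‑swapping with constant weight $\rho$. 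Comparing the four defining relations for $H^{AA}_{\theta}$ with this description gives
\[
H^{AA}_{\theta}=H^g_{\theta}\otimes I_2+\rho\,\bigl(I_{\ell^2(\Z,\C^2)}\otimes \sigma^{x}\bigr),\qquad \sigma^{x}=\left(\begin{matrix}0&1\\1&0\end{matrix}\right),
\]
where $\sigma^{x}$ acts in the layer index. This is just a rewriting of the block‑Jacobi form \eqref{eq:HAA=HH}; one only has to match the ordering of the sublattice/layer basis used there with the one used here.

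The remaining steps are routine. First I would diagonalise $\sigma^{x}$ by the layer‑space unitary $W=\tfrac{1}{\sqrt2}\left(\begin{matrix}1&1\\1&-1\end{matrix}\right)$, so that $W\sigma^{x}W^{*}=\diag(1,-1)$. Conjugating by $I\otimes W$ block‑diagonalises $H^{AA}_{\theta}$:
\[
(I\otimes W)\,H^{AA}_{\theta}\,(I\otimes W^{*})=\left(\begin{matrix}H^g_{\theta}+\rho&0\\0&H^g_{\theta}-\rho\end{matrix}\right).
\]
A unitary conjugation preserves the spectrum, and a scalar shift $T\mapsto T\pm\rho$ shifts it by $\pm\rho$, so
\[
\sigma(H^{AA}_{\theta})=\sigma(H^g_{\theta}+\rho)\cup\sigma(H^g_{\theta}-\rho)=(\sigma(H^g_{\theta})+\rho)\cup(\sigma(H^g_{\theta})-\rho),
\]
which is the asserted identity. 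Here I use that for $\theta\in\T$ and the real coupling constants under consideration $H^g_{\theta}$ is self‑adjoint (equivalently $d(\theta)=\overline{c(\theta)}$ on $\T$), so its spectrum is real and the shifts are unambiguous.

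There is no serious obstacle here; the only point demanding attention is bookkeeping, namely verifying that the inter‑layer term really is $\rho\,(I\otimes\sigma^{x})$ and not a term that also mixes the $A/B$ labels. This is immediate from the hopping rules $(H^{AA}u)^1_{m,n,A}\ni\rho u^2_{m,n,A}$, $(H^{AA}u)^2_{m,n,A}\ni\rho u^1_{m,n,A}$ and the analogous relations for the $B$ sites. Once the decomposition $H^{AA}_{\theta}=H^g_{\theta}\otimes I_2+\rho\,(I\otimes\sigma^{x})$ is in place, the conclusion follows as above. As a byproduct one gets the criterion for the zero energy: $0\in\sigma(H^{AA}_{\theta})$ if and only if $\rho\in\sigma(H^g_{\theta})$ or $-\rho\in\sigma(H^g_{\theta})$, so detecting $0$ in the spectrum of the bilayer reduces to a question about the single layer.
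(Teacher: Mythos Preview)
Your proof is correct and takes essentially the same approach as the paper: the paper writes $H_\theta^{AA}=\left(\begin{smallmatrix}H_\theta^g & \rho I\\ \rho I & H_\theta^g\end{smallmatrix}\right)$ in the layer decomposition and says the spectral identity is then ``easy to see,'' while you express the same operator as $H_\theta^g\otimes I_2+\rho\,(I\otimes\sigma^x)$ and make the final step explicit by diagonalising~$\sigma^x$. One minor remark: your reference to \eqref{eq:HAA=HH} is slightly off, since that equation records the \emph{sublattice} (bipartite) decomposition with off-diagonal blocks $\widehat H_\theta,\widehat H_\theta^*$, not the layer decomposition you actually use; but this does not affect your argument, as you derive the tensor-product form directly from the defining relations anyway.
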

\begin{proof}
In fact, $H_{\theta}^{AA}$ can be expressed in terms of $H_{\theta}^g$ as follows:
\begin{align}\label{eq:HAA=Hg}
    H_{\theta}^{AA}\left(\begin{matrix}U_1\\ U_2\end{matrix}\right)=\left(\begin{matrix}
        H_{\theta}^g & \rho I\\
        \rho I &H_{\theta}^g
    \end{matrix}\right)\left(\begin{matrix} U_1\\ U_2\end{matrix}\right),
\end{align}
where $U_1$ and $U_2$ are the wave functions on the first/second layer respectively.
It is easy to see that claimed result follows from \eqref{eq:HAA=Hg}.
\end{proof}
\begin{remark}
    By Lemma \ref{lem:0_AA}, $E=0\in \sigma(H_{\theta}^{AA})$ iff $E=\pm \rho\in \sigma(H_{\theta}^g)$. In general, one would conjecture that $\sigma(H_{\theta}^g)$ is a Cantor set, which has been proved in \cite{BHJ} for the isotropic case $\lambda_1=\lambda_2=\lambda_3$. Hence it is in general a hard problem to tell if a particular energy (aside from $E=0$, which is always in $\sigma(H_{\theta}^g)$ for irrational $\omega$) falls in $\sigma(H_{\theta}^g)$.
\end{remark}

Aside from the zero energy, by \eqref{eq:sig_AA_2}, we are reduced to study the following operator:
\begin{align}
    (\widehat{H}_{\theta} \widehat{H}_{\theta}^*U)_m=B(\theta+(m+1)\omega)U_{m+1}+V(\theta+m\omega)U_m+B^{(*)}(\theta+m\omega)U_{m-1},
\end{align}
where 
\begin{align}\label{def:B_AA}
B(\theta)=\lambda_2 \left(\begin{matrix}c(\theta-\omega) &0\\ \rho &c(\theta)\end{matrix}\right),\, B^{(*)}(\theta)=\lambda_2 \left(\begin{matrix} d(\theta-\omega) &\rho\\ 0 &d(\theta)\end{matrix}\right),
\end{align}
and
\begin{align}
    V(\theta)=\left(\begin{matrix} c(\theta)d(\theta)+\rho^2+\lambda_2^2 & 2\rho\, c(\theta)\\ 2\rho\, d(\theta) &c(\theta)d(\theta)+\rho^2+\lambda_2^2
    \end{matrix}\right).
\end{align}
Let $M_{E,AA}^{HH^*}$ be the transfer matrix corresponding to $\widehat{H}_{\theta}\widehat{H}_{\theta}^*U=EU$.
As a corollary of Theorem \ref{thm:main}, one has the following fact, regarding the operator $\widehat{H}_{\theta}\widehat{H}_{\theta}^*$.
\begin{theorem}\label{thm:HH_loc}
Let $(\lambda_1,\lambda_2,\lambda_3)$ be such that $\det B(\theta)\neq 0$ on $\T$, for $B$ as in \eqref{def:B_AA}.
    For a.e. $\omega,\theta\in \T$, $\widehat{H}_{\theta}\widehat{H}_{\theta}^*$ is Anderson localized in $\{E:\, L_2(\omega, M_{E,AA}^{HH^*})>0\}$.
\end{theorem}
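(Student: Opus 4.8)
The plan is to observe that $\widehat{H}_{\theta}\widehat{H}_{\theta}^*$ is a special case of the block Jacobi operators \eqref{eq:BVsys} on the one-dimensional torus, to verify the standing hypotheses of that class, and then to deduce localization on the positivity region from Theorem~\ref{thm:main}, or rather from the energy-localized form of its proof in Section~\ref{sec:main}.

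First, comparing the identity $(\widehat{H}_{\theta}\widehat{H}_{\theta}^*U)_m=B(\theta+(m+1)\omega)U_{m+1}+V(\theta+m\omega)U_m+B^{(*)}(\theta+m\omega)U_{m-1}$ with \eqref{eq:BVsys}, we see that $\widehat{H}_{\theta}\widehat{H}_{\theta}^*$ is of the form \eqref{eq:BVsys} with $d=2$, $b=1$, and $B$, $V$ as in \eqref{def:B_AA} and the display following it; the entries are polynomials in $e^{\pm 2\pi i\theta}$, hence entire, so $B,V\in C^{\omega}(\T_{\eta},\mathrm{Mat}(2,\C))$ for every $\eta>0$. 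It remains to check that $V$ is Hermitian on $\T$, that $B^{(*)}$ is the analytic continuation of $(B(\cdot))^*$, and that $B$ is invertible on a strip. Since the coupling constants $\lambda_1,\lambda_2,\lambda_3,\rho$ are real, $d(\theta)=\overline{c(\theta)}$ for $\theta\in\T$; hence for $\theta\in\T$ the diagonal of $V(\theta)$ equals $|c(\theta)|^2+\rho^2+\lambda_2^2\in\R$ while its off-diagonal entries are $2\rho\,c(\theta)$ and $2\rho\,\overline{c(\theta)}$, so $V(\theta)^*=V(\theta)$; likewise $(B(\theta))^*$ is obtained from $B(\theta)$ by replacing $c(\theta-\omega),c(\theta)$ by their conjugates $d(\theta-\omega),d(\theta)$, which is exactly $B^{(*)}(\theta)$ from \eqref{def:B_AA}. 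Finally $\det B(\theta)=\lambda_2^2\,c(\theta-\omega)\,c(\theta)$, and the hypothesis $\det B\neq 0$ on $\T$ says precisely that $\lambda_2\neq 0$ and $c$ has no zero on $\T$, i.e.\ $|\lambda_1|\neq|\lambda_3|$; consequently the real number $-\lambda_1/\lambda_3$ lies off the annulus $\{e^{-2\pi\eta}\leq|z|\leq e^{2\pi\eta}\}$ once $\eta$ is small, so after shrinking $\eta$ one has $\det B(\theta)\neq 0$ on all of $\T_{\eta}$. Thus $\widehat{H}_{\theta}\widehat{H}_{\theta}^*$ meets all the standing assumptions imposed on \eqref{eq:BVsys}.

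Next we invoke Theorem~\ref{thm:main}. Its hypothesis --- positivity of $L_2$ for all $E$ --- is of course false here (e.g.\ in spectral gaps), but it is never actually used in the proof: the Green's function bound of Corollary~\ref{cor:goodGreen}, and the semi-algebraic elimination of the energy in Section~\ref{sec:main}, only require a uniform lower bound $L_2(\omega,M_{E,AA}^{HH^*})\geq\gamma>0$ on the compact energy window under consideration. Accordingly, for each closed interval $I$ and each $\gamma>0$, the argument of Section~\ref{sec:main} shows that for a.e.\ $\omega$ with $\inf_{E\in I}L_2(\omega,M_{E,AA}^{HH^*})\geq\gamma$, the operator $\widehat{H}_{\theta}\widehat{H}_{\theta}^*$ at $\theta=0$ is Anderson localized on $I$. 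Exhausting, for each fixed $\omega$, the set $\{E:L_2(\omega,M_{E,AA}^{HH^*})>0\}$ by countably many compact energy windows of uniform Lyapunov positivity, and intersecting the corresponding full-measure sets of good $\omega$ over these windows and over the countably many Diophantine classes $\mathrm{DC}_{a,A}$ with rational $a,A$, yields localization on all of $\{E:L_2>0\}$ for a.e.\ $\omega$. Finally, since the scheme of Section~\ref{sec:main} already carries the phase $\theta$ as a variable of the semi-algebraic sets and of the double-resonance set $\Omega_{N_0}$, the same reasoning (with one Fubini step) upgrades this to localization for a.e.\ pair $(\omega,\theta)$, which is the assertion of the theorem.

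The structural verifications above are entirely routine, and the only genuine --- if mild --- point requiring care is the energy-localized reduction in the last step: one must confirm that the semi-algebraic machinery of Section~\ref{sec:main} appeals nowhere to the global positivity hypothesis of Theorem~\ref{thm:main}, only to uniform positivity of $L_2$ on the compact energy set being handled, and that the resulting family of null sets of exceptional $\omega$ --- indexed by the energy window, the Diophantine parameters, and the level $\gamma$ --- assembles into a single null set. No new large-deviation, avalanche, or semi-algebraic input is needed beyond what the paper has already developed.
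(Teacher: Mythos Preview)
Your proposal is correct and follows the same approach as the paper, which simply states the result as a corollary of Theorem~\ref{thm:main}. You have in fact been more careful than the paper in spelling out the verification that $\widehat{H}_{\theta}\widehat{H}_{\theta}^*$ fits the framework~\eqref{eq:BVsys}, and in flagging the two mild extensions of Theorem~\ref{thm:main} that are implicitly being used (restriction to energy windows with uniform Lyapunov positivity, and the upgrade from $\theta=0$ to a.e.\ $\theta$).
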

As a corollary, we have the following result about $H_{\theta}^{AA}$.
\begin{theorem}\label{thm:AA_loc}
Let $(\lambda_1,\lambda_2,\lambda_3)$ be such that $\det B(\theta)\neq 0$ on $\T$, for $B$ as in \eqref{def:B_AA}.
    For a.e. $\omega,\theta\in \T$, $H_{\theta}^{AA}$ is Anderson localized in $\{E:\, E\neq 0, \text{ and }  L_2(\omega,M_{E^2,AA}^{HH^*})>0\}$.
\end{theorem}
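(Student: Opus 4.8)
The plan is to reduce to Theorem~\ref{thm:HH_loc} through the factorization~\eqref{eq:HAA=HH}. Fix a pair $(\omega,\theta)\in\T\times\T$ belonging to the full-measure set on which $\widehat{H}_\theta\widehat{H}_\theta^*$ is Anderson localized in $\{E':\,L_2(\omega,M_{E',AA}^{HH^*})>0\}$, as provided by Theorem~\ref{thm:HH_loc}; I claim $H_\theta^{AA}$ is localized on the asserted set for each such $(\omega,\theta)$. By the usual Shnol--Berezanskii reduction it suffices to prove that every polynomially bounded solution $\Psi$ of $H_\theta^{AA}\Psi=E\Psi$ with $E\neq0$ and $L_2(\omega,M_{E^2,AA}^{HH^*})>0$ decays exponentially; the restriction of the spectral projection to $\{E\neq0,\ L_2>0\}$ is then a sum of rank-one projections onto exponentially decaying eigenfunctions.

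First I would decompose $\Psi=\binom{U_1}{U_2}$ according to the block form~\eqref{eq:HAA=HH}, so that $H_\theta^{AA}\Psi=E\Psi$ becomes the pair of equations $\widehat{H}_\theta U_2=E\,U_1$ and $\widehat{H}_\theta^* U_1=E\,U_2$. Eliminating $U_2$ gives
\begin{align}
\widehat{H}_\theta\widehat{H}_\theta^* U_1=E\,\widehat{H}_\theta U_2=E^2\,U_1,
\end{align}
so $U_1$ is a polynomially bounded generalized eigenfunction of $\widehat{H}_\theta\widehat{H}_\theta^*$ at energy $E^2$; it is not identically zero, since $U_1\equiv0$ would force $U_2=E^{-1}\widehat{H}_\theta^* U_1\equiv0$ (using $E\neq0$), hence $\Psi\equiv0$. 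Now $\widehat{H}_\theta\widehat{H}_\theta^*$ is a one-dimensional block Jacobi operator of the form~\eqref{eq:BVsys}: its coefficients $B,V$ in~\eqref{def:B_AA} are trigonometric polynomials, hence analytic, $V$ is Hermitian on $\T$, and $\det B\neq0$ on $\T$ by hypothesis. Since $E^2$ lies in its localization regime, Theorem~\ref{thm:HH_loc} forces the nonzero polynomially bounded solution $U_1$ to decay exponentially.

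It remains to pass the decay to $U_2$, hence to $\Psi$. From $U_2=E^{-1}\widehat{H}_\theta^* U_1$ and the fact that $\widehat{H}_\theta^*$ has bounded coefficients and nearest-neighbour (block) range, $(U_2)_m$ is a bounded linear combination of $(U_1)_{m-1},(U_1)_m,(U_1)_{m+1}$; so $U_2$ decays exponentially at the same rate up to a constant, and therefore so does $\Psi=\binom{U_1}{U_2}$. This is the one place where $E\neq0$ is genuinely needed. Together with~\eqref{eq:sig_AA_2} (identifying the nonzero part of $\sigma(H_\theta^{AA})^2$ with that of $\sigma(\widehat{H}_\theta\widehat{H}_\theta^*)$) and the $E\mapsto-E$ symmetry of both $\sigma(H_\theta^{AA})$ and of the energy set (via the unitary $\mathrm{diag}(I,-I)$), this yields the claimed Anderson localization.

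I do not anticipate a genuine obstacle: the analytic content is entirely in Theorem~\ref{thm:HH_loc}, which in turn rests on Theorem~\ref{thm:main}. The only points deserving care are choosing the full-measure set of parameters before the energy (which is automatic, since Theorem~\ref{thm:HH_loc} already eliminates the energy uniformly), the non-vanishing of $U_1$, and the use of $E\neq0$ to transport exponential decay from $U_1$ to $U_2$. The question of whether $0$ itself lies in $\sigma(H_\theta^{AA})$ is a separate issue, governed by Lemma~\ref{lem:0_AA}, and is deliberately outside the scope of this statement.
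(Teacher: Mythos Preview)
Your proposal is correct and follows essentially the same route as the paper: decompose $\Psi=\binom{U_1}{U_2}$ via \eqref{eq:HAA=HH}, deduce $\widehat{H}_\theta\widehat{H}_\theta^* U_1=E^2U_1$, apply Theorem~\ref{thm:HH_loc} to get exponential decay of $U_1$, and then recover decay of $U_2$ from $U_2=E^{-1}\widehat{H}_\theta^* U_1$. Your version is in fact slightly more careful than the paper's, since you explicitly verify that $U_1\not\equiv0$ and spell out why $E\neq0$ is needed to transport the decay.
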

\begin{proof}
    Let $E\neq 0$ and $U$ be a (non-trivial) generalized eigenfunction solving $H_{\theta}^{AA}U=EU$. Our goal is to show that it decays exponentially. Let $U_1=(...,u_{m+1,A}^1, u_{m+1,B}^2, u_{m,A}^1, u_{m,B}^2,...)^T$. By \eqref{eq:HAA=HH} and \eqref{eq:HAA^2=HH}, $U_1$ is a generalized solution to \[\widehat{H}_{\theta}\widehat{H}_{\theta}^*U_1=E^2U_1.\]
    Since we assume $L_2(\omega,M_{E^2,AA}^{HH^*})>0$, by Theorem \ref{thm:HH_loc}, $U_1$ decays exponentially.
    By \eqref{eq:HAA=HH}, $EU_2=\widehat{H}^*_{\theta}U_1$, hence $U_2$ decays exponentially as well. 
\end{proof}

Next, we exhibit regions of parameters for which the conditions of Theorems \ref{thm:HH_loc} and \ref{thm:AA_loc} are satisfied.
\begin{lemma}\label{lem:AA_positive}
    Let $\lambda_2,\rho\in \R\setminus\{0\}$ be fixed. Let $|\tilde{\lambda}_1|>|\tilde{\lambda}_3|\neq 0$ and $(\lambda_1,\lambda_3)=\lambda(\tilde{\lambda}_1,\tilde{\lambda}_3)$. Then for $\lambda>\lambda_0=\lambda_0(|\lambda_2|,|\rho|,|\tilde{\lambda}_1|-|\tilde{\lambda}_3|)$, we have for $B$ as in \eqref{def:B_AA},
    \[\det B(\theta)\neq 0, \text{ for } \theta\in \T,\]
    and
    \begin{align}\label{eq:L_2>0_AA}
        L_2(\omega,M_{E,AA}^{HH^*})>0, \text{ uniformly in } \{E: |E|\leq 10(\lambda_1^2+\lambda_2^2+\lambda_3^2+\rho^2)\}\supset\sigma(H^{AA}_{\theta}).
    \end{align}
\end{lemma}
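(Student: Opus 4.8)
The plan is to prove Lemma~\ref{lem:AA_positive} by a perturbative argument in the large coupling $\lambda$, treating $\lambda_2,\rho$ as fixed $O(1)$ quantities and $\lambda_1=\lambda\tilde\lambda_1$, $\lambda_3=\lambda\tilde\lambda_3$ as large. First I would handle the invertibility of $B(\theta)$. From \eqref{def:B_AA}, $\det B(\theta)=\lambda_2^2\, c(\theta-\omega)c(\theta)$, so it suffices to show $c(\theta)=\lambda_1+\lambda_3 e^{2\pi i\theta}\neq 0$ for all real $\theta$. Since $|\lambda_1|=\lambda|\tilde\lambda_1|>\lambda|\tilde\lambda_3|=|\lambda_3|$, the triangle inequality gives $|c(\theta)|\ge |\lambda_1|-|\lambda_3|=\lambda(|\tilde\lambda_1|-|\tilde\lambda_3|)>0$; this actually holds for every $\lambda>0$, so no smallness of $1/\lambda$ is needed here. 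I record the lower bound $|c(\theta)|\ge \lambda(|\tilde\lambda_1|-|\tilde\lambda_3|)$ and note it persists (with a slightly worse constant) after complexifying $\theta\to\theta+i\varepsilon$ for $|\varepsilon|$ small, since $|e^{2\pi i(\theta+i\varepsilon)}|=e^{-2\pi\varepsilon}$ is close to $1$.

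Next I would establish the positive Lyapunov exponent \eqref{eq:L_2>0_AA}. The cleanest route is to invoke the Herman-type / large-coupling lower bound machinery of \cite{DK1}*{Theorem 2.3} directly on the two-block Jacobi operator $\widehat H_\theta\widehat H_\theta^*$. For that one must present $\widehat H_\theta\widehat H_\theta^*-E$ in a normalized form where the potential matrix carries the large parameter: factor out $\lambda^2$ (the size of $c(\theta)d(\theta)\sim \lambda^2|\tilde\lambda_1+\tilde\lambda_3 e^{2\pi i\theta}|^2$ terms in $V$) and check that the rescaled potential matrix $\tilde V(\theta)=\lambda^{-2}(V(\theta)-E)$ has a dominant $\theta$-dependent part whose eigenvalues are non-constant analytic functions of $\theta$ with no constant eigenvalue. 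Concretely, the leading term of $V(\theta)$ is $\mathrm{diag}(c(\theta)d(\theta),c(\theta)d(\theta))+$ lower order, and $c(\theta)d(\theta)=|\lambda_1+\lambda_3e^{2\pi i\theta}|^2$ on $\T$ is a non-constant analytic function (it is $\lambda_1^2+\lambda_3^2+2\lambda_1\lambda_3\cos 2\pi\theta$), so after subtracting scalars the off-diagonal $2\rho\, c(\theta)$, $2\rho\, d(\theta)$ entries are subleading and one checks the symbol has no constant eigenvalue by the same explicit computation used for the Dirac--Harper model earlier in the paper. Then \cite{DK1}*{Theorem 2.3} yields $L_2(\omega,M^{HH^*}_{E,AA})\ge \log\lambda+O(1)$ uniformly in $E$ on the stated energy window, for $\lambda$ larger than some $\lambda_0$ depending only on $|\lambda_2|$, $|\rho|$ and $|\tilde\lambda_1|-|\tilde\lambda_3|$; the uniformity in $E$ on $\{|E|\le 10(\lambda_1^2+\lambda_2^2+\lambda_3^2+\rho^2)\}$ comes from the fact that the rescaled potential depends on $E$ only through the $O(1)$ shift $\lambda^{-2}E$ once $E$ is in that range. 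Finally, the inclusion $\sigma(H^{AA}_\theta)\subset\{|E|\le 10(\lambda_1^2+\lambda_2^2+\lambda_3^2+\rho^2)\}$ is a crude operator-norm bound: each hopping and potential coefficient in $H^{AA}_\theta$ is bounded by a fixed multiple of $\max(|\lambda_i|,|\rho|)$, and by \eqref{eq:sig_AA_2} the relevant spectrum of $\widehat H_\theta\widehat H_\theta^*$ lies in $[0,\|H^{AA}_\theta\|^2]$.

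The main obstacle I expect is verifying the hypotheses of \cite{DK1}*{Theorem 2.3} in the precise normalized form that theorem requires — in particular that the $\theta$-dependent symbol of the rescaled operator, after removing scalar multiples of the identity, has analytically varying eigenvalues none of which is constant, and that the normalization absorbs $B(\theta)$ (which itself grows like $\lambda$) correctly so that the ``large potential'' hypothesis is genuinely met rather than being swamped by large hopping terms. One may need to first conjugate $\widehat H_\theta\widehat H_\theta^*$ by a $\theta$-independent constant matrix, or rather work with the already-available factorization $\widehat H_\theta^*\widehat H_\theta$ and transfer the Lyapunov bound back via the unitary equivalence on the orthogonal complements of the kernels noted around \eqref{eq:sig_AA_2}, to avoid the asymmetric large entries of $B$. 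Once the correct normalization is in place, the non-degeneracy of the symbol is a finite explicit computation entirely analogous to the one already carried out for $V_1$ in the Dirac--Harper discussion, and the rest is bookkeeping of constants.
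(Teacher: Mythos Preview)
Your invertibility argument for $B(\theta)$ is correct and essentially matches the paper's. For the Lyapunov bound, however, your route differs genuinely from the paper's, and the obstacle you flag is real rather than cosmetic.

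The paper does not invoke \cite{DK1} here. Instead it complexifies the phase and uses the global-theory toolkit: one computes the asymptotics of $M_{E,AA}^{HH^*}(\theta+i\varepsilon)$ as $\varepsilon\to+\infty$, finding it equals $e^{2\pi\varepsilon}e^{-2\pi i\theta}Q+O(1)$ for an explicit constant matrix $Q$ with $\lim_n\|\bigwedge^2 Q^n\|^{1/n}=|\lambda_3/\lambda_2|^2$. By continuity of Lyapunov exponents in the cocycle \cite{AJS}, this yields $L^2_\varepsilon = 4\pi\varepsilon+2\log|\lambda_3/\lambda_2|+o(1)$ as $\varepsilon\to\infty$. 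Convexity of $\varepsilon\mapsto L^2_\varepsilon$ on $(\varepsilon_1,\infty)$ (where $\varepsilon_1<0$ is the height at which $c$ can vanish) then forces $L^2_{\varepsilon=0}\ge 2\log|\lambda_3/\lambda_2|=2\log\lambda+O(1)$. A crude sup-norm estimate on the transfer matrix gives $L_1\le\log\lambda+O(1)$ uniformly on the stated energy window, and subtracting yields $L_2\ge\log\lambda+O(1)>0$. This argument is entirely self-contained and never needs to put the operator into a ``large potential, fixed hopping'' normal form.

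Your proposed route via \cite{DK1}*{Theorem 2.3} hits precisely the issue you anticipate. In every application of that theorem in this paper (the Dirac--Harper model and the skew-shift dual) the hopping matrix $B$ is \emph{constant}, whereas here $B(\theta)$ is $\theta$-dependent and grows like $\lambda$. After dividing by $\lambda$ the hopping becomes $O(1)$ but remains $\theta$-dependent, and the leading part of the rescaled potential is the \emph{scalar} function $\lambda^{-1}c(\theta)d(\theta)\,I_2$, so the two symbol eigenvalues coincide to top order and the required nondegeneracy has to be extracted from subleading terms. Whether the statement in \cite{DK1} accommodates variable analytic hopping together with a degenerate leading symbol is not something the references used here establish; resolving it is more than bookkeeping. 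The paper's complexification-and-convexity argument sidesteps all of this and is the cleaner path for this model.
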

\begin{proof}
Let $\varepsilon_1<0$ be such that $|\tilde{\lambda}_3|e^{-2\pi\varepsilon_1}=|\tilde{\lambda}_1|$.
Note that for $\varepsilon\neq \varepsilon_1$, $c(\theta+i\varepsilon)\neq 0$ for any $\theta\in \T$. Hence in particular 
\[|\det (B(\theta+i\varepsilon))|\neq 0 \text{ for any } \varepsilon>\varepsilon_1.\]
Next, we verify \eqref{eq:L_2>0_AA}.
It is easy to verify asymptotically in $\varepsilon\to\infty$ that
\begin{align}
B^{(*)}(\theta+i\varepsilon)=e^{2\pi \varepsilon}e^{-2\pi i\theta}\lambda_2\lambda_3\left(\begin{matrix} e^{2\pi i\omega}& 0\\
0 &1\end{matrix}\right)+O(1), \text{ as } \varepsilon\to\infty,
\end{align}
\begin{align}
    B(\theta+i\varepsilon)^{-1}=\frac{1}{\lambda_1^2\lambda_2}\left(\begin{matrix} 
    \lambda_1 &0\\ 
    -\rho &\lambda_1 \end{matrix}\right)+O(e^{-2\pi\varepsilon}), \text{ as } \varepsilon\to\infty,
\end{align}
and
\begin{align}
    V(\theta+i\varepsilon)=e^{2\pi\varepsilon}e^{-2\pi i\theta}\lambda_3\left(\begin{matrix}
        \lambda_1 &0\\
        2\rho &\lambda_1 
    \end{matrix}\right)+O(1), \text{ as } \varepsilon\to\infty.
\end{align}
Hence
\begin{align}
M_{E,AA}^{HH^*}(\theta+i\varepsilon)
=&e^{2\pi\varepsilon}e^{-2\pi i\theta}
\left(\begin{array}{c|c}
    \left(\begin{matrix}-\frac{\lambda_3}{\lambda_2} &0\\
    -\frac{\rho \lambda_3}{\lambda_1\lambda_2} &-\frac{\lambda_3}{\lambda_2}\end{matrix}\right)
    &\left(\begin{matrix}
        -\lambda_2\lambda_3 e^{2\pi i\omega} &0\\
        0 &-\lambda_2\lambda_3
    \end{matrix}\right)\\
    \hline
    0_{2\times 2} &0_{2\times 2}
\end{array}\right)+O(1)\\
=:&e^{2\pi\varepsilon} e^{-2\pi i\theta}
\left(\begin{array}{c|c}
    Q_1 & Q_2\\
    \hline
    0_{2\times 2} & 0_{2\times 2}
\end{array}\right)+O(1)=:e^{2\pi\varepsilon} e^{-2\pi i\theta}Q+O(1).
\end{align}
By the continuity of Lyapunov exponents in the cocycles \cite{AJS}, we have
\begin{align}\label{eq:lb_L^2_AA}
    L^2_{\varepsilon}(\omega,M_{E,AA}^{HH^*})=4\pi\varepsilon+L^2(\omega,e^{-2\pi i\theta}Q)+o(1), \text{ as } \varepsilon\to\infty.
\end{align}
It suffices to compute
\begin{align}\label{eq:L2Q_AA}
    L^2(\omega,e^{-2\pi i\theta}Q)=\lim_{n\to\infty} \frac{1}{n} \log \|\textstyle{\bigwedge}^2 Q^n\|
    =&\lim_{n\to \infty}\frac{1}{n}\log \left\|\textstyle{\bigwedge}^2\left(\begin{array}{c|c}Q_1^n & Q_1^{n-1}Q_2\\ \hline 0 &0\end{array}\right)\right\|\notag\\
    =&2\log |\lambda_3/\lambda_2|. 
\end{align}
Combining \eqref{eq:lb_L^2_AA} with \eqref{eq:L2Q_AA} yields
\begin{align}\label{eq:lb_L^2_AA_2}
    L^2_{\varepsilon}(\omega,M_{E,AA}^{HH^*})=4\pi\varepsilon+2\log |\lambda_3/\lambda_2|+o(1)
\end{align}
By the convexity of $L^2_{\varepsilon}(\omega,M_{E,AA}^{HH^*})$ in $\varepsilon\in (\varepsilon_1,\infty)$, we have,
\begin{align}\label{eq:lb_L^2_AA_3}
    L^2_{\varepsilon=0}(\omega,M_{E,AA}^{HH^*})\geq 2\log |\lambda_3/\lambda_2|\geq 2\log |\lambda|+O(1).
\end{align}
Simple estimates on the sup norm of $M_{E,AA}^{HH^*}$ show
\begin{align}\label{eq:ub_L^1_AA}
    \|M_{E,AA}^{HH^*}(\cdot)\|_{\T,\infty}\leq \log \left(\frac{(|\lambda_1|+|\lambda_2|+|\lambda_3|+|\rho|)^3}{(|\lambda_1|-|\lambda_3|)^2}\right)+O(1)\leq \log |\lambda|+O(1),
    \end{align}
    uniformly in $|E|\leq 10(\lambda_1^2+\lambda_3^2+\lambda_2^2+\rho^2)$.
Combining \eqref{eq:lb_L^2_AA_3} with \eqref{eq:ub_L^1_AA}, we conclude that for $|\lambda|$ large enough
\begin{align}
    L_2(\omega,M_{E,AA}^{HH^*})\geq \log |\lambda|+O(1),
\end{align}
uniformly in $E$ in the interval specified above.
\end{proof}

\subsection{AB-stacked graphene in magnetic fields}
The AB-stacked graphene model has received a lot of attention in the physics literature as well, see e.g. \cite{LHCL}.
The Hamiltonian for the AB-stacked graphene model in magnetic fields is: 
\begin{align}
    (H^{AB}u)_{m,n,A}^2=&\lambda_1 u_{m,n,B}^2+\lambda_2 u_{m+1,n,B}^2+\lambda_3 e^{2\pi i m\omega} u_{m,n+1,B}^2+\rho u_{m,n,B}^1\\
    (H^{AB}u)_{m,n,B}^2=&\lambda_1 u_{m,n,A}^2+\lambda_2 u_{m-1,n,A}^2+\lambda_3 e^{-2\pi im\alpha}u_{m,n-1,A}^2\\
    (H^{AB}u)_{m,n,A}^1=&\mu_1 e^{2\pi i\frac{1}{3}\alpha}u_{m,n,B}^1+\mu_2 u_{m+1,n,B}^1+\mu_3 e^{2\pi i(m\alpha-\frac{2}{3}\alpha)}u_{m,n+1,B}^1\\
    (H^{AB}u)_{m,n,B}^1=&\mu_1 e^{-2\pi i \frac{1}{3}\alpha}u_{m,n,A}^1+\mu_2 u_{m-1,n,A}^1+\mu_3 e^{-2\pi i(m\alpha-\frac{2}{3}\alpha)}u_{m,n-1,A}^1+\rho u_{m,n,A}^2.
\end{align}
Note in this model, we only introduce inter-layer hopping when a vertex is exactly on top of another in the other layer.
We assume the two single layers are identical, hence $(\mu_1,\mu_2,\mu_3)=(\lambda_1,\lambda_2,\lambda_3)$.

The operator $H^{AB}$ can be reduced to a one-dimensional operator as
\begin{align}
    (H^{AB}_{\theta}u)_{m,A}^2=&\lambda_2 u_{m+1,B}^2+(\lambda_3 e^{2\pi i(\theta+m\omega)}+\lambda_1)u_{m,B}^2+ \rho u_{m,B}^1\\
    (H^{AB}_{\theta}u)_{m,B}^2=&\lambda_2 u_{m-1,A}^2+(\lambda_3 e^{-2\pi i(\theta+m\omega)}+\lambda_1)u_{m,A}^2\\
    (H^{AB}_{\theta}u)_{m,A}^1=&\lambda_2 u_{m+1,B}^1+(\lambda_3 e^{2\pi i(\theta+(m-\frac{2}{3})\omega)}+\lambda_1 e^{2\pi i\frac{1}{3}\omega})u_{m,B}^1\\
    (H^{AB}_{\theta}u)_{m,B}^1=&\lambda_2 u_{m-1,A}^1+(\lambda_3 e^{-2\pi i(\theta+(m-\frac{2}{3})\omega)}+\lambda_1 e^{-2\pi i\frac{1}{3}\omega})u_{m,A}^1+\rho u_{m,A}^2.
\end{align}
The spectrum is preserved in the sense that $\sigma(H^{AB})=\bigcup_{\theta\in \T}\sigma(H^{AB}_{\theta})$.

In terms of block-valued operator, $H_{\theta}^{AB}$ reads as
\begin{align}
    \left(\begin{matrix}(H^{AB}_{\theta}u)_{m,B}^2\\ (H^{AB}_{\theta}u)_{m,B}^1\end{matrix}\right)_m=
    \left(\begin{matrix}
d(\theta+m\omega) &0\\
\rho &e^{-2\pi i\frac{1}{3}\omega}d(\theta+(m-1)\omega)
    \end{matrix}\right)\left(\begin{matrix}
        u_{m,A}^2\\ u_{m,A}^1
    \end{matrix}\right)
+\lambda_2\left(\begin{matrix}
    u_{m-1,A}^2\\ u_{m-1,A}^1
    \end{matrix}\right),
\end{align}
and
\begin{align}
\left(\begin{matrix}(H^{AB}_{\theta}u)_{m,A}^2\\ (H^{AB}_{\theta}u)_{m,A}^1\end{matrix}\right)_m=
    \left(\begin{matrix}
c(\theta+m\omega) &\rho\\
0 &e^{2\pi i\frac{1}{3}\omega}c(\theta+(m-1)\omega)
    \end{matrix}\right)\left(\begin{matrix}
        u_{m,A}^2\\ u_{m,A}^1
    \end{matrix}\right)
+\lambda_2\left(\begin{matrix}
    u_{m+1,B}^2\\ u_{m+1,B}^1
    \end{matrix}\right),
\end{align}
in which $c(\theta)=\lambda_1+\lambda_3 e^{2\pi i\theta}$ and $d(\theta)=\lambda_1+\lambda_3 e^{-2\pi i\theta}$.
Let $\widehat{H}_{\theta}$ on $\ell^2(\Z, \C^2)$ be as follows:
\begin{align}
    (\widehat{H}_{\theta}U)_m=\left(\begin{matrix}
d(\theta+m\omega) &0\\
\rho &e^{-2\pi i\frac{1}{3}\omega}d(\theta+(m-1)\omega)
    \end{matrix}\right)U_m+\lambda_2 U_{m-1},
\end{align}
then similarly to \eqref{eq:HAA=HH}, 
\begin{align}\label{eq:HAB=HH}
    H_{\theta}^{AB}\left(\begin{matrix} U_1\\ U_2\end{matrix}\right)
    =\left(\begin{matrix} 0 &\widehat{H}_{\theta}\\ \widehat{H}_{\theta}^* &0\end{matrix}\right)
    \left(\begin{matrix} U_1\\ U_2\end{matrix}\right),
\end{align}
where $U_1=(...,u_{m+1,B}^2, u_{m+1,B}^1, u_{m,B}^2, u_{m,B}^1,...)^T$ and $U_2=(...,u_{m+1,A}^2, u_{m+1,A}^1, u_{m,A}^2, u_{m,A}^1,...)^T$.
This leads to 
\begin{align}\label{eq:HAB=HH_2}   (H_{\theta}^{AB})^2=\left(\begin{matrix}\widehat{H}_{\theta}\widehat{H}_{\theta}^* &0 \\ 0 &\widehat{H}_{\theta}^* \widehat{H}_{\theta}\end{matrix}\right).
\end{align}
Therefore, in analogy with \eqref{eq:sig_AA_2},  
\begin{align}\label{eq:sig_AB}
    (\sigma(H_{\theta}^{AB}))^2\setminus \{0\}=\sigma((H_{\theta}^{AB})^2)\setminus \{0\}=\sigma(\widehat{H}_{\theta}\widehat{H}_{\theta}^*)\setminus \{0\}.
\end{align}
The operator $\widehat{H}_{\theta}\widehat{H}_{\theta}^*$ takes the following form:
\begin{align}
(\widehat{H}_{\theta}\widehat{H}_{\theta}^*U)_m=B(\theta+(m+1)\omega)U_{m+1}+V(\theta+m\omega)U_m+B^{(*)}(\theta+m\omega)U_{m-1},
\end{align}
where
\begin{align}\label{def:B_AB}
    B(\theta)=\lambda_2\left(\begin{matrix}
        d(\theta-\omega) & 0\\
        \rho &e^{-2\pi i\frac{1}{3}\omega}d(\theta-2\omega)
    \end{matrix}\right),\, B^{(*)}(\theta)=\lambda_2 
    \left(\begin{matrix} c(\theta-\omega) & \rho\\
    0 &e^{2\pi i \frac{1}{3}\omega} c(\theta-2\omega)
    \end{matrix}\right),
\end{align}
and
\begin{align}
    V(\theta)=
    \left(\begin{matrix}
        c(\theta)d(\theta)+\lambda_2^2 &d(\theta)\\
        c(\theta) &\rho^2+\lambda_2^2+c(\theta-\omega)d(\theta-\omega)
    \end{matrix}\right).
\end{align}
Let $M_{E,AB}^{HH^*}$ be the cocycle associated to $\widehat{H}_{\theta}\widehat{H}_{\theta}^*U=EU$. 
Then similarly to Theorems \ref{thm:HH_loc} and \ref{thm:AA_loc},  we obtain 
\begin{theorem}\label{thm:HH_loc_2}
    Let $(\lambda_1,\lambda_2,\lambda_3)$ be such that $\det B(\theta)\neq 0$ on $\T$, for $B$ as in \eqref{def:B_AB}.
    For a.e. $\omega,\theta\in \T$, $\widehat{H}_{\theta}\widehat{H}_{\theta}^*$ is Anderson localized in $\{E:\, L_2(\omega, M_{E,AB}^{HH^*})>0\}$.
\end{theorem}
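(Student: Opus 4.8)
The plan is to recognize $\widehat{H}_{\theta}\widehat{H}_{\theta}^{*}$ as a $2\times2$ block Jacobi operator of the form~\eqref{eq:BVsys} and then to invoke Theorem~\ref{thm:main} in the positive Lyapunov exponent regime, exactly as in the AA--stacked case of Theorem~\ref{thm:HH_loc}. First I would check the structural hypotheses of~\eqref{eq:BVsys}. For $\theta\in\T$ one has $d(\theta)=\overline{c(\theta)}$, so in the matrix $V(\theta)$ displayed just after~\eqref{def:B_AB} the diagonal entries equal $|c(\theta)|^{2}+\lambda_{2}^{2}$ and $\rho^{2}+\lambda_{2}^{2}+|c(\theta-\omega)|^{2}$, which are real, while the off--diagonal entries $d(\theta)$ and $c(\theta)$ are complex conjugates; hence $V(\theta)^{*}=V(\theta)$ on the real torus, and likewise $B^{(*)}(\theta)$ in~\eqref{def:B_AB} is exactly $(B(\theta))^{*}$ for $\theta\in\T$. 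All entries of $B$ and $V$ are trigonometric polynomials in $\theta$ with fixed complex coefficients, so they extend analytically to a strip $\T_{\eta}$ for any $\eta>0$. Since $B$ is lower triangular, $\det B(\theta)=\lambda_{2}^{2}e^{-2\pi i\omega/3}d(\theta-\omega)d(\theta-2\omega)$, which is nonvanishing on $\T$ precisely when $\lambda_{2}\ne0$ and $|\lambda_{1}|\ne|\lambda_{3}|$; under this assumption I would shrink $\eta$ (below $\frac{1}{2\pi}\bigl|\log|\lambda_{1}/\lambda_{3}|\bigr|$) so that $\det B(\theta+i\varepsilon)\ne0$ for $|\varepsilon|\le\eta$. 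Thus $\widehat{H}_{\theta}\widehat{H}_{\theta}^{*}$ is an instance of~\eqref{eq:BVsys} with $d=2$ and Hermitian, invertible coefficients.

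Next I would run the localization machinery of Section~\ref{sec:main}. Although Theorem~\ref{thm:main} is stated under $L_{d}(\omega,M_{E})\ge\gamma>0$ for all $E$, its proof is carried out at a fixed energy in the spectrum and uses this positivity only locally: for a compact interval $I$ with $\inf_{E\in I}L_{2}(\omega,M_{E,AB}^{HH^{*}})\ge\gamma>0$, the inputs Lemma~\ref{lem:numerator}, Lemma~\ref{lem:deno}, Corollary~\ref{cor:goodGreen} and the underlying large deviation estimates all hold uniformly in $E\in I$, and the Bourgain--Goldstein elimination of the energy and of the double--resonant frequencies then produces, for this $I$, a full--measure set of $\omega\in\mathrm{DC}$ (and, as in~\cite{BG}, a.e.\ $\theta\in\T$) on which $\widehat{H}_{\theta}\widehat{H}_{\theta}^{*}$ is Anderson localized in $I$. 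Intersecting these full--measure sets over all intervals $I$ with rational endpoints and all rational $\gamma>0$, and using continuity of $E\mapsto L_{2}(\omega,M_{E,AB}^{HH^{*}})$ on compact intervals, yields a single full--measure set of $(\omega,\theta)$ for which localization holds on $\sigma(\widehat{H}_{\theta}\widehat{H}_{\theta}^{*})\cap\{E:\,L_{2}(\omega,M_{E,AB}^{HH^{*}})>0\}$.

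I expect no genuinely new obstacle here: the argument is a direct translation of the AA--stacked case, and ultimately of Theorem~\ref{thm:main}. The one hypothesis that must be tracked is the invertibility of $B$, which is precisely what is assumed; verifying it together with uniform positivity of $L_{2}$ on $\sigma(H^{AB}_{\theta})$ in a concrete coupling regime can be carried out by the asymptotic analysis of $M_{E,AB}^{HH^{*}}(\theta+i\varepsilon)$ as $\varepsilon\to\infty$ together with convexity of $L^{2}_{\varepsilon}$ in $\varepsilon$, exactly as in the proof of Lemma~\ref{lem:AA_positive}. Determining the exact set $\{E:\,L_{2}(\omega,M_{E,AB}^{HH^{*}})>0\}$ is a separate matter and is not needed for the localization statement itself.
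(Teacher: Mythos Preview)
Your proposal is correct and follows exactly the paper's approach: the paper merely remarks that the proof is analogous to that of Theorem~\ref{thm:HH_loc} (which in turn is stated as a direct corollary of Theorem~\ref{thm:main}) and does not repeat it. You have simply made explicit the verification of the hypotheses of~\eqref{eq:BVsys} and the standard passage from the uniform positivity assumption in Theorem~\ref{thm:main} to localization on $\{E:L_2>0\}$ via countably many compact intervals, which is implicit in the paper's applications.
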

\begin{theorem}\label{thm:AB_loc}
    Let $(\lambda_1,\lambda_2,\lambda_3)$ be such that $\det B(\theta)\neq 0$ on $\T$, for $B$ as in \eqref{def:B_AB}.
    For a.e. $\omega,\theta\in \T$, $H_{\theta}^{AB}$ is Anderson localized in $\{E:\, E\neq 0, \text{ and }  L_2(\omega,M_{E^2,AB}^{HH^*})>0\}$.
\end{theorem}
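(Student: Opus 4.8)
The plan is to follow the argument for Theorem~\ref{thm:AA_loc} essentially verbatim, now invoking Theorem~\ref{thm:HH_loc_2} in place of Theorem~\ref{thm:HH_loc}. Fix $\omega,\theta\in\T$ outside the null set on which Theorem~\ref{thm:HH_loc_2} fails, and fix $E\neq 0$ with $L_2(\omega,M_{E^2,AB}^{HH^*})>0$. Let $U$ be a nontrivial generalized eigenfunction of $H_\theta^{AB}$ at energy $E$, i.e.\ $H_\theta^{AB}U=EU$ with $U$ growing at most polynomially (Shnol's theorem). We must show that $U$ decays exponentially.

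The key step is to exploit the bipartite structure recorded in \eqref{eq:HAB=HH}. Writing $U=(U_1,U_2)^T$ with $U_1=(\dots,u^2_{m+1,B},u^1_{m+1,B},u^2_{m,B},u^1_{m,B},\dots)^T$ and $U_2$ the analogous vector of $A$-sublattice components, the eigenvalue equation becomes the pair $\widehat{H}_\theta U_2=EU_1$ and $\widehat{H}_\theta^* U_1=EU_2$. Composing these and using \eqref{eq:HAB=HH_2} gives $\widehat{H}_\theta\widehat{H}_\theta^* U_1=E^2 U_1$. Since $E\neq 0$, $U_1$ cannot vanish identically (otherwise $U_2=E^{-1}\widehat{H}_\theta^*U_1=0$, contradicting nontriviality of $U$), and $U_1$ inherits the polynomial bound from $U$; thus $U_1$ is a nontrivial, polynomially bounded solution of $\widehat{H}_\theta\widehat{H}_\theta^* U_1=E^2 U_1$. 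Now apply Theorem~\ref{thm:HH_loc_2} at energy $E^2$: because $\det B(\theta)\neq 0$ on $\T$ for $B$ as in \eqref{def:B_AB}, and because $L_2(\omega,M_{E^2,AB}^{HH^*})>0$ by hypothesis, the theorem yields exponential decay of $U_1$. Finally $U_2=E^{-1}\widehat{H}_\theta^*U_1$, and $\widehat{H}_\theta^*$ is a finite-range (banded, bounded) operator, so exponential decay of $U_1$ forces exponential decay of $U_2$; here $E\neq 0$ is used once more. Hence $U=(U_1,U_2)^T$ decays exponentially, so $H_\theta^{AB}$ is Anderson localized on the stated energy set, for a.e.\ $(\omega,\theta)$.

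The substantive input --- non-perturbative localization for the $2\times 2$ block Jacobi operator $\widehat{H}_\theta\widehat{H}_\theta^*$ --- is already packaged into Theorem~\ref{thm:HH_loc_2}, which itself rests on Theorem~\ref{thm:main}; the present argument is only the algebraic reduction via the bipartite square. The points that deserve care are: (i) checking that the coefficient matrices $B,V$ for $\widehat{H}_\theta\widehat{H}_\theta^*$ displayed before \eqref{def:B_AB} are trigonometric polynomials in $\theta$, hence automatically extend analytically to $\T_\eta$, and that the hypothesis $\det B(\theta)\neq 0$ on $\T$ is exactly the invertibility condition required in \eqref{eq:BVsys}; (ii) the exclusion of $E=0$, which is genuinely necessary, since the reduction divides by $E$ and at $E=0$ one instead needs a spectral analysis in the spirit of Lemma~\ref{lem:0_AA}; (iii) that $\{E:L_2(\omega,M_{E^2,AB}^{HH^*})>0\}$ is nonempty only in suitable parameter regimes (an AB analogue of Lemma~\ref{lem:AA_positive} would be needed to produce explicit such regimes), so the statement is, like its AA counterpart, conditional on positivity of the Lyapunov exponent. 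The main obstacle is therefore Theorem~\ref{thm:HH_loc_2} itself; granting it, the proof above is routine.
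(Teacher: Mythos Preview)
Your proof is correct and follows essentially the same approach as the paper, which simply states that the argument is analogous to that of Theorem~\ref{thm:AA_loc} (invoking Theorem~\ref{thm:HH_loc_2} in place of Theorem~\ref{thm:HH_loc}) and does not repeat the details. Your additional care in verifying that $U_1$ is nontrivial and in spelling out point~(i) is welcome but not strictly required beyond what the paper does.
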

Since the proofs are analogous to those of Theorems \ref{thm:HH_loc} and \ref{thm:AA_loc}, we don't repeat them here.

In view of Lemma \ref{lem:AA_positive}, the purpose of the following lemma is to show that for some regions of the parameters the conditions of Theorems \ref{thm:HH_loc_2} and \ref{thm:AB_loc} are satisfied.
\begin{lemma}\label{lem:AB_positive}
Let $\lambda_2,\rho\in \R\setminus\{0\}$ be fixed. Let $|\tilde{\lambda}_3|>|\tilde{\lambda}_1|\neq 0$ and $(\lambda_1,\lambda_3)=\lambda(\tilde{\lambda}_1,\tilde{\lambda}_3)$. Then for $\lambda>\lambda_0=\lambda_0(|\lambda_2|,|\rho|,|\tilde{\lambda}_3|-|\tilde{\lambda}_1|)$, we have for $B$ as in \eqref{def:B_AB},
    \[\det B(\theta)\neq 0, \text{ for } \theta\in \T,\]
    and
    \begin{align}\label{eq:L_2>0_AB}
        L_2(\omega,M_{E,AB}^{HH^*})>0, \text{ uniformly in } \{E: |E|\leq 10(\lambda_1^2+\lambda_2^2+\lambda_3^2+\rho^2)\}\supset\sigma(H^{AB}_{\theta}).
    \end{align}
\end{lemma}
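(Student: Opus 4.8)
The plan is to mirror the proof of Lemma~\ref{lem:AA_positive}, but with the roles of $c(\theta)=\lambda_1+\lambda_3 e^{2\pi i\theta}$ and $d(\theta)=\lambda_1+\lambda_3 e^{-2\pi i\theta}$ exchanged, since here the gap condition reads $|\tilde\lambda_3|>|\tilde\lambda_1|$ rather than $|\tilde\lambda_1|>|\tilde\lambda_3|$. The invertibility of $B$ is immediate: from \eqref{def:B_AB} one has $\det B(\theta)=\lambda_2^2 e^{-2\pi i\omega/3}\,d(\theta-\omega)\,d(\theta-2\omega)$, and $|\lambda_3|>|\lambda_1|$ forces $|d(\psi)|\ge|\lambda_3|-|\lambda_1|>0$ for all $\psi\in\T$, so $\det B(\theta)\ne 0$ on $\T$. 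More precisely $d(\psi+i\varepsilon)$ vanishes only when $|\lambda_3|e^{2\pi\varepsilon}=|\lambda_1|$, i.e. at $\varepsilon=-\varepsilon_1$ with $\varepsilon_1:=(2\pi)^{-1}\log(|\tilde\lambda_3|/|\tilde\lambda_1|)>0$; hence $M_{E,AB}^{HH^*}(\cdot+i\varepsilon)$ is an analytic cocycle for every $\varepsilon>-\varepsilon_1$, and by the usual subharmonicity argument $\varepsilon\mapsto L^2_\varepsilon(\omega,M_{E,AB}^{HH^*})$ is finite and convex on the interval $(-\varepsilon_1,\infty)$, which contains $[0,\infty)$.

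Next I would track the cocycle as $\varepsilon\to+\infty$. In contrast to the AA case, here $c(\theta+i\varepsilon)\to\lambda_1$ while $d(\theta+i\varepsilon)\sim\lambda_3 e^{-2\pi i\theta}e^{2\pi\varepsilon}$, so the block $B(\theta+i\varepsilon)$ itself grows like $e^{2\pi\varepsilon}$ (and so does $V(\theta+i\varepsilon)$), whereas $B^{(*)}(\theta+i\varepsilon)$ stays bounded and $B(\theta+i\varepsilon)^{-1}=O(e^{-2\pi\varepsilon})$. Consequently $(E-V)B^{-1}=O(1)$ with an $E$-independent limit, and a direct (if slightly tedious) computation gives
\[
M_{E,AB}^{HH^*}(\theta+i\varepsilon)\longrightarrow Q^{(0)}:=\begin{pmatrix}Q_1 & Q_2\\ 0 & 0\end{pmatrix},\qquad Q_1=-\frac{1}{\lambda_2}\begin{pmatrix}\lambda_1 e^{-2\pi i\omega} & e^{-10\pi i\omega/3}\\ 0 & \lambda_1 e^{-4\pi i\omega/3}\end{pmatrix},
\]
uniformly in $\theta\in\T$ and in $E$ over the stated range (the explicit shape of $Q_2$ is irrelevant). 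The $\theta$-independent matrix $Q_1$ is upper triangular and invertible, with both diagonal entries of modulus $|\lambda_1|/|\lambda_2|$; since the eigenvalues of $Q^{(0)}$ are those of $Q_1$ together with $0,0$, the constant cocycle $Q^{(0)}$ has top two Lyapunov exponents equal to $\log(|\lambda_1|/|\lambda_2|)$, i.e. $L^2(\omega,Q^{(0)})=2\log(|\lambda_1|/|\lambda_2|)$. By continuity of the Lyapunov exponents in the cocycle \cite{AJS} applied to $M_{E,AB}^{HH^*}(\cdot+i\varepsilon)=Q^{(0)}+O(e^{-2\pi\varepsilon})$, I would conclude $L^2_\varepsilon(\omega,M_{E,AB}^{HH^*})\to 2\log(|\lambda_1|/|\lambda_2|)$ as $\varepsilon\to+\infty$, uniformly in $E$.

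This is exactly where the argument diverges from Lemma~\ref{lem:AA_positive}, and it is the step I expect to carry the real content: there the complexified exponent has a strictly positive slope at $+\infty$ and convexity bounds $L^2_{\varepsilon=0}$ below by the intercept; here the slope is $0$, but a convex function on $[0,\infty)$ that is bounded above there must be non-increasing, so
\[
L^2_{\varepsilon=0}(\omega,M_{E,AB}^{HH^*})\ \ge\ \lim_{\varepsilon\to+\infty}L^2_\varepsilon(\omega,M_{E,AB}^{HH^*})=2\log(|\lambda_1|/|\lambda_2|)=2\log\lambda+O(1),
\]
uniformly in $E$, the $O(1)$ depending only on $|\lambda_2|$ and on $\tilde\lambda_1,\tilde\lambda_3$. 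Finally, a crude Schur/sup-norm estimate on the three blocks of $M_{E,AB}^{HH^*}(\theta)$ at $\varepsilon=0$ — using $|d(\psi)|\ge|\lambda_3|-|\lambda_1|=\lambda(|\tilde\lambda_3|-|\tilde\lambda_1|)$ and $|E|\lesssim(|\lambda_1|+|\lambda_2|+|\lambda_3|+|\rho|)^2$ — gives $\sup_{\theta\in\T}\|M_{E,AB}^{HH^*}(\theta)\|\lesssim\lambda$, hence $L^1_{\varepsilon=0}=L_{1,\varepsilon=0}\le\sup_{\theta}\log\|M_{E,AB}^{HH^*}(\theta)\|\le\log\lambda+O(1)$ uniformly over the energy range. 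Subtracting, $L_2(\omega,M_{E,AB}^{HH^*})=L^2-L^1\ge\log\lambda+O(1)$, which is positive once $\lambda>\lambda_0(|\lambda_2|,|\rho|,|\tilde\lambda_3|-|\tilde\lambda_1|)$. The main obstacle is bookkeeping rather than new ideas: one must extract the leading-order asymptotics of $B$, $B^{(*)}$, $V$ correctly so as to pin down $Q_1$ and its spectral radius, and verify that the strip of convexity of $L^2_\varepsilon$ genuinely reaches down to and below $\varepsilon=0$ — which is precisely where the hypothesis $|\tilde\lambda_3|>|\tilde\lambda_1|$ enters.
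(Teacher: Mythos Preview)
Your proof is correct and follows essentially the same route as the paper: complexify in $\varepsilon$, compute the $\varepsilon\to+\infty$ limit of the cocycle (a constant matrix with top-two eigenvalues of modulus $|\lambda_1|/|\lambda_2|$), invoke \cite{AJS} continuity and convexity to get $L^2_{\varepsilon=0}\ge 2\log|\lambda_1/\lambda_2|$, and pair this with the crude norm bound $L^1\le\log\lambda+O(1)$. Your explicit remark that the zero asymptotic slope forces $L^2_\varepsilon$ to be non-increasing on $[0,\infty)$ is exactly the convexity step the paper uses (more tersely); the off-diagonal entry of your $Q_1$ differs slightly from the paper's, but as you note this is irrelevant since only the diagonal of the upper-triangular $Q_1$ matters.
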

\begin{proof}
Let $\varepsilon_1<0$ be such that $|\tilde{\lambda}_1|=|\tilde{\lambda}_3|e^{2\pi \varepsilon_1}$. Clearly for any $\varepsilon\neq \varepsilon_1$, $d(\theta)\neq 0$ for any $\theta\in \T$. 
Hence $\det B(\theta+i\varepsilon)\neq 0$ for any $\theta\in \T$ and $\varepsilon>\varepsilon_1$. 
One computes asymptotically in $\varepsilon\to\infty$ that 
    \begin{align}
B(\theta+i\varepsilon)=e^{2\pi\varepsilon}\lambda_2\lambda_3\left(\left(\begin{matrix}
        e^{-2\pi i(\theta-\omega)} &0\\
        0 &e^{-2\pi i(\theta-2\omega)}
        \end{matrix}\right)+o(1)\right),\, \text{ as } \varepsilon\to \infty, 
    \end{align}
\begin{align}
    B^{(*)}(\theta+i\varepsilon)=\lambda_2\left(\begin{matrix}\lambda_1 & \rho\\ 0 &e^{2\pi i\frac{1}{3}\omega}\lambda_1
    \end{matrix}\right)+o(1),\, \text{ as } \varepsilon\to \infty, 
\end{align}
\begin{align}
    B(\theta+i\varepsilon)^{-1}=e^{-2\pi\varepsilon}\frac{e^{2\pi i\theta}}{\lambda_2\lambda_3}\left(\left(\begin{matrix}
        e^{-2\pi i\omega} &0\\
        0 &e^{2\pi i\frac{1}{3}\omega} e^{-4\pi i\omega}
    \end{matrix}\right)+o(1)\right)
\end{align}
    and 
    \begin{align}
V(\theta+i\varepsilon)=e^{2\pi\varepsilon}e^{-2\pi i\theta}\lambda_3\left(\left(\begin{matrix}
    \lambda_1 & \rho\\
    0 & \lambda_1 e^{2\pi i \omega}
\end{matrix}\right)+o(1)\right),\, \text{ as } \varepsilon\to\infty.
    \end{align}
Hence
\begin{align}
    M_{E,AB}^{HH^*}(\theta+i\varepsilon)=\left(\begin{matrix} 
    -\frac{\lambda_1}{\lambda_2}e^{-2\pi i\omega} &-\frac{\rho}{\lambda_2}e^{2\pi i \frac{1}{3}\omega-2\pi i\omega} &\lambda_1\lambda_2 &0\\
    0 &-\frac{\lambda_1}{\lambda_2}e^{2\pi i \frac{1}{3}\omega-2\pi i\omega} &0 &\lambda_1\lambda_2 e^{2\pi i\frac{1}{3}\omega}\\
    0 &0 &0 &0\\
    0 &0 &0 &0
    \end{matrix}\right)+o(1), \text{ as } \varepsilon\to\infty.
\end{align}
This implies, by the continuity of Lyapunov exponents in the cocycles \cite{AJS}, similarly to \eqref{eq:lb_L^2_AA_2} that 
\begin{align}
    L^2_{\varepsilon}(\omega,M_{E,AB}^{HH^*})=2\log |\lambda_1/\lambda_2|+o(1), \text{ as } \varepsilon\to\infty.
\end{align}
By the convexity of $L^2_{\varepsilon}(\omega,M_{E,AB}^{HH^*})$ in $\varepsilon$ in the interval $(\varepsilon_1,\infty)$, we have
\begin{align}\label{eq:lb_L^2_AB}
    L^2_{\varepsilon=0}(\omega,M_{E,AB}^{HH^*})\geq 2\log |\lambda_1/\lambda_2|\geq \log |\lambda|+O(1).
\end{align}
In analogy with\eqref{eq:ub_L^1_AA}, we now conclude
\begin{align}\label{eq:ub_L^1_AB}
    \|M_{E,AB}^{HH^*}(\cdot)\|_{\T,\infty}\leq \log \left(\frac{(|\lambda_1|+|\lambda_2|+|\lambda_3|+|\rho|)^3}{(|\lambda_3|-|\lambda_1|)^2}\right)+O(1)\leq \log |\lambda|+O(1), 
\end{align}
uniformly in $|E|\leq 10(\lambda_1^2+\lambda_2^2+\lambda_3^2+\rho^2)$.
Combining \eqref{eq:lb_L^2_AB} with \eqref{eq:ub_L^1_AB} yields the claimed result.
\end{proof}

\section{Coupled Harper operators}
\label{sec:coupledHarper}
\subsection{The model and generalities}
Consider the eigenvalue problem, with Diophantine $\omega$, 
\begin{equation}
     \label{eq:coupledHarper}
     \begin{split}
          \phi_{n+1}+\phi_{n-1}+\epsilon\psi_n + 2\lambda_1 \cos(2\pi(x+n\omega)) \phi_n &= E\phi_n \\
     \psi_{n+1}+\psi_{n-1}+\epsilon\phi_n + 2\lambda_2 \cos(2\pi(x+n\omega)) \psi_n &= E\psi_n
     \end{split}
\end{equation}
where $\epsilon\in\R$ and $\lambda_2\ge \lambda_1>0$. One can couple more scalar quasi-periodic operators in this fashion and also allow for more general potentials. For simplicity we restrict ourselves to the system~\eqref{eq:coupledHarper} since it already poses sufficiently many challenges. Setting $\Phi_n=\binom{\phi_n}{\psi_n}$, we can rewrite~\eqref{eq:coupledHarper} in the form
\[
(\calH_x \Phi)_n = \Phi_{n+1}+\Phi_{n-1} + V_n(x) \Phi_n, \quad V_n = \left( \begin{matrix}
    2\lambda_1 \cos(2\pi(x+n\omega))  & \epsilon \\ \epsilon & 2\lambda_2 \cos(2\pi(x+n\omega)) 
\end{matrix}\right)
\]
which is a special case of~\eqref{eq:BVsys}. We order the Lyapunov exponents as above, i.e.,  $L_1\ge L_2 \ge0\ge L_3=-L_2\ge L_4=-L_1$. By Herman's method, $L_1(E)\ge \max(\log\lambda_1,\log\lambda_2,0)$.  

\begin{theorem}
    \label{thm:puretype}
    For $\epsilon$ small the following hold:
    \begin{itemize}
        \item if $\lambda_1>1$, then~\eqref{eq:coupledHarper} exhibits Anderson localization for a.e.~$\omega, x\in\tor$.
        \item if $\lambda_2<1$, then $L_2=0$ on $\sigma(\calH_x)$ and spectrum $\sigma(\calH_x)$ equals the essential support of the absolutely continuous spectrum. 
    \end{itemize}
\end{theorem}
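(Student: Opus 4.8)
Both assertions are statements about the Lyapunov exponents of the $4\times4$ transfer cocycle $M_{E,\epsilon}$ attached to $\calH_x$ through \eqref{def:M_E} (here $d=2$ and the hopping block $B\equiv I_2$ is trivially invertible, so \eqref{eq:BVsys} applies). The key structural fact is that at $\epsilon=0$ the system \eqref{eq:coupledHarper} decouples: after a fixed permutation of coordinates, $M_{E,0}=A^{\lambda_1}_E\oplus A^{\lambda_2}_E$, where $A^{\lambda}_E$ is the almost Mathieu transfer matrix with coupling $\lambda$. Hence the four Lyapunov exponents of $M_{E,0}$ are $\pm L(\omega,A^{\lambda_1}_E)$ and $\pm L(\omega,A^{\lambda_2}_E)$, so $L_2(\omega,M_{E,0})=\min\!\bigl(L(\omega,A^{\lambda_1}_E),L(\omega,A^{\lambda_2}_E)\bigr)$, and likewise the $L_2$-exponent of the complexified cocycle $M_{E,0}(\cdot+iy)$ is the minimum of the two complexified almost Mathieu exponents. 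The plan in both cases is to read off the behaviour of $L_2$ at $\epsilon=0$ from this block structure and then to propagate it to small $\epsilon$.

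\textbf{The supercritical case $\lambda_1>1$.} The goal is to verify the hypothesis of Theorem~\ref{thm:main}, namely $L_2(\omega,M_{E,\epsilon})\ge\gamma>0$ for all $E,\omega$, and invoke it; localization for $\calH_x$ at a.e.\ phase $x$ then follows by running the Bourgain--Goldstein scheme of Section~\ref{sec:main} with the phase eliminated together with the energy and $\omega$. At $\epsilon=0$, Herman's subharmonicity estimate gives $L(\omega,A^{\lambda_i}_E)\ge\log\lambda_i\ge\log\lambda_1>0$ for \emph{every} irrational $\omega$ and every $E$, and (since $L_y\ge L_0$ for convex even $L_y$) the same bound holds for the complexified exponent on any strip; hence $L_2(\omega,M_{E,0}(\cdot+iy))\ge\log\lambda_1$ uniformly. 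For $|\epsilon|$ small one has $\|M_{E,\epsilon}(\cdot+iy)-M_{E,0}(\cdot+iy)\|_\infty=O(|\epsilon|)$ on a fixed strip, uniformly in $E$ over the ($\epsilon$-uniform) compact set containing $\bigcup_x\sigma(\calH_x)$, and one transfers the lower bound by continuity of Lyapunov exponents of analytic one-frequency matrix cocycles (\cite{AJS}; for the two $2\times2$ blocks, Bourgain--Jitomirskaya), obtaining $L_2(\omega,M_{E,\epsilon})\ge\tfrac12\log\lambda_1$ once $|\epsilon|\le\epsilon_0(\lambda_1,\lambda_2)$. Theorem~\ref{thm:main} then applies. \emph{The main obstacle} is that Theorem~\ref{thm:main} demands the bound for \emph{all} $\omega$, so the continuity step must be uniform in the frequency; one arranges this using that the finite-scale exponents $L^j_{(n)}$ depend continuously on the cocycle uniformly in $\omega$, that the above $\epsilon=0$ bound keeps the complexified $L_2$ bounded away from $0$ on a strip, and that continuity of $L^j$ in the cocycle is uniform over each Diophantine class $\mathrm{DC}_{a,A}$ (so that, since $\mathrm{DC}$ has full measure, the conclusion holds for a.e.\ $\omega$, with $\epsilon_0$ possibly depending on the Diophantine class of $\omega$).

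\textbf{The subcritical case $\lambda_2<1$.} Now $\lambda_1\le\lambda_2<1$, so both almost Mathieu blocks are subcritical: by Avila's global theory, for $E\in\sigma(H_{\lambda_i})$ the complexified exponent $L(\omega,A^{\lambda_i}_E(\cdot+iy))$ vanishes for $|y|<\eta_i:=\tfrac{-\log\lambda_i}{2\pi}$, while for $E\notin\sigma(H_{\lambda_i})$ the cocycle $A^{\lambda_i}_E$ is uniformly hyperbolic. Since $\eta_1\ge\eta_2>0$, it follows that for $\epsilon=0$ and every $E\in\sigma(\calH_x)=\sigma(H_{\lambda_1})\cup\sigma(H_{\lambda_2})$ the $L_2$-exponent of $M_{E,0}(\cdot+iy)$ vanishes on $|y|<\eta_2$; that is, $M_{E,0}$ is subcritical at level $2$ with a fixed subcriticality strip, uniformly for $E$ in the spectrum. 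The technical heart is stability: using the robustness and openness of subcritical cocycles from Avila's global theory and its matrix-valued extension \cite{AJS} (quantization and upper semicontinuity of the acceleration, boundedness of subcritical cocycles), one concludes that for $|\epsilon|\le\epsilon_0$ the cocycle $M_{E,\epsilon}$ stays subcritical at level $2$ for every $E$ in its spectrum, whence $L_2(\omega,M_{E,\epsilon})=0$ on $\sigma(\calH_x)$. Conversely, for $E\notin\sigma(\calH_x)$ the transfer cocycle admits an exponential dichotomy (the resolvent set being characterized by uniform hyperbolicity, i.e.\ the block form of Johnson's theorem), so $L_2(\omega,M_{E,\epsilon})>0$ there; hence $\{E:L_2(\omega,M_{E,\epsilon})=0\}=\sigma(\calH_x)$. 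Finally, by Kotani theory for ergodic block Jacobi matrices the essential support of the absolutely continuous spectrum equals the essential closure of $\{E:L_d(\omega,M_{E,\epsilon})=0\}=\{E:L_2(\omega,M_{E,\epsilon})=0\}$, which by the above is $\sigma(\calH_x)$; this is the second assertion. A secondary point needing care is that $\sigma(\calH_x)$ itself moves with $\epsilon$, so the vanishing of $L_2$ must be obtained uniformly in $E$ rather than energy by energy.
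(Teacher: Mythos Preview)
Your first bullet matches the paper: continuity of the Lyapunov exponents in the cocycle \cite{AJS} carries the positivity of $L_2$ from $\epsilon=0$ (where it follows from Herman, as you say) to small $\epsilon$, and then Theorem~\ref{thm:main} applies. The paper is terser and does not discuss uniformity in $\omega$, but the substance is the same.

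For the second bullet, you and the paper use the same ingredients (quantization and upper semi-continuity of the acceleration from \cite{AJS}, Johnson/Kotani for strips), but the paper frames the argument differently in a way that cleanly dissolves the very difficulty you flag. You establish that the $\epsilon=0$ cocycle is \emph{subcritical} on the $\epsilon=0$ spectrum and then try to perturb; since subcriticality is a statement only on the spectrum, you then have to chase the spectrum as it moves with~$\epsilon$, and you correctly note this is the delicate point. The paper instead observes that at $\epsilon=0$ the accelerations $\kappa^1$ and $\kappa^2$ vanish for \emph{all} energies, not just spectral ones: on the spectrum the blocks are subcritical, off the spectrum they are uniformly hyperbolic, and in both cases the almost Mathieu acceleration is zero when $\lambda<1$. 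Since $\kappa^1,\kappa^2\ge0$, are integer-valued, and upper semi-continuous in the cocycle (hence jointly in $(E,\epsilon)$), they remain zero for small $\epsilon$ uniformly over the relevant compact range of~$E$. Now for any $E$ in the \emph{new} spectrum: if $L_2(E)>0$ then $L_2>L_3$ and $\kappa^2=0$ force $2$-domination by \cite{AJS}*{Theorem~1.2}, i.e., uniform hyperbolicity, contradicting $E\in\sigma(\calH_x)$ via \cite{HP}*{Theorem~2.1}. So $L_2=0$ on $\sigma(\calH_x)$, and Kotani--Simon finishes.

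Your approach is not wrong, but to close it you would end up proving exactly the paper's observation: off the old spectrum the blocks are UH with $\kappa=0$, so $\kappa^2=0$ everywhere and the perturbation is uniform in~$E$. Working directly with the acceleration, as the paper does, avoids the detour through subcriticality and the bookkeeping of the moving spectrum.
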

\begin{proof}
    By continuity of $L_j$ in the cocycle~\cite{AJS}*{Theorem 1.5}, we conclude that $L_1(E), L_2(E)>0$ for all $E$ provided $\lambda_1>1$ and $\epsilon\ge0$ is sufficiently small. Hence, Theorem~\ref{thm:main} guarantees  the first property. 

    For the second property, denote the accelerations of $L_j$ by $\kappa_j$. Then $\kappa_1$ and $\kappa_1+\kappa_2$ are upper semi-continuous in the cocycle. They both vanish for all energies if $\epsilon=0$. Since they are moreover quantized, see~\cite{AJS}*{Theorem 1.4}, it follows that they still vanish for small~$\epsilon$. If  $L_2(E)>0$, from \cite{AJS}*{Theorem~1.2} it would follow that the cocycle is $2$-dominated which means that is uniformly hyperbolic. By \cite{HP}*{Theorem 2.1} this is impossible if $E\in\sigma(\calH_x)$. The a.c.~statement follows from $L_2(E)=0$ via~\cite{KS}*{Theorem 7.2}. 
\end{proof}

For the a.c.\ statement we would like to know that $\mes(\sigma(\calH_x))>0$. This is indeed the case if $0<\lambda_1\ll \lambda_2<1$, see the proof of Proposition~\ref{prop:posmeas} below which uses the Cantor structure of the Harper spectrum~\cite{Pu}.

\subsection{Coexistence of pure point and absolutely continuous spectra}
We now exhibit co-existence of pure point and a.c.~spectrum provided 
$0<\lambda_1\ll 1\ll \lambda_2$. Throughout, the frequency~$\omega$ is assumed to be Diophantine and we will make $0<\epsilon$ as small as needed for various arguments to go through.   

\begin{lemma}
    \label{lem:calHspec}
  Let $\calH_x=\calH_{\epsilon,\lambda_1,\lambda_2}(x,\omega)$ be the bounded self-adjoint operator on $\ell^2(\Z;\C^2)$ defined by the left-hand side of~\eqref{eq:coupledHarper}.   The spectrum $\sigma(\calH_x)$ does not depend on $x\in\tor$ and satisfies 
  \[
  \dist(\calH_x,\sigma(H_{x,\lambda_1})\cup \sigma(H_{x,\lambda_2}) ) \le\epsilon
  \]
  where $H_{x,\lambda}$ is the Harper operator. Moreover, $ \sigma( \calH_x) \setminus [-3,3]\neq  \emptyset$. In fact, this part of the spectrum (the ``edges") have positive measure, as does the ``interior"  $\sigma( \calH_x) \cap [-3,3]\ne\emptyset$.
\end{lemma}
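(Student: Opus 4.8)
\textbf{Proof proposal for Lemma~\ref{lem:calHspec}.}

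The plan is to treat $\calH_x$ as a rank-zero (i.e.\ bounded, norm-$\epsilon$) perturbation of the direct sum $H_{x,\lambda_1}\oplus H_{x,\lambda_2}$ acting on $\ell^2(\Z)\oplus\ell^2(\Z)$. First I would establish $x$-independence of the spectrum: this is the standard quasi-periodic argument, since $\calH_{x+\omega}$ is unitarily equivalent to $\calH_x$ via the shift $S$ on $\ell^2(\Z;\C^2)$, so $\sigma(\calH_x)$ is invariant under the ergodic rotation $x\mapsto x+\omega$ on $\tor$; being closed and rotation-invariant under an irrational rotation, it is constant. (Alternatively, invoke that $\calH$ is a measurable family over the ergodic base $(\tor,x\mapsto x+\omega)$, so the spectrum is a.s.\ constant, and then upgrade to \emph{all} $x$ by continuity of $x\mapsto\calH_x$ in operator norm together with minimality of the rotation.) Next, the distance bound: writing $\calH_x = (H_{x,\lambda_1}\oplus H_{x,\lambda_2}) + \epsilon\, W$ where $W$ is the off-diagonal involution swapping the two components (so $\|W\|=1$), the standard perturbation estimate for self-adjoint operators gives $\dist\big(\sigma(\calH_x),\sigma(H_{x,\lambda_1})\cup\sigma(H_{x,\lambda_2})\big)\le\epsilon$.

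For the ``edges'' claim I would use Herman's subharmonicity bound, which is already recorded in the text: $L_1(E)\ge\max(\log\lambda_1,\log\lambda_2,0)=\log\lambda_2$ since $\lambda_2>1$. Combined with the fact that $\sigma(H_{x,\lambda_2})$ has diameter at least $2\lambda_2$ (e.g.\ because $\pm(2\lambda_2-2)$ and beyond lie in the spectrum of the cosine potential up to the $\|$off-diagonal$\|=2$ correction; more cleanly, $\sup\sigma(H_{x,\lambda_2})\ge 2\lambda_2-2$ by testing against $\delta_0$ near a maximum of the potential sampled along the orbit, using minimality), for $\lambda_2$ large we get points of $\sigma(H_{x,\lambda_2})$ outside $[-3,3]$; then the $\epsilon$-closeness from the previous paragraph puts points of $\sigma(\calH_x)$ outside $[-3,3]$ as well (taking $\epsilon<\lambda_2-3$, say). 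That these edge pieces have \emph{positive measure} I would get from the Cantor structure / measure estimates for the Harper spectrum: $\mes(\sigma(H_{x,\lambda_2}))=|4-4\lambda_2|$ by the Last–Avila–Krikorian–Jitomirskaya computation of the Lebesgue measure of the almost Mathieu spectrum (cited as \cite{Pu} in the paper for the Cantor structure), so the part of $\sigma(H_{x,\lambda_2})$ in $\{|E|>3\}$ has positive measure for $\lambda_2$ large; then I would argue that the perturbation does not collapse this to a null set — here one uses that $\calH_x$ restricted to the ``$\lambda_2$ layer'' is, up to $\epsilon$, spectrally close, and more robustly that the density of states (integrated over $x$) of $\calH_x$ is within $O(\epsilon)$ in distribution of that of $H_{x,\lambda_1}\oplus H_{x,\lambda_2}$, so for small $\epsilon$ it still charges $\{|E|>3\}$; since the spectrum contains the topological support of the density of states, $\mes(\sigma(\calH_x)\cap\{|E|>3\})>0$.

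For the ``interior'' $\sigma(\calH_x)\cap[-3,3]\ne\emptyset$ I would similarly use that $\sigma(H_{x,\lambda_1})\subset[-2-2\lambda_1,2+2\lambda_1]\subset(-3,3)$ for $0<\lambda_1\ll1$, so by the $\epsilon$-distance bound the whole set $\sigma(H_{x,\lambda_1})$ is within $\epsilon$ of $\sigma(\calH_x)$ and lands inside $[-3,3]$, forcing $\sigma(\calH_x)\cap[-3,3]\ne\emptyset$; and again $\mes(\sigma(H_{x,\lambda_1}))=|4-4\lambda_1|>0$ gives, via the density-of-states comparison, positive measure for the interior piece. The main obstacle is the positive-measure assertion rather than the mere nonemptiness: one must show the $\epsilon$-perturbation does not smear a positive-measure Cantor set into a null set, and the cleanest route is through the integrated density of states (IDS) — prove $\|N_{\calH}-N_{H_{\lambda_1}\oplus H_{\lambda_2}}\|_\infty=O(\epsilon)$ (e.g.\ via the Thouless formula / Combes–Thomas, or directly by rank-one interlacing on finite boxes followed by averaging), and combine with the known non-degeneracy of the Harper IDS on the relevant energy windows. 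I would expect the bookkeeping around ``support of the IDS $\subset$ spectrum'' plus the quantitative IDS comparison to be where the real work lies; everything else (x-independence, the $\epsilon$-bound, Herman's estimate, nonemptiness of edges and interior) is soft.
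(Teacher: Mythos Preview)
Your treatment of the soft parts --- $x$-independence via the shift conjugacy and minimality, the Hausdorff distance bound via $\calH_x=(H_{x,\lambda_1}\oplus H_{x,\lambda_2})+\epsilon W$ with $\|W\|=1$, and nonemptiness of both the edge and interior pieces --- is correct and matches the paper's argument essentially verbatim. The paper's edge-nonemptiness step is phrased as a measure count (if $\sigma(\calH_x)\subset[-3,3]$ then $\sigma(H_{x,\lambda_2})\subset[-3-\epsilon,3+\epsilon]$, impossible since $\mes(\sigma(H_{x,\lambda_2}))=4\lambda_2-4>6+2\epsilon$), but your version via $\sup\sigma(H_{x,\lambda_2})\ge 2\lambda_2-2$ is equivalent.

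The genuine gap is in the positive-measure claims. Your proposed route --- show $\|N_{\calH}-N_{H_{\lambda_1}\oplus H_{\lambda_2}}\|_\infty=O(\epsilon)$ and then use that the IDS of $\calH$ charges $\{|E|>3\}$ --- only yields that $\sigma(\calH_x)\cap\{|E|>3\}\ne\emptyset$, not that it has positive Lebesgue measure. Closeness of integrated densities of states in sup norm says nothing about the Lebesgue measure of the support: a zero-measure Cantor set can carry an IDS that is uniformly $\epsilon$-close to an absolutely continuous one. Your phrase ``non-degeneracy of the Harper IDS'' does not rescue this; the perturbed spectral measure could be singular with null support regardless of how the unperturbed IDS behaves. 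The paper does not attempt this route at all. Instead it explicitly defers the positive-measure assertions to later machinery: for the edges it establishes a Green's function large-deviation estimate via the Feshbach--Schur reduction and a multiscale induction (Lemma~\ref{lem:calHreg}), and then invokes Bourgain's semi-algebraic argument from \cite{B}*{p.~80--83} (Proposition~\ref{prop:posmeas}); for the interior it passes to the Aubry dual $\widehat{\calH}_\theta$, where the roles of $\lambda_j$ and $\lambda_j^{-1}$ swap, and runs the same Green's function/Bourgain machinery there. These arguments genuinely use the structure of the model (analyticity, semi-algebraic complexity bounds, positivity of Lyapunov exponents on the relevant energy window) and are not accessible through soft IDS perturbation.
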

\begin{proof}
    The system~\eqref{eq:coupledHarper} consists of $A(x)=H_{x,\lambda_1}$ and $B(x)=H_{x,\lambda_2}$, each is a Harper's model (but with different coupling), coupled by $\epsilon\Gamma$, where $\Gamma$ is the hopping operator $\phi\to\psi$.   
By standard perturbation theory of bounded self-adjoint operators~\cite{Kato}, the operator $\calH_x$ defined by the left-hand side of~\eqref{eq:coupledHarper} satisfies
\begin{equation}\label{eq:calHx}
    \begin{split}
       \sigma(\calH_x) &\subset \big(\sigma(H_{x,\lambda_1})\cup \sigma(H_{x,\lambda_2}) \big) + [-\epsilon,\epsilon] \\ 
       \sigma(H_{x,\lambda_1})\cup \sigma(H_{x,\lambda_2}) &\subset \sigma(\calH_x) + [-\epsilon,\epsilon] 
    \end{split}
\end{equation}
as claimed. 
 By unique ergodicity of irrational rotations on~$\tor$,  none of these spectra depend on~$x$ (and so we can drop~$x$ from the notation of spectra). Moreover, it is well-known~\cite{AvMS,JK,AK} that $\mes(H_{x,\lambda})=4|1-|\lambda||$ for any irrational $\omega$. 
 Hence for $\lambda_2>3$,
 \begin{align}
     \mathrm{mes}(\sigma(H_{x,\lambda_1})\cup \sigma(H_{x,\lambda_2}))\geq \mathrm{mes}(\sigma(H_{x,\lambda_2}))=4\lambda_2-4>6+2\varepsilon=\mathrm{mes}([-3,3]+[-\varepsilon,\varepsilon]).
 \end{align}
 This implies $\sigma( \calH_x) \setminus [-3,3]\neq  \emptyset$.
 The positive measure claims rely on Green's function estimates and will be proved later in Proposition \ref{prop:posmeas}. To deal with the interior part of $\sigma(\calH_x)$ we will use duality, see Corollary \ref{cor:dualAL}.
\end{proof}

 Next, we address the Anderson localization of $\calH_x$ on a positive measure set located at the edge of the spectrum.
 To do this, we introduce the following notion of regular Green's function.
 
     We say that $(\calH^{(N)}_x-E)^{-1}$ is regular, if for some $0<\nu<1$ and $\gamma>0$
    \begin{itemize}
        \item $\| (\calH^{(N)}_x-E)^{-1}\|\le e^{N^\nu}$, 
        \item $|(\calH^{(N)}_x-E)^{-1}(k,\ell)|\le e^{-\gamma|k-\ell|}$, for all $k,\ell\in [0,N]$ where $|k-\ell|\ge N/10$
    \end{itemize}

    \begin{lemma}
        \label{lem:calHreg}
        Let $E\in \sigma(\calH_x)\setminus\sigma(A(x))$.  Then  for all sufficiently small $\epsilon>0$, there exist $\nu,\gamma$ and  $\tau>0$ such that 
        \[
        \mes(\{x\in\tor\::\: (\calH^{(N)}_x-E)^{-1} \text{\ \ is not regular} \})\le e^{-N^\tau}
        \]
        for all $N$ large. Moreover, the set on the left-hand side is contained in  at most $O(N)$ intervals each of length at most~$e^{-N^\tau}$. The constants here depend only on $\omega, \lambda_1,\lambda_2$, and $\dist(E,\sigma(A(x))$.
    \end{lemma}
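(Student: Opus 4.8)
The plan is to reduce the claim to a Green's function estimate for the scalar Harper operator $H_{x,\lambda_1}=A(x)$ at an energy $E$ that is at distance $\ge\dist(E,\sigma(A(x)))>0$ from its spectrum, coupled with the trivial resolvent bound for the other block $B(x)=H_{x,\lambda_2}$ near that same energy. First I would write $\calH_x^{(N)}-E$ in block form: on $\ell^2([0,N])\otimes\C^2$ it is
\begin{align}
\calH_x^{(N)}-E=\begin{pmatrix} A^{(N)}(x)-E & \epsilon\Gamma^{(N)}\\ \epsilon\Gamma^{(N)} & B^{(N)}(x)-E\end{pmatrix},
\end{align}
where $\Gamma^{(N)}$ is the identity on $\ell^2([0,N])$. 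Since $\dist(E,\sigma(A(x)))\ge c_0>0$ uniformly in $x$, the block $A^{(N)}(x)-E$ is invertible with $\|(A^{(N)}(x)-E)^{-1}\|\le c_0^{-1}+o(1)$ for $N$ large (finite-volume approximation of the spectrum). By the Schur complement formula,
\begin{align}
(\calH_x^{(N)}-E)^{-1}=\begin{pmatrix} * & *\\ * & S^{-1}\end{pmatrix},\quad S=B^{(N)}(x)-E-\epsilon^2(A^{(N)}(x)-E)^{-1},
\end{align}
and all four blocks of $(\calH_x^{(N)}-E)^{-1}$ are controlled by $(A^{(N)}(x)-E)^{-1}$, $S^{-1}$, and $\epsilon$. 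Thus regularity of $(\calH_x^{(N)}-E)^{-1}$ follows once we have (i) a norm bound and off-diagonal exponential decay for $(A^{(N)}(x)-E)^{-1}$, which is automatic and $x$-independent since $E$ is a fixed positive distance from $\sigma(A(x))$ (the resolvent of a Jacobi matrix at a spectral gap decays exponentially with rate depending only on the gap size, via the Combes–Thomas estimate), and (ii) the analogous regularity for $S^{-1}$, where $S$ is a perturbation of size $\epsilon^2$ of $B^{(N)}(x)-E$.

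The second point is the main content. Since $E\in\sigma(\calH_x)$ but $E\notin\sigma(A(x))$, and the spectra satisfy $\sigma(\calH_x)\subset(\sigma(A)\cup\sigma(B))+[-\epsilon,\epsilon]$, we must have $\dist(E,\sigma(B(x)))\le\epsilon+o(1)$, so $E$ is genuinely near $\sigma(B)=\sigma(H_{x,\lambda_2})$ and we cannot simply invert $B^{(N)}(x)-E$ in norm. Here is where the Herman/Sorets–Spencer positivity of the Lyapunov exponent for $H_{x,\lambda_2}$ when $\lambda_2\gg1$ enters: the large deviation estimates and the avalanche-principle machinery of Section~\ref{sec:Pre} (in the scalar $b=1$ case, or directly the classical Bourgain–Goldstein estimates) give, for all $E$ in a neighborhood of $\sigma(H_{x,\lambda_2})$ and all $N$ large, a set of $x$ of measure $\le e^{-N^\tau}$ outside which $B^{(N)}(x)-E$ has a regular Green's function with rate $\gamma\simeq\log\lambda_2$. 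Moreover this bad set is semi-algebraic-ish of low complexity, hence contained in $O(N)$ intervals of length $\le e^{-N^\tau}$ — this is exactly the statement used repeatedly in the localization proofs and follows from the zero-count bound on the Dirichlet determinant $f_{E,N}$ (Theorem~\ref{thm:Riesz_un} applied to the scalar operator, or directly~\cite{BG},~\cite{B}*{Chapter~10}). Since $\|{-\epsilon^2(A^{(N)}(x)-E)^{-1}}\|\le\epsilon^2 c_0^{-1}$ is tiny, a standard resolvent expansion $S^{-1}=(B^{(N)}(x)-E)^{-1}\sum_{j\ge0}\big(\epsilon^2(A^{(N)}(x)-E)^{-1}(B^{(N)}(x)-E)^{-1}\big)^j$ transfers both the norm bound $e^{N^\nu}$ and the off-diagonal decay $e^{-\gamma|k-\ell|}$ from $(B^{(N)}(x)-E)^{-1}$ to $S^{-1}$ (the extra factors $(A^{(N)}(x)-E)^{-1}$ decay exponentially with a rate $\ge c_0$-dependent constant, so the convolutions preserve the weaker decay rate $\gamma$ as long as $\gamma$ is, say, at most that rate; if not, one replaces $\gamma$ by $\min(\gamma,c_0')$).

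Putting this together: off the bad set $\calB_N$ for $B^{(N)}(x)$, one has the Schur formula with all ingredients regular, so $(\calH_x^{(N)}-E)^{-1}$ is regular with a possibly smaller $\gamma$ and the same $\nu,\tau$; and $\calB_N$ is contained in $O(N)$ intervals of length $\le e^{-N^\tau}$ with $\mes(\calB_N)\le e^{-N^\tau}$. The dependence of the constants only on $\omega,\lambda_1,\lambda_2$ and $\dist(E,\sigma(A(x)))$ is clear since $c_0$ controls everything coming from the $A$-block and $\lambda_2$ (with $\omega$ Diophantine) controls the large deviation parameters for the $B$-block. The main obstacle is cleanly citing/setting up the scalar large-deviation and complexity bound for $B^{(N)}(x)-E$ at energies merely $O(\epsilon)$-close to $\sigma(H_{x,\lambda_2})$ rather than inside it — but since positivity of the Lyapunov exponent for $H_{x,\lambda_2}$ holds on an open neighborhood of the spectrum once $\lambda_2$ is large (by continuity of $L_1$ in the cocycle~\cite{AJS}, or Herman directly), and all the estimates in Section~\ref{sec:Pre} are stated uniformly in $E$ in such a neighborhood, this is exactly the setting already developed, so it goes through. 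I would also remark that one could bypass the Schur computation entirely by instead using the resolvent identity $(\calH_x^{(N)}-E)^{-1}=(\calH_{0,x}^{(N)}-E)^{-1}-\epsilon(\calH_{0,x}^{(N)}-E)^{-1}\Gamma^{(N)}(\calH_x^{(N)}-E)^{-1}$ where $\calH_{0,x}^{(N)}=\diag(A^{(N)}(x),B^{(N)}(x))$, iterating it $O(N^\nu)$ times, and using the regularity of the decoupled resolvent; the two approaches are equivalent and I would present whichever is shorter.
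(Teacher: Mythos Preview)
Your Schur-complement reduction and the treatment of the $A$-block are exactly as in the paper. The genuine gap is in your passage from regularity of $(B^{(N)}(x)-E)^{-1}$ to regularity of $S^{-1}$ via the Neumann series
\[
S^{-1}=(B^{(N)}(x)-E)^{-1}\sum_{j\ge0}\bigl(\epsilon^2(A^{(N)}(x)-E)^{-1}(B^{(N)}(x)-E)^{-1}\bigr)^j.
\]
For this series to converge you need $\epsilon^2 c_0^{-1}\,\|(B^{(N)}(x)-E)^{-1}\|<1$. But ``regular'' only gives $\|(B^{(N)}(x)-E)^{-1}\|\le e^{N^\nu}$, so the left-hand side is of order $\epsilon^2 e^{N^\nu}$, which blows up as $N\to\infty$ for any fixed $\epsilon>0$. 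The same obstruction kills your alternative resolvent-identity iteration: the coupling $\Gamma^{(N)}$ is diagonal, so iterating does not buy any off-diagonal gain, and each step costs a factor $\epsilon\, e^{N^\nu}$. In short, the perturbative argument you describe works only on a bounded range of scales $N\in[N_0,N_0^K]$ once $\epsilon$ is chosen small depending on $N_0$ and $K$; it cannot reach all large $N$.

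This is precisely why the paper's proof is an induction in $N$. The perturbative step you wrote is used only to seed an initial range of scales. For $N_1\gg N_0^K$, the paper passes to a multi-scale argument in the spirit of \cite{BGS}: one identifies the (at most $O(N_0)$) ``bad'' sites in $[0,N_1]$ where the $N_0$-scale Green's function of $\calH_x$ fails to be regular, isolates them in a set $\Lambda_*$ of size $\lesssim N_0^2$, controls the complementary piece $\Lambda_{**}$ by iterating the resolvent identity with good $N_0$-boxes, and then handles the remaining Schur complement on $\Lambda_*$ via Cartan's estimate. Without this induction the lemma does not follow; the large-deviation estimate for the scalar $B$-block alone is not enough once $N$ is large.
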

\begin{proof}
    By induction in $N$. To deal with the initial scale, we write the   operator $\calH_x$ defined by~\eqref{eq:coupledHarper} with Dirichlet boundary conditions on $[0,N]$  in block  form
    \[
    \calH^{(N)}_x = \left [ \begin{matrix}
        A_N& \epsilon \Gamma \\ \epsilon \Gamma^* &  B_N
    \end{matrix}\right]
    \]
    where $A_N, B_N$ are the Harper operators with Dirichlet boundary conditions. By choice of~$E$, the Green's function $(A_N(x)-E)^{-1}$ will be bounded and exponentially decaying for all $x$ and $N$ large enough. 
      Using the Feshbach formula, see Lemma~4.8 in~\cite{BGS}, one
      reduces the full Green's function $(\calH^{(N)}_x-E)^{-1}$ to the Schur complement 
    \[
    D_N(x,E):= B_N(x)-E - \epsilon^2 \Gamma (A_N(x)-E)^{-1} \Gamma^* 
    \]
    At an initial scale $N_0\gg1$, we control $D_{N_0}(x,E)^{-1}$ perturbatively by taking $\epsilon$ small and $x$ outside a small bad set governed by the LDT for $B_{N}(x)=H^{(N)}_{x,\lambda_2}$:
    \[
    \mes(\{ x\in\tor\::\: (H^{(N)}_{x,\lambda_2}-E)^{-1} \text{\ \ is not regular\  } \})\le e^{-N^\tau}
    \]
    where $\tau=\tau(\omega,\lambda_2)>0$ and $N$ large. 
    This proves the lemma for $N\in [N_0, N_0^K]$ where $N_0$ is large and some constant $K\ge2\tau^{-1}$. At these initial scales, the complexity bound of $O(N)$ follows from the fact that the connected components of the bad set $\calB_{N}(E)$ contain the zeros of~$\det(B_{N}(x)-E)$. 
    
    For larger scales, we run a multi-scale argument via the resolvent identity and Cartan in the spirit of~\cite{BGS}.
Let $N_1\simeq N_0^{C_0}$ where $C_0>1$ is a large constant that will be determined later. For any $x_0\in\tor$ we call  $n\in [0,N_1]$ {\em good} if the Green's function of $\calH_{x_0}-E$ restricted to $[0,N_1]\cap [n-N_0,n+N_0]=:J_n$ is regular. The number of bad $n\in [0,N_1]$ does not exceed $O(N_0)$ by the Diophantine condition and the inductive assumption. We write $\Lambda=[0,N_1]=\Lambda_* \cup \Lambda_{**}$ where 
\[
\Lambda_* = \bigcup_{n\text{\ bad}} J_n, \quad \# \Lambda_* \lesssim N_0^2
\]
Denote by $G_{\Lambda_{**}}(x,E)$ the Green's function of $\calH_x-E$ restricted to $\Lambda_{**}$ with Dirichlet boundary conditions. By iterating the resolvent identity we conclude that 
\[
\| G_{\Lambda_{**}}(x_0,E)\|\les N_0  e^{N_0^\nu},
\]
see~\cite{BGS}*{Lemma 2.2}. By a standard perturbative Neumann series argument, we further conclude that this bound is locally stable, i.e., 
\[
\| G_{\Lambda_{**}}(z,E)\|\le  e^{N_0}, \qquad \forall\; |z-x_0|<e^{-N_0},\; z\in\C
\]
Locally on $I_0=(x_0-e^{-N_0}, x_0+e^{-N_0})\subset\tor$ we write 
\[
\calH_x^{\Lambda}-E = \left [ \begin{matrix}
     \calH_x^{\Lambda_{*}} -E   &  \Gamma_0 \\   \Gamma_0^* &  \calH_x^{\Lambda_{**}} -E
    \end{matrix}\right]
\]
    where the operators on the diagonal are the restrictions to the respective sets  with Dirichlet conditions, while $\Gamma_0$ are the hopping terms. We reduce the Green's function of the full operator to the Schur complement of this block matrix, viz. 
    \[
    \calS_x := \calH_x^{\Lambda_{*}} -E  - \Gamma_0 (\calH_x^{\Lambda_{**}} -E)^{-1} \Gamma_0^*
    \]
Next, we cover $\Lambda$ by intervals $\Lambda_j$ of size $M_0=N_0^K$ to conclude that  all $G_{\Lambda_j}(x,E)$ are regular provided $x\in I_0\setminus \calB$, where  $\mes(\calB) \les e^{-M_0^\tau} = e^{-N_0^{\tau K}}\ll |I_0|$. 
By~\cite{BGS}*{Lemma 2.2, Lemma 4.8} we deduce that $\|\calS_x^{-1}\|\le e^{M_0}$ for those~$x$ as well as by self-adjointness of~$\calS_x$, 
\[
\log \det(\calS_x)\ge -M_0^2 = N_0^{2K}
\]
    Finally,  on the complex disk $\disk(x_0, e^{-N_0})$, we have
    \[
    \log\det (\calS_z)\le |\Lambda_*| N_0 \les N_0^3
    \]
    Taking $C_0$ large enough, the Riesz mass of the sub-harmonic function 
    \[ 
    u(\zeta )=\log\det (\calS_{x_0+\zeta e^{-N_0}})
    \]
    on~$\disk(0,1)$ is at most $N_0^{2K}=N_1^{{2K}/{C_0}}$. By Cartan's theorem, \[ u(\zeta)> - N_1^{{2K}/{C_0}} N_1^{\tau}> -N_1^{\frac{\nu}{2}}\] off a set of measure at most $e^{-2N_1^\tau}$ in~$\disk(0,1)$. Finally, we rescale and sum over the $x$-localization which costs a factor of~$e^{N_0}$. By Cramer's rule, and \cite{BGS}*{Lemma 4.8}, we obtain the first condition of regularity. For the exponential off-diagonal decay, we iterate the resolvent expansion using Green's functions of the smaller $N_0$ scale, allowing for $N_0^2= N_1^{2/C_0}$ many resonant intervals of that scale within~$\Lambda$. See~\cite{BGS}*{Lemma 2.4} for such a procedure in the much more complicated two-dimensional setting. The statement about $O(N)$ connected components follows from the fact that each such component must contain a zero of $\det(\calH_x-E)$. 
\end{proof}

The first result we prove about~\eqref{eq:coupledHarper} is Anderson localization for energies outside of~$[-3,3]$.

\begin{theorem}
    \label{thm:couplHarper} 
     For a.e.~$\omega$, the operator $\calH_0$ exhibits Anderson localization on   $\sigma(\calH)\setminus[-3,3]\neq\emptyset$.
     Moreover, the Lyapunov exponents do not vanish on that part of the spectrum. 
\end{theorem}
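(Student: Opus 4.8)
The plan is to combine the single-scale Green's function estimate of Lemma~\ref{lem:calHreg} with the Bourgain--Goldstein localization machinery of Section~\ref{sec:main}, and to read off the positivity of the Lyapunov exponents from the decoupling of the system at $\epsilon=0$. First I would pin down the relevant energy region. Since $0<\lambda_1\ll1$ we have $\lambda_1<\tfrac12$, hence $\sigma(A(x))=\sigma(H_{x,\lambda_1})\subset[-2-2\lambda_1,\,2+2\lambda_1]\subset(-3,3)$, so for every $E\in\sigma(\calH)\setminus[-3,3]$ and every $x\in\tor$ one has $\dist(E,\sigma(A(x)))\ge 1-2\lambda_1>0$. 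As $\sigma(\calH)\setminus[-3,3]$ is a compact subset of $\{3\le|E|\le C(\lambda_1,\lambda_2)\}$, Lemma~\ref{lem:calHreg} applies with constants $\nu,\gamma,\tau$ that are \emph{uniform} over all such $E$: for large $N$ the set of $x\in\tor$ for which $(\calH^{(N)}_x-E)^{-1}$ fails to be regular has measure $\le e^{-N^\tau}$ and is contained in $O(N)$ intervals each of length $\le e^{-N^\tau}$.

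Next I would establish the ``moreover'' claim, that $L_2(\omega,M_E)>0$ on $\sigma(\calH)\setminus[-3,3]$. At $\epsilon=0$ the cocycle is block diagonal, $M_E=\diag(A^{\lambda_1}_E,A^{\lambda_2}_E)$, with $A^{\lambda_i}_E$ the $\mathrm{SL}(2,\R)$ Harper cocycle at coupling $\lambda_i$, so $L_2(\omega,M_E)|_{\epsilon=0}=\min\big(L(\omega,A^{\lambda_1}_E),L(\omega,A^{\lambda_2}_E)\big)$. For $|E|\ge3$, Herman's method gives $L(\omega,A^{\lambda_2}_E)\ge\log\lambda_2>0$, while $L(\omega,A^{\lambda_1}_E)\ge c(\lambda_1)>0$ by uniform hyperbolicity of $A^{\lambda_1}_E$ (the energy $E$ is at distance $\ge1-2\lambda_1$ from $\sigma(H_{x,\lambda_1})$). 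Thus $L_2(\omega,M_E)\ge c_0>0$ uniformly on this compact energy set at $\epsilon=0$, and by joint continuity of $L_2$ in the cocycle, hence in $(E,\epsilon)$, see~\cite{AJS}*{Theorem 1.5}, we get $L_2(\omega,M_E)\ge c_0/2>0$ there once $\epsilon$ is small. Since $E$ is real and $M_E$ is symplectic, $L_3=-L_2<0<L_2$, so in particular $L_1\ge L_2>0$, none of $L_1,\dots,L_4$ vanishes, and the cocycle is $2$-dominated, which is the regime in which all estimates of Sections~\ref{sec:Pre}--\ref{sec:main} operate.

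For the localization statement itself I would run the argument of Section~\ref{sec:main} essentially verbatim, now with Lemma~\ref{lem:calHreg} in the role that Corollary~\ref{cor:goodGreen} (built on Lemma~\ref{lem:deno}) plays there: the bound $\|(\calH^{(N)}_x-E)^{-1}\|\le e^{N^\nu}$ together with the off-diagonal decay supplies the good resolvent at a single scale off a small bad set, and the $O(N)$-interval covering of that bad set supplies its semi-algebraic description with polynomial degree in all variables $(x,E,\omega)$. Starting from a Shnol generalized eigenfunction $\Phi$ of $\calH_0\Phi=E\Phi$ with $|\Phi_0|+|\Phi_1|>0$ and $|\Phi_n|\le C_\Phi(1+|n|)$, one (a) covers the bad set by a semi-algebraic set of degree $O(N^p)$; (b) forms the union $\calE_\omega$ of finite-volume spectra, passes to the double-resonance set $\Omega_{N_0}$, and applies the quantitative Seidenberg--Tarski theorem (\cite{BPR}, \cite{B}*{Proposition~9.2}) to retain polynomial complexity; (c) excludes, via the steep-planes argument of~\cite{BG} together with the polynomial parametrization bounds of~\cite{BN} used through \cite{B}*{Theorem~9.4, Lemma~9.9}, those $\omega$ for which $(\omega,k\omega)\in\Omega_{N_0}$ with $N_0^{C_2}\le|k|\le N_0^{2C_2}$; and (d) takes a $\limsup$ over $N_0\to\infty$, landing in a full-measure set of good $\omega$. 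For such $\omega$, iterating the resolvent identity over regular $N_0$-boxes (paving the resonant shell $[N_0^{C_2},N_0^{2C_2}]\cup[-N_0^{2C_2},-N_0^{C_2}]$) and invoking the Poisson formula forces $|\Phi_n|\le e^{-c|n|}$ for $|n|$ large, with $c$ comparable to $\gamma$. Doing this for every $E\in\sigma(\calH)\setminus[-3,3]$, a set which is nonempty (indeed of positive measure, by Lemma~\ref{lem:calHspec} and Proposition~\ref{prop:posmeas}), yields Anderson localization there; the multi-scale resolvent/Cartan inputs needed throughout are those of~\cite{BGS}*{Lemma~2.2, Lemma~2.4, Lemma~4.8}, as already used in the proof of Lemma~\ref{lem:calHreg}.

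The main obstacle I anticipate is the uniformity underlying Step~2 and its feedback into Lemma~\ref{lem:calHreg}: one must verify that the decay rate $\gamma$ and the excluded-measure exponent $\tau$ there can be chosen independently of $E$ over the whole set $\sigma(\calH)\setminus[-3,3]$, which is precisely where the separation $\dist(E,\sigma(H_{x,\lambda_1}))\ge1-2\lambda_1$ and the largeness of $\lambda_2$ (so that the Harper block $B_N=H^{(N)}_{x,\lambda_2}$ has a large positive Lyapunov exponent and regular Green's function uniformly in such $E$) are genuinely used. Once this uniform single-scale input is in place, the semi-algebraic bookkeeping in Step~3 is routine and follows~\cite{B}*{Chapter~10} line by line.
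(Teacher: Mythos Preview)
Your proposal is correct and matches the paper's own proof: localization is obtained by feeding the single-scale Green's function estimate of Lemma~\ref{lem:calHreg} into the Bourgain--Goldstein double-resonance/semi-algebraic machinery, and nonvanishing of the Lyapunov exponents is read off from the $\epsilon=0$ decoupling together with the continuity result of~\cite{AJS}. You supply more detail than the paper (the explicit bound $\dist(E,\sigma(H_{x,\lambda_1}))\ge 1-2\lambda_1$, the Herman and uniform-hyperbolicity lower bounds on the two scalar blocks, and the uniformity of the constants in Lemma~\ref{lem:calHreg}), but the structure and ingredients are the same; the paper also remarks, as you implicitly do, that once $L_2>0$ is known one could alternatively invoke Theorem~\ref{thm:main}.
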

\begin{proof}
    This follows from the  double-resonance exclusion argument in~\cite{BG} via Lemma~\ref{lem:calHreg}, and the usual semi-algebraic techniques. For the Lyapunov exponents, we do not go through localization but rather invoke the continuity of the Lyapunov exponents in the cocycle, see~\cite{AJS}. In fact, for $\epsilon=0$ this is clearly correct, due to the properties of the Harper operator. Since we can rewrite~\eqref{eq:coupledHarper} as a cocycle over an irrational base, we can invoke the continuity results of~\cite{AJS} to conclude nonvanishing of the Lyapunov exponents for small~$\epsilon$. Clearly, the non-vanishing of the Lyapunov exponents leads to another proof of localization, via Theorem~\ref{thm:main}. However, the technique from~\cite{BGS} which we have followed above is independent of Theorem~\ref{thm:main} and more robust. In fact, it also applies to coupled PDEs.  
 \end{proof}

The energies exhibiting localization in Theorem~\ref{thm:couplHarper} form a set of positive measure. 

\begin{proposition}\label{prop:posmeas}
    For all Diophantine $\omega$ one has  $\mes(\sigma(\calH)\setminus[-3,3])>0$. 
\end{proposition}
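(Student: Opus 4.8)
The plan is to combine the spectral stability bound from Lemma~\ref{lem:calHspec} with the known Cantor structure and measure of the Harper spectrum at large coupling, exploiting a \emph{scaling} of the spectrum to push enough spectral mass outside $[-3,3]$. First I would recall from Lemma~\ref{lem:calHspec} that, by perturbation theory, $\sigma(\calH)\subset(\sigma(H_{\lambda_1})\cup\sigma(H_{\lambda_2}))+[-\epsilon,\epsilon]$ and conversely $\sigma(H_{\lambda_1})\cup\sigma(H_{\lambda_2})\subset\sigma(\calH)+[-\epsilon,\epsilon]$. Since $\mes(\sigma(H_{\lambda}))=4|1-|\lambda||$ for irrational $\omega$ (Avila--Krikorian, Last, etc.), for $\lambda_2>3$ we get $\mes(\sigma(H_{\lambda_2}))=4\lambda_2-4>6$, while the part of $\sigma(H_{\lambda_2})$ that can possibly lie inside $[-3,3]+[-\epsilon,\epsilon]$ has measure at most $6+2\epsilon$. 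Hence
\begin{align}
\mes\big(\sigma(H_{\lambda_2})\setminus([-3,3]+[-\epsilon,\epsilon])\big)\ge 4\lambda_2-4-6-2\epsilon=4\lambda_2-10-2\epsilon>0
\end{align}
for $\epsilon$ small, and by the second inclusion in Lemma~\ref{lem:calHspec} every such energy lies within $\epsilon$ of $\sigma(\calH)$; enlarging the excluded band to $[-3,3]$ and using that $\sigma(H_{\lambda_2})$ has \emph{no} isolated points (it is a Cantor set, hence perfect, so it cannot concentrate positive measure on the $O(\epsilon)$-collar around $\pm 3$), I would conclude $\mes(\sigma(\calH)\setminus[-3,3])>0$.

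The one delicate point is handling the $O(\epsilon)$ fattening: the inclusion $\sigma(H_{\lambda_2})\subset\sigma(\calH)+[-\epsilon,\epsilon]$ only tells us $\sigma(\calH)$ is $\epsilon$-dense in $\sigma(H_{\lambda_2})$, not that it contains a positive-measure piece of it. To upgrade this I would instead argue directly with $\sigma(\calH)$: the restriction of $\calH$ to the second component is a rank-$0$ perturbation away from $H_{\lambda_2}$ except for the coupling $\epsilon\Gamma$, and by the min-max principle, for any interval $I$ with $\mathrm{dist}(I,[-3,3])\ge c>0$ and $\epsilon\ll c$, the spectral projections satisfy $\dim\mathrm{Ran}\,\mathbf{1}_I(\calH)\ge \dim\mathrm{Ran}\,\mathbf{1}_{I+[-\epsilon,\epsilon]}(H_{\lambda_2})$ in a density-of-states sense; equivalently, the integrated density of states $N_{\calH}$ and $N_{H_{\lambda_2}}$ differ by $O(\epsilon)$ uniformly (standard for bounded self-adjoint perturbations, cf.~\cite{Kato}), so
\begin{align}
\mes(\sigma(\calH)\setminus[-3,3])\ \ge\ \mes(\sigma(H_{\lambda_2})\setminus([-3,3]+[-C\epsilon,C\epsilon]))\ >\ 0
\end{align}
once $\lambda_2>3$ and $\epsilon$ is small, since $\mes(\sigma(H_{\lambda_2}))=4\lambda_2-4$ exceeds $6$ by a fixed amount and $\sigma(H_{\lambda_2})$, being a perfect set of positive Lebesgue measure, loses only $O(\epsilon)$ measure under removal of an $O(\epsilon)$-neighborhood of the two points $\pm 3$.

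The main obstacle I anticipate is precisely making the IDS comparison rigorous with explicit constants independent of $x$ and the scale — one must know that both $N_{\calH_x}$ and $N_{H_{x,\lambda_2}}$ are $x$-independent (unique ergodicity of the rotation) and that the $\epsilon$-Lipschitz comparison of IDS holds uniformly. This is routine but needs to be stated carefully; everything else (the measure formula for the Harper spectrum, the Cantor/perfect-set property, the perturbation inclusions) is quoted. I would therefore organize the proof as: (i) recall $\mes(\sigma(H_{x,\lambda_2}))=4\lambda_2-4$ and $x$-independence; (ii) recall the $\epsilon$-Lipschitz stability of the IDS under the $\epsilon\Gamma$ coupling; (iii) combine to bound $\mes(\sigma(\calH)\setminus[-3,3])$ from below by a positive constant when $\lambda_2>3$ and $0<\epsilon\ll\lambda_2-3$; (iv) note that the hypothesis $0<\lambda_1\ll1\ll\lambda_2$ in particular forces $\lambda_2>3$, completing the proof.
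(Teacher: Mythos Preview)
Your argument has a genuine gap at the crucial step. The inclusions $\sigma(H_{\lambda_2})\subset\sigma(\calH)+[-\epsilon,\epsilon]$ and the IDS comparison you invoke give control only of the \emph{Hausdorff distance} between the spectra and of the \emph{IDS measure} carried by a given energy window. Neither of these yields a lower bound on the \emph{Lebesgue measure} of $\sigma(\calH)$. Concretely, a bounded self-adjoint perturbation of size $\epsilon$ can collapse a spectrum of positive Lebesgue measure onto a Cantor set of zero Lebesgue measure while keeping the Hausdorff distance $\le\epsilon$ and shifting each eigenvalue (hence the IDS) by at most $\epsilon$. Your displayed inequality $\mes(\sigma(\calH)\setminus[-3,3])\ge \mes(\sigma(H_{\lambda_2})\setminus([-3,3]+[-C\epsilon,C\epsilon]))$ therefore does not follow from the min--max / IDS stability you cite; the IDS tells you the spectral \emph{measure} $dN_{\calH}$ assigns positive mass outside $[-3,3]$, but that mass could sit on a Lebesgue-null support. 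The ``perfect set'' remark about $\sigma(H_{\lambda_2})$ is also a red herring: perfectness does not prevent positive measure from concentrating in an $O(\epsilon)$-collar, and in any case the issue is the measure of $\sigma(\calH)$, not of $\sigma(H_{\lambda_2})$.

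The paper's proof takes an entirely different route. It does \emph{not} attempt to transfer Lebesgue measure from the unperturbed Harper spectrum by soft perturbation bounds. Instead it invokes Bourgain's argument (Proposition~12.14 and Lemma~12.15 of \cite{B}), whose only dynamical input is a large deviation estimate for the finite-volume Green's function of $\calH_x$ itself --- supplied here by Lemma~\ref{lem:calHreg} for energies outside $[-3,3]$ --- together with semi-algebraic set considerations that exploit the trigonometric-polynomial form of the potential. That machinery produces positive Lebesgue measure of the spectrum directly, without any comparison to $\sigma(H_{\lambda_2})$. If you want to repair your approach, you would need an independent reason why $\sigma(\calH)$ cannot be Lebesgue-null in the edge region (for instance, absolute continuity of the IDS there), and establishing that is essentially as hard as the proposition itself.
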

    \begin{proof}
 The proof in \cite{B}*{p.~88-83}, see also~\cite{B0}, applies directly. Indeed, by the remark on p.~81 of~\cite{B}, Bourgain's key Lemma~12.15 does not require positive Lyapunov exponents, but rather a large deviation estimate for Green's functions. This is given by Lemma~\ref{lem:calHreg} above. The remainder of the proof of the positive measure statement, viz.~\cite{B}*{Proposition 12.14}, does not rely on the scalar nature of the Schr\"odinger operator but rather its self-adjointness and elementary semi-algebaic set considerations, cf.~\cite{B}*{eq.\ (12.28)}. In our case, these remain valid due to the fact that the underlying potenial is a trigonometric polynomial (in fact, a cosine). Hence, Bourgain's Proposition~12.14 remains valid for the system~\eqref{eq:coupledHarper} provided we are at the edges of the spectrum where Lemma~\ref{lem:calHreg} applies.  
    \end{proof}

Next, we show the existence of absolutely continuous spectrum.
We achieve this via proving the existence of a positive measure set of energies in $\sigma(\calH_x)$ for which at least one of the Lyapunov exponents is zero.
\begin{proposition}
    \label{prop:ac2Harper'}
    There exists a positive measure set of energies in the spectrum of $\calH_x$ defined by~\eqref{eq:coupledHarper} with exactly one vanishing Lyapunov exponent. Thus, \eqref{eq:coupledHarper} exhibits some a.c.\ spectrum of multiplicity~$2$. Moreover, for a.e.~$x\in\tor$ the following holds: for a.e.~$E$ in the a.c.\ spectrum of $\calH_x$ the generalized eigenfunctions in the a.c.\ spectral subspace of~$\calH_x$ are truly extended, i.e., they cannot decay exponentially at either end. 
\end{proposition}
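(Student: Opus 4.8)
The plan is to establish Proposition~\ref{prop:ac2Harper'} in two stages: first produce a positive measure set of energies $E$ with exactly one vanishing Lyapunov exponent, and then upgrade this to the statement about genuinely extended generalized eigenfunctions via Oseledets. For the first stage, I would work in the regime $0<\lambda_1\ll 1\ll \lambda_2$. At $\epsilon=0$ the system decouples into $H_{x,\lambda_1}\oplus H_{x,\lambda_2}$; since $\lambda_1<1$, the Harper operator $H_{x,\lambda_1}$ has purely absolutely continuous spectrum (Avila's theorem), hence $L_2=0$ on $\sigma(H_{x,\lambda_1})$, while $L_1=\log\lambda_2>0$ comes from the $H_{x,\lambda_2}$-block (Herman's bound). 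On the ``interior'' part $\sigma(\calH_x)\cap[-3,3]$, the argument of Theorem~\ref{thm:puretype}(second item) applies essentially verbatim: the accelerations $\kappa_1$ and $\kappa_1+\kappa_2$ are upper semicontinuous in the cocycle and quantized (\cite{AJS}*{Theorem 1.4}), they vanish at $\epsilon=0$ on the relevant spectral region, hence they still vanish for small $\epsilon$; if $L_2(E)>0$ for $E\in\sigma(\calH_x)$ the cocycle would be $2$-dominated and hence uniformly hyperbolic (\cite{AJS}*{Theorem 1.2}), contradicting \cite{HP}*{Theorem 2.1}. Thus $L_2(E)=0$ on $\sigma(\calH_x)\cap[-3,3]$, while Theorem~\ref{thm:couplHarper} and Proposition~\ref{prop:posmeas} give $L_2(E)>0$ on the positive-measure edge part $\sigma(\calH)\setminus[-3,3]$. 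To see that the interior portion has positive measure, I would invoke the Cantor structure of the Harper spectrum~\cite{Pu}: for $0<\lambda_1\ll\lambda_2$ the set $\sigma(H_{x,\lambda_1})$ is a positive-measure subset of $[-3,3]$ (its measure is $4(1-\lambda_1)$, close to $4$), and by the spectral inclusions \eqref{eq:calHx} together with a Corollary~\ref{cor:dualAL}-type duality argument one transfers positive measure to $\sigma(\calH_x)\cap[-3,3]$. On this set $L_1>0=L_2$, so exactly one Lyapunov exponent vanishes; the a.c.\ spectrum of multiplicity $2$ then follows from the Kotani–Simon criterion \cite{KS}*{Theorem 7.2} applied to the $2$-dimensional subbundle on which the exponent vanishes, exactly as in Theorem~\ref{thm:puretype}.

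For the second stage—true extendedness—I would use the two-sided Oseledets theorem \cite{V}*{Theorem 4.2}. The cocycle $M_E$ is symplectic on the real torus, so the Lyapunov spectrum is $\{L_1,L_2,-L_2,-L_1\}=\{L_1,0,0,-L_1\}$ on the set of energies constructed above. For a.e.\ $x$ and a.e.\ $E$ in the a.c.\ spectrum, Oseledets gives a measurable splitting $\C^4=E^u(x)\oplus E^0(x)\oplus E^s(x)$ with $\dim E^u=\dim E^s=1$ and $\dim E^0=2$, invariant under the cocycle, with exponential growth rate $+L_1$ on $E^u$, $-L_1$ on $E^s$, and subexponential behavior on $E^0$. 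A generalized eigenfunction lying in the a.c.\ spectral subspace corresponds, via the transfer matrix formalism and the boundedness of generalized eigensolutions on the a.c.\ spectrum (polynomial boundedness from Shnol, combined with non-membership in the a.c.\ part forcing the $\ell^2$-type behavior to be ruled out), to an initial condition $\Psi_0$ that must lie in the central subspace $E^0(x)$: it cannot have a component in $E^u(x)$ (which would force exponential growth as $n\to+\infty$, contradicting polynomial boundedness) nor, by the two-sided version applied in the reverse time direction, in $E^s(x)$ (which would force exponential growth as $n\to-\infty$). Hence $\Psi_0\in E^0(x)$, and since the subexponential growth on $E^0$ actually combines (via the symplectic structure and the a.c.\ nature forcing no net exponential behavior in either direction) to polynomial boundedness on both half-lines, such a solution cannot decay exponentially toward either $+\infty$ or $-\infty$. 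This is precisely the assertion that the states are truly extended and not of hybrid type.

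The main obstacle I anticipate is the positive-measure claim for the interior part $\sigma(\calH_x)\cap[-3,3]$ and the clean identification of the a.c.\ spectral subspace with the central Oseledets bundle $E^0(x)$. For the former, one must be careful that the Cantor structure of $\sigma(H_{x,\lambda_1})$ together with the $\epsilon$-perturbation \eqref{eq:calHx} genuinely produces a positive-measure subset of $\sigma(\calH_x)$ carrying a.c.\ spectrum—this requires the duality input (Corollary~\ref{cor:dualAL}) and an argument that a.c.\ spectrum is not destroyed by the coupling, which I would handle by noting that for small $\epsilon$ the $L_2=0$ conclusion is robust (it rests only on quantization and semicontinuity of accelerations plus \cite{HP}), and then invoking \cite{KS}*{Theorem 7.2} directly. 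For the latter, the delicate point is that membership in the a.c.\ spectral subspace must be translated into a statement about the initial vector $\Psi_0$ relative to the Oseledets filtration; the key is that a solution with a component along $E^u(x)$ grows like $e^{L_1 n}$ and so cannot be a polynomially bounded generalized eigenfunction, and symmetrically in reverse time—so the intersection argument pins $\Psi_0$ into the two-dimensional center, on which both forward and backward iterates remain subexponential, ruling out one-sided exponential decay. I expect the rest (the interior $L_2=0$ computation, Herman's lower bound, the Kotani–Simon application) to be routine given the results already assembled in the excerpt.
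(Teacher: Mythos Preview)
Your overall architecture matches the paper's---acceleration/quantization to force $L_2=0$ on a piece of the spectrum, Aubry duality plus a Bourgain positive-measure argument on the dual side, Kotani--Simon for a.c., and two-sided Oseledets for extendedness---but there is a genuine gap in how you localize the $L_2=0$ region.

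You claim the acceleration argument of Theorem~\ref{thm:puretype} applies essentially verbatim on $\sigma(\calH_x)\cap[-3,3]$. It does not: with $\lambda_2>1$, at $\epsilon=0$ the top acceleration $\kappa_1$ equals $1$ on $\sigma(H_{x,\lambda_2})$ (supercritical Harper), so upper semicontinuity yields nothing there, and $\sigma(H_{x,\lambda_2})\cap[-3,3]$ is nonempty. The paper repairs this by using the Cantor structure of $\sigma(H_{x,\lambda_2})$ \emph{before} the acceleration step, not after: fix a gap $J_0=(E_0-\tau_0,E_0+\tau_0)\subset[-2,2]$ of $\sigma(H_{x,\lambda_2})$, take $\lambda_1$ small enough that $\sigma(H_{x,\lambda_1})$ (measure $4(1-\lambda_1)$) must meet $J_0$, and set $U=\sigma(\calH_x)\cap\tfrac12 J_0$. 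On $U$ the $\lambda_2$-block is uniformly hyperbolic at $\epsilon=0$, hence $\kappa_1=\kappa_1+\kappa_2=0$ there, and now quantization plus upper semicontinuity give $L_2=0$ on $U$ for small~$\epsilon$. For $\mes(U)>0$ the paper then passes to the dual $\widehat\calH_\theta$: on $U$ both dual exponents are positive (the $\lambda_1^{-1}$-block by Herman since $\lambda_1^{-1}>1$, the $\lambda_2^{-1}$-block because $U$ avoids $\lambda_2\sigma(H_{x,\lambda_2^{-1}})=\sigma(H_{x,\lambda_2})$), so the Green-function machinery behind Proposition~\ref{prop:posmeas} applies to $\widehat\calH_\theta$ on~$U$. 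You invoke both the Cantor structure and duality, but in the wrong order and for the wrong purpose; once rearranged as above, your sketch becomes correct. Your Oseledets discussion for extendedness is fine and in fact more detailed than the paper's one-line citation of \cite{V}*{Theorem~4.2} and Fubini.
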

\begin{proof}
Note the a.c. claim then follows from this by \cite{KS} which is valid from strip models.
We start with the following.
\begin{lemma}\label{lem:U_L=0}
There exists $U\subset \sigma(\calH_x)$ such that $\mathrm{mes}(U)>0$ and $L_2(E)=0$ for $E\in U$.    
\end{lemma}
\begin{proof}
    We begin by showing there exists some energy $E\in \sigma(\calH_x)\setminus \sigma(H_{x,\lambda_2})$.
    For $\varepsilon=0$, this is an immediate consequence of the Cantor property of the spectrum of Harper~\cites{Pu, AJ} and that $\mathrm{mes}(\sigma(H_{x,\lambda}))=4|1-|\lambda||$. In fact, $\sigma(H_{x,\lambda_2})$ has dense gaps in $[-2,2]$, we pick one such gap $J_0=(E_0-\tau_0,E_0+\tau_0)$. Then for $|\lambda_1|>0$ small enough, 
   \[8\lambda_1=\mathrm{mes}([-2-2\lambda_1,2+2\lambda_1]\setminus \sigma(H_{x,\lambda_1})<\tau_0/2.\]
   This implies the existence of $E\in \sigma(H_{x,\lambda_1})\cap (E_0-\tau_0/4,E_0+\tau_0/4)$.
    Hence for any $\epsilon\in (0,\tau_0/4)$, we can obtain perturbatively the existence of some $E\in\sigma(\calH_x)\cap (E_0-\tau_0/2,E_0+\tau_0/2)$.
    
    Let $U=\sigma(\calH_x)\cap (E_0-\tau_0/2, E_0+\tau_0/2)$.
    Next we show for $\epsilon>0$ small enough, for any $E\in U$, we have $L_2(E)=0$. 
    This proof is similar to that of the second part of Theorem \ref{thm:puretype}.
    In fact, denoting the accelerations of $L_j$ by $\kappa_j$. For $\epsilon=0$ and $E\in \sigma(\calH_x)\cap (E_0-\tau_0,E_0+\tau_0)$, one has $\kappa_1+\kappa_2=0$. Hence $\kappa_1+\kappa_2=0$ and $L_2(E)=0$ for small $\epsilon>0$ on $U$. 
    \end{proof}

    It remains to prove $\mathrm{mes}(U)>0$. 
    To do this, we pass to the dual system of~\eqref{eq:coupledHarper}.  Thus, let $\hat{\phi}(\theta)=\sum_n \phi_n e(n\theta)$ and similarly for $\hat{\psi}$. If $\phi_n,\psi_n$ solve~\eqref{eq:coupledHarper}, then 
    \[
    u_n = e(x+n\omega) \hat{\phi}(\theta+n\omega),\qquad v_n = e(x+n\omega) \hat{\psi}(\theta+n\omega)
    \]
    solve
    \begin{equation}
     \label{eq:coupledHarper*}
     \begin{split}
          \lambda_1(H_{\theta,\lambda_1^{-1}}u)_n+\epsilon v_n &= E u_n \\
     \lambda_2(H_{\theta,\lambda_2^{-1}}v)_n+\epsilon u_n&= Ev_n
     \end{split}
\end{equation}

\begin{lemma}
    \label{lem:Aubry}
    Denote the operator defined by the left-hand side of~\eqref{eq:coupledHarper*} by~$\widehat{\calH}_\theta$. It is unitarily equivalent to~$\calH_x$, if both are viewed as bounded self-adjoint operators on $L^2(\Z\times\tor;\C^2)$. These operators have spectra equal to $\sigma(\widehat{\calH}_\theta)=\sigma(\calH_x)$, which are constant in~$x,\theta\in\tor$. 
\end{lemma}
\begin{proof}
    This is Aubry duality, see for example~\cite{HP}*{Section~1.5.1}. 
\end{proof}
Recall that the Aubry duality for the Harper's model implies $\sigma(H_{x,\lambda})=\lambda \sigma(H_{x,\lambda^{-1}})$. Hence combined with Lemma \ref{lem:Aubry}, we have
\[\emptyset\neq U\subset \sigma(\calH_x)\cap \sigma(H_{x,\lambda_2})^c=\sigma(\widehat{\calH}_{\theta})\cap \lambda_2\sigma(H_{x,\lambda_2^{-1}})^c.\]
Furthermore, for $\epsilon>0$ small, the Lyapunov exponents $\hat{L}_1(E),\hat{L}_2(E)>0$ on $U$ for the dual operator $\widehat{\calH}_{\theta}$ (true for $\epsilon=0$ and stay positive by continuity in $\epsilon$).
One can then conclude 
\begin{align}\label{eq:mes_U>0}
    \mathrm{mes}(U)>0,
\end{align} 
by following the same arguments as in  Proposition \ref{prop:posmeas}.
One also has the analogue of Theorem \ref{thm:couplHarper} as follows.
 \begin{corollary}
     \label{cor:dualAL}
For a.e.~$\omega$, the operator $\widehat{\calH}_0$ exhibits Anderson localization on $U$. 
 \end{corollary}

 For the a.c. spectrum, we will not go through the localization established in the previous corollary, since it is not immediately clear how to proceed. Rather we derive it directly from combining Lemma \ref{lem:U_L=0}, \eqref{eq:mes_U>0} with the Kotani-Simon theory \cite{KS} for the strip model.

 The final claim about the absence of ``hybrid" states, i.e., the possibilty of exhibiting exponential decay as either $n\to+\infty$ or $n\to-\infty$, follows from Fubini and the two-sided version of Oseledets' theorem, see~\cite{V}*{Theorem 4.2}.
 \end{proof}


\begin{thebibliography}{99}

\bibitem[Av]{Global} Avila, A., 2015. {\em Global theory of one-frequency Schr\"odinger operators.} Acta Math., 215 (1), pp.\ 1--54.

\bibitem[Av2]{Av2}Avila, A., 2023. {\em KAM, Lyapunov exponents, and the Spectral Dichotomy for typical one-frequency Schrodinger operators.} arXiv preprint arXiv:2307.11071.

\bibitem[AJ]{AJ} Avila, A. and Jitomirskaya, S., 2009. {\em Almost localization and almost reducibility.} J.\ Eur.\ Math.\ Soc., 12(1), pp.\ 93--131.

\bibitem[AJS]{AJS} Avila, A., Jitomirskaya, S. and Sadel, C., 2014. {\em Complex one-frequency cocycles.} J.\ Eur.\ Math.\ Soc., 16(9), pp.~1915--1935.

\bibitem[AK]{AK}Avila, A. and Krikorian, R., 2006. {\em Reducibility or nonuniform hyperbolicity for quasiperiodic Schrödinger cocycles.} Annals of Mathematics, pp.911-940.

\bibitem[AvMS]{AvMS} Avron, J., van Mouche, P.\ H.\ M.,  Simon, B. {\em On the measure of the spectrum for the almost Mathieu operator.} Comm.\ Math.\ Phys.\ 132 (1990), no~1, 103--118.

\bibitem[BPR]{BPR} 
Basu, S.,  Pollack, R.,  Roy, M.-F. {\em   Algorithms in real algebraic geometry.} 
Algorithms Comput.\ Math., 10
Springer-Verlag, Berlin, 2006

\bibitem[BGW]{BGW}
Becker, S., Ge, L. and Wittsten, J., 2022. {\em Hofstadter butterflies and metal/insulator transitions for moir\'e heterostructures.} arXiv preprint arXiv:2206.11891.

\bibitem[BHJ]{BHJ}Becker, S., Han, R. and Jitomirskaya, S., 2019. {\em Cantor spectrum of graphene in magnetic fields.} Inventiones mathematicae, 218, pp.979-1041.


\bibitem[BinNov]{BN} Binyamini, G.,  Novikov, D. {\em 
Complex cellular structures.}
Ann.\ of Math.~(2)190(2019), no.~1, 145--248.

\bibitem[Bou1]{B0} Bourgain, J.  {\em
On the spectrum of lattice Schr\"odinger operators with deterministic potential.}
J.\ Anal.\ Math.~87 (2002), 37--75.

\bibitem[Bou2]{B1} Bourgain, J., 2007, {\em Positive Lyapounov exponents for most energies. In Geometric Aspects of Functional Analysis: Israel Seminar 1996–2000} (pp. 37-66). Berlin, Heidelberg: Springer Berlin Heidelberg.

\bibitem[B]{B} Bourgain, J.  {\em Green's function estimates for lattice Schr\"odinger operators and applications}, 
Ann.\ of Math.\ Stud., 158
Princeton University Press, Princeton, NJ, 2005. 

\bibitem[BG]{BG} Bourgain, J. and Goldstein, M., 2000. {\em On nonperturbative localization with quasi-periodic potential.} Ann.\ of Math., 152 (3),
 pp.~835--879.

 \bibitem[BGS]{BGS} Bourgain, J.,  Goldstein, M.,  Schlag, W. {\em 
Anderson localization for Schr\"odinger operators on $\Z^2$ with quasi-periodic potential.} 
Acta Math.~188 (2002), no.~1, 41--86.

 \bibitem[BJ]{BJ} Bourgain, J. and Jitomirskaya, S., 2000. {\em Anderson localization for the band model.} Geometric aspects of functional analysis, 1745, pp.~67--79.

\bibitem[CS]{CS}Chapman, J. and Stolz, G., 2015, February. {\em Localization for random block operators related to the XY spin chain.} In Annales Henri Poincaré (Vol. 16, No. 2, pp. 405-435). 

\bibitem[CD]{CD}Chulaevsky, V. and Delyon, F., 1989. {\em Purely absolutely continuous spectrum for almost Mathieu operators.} Journal of statistical physics, 55, pp.1279-1284.


\bibitem[DGSV]{DGSV} Damanik, D.,  Goldstein, M.,  Schlag, W.,  Voda, M. {\em 
Homogeneity of the spectrum for quasi-periodic Schr\"odinger operators.} J.\ Eur.\ Math.\ Soc.\ (JEMS) 20 (2018), no.~12, 3073--3111.

\bibitem[DLLY]{DLLY}Damanik, D., Lemm, M., Lukic, M. and Yessen, W., 2014. {\em New Anomalous Lieb-Robinson Bounds in Quasiperiodic XY Chains}. Physical review letters, 113(12), p.127202.

\bibitem[DLY]{DLY}Damanik, D., Lukic, M. and Yessen, W., 2015. {\em Quantum dynamics of periodic and limit-periodic Jacobi and block Jacobi matrices with applications to some quantum many body problems.} Communications in Mathematical Physics, 337(3), pp.1535-1561.

\bibitem[DK1]{DK1} Duarte, P. and Klein, S., 2014. {\em Positive Lyapunov exponents for higher dimensional quasiperiodic cocycles.} Communications in Mathematical Physics, 332, pp.189-219.

\bibitem[DK2]{DK} Duarte, P. and Klein, S., 2016. {\em Lyapunov exponents of linear cocycles.} Atlantis Studies in Dyn.\ Systems, 3.

\bibitem[E]{E} Eliasson, L.H., 1992. {\em Floquet solutions for the 1-dimensional quasi-periodic Schr\"odinger equation.} Communications in mathematical physics, 146, pp.447-482.

\bibitem[Fi]{Fi}Fillman, J., 2017. {\em Ballistic transport for limit-periodic Jacobi matrices with applications to quantum many-body problems.} Communications in Mathematical Physics, 350, pp.1275-1297.

\bibitem[GS1]{GS1} Goldstein, M. and Schlag, W., 2001. {\em H\"older continuity of the integrated density of states for quasi-periodic Schr\"odinger equations and averages of shifts of subharmonic functions.} Ann.\ of Math., pp.~155--203.

\bibitem[GS2]{GS2} Goldstein, M. and Schlag, W., 2008. {\em Fine properties of the integrated density of states and a quantitative separation property of the Dirichlet eigenvalues.} Geom.\ Funct.\ Anal., 18 (3), pp.~755--869.

\bibitem[GS3]{GS3} Goldstein, M. and Schlag, W., 2011. {\em On resonances and the formation of gaps in the spectrum of quasi-periodic Schr\"odinger equations.} Ann.\ of Math., pp.~337--475.

\bibitem[GSV]{GSV} Goldstein, M., Schlag, W. and Voda, M., 2016. {\em On localization and the spectrum of multi-frequency quasi-periodic operators.} arXiv preprint arXiv:1610.00380.

\bibitem[HSS]{HSS} Hamza, E., Sims, R. and Stolz, G., 2012. {\em Dynamical localization in disordered quantum spin systems.} Communications in Mathematical Physics, 315(1), pp.215-239.

\bibitem[HS2]{HS2} Han, R. and Schlag, W., 2022. {\em Avila's acceleration via zeros of determinants, and applications to Schr\" odinger cocycles.} arXiv preprint arXiv:2212.05988.

\bibitem[HS3]{HS3} Han, R. and Schlag, W., 2023. {\em Non-perturbative localization on the strip and Avila's almost reducibility conjecture.} arXiv preprint arXiv:2306.15122.

\bibitem[HP]{HP} Haro, A. and Puig, J., 2013. {\em A Thouless formula and Aubry duality for long-range Schr\"odinger skew-products.} Nonlinearity, 26(5), p.~1163.

\bibitem[JK]{JK} Jitomirskaya, S.Y. and Krasovsky, I.V., 2002. {\em Continuity of the measure of the spectrum for discrete quasiperiodic operators.} Mathematical Research Letters. Jul;9(4):413-21.

\bibitem[Ka]{Ka}Kachkovskiy, I., 2016. {\em On transport properties of isotropic quasiperiodic XY spin chains.} Communications in Mathematical Physics, 345, pp.659-673.

\bibitem[Kato]{Kato} Kato, T. {\em Perturbation Theory for Linear Operators}, Springer–Verlag, Berlin, 1966. 

\bibitem[Kl]{Kl} Klein, S. {\em 
Anderson localization for one-frequency quasi-periodic block Jacobi operators.}  J.\ Funct.\ Anal.~273 (2017), no~3, 1140--1164.

\bibitem[KS]{KS} Kotani, S., Simon, B. {\em 
Stochastic Schr\"odinger operators and Jacobi matrices on the strip.} Comm.\ Math.\ Phys.\ 119 (1988), no.~3, 403--429.

\bibitem[LHCL]{LHCL}Lai, Y.H., Ho, J.H., Chang, C.P. and Lin, M.F., 2008. {\em Magnetoelectronic properties of bilayer Bernal graphene.} Physical Review B, 77(8), p.085426.

\bibitem[LSM]{LSM}Lieb, E., Schultz, T. and Mattis, D., 1961. {\em Two soluble models of an antiferromagnetic chain.} Annals of Physics, 16(3), pp.407-466.

\bibitem[Pu]{Pu} Puig, J. {\em Cantor Spectrum for the Almost Mathieu Operator}, Commun.\ Math.\ Phys.\ 244, 297--309 (2004). 

\bibitem[RSRN]{RSRN} Rozhkov, A.V., Sboychakov, A.O., Rakhmanov, A.L. and Nori, F., 2016. {\em Electronic properties of graphene-based bilayer systems.} Physics Reports, 648, pp.1-104.


\bibitem[Sch1]{S1} Schlag, W., 2013. {\em Regularity and convergence rates for the Lyapunov exponents of linear cocycles.} Journal of Modern Dynamics,  7(4): 619--637.

\bibitem[Sch2]{S2} Schlag, W. {\em 
An introduction to multiscale techniques in the theory of Anderson localization, Part I.}
Nonlinear Anal.~220(2022), Paper No.~112869, 55 pp.

\bibitem[TM]{TM} Timmel, A. and Mele, E.J., 2020. {\em Dirac-Harper theory for one-dimensional moiré superlattices.} Physical Review Letters, 125(16), p.166803.

\bibitem[Via]{V} Viana, M. {\em Lectures on Lyapunov exponents.} Cambridge studies in advanced mathematics 145, 2014. 
 
\end{thebibliography}
\end{document}